\theoremstyle{remark}
\newtheorem{remark}{Remark}
\newcommand{\rref}[2][]{\prettyref{#2}}
\newcounter{modelline}
\newcommand{\mline}[1]{{\refstepcounter{modelline}\ltx@label{#1}}~\text{\scriptsize{\themodelline}}\quad}
\newcommand{\boxedSep}[2][\fboxsep]{{%
  \setlength{\fboxsep}{#1}\fbox{\m@th$\displaystyle#2$}}}
\newcommand{\Sem}[1]{\llbracket #1 \rrbracket}
\newcommand{\existdown}{%
  \mathrel{\ooalign{%
    \raisebox{-0.4ex}{\scalebox{0.5}{\(\,\downarrow\)}}\cr
    \hfil\raisebox{0.4ex}{\scalebox{0.7}{\(\exists\)}}\hfil\cr
  }}%
}
\newcommand{\aproj}[3]{\ensuremath{{#1\!\existdown\!#2}}\xspace}
\newcommand{\dproj}[3]{\ensuremath{{#1 {[]}\!\!\existdown\!#2}}\xspace}
\newcommand{\aruntimem}[2]{\ensuremath{{#1 \lightning #2}}\xspace}
\newcommand{\druntimem}[2]{\ensuremath{{#1 \lightning #2}}\xspace}
\newcommand{\foralldown}{%
  \mathrel{\ooalign{%
    \raisebox{-0.4ex}{\scalebox{0.5}{\(\mspace{2.5mu}\downarrow\)}}\cr
    \hfil\raisebox{0.4ex}{\scalebox{0.7}{\(\forall\)}}\hfil\cr
  }}%
}
\newcommand{\asubst}[3]{\ensuremath{{#1 \foralldown #2}}\xspace}
\newcommand{\dsubst}[3]{\ensuremath{{#1 {[]}\!\!\foralldown #2}}\xspace}
\newcommand{\fwd}[2]{\ensuremath{{#2_{#1:}}}\xspace}
\newcommand{\prefix}[2]{\ensuremath{{#2_{:#1}}}\xspace}
\newcommand{\invexpr}{\kwd{Inv}\xspace}
\newcommand{\invgen}{\kwd{genInv}\xspace}
\newcommand{\exec}{\kwd{simpl}\xspace}
\newcommand{\namedGame}[2]{{\scalebox{0.8}{\ensuremath{\textcolor{gray}{{#2\mspace{-6mu}:\mspace{-1mu}}}}}\ensuremath{#1}}\xspace}
\newcommand{\namedGameSpace}[2]{\ensuremath{{\textcolor{gray}{#2\mspace{-1mu}:\mspace{-1mu}}{#1}}}\xspace}
\newcommand{\avalid}[4]{\ensuremath{\namedGame{#1}{#2} \models {#3}}\xspace}
\newcommand{\dvalid}[4]{\ensuremath{\namedGame{#1}{#2} {[]}\!\!\models {#3}}\xspace}
\newcommand{\asolve}{{\ensuremath{{\langle\!\rangle}\mspace{-2mu}\kwd{map}\xspace}}\xspace}
\newcommand{\dsolve}{{\ensuremath{{[]}\mspace{-2mu}\kwd{map}\xspace}}\xspace}
\newcommand{\nodes}[1]{\ensuremath{{\kwd{subgames}(#1)}}}
\newcommand{\restrict}[2]{\ensuremath{#1 |_#2}}
\newcommand{\successor}[2]{\ensuremath{\kwd{succ}(#1, #2)}}
\newcommand{\actleft}{\ensuremath{\mathfrak{l}}\xspace}
\newcommand{\actright}{\ensuremath{\mathfrak{r}}\xspace}
\newcommand{\actstop}{\ensuremath{\mathfrak{s}}\xspace}
\newcommand{\actgo}{\ensuremath{\mathfrak{g}}\xspace}
\newcommand{\actseq}{\ensuremath{\mathfrak{c}}\xspace}
\newcommand{\act}{\ensuremath{\mathfrak{a}}\xspace}
\newcommand{\opSemantics}[2]{\ensuremath{\mathcal{T}({#1}, #2)}\xspace}
\newcommand{\nodeLabel}[1]{\textcolor{gray}{\ensuremath{#1}}}
\newcommand{\eraseLabel}{\ensuremath{\pi_{\act}}}
\newcommand{\skipActions}{\ensuremath{\mathfrak{S}}\xspace}
\newcommand{\eraseSkip}{\ensuremath{\pi_{\neg\skipActions}}}
\newcommand{\apolicy}[2]{\ensuremath{\mathcal{P}_{\mspace{-5mu}\namedGameSpace{#1}{#2}}}\xspace}
\newcommand{\astrategize}[2]{\ensuremath{\mathcal{S}_{#1}(#2)}\xspace}
\newcommand{\caret}{\ensuremath{\char`^}\xspace}
\newcommand{\then}{\ensuremath{\cdot}\xspace}
\newcommand{\runstrategy}[2]{\ensuremath{\lfloor #1 \rfloor_{#2}}\xspace}
\newcommand{\finalNode}{\kwd{end}\xspace}
\newcommand{\leaf}{\kwd{leaf}\xspace}
\newcommand{\deriveWrt}[2]{\ensuremath{\frac{\kwd{d}#1}{\kwd{d}#2}}\xspace}
\newcommand{\modEnd}[2]{\ensuremath{#1(\finalNode\mapsto{#2})}\xspace}
\newcommand{\projToStrat}[1]{\ensuremath{h_{#1}}\xspace}
\newcommand{\projToStratI}[1]{\ensuremath{i_{#1}}\xspace}
\newcommand{\red}[1]{\textcolor{red}{#1}}
\newcommand{\blue}[1]{\textcolor{blue}{#1}}
\newcommand{\define}{\, \equiv \,}
\newcommand{\kwd}[1]{{\normalfont \textsf{#1}}\xspace}
\newcommand{\safe}{\kwd{safe}}
\newcommand{\assumptions}{\kwd{assumptions}}
\newcommand{\skp}{\kwd{skip}}
\newcommand{\reduce}{\kwd{reduce}}
\renewcommand{\dbox}[2]{[#1] \, #2}  
\renewcommand{\ddiamond}[2]{\langle #1 \rangle \, #2}
\newcommand{\seq}{\,;\,}
\newcommand{\D}[1]{\ensuremath{#1'}}
\newcommand*{\props}{\ensuremath{\mathcal{P}_{\hspace{0.1em}\reals}}\xspace}
\def\leftrule{L}%
\def\rightrule{R}%
\newcommand{\I}{\dLint[state=\omega]}
\begin{document}

\title{Hybrid Game Control Envelope Synthesis}
\author{Aditi Kabra}
\orcid{0000-0002-2252-0539}
\affiliation{%
  \institution{Carnegie Mellon University}
  \city{Pittsburgh}
  \state{PA}
  \country{USA}
}
\email{akabra@cs.cmu.edu}

\author{Jonathan Laurent}
\orcid{0000-0002-8477-1560}
\affiliation{%
  \institution{Karlsruhe Institute of Technology}
  \city{Karlsruhe}
  \country{Germany}
}
\email{jonathan.laurent@kit.edu}

\author{Stefan Mitsch}
\orcid{0000-0002-3194-9759}
\affiliation{%
  \institution{DePaul University}
  \city{Chicago}
  \state{IL}
  \country{USA}
}
\email{smitsch@depaul.edu}

\author{André Platzer}
\orcid{0000-0001-7238-5710}
\affiliation{%
  \institution{Karlsruhe Institute of Technology}
  \city{Karlsruhe}
  \country{Germany}
}
\email{platzer@kit.edu}

\begin{abstract}
Control problems for embedded systems like cars and trains can be modeled by two-player hybrid games.
Control envelopes, which are families of safe control solutions, correspond to nondeterministic winning policies of hybrid games, where each deterministic specialization of the policy is a control solution.
This paper synthesizes nondeterministic winning policies for hybrid games that are as permissive as possible.
It introduces \emph{subvalue maps}, a compositional representation of such policies that enables verification and synthesis along the structure of the game.
An inductive logical characterization in differential game logic (\dGL) checks whether a subvalue map induces a sound control envelope which always induces a winning play.
A policy is said to win if it always achieves the desirable outcome when the player follows it, no matter what actions the opponent plays.
The maximal subvalue map, which allows the most action options while still winning, is shown to exist and satisfy a logical characterization.
A family of algorithms for nondeterministic policy synthesis can be obtained from the inductive subvalue map soundness characterization.
An implementation of these findings is evaluated on examples that use the expressivity of \dGL to model a range of diverse control challenges.
\end{abstract}

\keywords{Hybrid games, Program synthesis, Differential game logic}

\maketitle

\definecolor{vblue}{rgb}{.1,.15,.62}

\newsavebox{\Rval}%
\sbox{\Rval}{$\scriptstyle\mathbb{R}$}
\newsavebox{\backiterateb}%
\sbox{\backiterateb}{$\scriptstyle\overleftarrow{\dibox{{}^*}}$}

\section{Introduction}
\label{sec:introduction}

Imperative program-like games serve as specifications for program synthesis \cite{10.1145/1706299.1706339,10.5555/184737}, where the game specifies the \emph{shape} of the program, leaving the details as nondeterministic choices for the player to resolve.
The synthesis of a single correct solution consists of identifying a single winning policy, indicating how to resolve choices to obtain a deterministic program that meets the specification (wins the game).
For some programs, such as embedded system control software, it is useful to know a maximal \emph{set} of control solutions, which can then be specialized for secondary or evolving requirements \cite{10.1007/978-3-030-44051-0_34,10.1145/2883817.2883842,saferl}, e.g., train control software that ensures speed limit adherence can be specialized for fuel efficiency depending on current operating conditions.
This corresponds to the synthesis of a \emph{set} of winning policies for the specification game, which is to say, a \emph{nondeterministic policy} whose every deterministic specialization corresponds to a winning policy.
This paper introduces \emph{subvalue maps} as a compositional representation of nondeterministic policies that can be verified and synthesized recursively along the syntactic structure of the game.
Via this representation we develop the theory to compare, verify, and synthesize nondeterministic policies.
Our formalization is in differential game logic (\dGL) \cite{DBLP:journals/tocl/Platzer15}, which models two-player games with imperative program constructs like assignments, loops and branching.
A central contribution is to develop a representation of nondeterministic game policies whose verification/synthesis composes at the level of such program constructs.

This paper synthesizes nondeterministic policies for \emph{hybrid games}.
It is able to solve problems that previous hybrid games nondeterministic policy synthesis work does not because the subvalue map representation lets us leverage program analysis techniques like loop variants, invariants and their continuous analogs.
\emph{Hybrid} games \cite{DBLP:journals/tocl/Platzer15} extend games \cite{parikhGL} with differential equations.
Nondeterministic policy synthesis for hybrid games is important because it solves the problem of identifying families of correct control solutions (also called \emph{control envelopes}) for embedded systems like cars and trains which have discrete as well as continuous physical behavior.
Hybrid games serve as controller synthesis specifications \cite{DBLP:conf/cdc/NerodeY92,DBLP:journals/IEEE/TomlinLS00,DBLP:conf/hybrid/MoorD01} which characterize the control possibilities within a system's existing hardware and environment.
One player, canonically called Angel, resolves the nondeterminism that the controller can resolve by making a control decision, while the other player, canonically called Demon, resolves nondeterminism resulting from environment behavior.
This results in a two-player, zero-sum \emph{game} where Angel's policy to win, however Demon may resolve non-determinism, is exactly what the controller must do to maintain the desired properties (e.g., safety, liveness, reach-avoid) regardless of environmental conditions.
Embedded controllers are safety-critical, ubiquitous, but hard to design correctly, so synthesizing their control envelopes addresses an important challenge.

After defining subvalue maps, this paper introduces the theory to verify that subvalue maps induce only policies that don't lose.
To ensure that induced policies additionally never let the player get stuck, always allowing some action, the refined concept of \emph{inductive} subvalue map is introduced.
We order subvalue maps such that more permissive maps that permit more control options are greater.
We synthesize inductive subvalue maps.
First, we show completeness: it is always possible to construct an optimal inductive subvalue map for a given game.
Then, since our ultimate goal is to use subvalue maps as control envelopes/nondeterministic policies, we consider a restricted category of inductive subvalue maps where checking if a given controller lies within the envelope is efficient (polynomial time).
We present an algorithmic framework to synthesize such subvalue maps.
The theoretical development needed is subtle.
For example, at loops, a nondeterministic policy should allow \emph{unboundedly} many repetitions so long as there is still a \emph{finite} exit strategy \emph{within} the same nondeterministic policy.
The first few sections of the paper build up the theoretical tools that make the theory for subvalue maps possible, and largely internalized within \dGL.
These include:
\begin{enumerate*}
  \item a \dGL analog to Brzozowski derivatives \cite{10.1145/321239.321249} that let us step through a game and characterize gameplay after a given subgame (\rref{sec:stepping-through-dgl}), and
  \item \dGL characterizations checking whether a set of states is reachable by following \emph{some} strategy or \emph{any} strategy in a given subvalue map (\rref{sec:subvalue-maps}).
\end{enumerate*}

We synthesize inductive subvalue maps by recursively computing and composing them while syntactically stepping through a game.
In \dGL, \emph{subgames} specify game decision points where at most one player makes a play.
These subgames compose together per a recursive syntax to form large, arbitrarily complex games.
Subvalue maps are analogous to value functions in reinforcement learning \cite{rl}.
A value function maps states to utility.
Once it is known, an agent policy can be derived from it.
Similarly, an Angelic subvalue map maps every subgame to a formula denoting a set of states from which Angel can win the rest of the game (\emph{sub}value because some maps map to a set of \emph{known} winning states smaller than the theoretical maximum winning set).
A player's policy set can then be reconstructed by tracing through the game: at every subgame where the player must take a decision, it accepts only those decisions that reach the next subgame in a mapped winning state for that subgame per the subvalue map.

This paper solves these challenges for \emph{all} of \dGL:
Synthesis for safety properties, liveness properties, and any combination of these are all solved by a single, natural framework without special handling of either.
The flexibility of our final synthesis framework is demonstrated through examples with diverse control challenges, and performance is evaluated on benchmarks from the literature and a procedurally generated benchmark suite.
Our main contributions are to:
\begin{enumerate}
  \item Introduce \emph{inductive subvalue maps} as symbolic, composable representations of nondeterministic winning policies for hybrid games (\rref{sec:solution-representation}).
  \item Prove completeness for optimal hybrid game inductive subvalue map synthesis via the construction of a \emph{maximal subvalue map} (\rref{sec:ordering-solutions}).
  \item Present an algorithmic framework (\rref{sec:solving-algorithm}) for \emph{synthesis} of inductive subvalue maps where checking that control is within the induced nondeterministic policy is \emph{computationally efficient}.
  \item Evaluate the approach, demonstrating its application to representatives of various control theory problem classes (\rref{sec:evaluation}).
\end{enumerate}

\section{Differential Game Logic}
\label{sec:background}

This paper synthesizes policies for games written in differential game logic (\dGL), a logic for two-player hybrid games that has a relatively complete axiomatization.
A detailed explanation can be found elsewhere \cite{DBLP:journals/tocl/Platzer15,Platzer18}, but we provide an overview of \dGL and set up the notation used in this paper.

A \dGL game is played by two adversarial players, canonically called Angel and Demon.
Various operators give either player opportunities to make control decisions to try to attain a winning condition.
This paper uses the convention that Angel is responsible for the controller's decisions, and wins when she maintains the desired system properties (e.g. safety), and Demon plays for the environment.
\dGL games are generated by a recursive grammar consisting of the game constructs that follow.
In many of the game constructs, either Angel or Demon can make a choice, deciding how the game should proceed, and try to choose in a way that is most favorable to them.
These decisions are represented using \emph{actions}, which will later be used to keep track of state change, construct game trees and to represent policies\footnote{The action notation presented here is a simplified form of the operational semantics of \dGL \cite{DBLP:journals/tocl/Platzer15}[Appendix C]. Sequential compositions are given an action for uniformity but do not have an action in the original operational semantics.}.
The game constructs are as follows:
\begin{enumerate}
    \item Angelic free assignment $x:=*$ models Angel assigning any real value of her choice to variable $x$, and can be used, e.g., to model a controller's ranged choices.
    Angel's available actions are of the form $(x:=e)$, where $e$ is a (real polynomial) term.
    Dually, Demonic free assignment $x:=\otimes$ models Demon assigning a real value of his choice to variable $x$ instead.
    It models, e.g., the environment introducing disturbances making control harder.
    Demon's available actions are of the form $\pdual{(x:=e)}$, where $e$ is a term.
    \item Continuous evolution $\{x'=f(x)\ \&\ Q\}$ modifies $x$ per the solution of differential equation $x'=f(x)$.
    How long the ODE runs is determined by Angel, who must also maintain that domain constraint formula $Q$ is true throughout.
    This can model, for example, a control mode running till the controller interrupts.
    Angel's available actions are of the form $(x' = f(x) \, \&\, Q \,@\, t)$, where $t$ is a term representing the time Angel chooses to run the ODE for.
    Dually, in $\{x'=f(x)\ \&\ Q\}^d$, Demon chooses how long to run the ODE while ensuring that $Q$ holds throughout.
    Demon's available actions are of the form $\pdual{(x' = f(x) \, \&\, Q \, @ \, t)}$, where $t$ is a term for the time Demon runs the ODE.
    Demon-controlled ODEs can model, e.g., control loop latency in controllers that periodically poll to take decisions.
    \item Loop $\alpha^*$ runs $\alpha$ as many times as Angel wants.
    Before starting a fresh iteration of the loop, Angel gets to choose whether to run the loop again or to exit it.
    This can model, for example, a controller that repeatedly performs a task until a criterion is met.
    Her available actions are \(\actgo\) to \emph{go} repeat the loop for one more iteration, or \(\actstop\) to \emph{stop} and exit the loop.
    Dually, $\alpha^\times$ runs $\alpha$ as many times as Demon chooses.
    Demon loop can model a control loop that runs arbitrarily many times, and must stay safe forever.
    Demon's available actions are \(\pdual{\actgo}\) to repeat the loop for one more iteration, or \(\pdual{\actstop}\) to exit the loop.
    \item Branch choice $\alpha\cup\beta$ lets Angel choose to play either game $\alpha$ or game $\beta$.
    This can model a controller choosing between two modes of operation, for example accelerating or braking.
    Angel's available actions are \(\actleft\) to play \emph{left} game $\alpha$, or \(\actright\) to play \emph{right} game $\beta$.
    Dually, $\alpha\cap\beta$ lets Demon choose to play either game $\alpha$ or game $\beta$.
    This can model different modes of environment behavior, such as a car in the environment, that the controller must avoid colliding with, accelerating or braking.
    Demon's available actions are \(\pdual{\actleft}\) to play game $\alpha$, or \(\pdual{\actright}\) to play game $\beta$.
    \item Test $\ptest{Q}$ makes Angel immediately lose the current game if formula $Q$ is false, but has no effect if $Q$ is true.
    It can be used to limit the actions available to Angel modeling control system constraints.
    The action representing Angel going through the test (and possibly losing) is written as $(\ptest{Q})$.
    Dually, $!Q$ makes Demon lose the game immediately when $Q$ is false, and has no effect if $Q$ is true.
    It can model physical environment constraints.
    Demon's only available action, to take the test, is written as $(!Q)$.
\end{enumerate}
Additionally, some game constructs do not involve Angel or Demon making any decisions but are still important to model overall functioning of the game.
\begin{enumerate}
    \item Assignment $x:=e$ assigns the expression $e$ to variable $x$. The corresponding action that Angel performs to change state is written as $(x:=e)$.
    \item Sequential composition $\alpha;\beta$ runs game $\alpha$ followed by game $\beta$.
    The corresponding action that Angel performs is \(\actseq\) to start \(\alpha\).
\end{enumerate}
\dGL formula $\ddiamond{\alpha}{\phi}$ is true in any state in which player Angel has a winning strategy to play hybrid game $\alpha$ so that regardless of what Demon does, in the end \dGL formula $\phi$ holds true.
As usual, state \(\sigma\) is a mapping from variables to real numbers, and \(\sigma(x)\) is the real value associated with variable \(x\).
Dually, formula $[\alpha] \phi$ means Demon has a winning strategy to play game $\alpha$ so as to reach a state satisfying $\phi$ in the end. \rref{app:dl-semantics} recalls the denotational semantics.
\rref{eq:dgl-grammar} shows the grammar of \dGL hybrid games.
\begin{equation}
\label{eq:dgl-grammar}
\begin{aligned}
   \alpha \equiv\  x:=e \mid \alpha;\beta \mid \ptest{Q} &\mid \{x'=f(x) \& Q\}\phantom{^d} \mid \alpha^\ast \mspace{4mu} \mid \alpha\cup\beta \mid x:=\ast \\
   \mid \mspace{1.5mu} ! \mspace{1.5mu} Q &\mid \{x'=f(x) \& Q\}^d \mid \alpha^\times \mid \alpha\cap\beta \mid x:=\otimes
\end{aligned}
\end{equation}

The \emph{subgames} of a \dGL game $\alpha$ are the games that are recursively composed to construct $\alpha$, and correspond to the nodes of the abstract syntax tree (AST) induced by the grammar of \rref{eq:dgl-grammar}.

We briefly describe the formal representation of actions and policies.
Sections \ref{sec:stepping-through-dgl} and \ref{sec:subvalue-maps} will revisit these ideas with further detail.
In any given initial state, a \dGL game generates a game tree (\rref{sec:labeled-subgames}) where each node corresponds to an action.
A \emph{play} of a game is a sequence of actions corresponding to a path from a root of the game tree to a leaf.
Actions map old state to new state.
For example, the action \((x:=e)\) maps state \(\sigma\) to \(\sigma(x \mapsto e)\), i.e., \(\sigma\) with value \(e\) replaced for \(x\).
Since agents and their gameplay are symmetric, in this paper we often focus on the Angelic perspective without loss of generality.

An Angelic (nondeterministic) \emph{policy} is a function that maps every subgame and state to the set of actions that Angel can take at that subgame and state.
\emph{Henceforth in this paper, policy will be used to mean nondeterministic policy}.
Angel \emph{follows} an Angelic policy if at every subgame she must take a decision, she plays one of the actions permitted by the policy. 
An Angelic policy \emph{wins} if Angel following it guarantees her reaching the end of the game in her winning region.

This paper synthesizes a winning policy for Angel, which characterizes the space of control actions that ensures her victory, given a \dGL game and winning condition.
We introduce a way to represent winning policies by capturing the player's \emph{winning regions}, i.e., the set of game states from which the player has a way to win regardless of what the opponent does.
As usual, formulas can be interpreted as the set of states in which they are true.
Thus, formula $\ddiamond{\alpha}{\phi}$ represents the winning region of Angel for game $\alpha$ and winning condition $\phi$.
Notation $\models \phi$ means formula $\phi$ is valid (true in all states).

\newcommand{\preci}[1]{\phi_{#1}}

\section{Overview}
\label{sec:overview}%

\newcommand{\astSpacing}{0.7cm}%
\newcommand{\precAstAnnot}[2]{$#1$ \ \textcolor{gray}{\scriptsize $\preci{#2}$}}%
\newcommand{\AlgGuess}{\textbf{Guess }}%
\newcommand{\AlgCheck}{\textbf{Check }}%
\newcommand{\AlgNoCheck}{\textbf{\st{Check} }}%
\newcommand{\AlgDeduce}{\textbf{Deduce }}%
\newcommand{\ExRelaxedLoop}{n \!:=\! * \seq\! ?n \!\ge\! 0 \seq\! v \!:=\! v\!+\!an}
\newcommand{\AlgDef}{\equiv}%
\newcommand{\AlgSubApprox}{\Leftarrow}%
\newcommand{\AlgEqv}{\Leftrightarrow}%
\begin{figure}[!hb]

    \hrule

    \medskip
    
    \scriptsize

   \begin{minipage}[t]{0.53\textwidth}
   \noindent \textbf{1. Input} \\
    \textbf{Game: }
      $(((v := v + a)^* \cup (v:=v-1 \seq a := *)) \seq \{x'=v\}^d)^{\times}$
    \textbf{Angel's Goal: } $x > 0$

   \medskip
   \hrule
   \bigskip

    \noindent \textbf{3. Policy for Angel derived from subvalue map}
    
    \smallskip
    
    \begingroup
    \renewcommand{\arraystretch}{1.5}
   \begin{tabular}{m{0.17\textwidth}m{0.73\textwidth}}
    $\boxed{\cup}\,$ &
        Taking the {left} (resp. right) branch is allowed if $\preci{6}$ (resp. $\preci{5}$) is true. \\
    $\boxed{*}\,$ & (Re-)entering the loop is allowed if $\preci{7}$ is true. Exiting is allowed if $\preci{2}$ is.  \\
    $\boxed{a := *}\,$ & The value assigned to $a$ must satisfy $\preci{2}$. \\
    \end{tabular}
    \endgroup

    \medskip
   \end{minipage}
   \hfill
   \vrule
   \hfill
   \begin{minipage}[t]{0.43\textwidth}
    \noindent \textbf{2. Compute subvalue map}\\
     \scriptsize
     \vspace{-5pt}
    \begin{tikzpicture}
      [
        baseline=(current bounding box.north),
        level distance={\astSpacing},
        every node/.style={draw,rectangle,align=center},
        sibling distance=3cm
      ]
      \node (root) {\precAstAnnot{\times}{1}}
        child {
            node {\precAstAnnot{\,;\,}{9}}
            [sibling distance=1.8cm]
            child {
                node {\precAstAnnot{\,\cup\,}{8}}
                [sibling distance=2.80cm]
                child {
                    node {\precAstAnnot{\,*\,}{6}}
                    child {node {\precAstAnnot{v\!:=\!v\!+\!a}{7}}}
                }
                child {
                    node {\precAstAnnot{\,;\,}{5}}
                    [sibling distance=1.75cm]
                    child {node {\precAstAnnot{v\!:=\!v\!-\!1}{4}}}
                    child {node {\precAstAnnot{a\!:=\! *}{3}}}
                }
            }
            child { node {\precAstAnnot{\{x'=v\}^d}{2}} }
        };
        \node[right=0.5cm of root] {\precAstAnnot{\finalNode}{\finalNode}};
    \end{tikzpicture}
   \end{minipage}

   \smallskip
   
    \bigskip
    \hrule
    \bigskip

  \textbf{Computing the subvalue map}:

     \begin{align*}
        \AlgDeduce \preci{\finalNode} &\AlgDef x > 0 & \text{(Goal)} \\
        \AlgGuess \preci{1} &\AlgDef \big\langle (((v := v + a)^* \cup (v:=v-1 \seq a := *)) \seq \{x'=v\}^d)^{\times} \big\rangle \, x > 0 & \text{(Seek Invariant)} \\
        &\AlgSubApprox \big\langle ((\ExRelaxedLoop) \cup (v\!:=\!v\!-\!1 \seq\! a\! \!:=\! *)) \seq\! \{x'\!=\!v\}^d \big\rangle \, x \!>\! 0 & \text{(Refinement)} \\ 
        &\AlgEqv x > 0 \,\wedge\, (v \ge 0 \vee a > 0)  & \text{(Axioms, QE)} \\
        \AlgDeduce \preci{2} &\AlgDef \langle \{x'=v\}^d \rangle \, \preci{1} \AlgEqv \forall t \, (t \ge 0 \limply \preci{1}(x \mapsto x + vt)) & \text{(Solve ODE)} \\
        &\AlgEqv x > 0 \wedge v \ge 0 & \text{(QE)} \\ 
        \AlgDeduce \preci{3} &\AlgDef \exists a \, \phi_2 \AlgEqv x > 0 \wedge v \ge 0, \quad  \preci{5} \AlgDef \preci{4} \AlgDef \phi_3(v\!-\!1 \mapsto v) \AlgEqv x > 0 \wedge v \ge 1 & \text{(Axioms, QE)} \\
        \AlgGuess \preci{6} &\AlgDef \big\langle (v := v + a)^*\big\rangle \,\preci{2} \,\AlgEqv\, \big\langle \ExRelaxedLoop \big\rangle \, \preci{2} & \text{(Invariant)} \\ 
        &\AlgEqv x > 0 \wedge (v \ge 0 \vee a > 0) & \text{(QE)} \\
        \AlgDeduce \preci{7} &\AlgDef \preci{6}(v\mapsto v\!+\!a) \AlgEqv x > 0 \wedge (v + a \ge 0 \vee a > 0) & \\
        \AlgCheck \preci{6} &\limply \big\langle (?\preci{7} \seq v := v + a)^* \seq ?\preci{2} \big\rangle \preci{2} \text{ valid} & \text{(Check Invariant)}\\
        \AlgDeduce \preci{9} &\AlgDef \preci{8} \AlgDef \preci{6} \vee \preci{5} \AlgEqv x > 0 \wedge (v \ge 0 \vee a > 0) \\
        \AlgCheck \preci{1} &\limply \preci{9} \land x > 0 \text{ valid} & \text{(Check Invariant)} 
    \end{align*}

    \hrule

    \medskip
   
  \caption{An \emph{inductive subvalue} map for a simple hybrid game and the associated policy. The subvalue formulas are numbered in the order in which they are derived by the algorithm presented in Section~\ref{sec:solving-algorithm}. We will revisit this example throughout the paper, providing more explanations.}
  \label{fig:overview}
\end{figure}

To illustrate our new concept of an \emph{inductive subvalue map} and how it captures a policy (set of control solutions), let us consider the simple hybrid game from \rref{fig:overview}.
In this game, Angel aims to ensure $x > 0$ after each iteration of a loop ($^\times$) controlled by Demon. At each iteration, Angel must choose ($\cup$) between two alternative paths, after which Demon continuously evolves $x$ at rate $v$ for a duration of his choice ($\{x'=v\}^d$). In the first path, Angel can increment $v$ by an amount of $a$ in a loop ($^\ast$) as many times as she desires. In the second path, $v$ is decremented by 1 and Angel gets an opportunity to arbitrarily set the value of $a$ ($a:=\ast$).

Our goal is to characterize not \emph{one}, but \emph{all} of the ways Angel can win (or at least \emph{as many} as possible). At every choice point, we must determine which actions are compatible with Angel still winning the game and which actions are not.
This way, Angel can pursue additional, unmodeled and possibly changing secondary goals without threatening her mission-critical objective of ensuring $x > 0$.
In the traditional setting of control theory \cite{Bellman:DynamicProgramming,BellmanDreyfus1962} and reinforcement learning \cite{rl,Sutton1988LearningTP,10.1162/089976699300016070}, this information is typically presented in the form of a \emph{value function} that splits the state space into winning and losing states. This work proposes a representation of such value functions for hybrid games which is \emph{sound}, \emph{symbolic}, and which respects their \emph{composable} structure.

\newcommand{\AdvExLoop}[1]{\textcolor{olive}{#1}}
\newcommand{\AdvExAssign}[1]{\textcolor{blue}{#1}}
\newcommand{\AdvExChoice}[1]{\textcolor{red}{#1}}

\paragraph{Subvalue maps (\rref{sec:subvalue-maps}).} A \emph{subvalue map} associates \emph{each} subgame of a hybrid game to a formula that \emph{conservatively} estimates the set of states from which the overall game can be won, \emph{starting from} this specific subgame. In the example from \rref{fig:overview}, $\preci{1}$ provides a sufficient initial condition for winning the game. More interestingly, $\preci{5}$ provides a sufficient condition for winning the subgame starting from the second Angel branch within the loop. It must respect
\begin{equation}
    \label{eq:cont-example}
    \vDash \  \preci{5} \,\limply\,  \big\langle v:=v-1 \seq a := * \seq \{x'=v\}^d \seq \alpha^\times \big\rangle \, x > 0
\end{equation}
where $\alpha$ stands for the full Demon loop body.
At every control point for Angel, the subvalue map indicates which actions can be taken without forfeiting winning.
For example, when reaching the $\cup$ decision point, Angel can safely take the left branch if $\preci{6}$ is true.
Similarly, she can safely take the right branch if $\preci{5}$ is.
What if none of these hold though?
Useful policies permit at least one action to be chosen.
For this reason, \rref{sec:solution-representation} introduces the refined concept of an \emph{inductive} subvalue map, whose local compatibility conditions ensure this property.
In our example, these conditions mandate in particular that $\preci{8} \limply \preci{6} \vee \preci{5}$ is valid, ensuring that a safe action is always available to Angel at subgame $\cup$.

\paragraph{Internalizing policy reasoning.} It is possible to reason about the nondeterministic policies represented by subvalue maps \emph{within \dGL itself}.
In particular, the \emph{universal projection} of a game onto a subvalue maps constructs a \dGL game that wins precisely when \emph{all plays that respect the subvalue-map policy} are guaranteed to win.
The \emph{existential projection} of a game onto a subvalue map wins when \emph{there exists some policy-compliant play} that wins.
These projections let us use \dGL to state and prove properties, keeping the theory tidy.
The safety of the monitor derived from an inductive subvalue map that keeps the game on track for winning can similarly be expressed in $\dGL$ itself.
In our example, the following formula is \emph{guaranteed} to be valid:
\begin{equation}
  \label{eq:overview-projection}
  \begin{aligned}
      \phi_1 \limply \big\langle \big( &\big(
      \AdvExChoice{?(\preci{6} \!\vee\! \preci{5})} \seq 
      \big(
          \AdvExChoice{!\preci{6}} \seq \AdvExLoop{?\preci{6}} \seq (\AdvExLoop{!\preci{7}} \seq v\!:=\!v\!+\!a \seq \AdvExLoop{?\preci{6}})^{\AdvExLoop{\times}} \seq \AdvExLoop{!\preci{2}}
      \big)
      \; \AdvExChoice{\cap} \\
      &\big(\AdvExChoice{!\preci{5}} \seq v\!:=\!v\!-\!1 \seq \AdvExAssign{?(\exists a \,\preci{2})} \seq a \!:=\! \AdvExAssign{\otimes} \seq \AdvExAssign{!\preci{2}}\big)\big) \seq \{x'\!=\!v\}^{\!d}\big)^{\times} \big\rangle \, x>0.
  \end{aligned}
  \end{equation}
In the game above, every Angel choice from the original game has been transferred to Demon, on the condition that he complies with Angel's subvalue strategy monitor.
Compliance is always possible, as ensured by proper Angel tests.

\paragraph{Existence of an optimal subvalue map (\rref{sec:ordering-solutions}).} Subvalue maps can be ordered by precision, and thus by how permissive the policy is.
\rref{sec:ordering-solutions} formalizes such an ordering, where the main subtlety is about ensuring that monitor guards are compared \emph{in the context} in which they are reachable.
An optimal subvalue map \emph{exists} that corresponds to the dominant \dGL strategy.
In our example, this solution maps $\preci{5}$ to the right-hand side of the implication in \rref{eq:cont-example}.
The optimal subvalue map construction achieves completeness of the optimal subvalue map synthesis problem.

\paragraph{Solving hybrid games via symbolic execution (\rref{sec:solving-algorithm}).}
For practical applications, it is useful for subvalue maps to compute the set of available actions at a given state efficiently.
An algorithmic framework for computing such subvalue maps of hybrid games is presented in Section~\ref{sec:solving-algorithm}.
Like precondition calculus \cite{10.1145/360933.360975}, it is parametrized on an invariant generation procedure.
We propose \emph{game rewriting} as a general principle for implementing this procedure (\rref{sec:oracle-implementation}).
We illustrate the resulting algorithm in \rref{fig:overview}.
The algorithm leverages the fact that any $\dGL$ formula that is free of loops and whose ODEs have polynomial solutions can be automatically rewritten into an equivalent formula of propositional real arithmetic, by using the axioms of $\dGL$ along with quantifier elimination (QE)~\cite{tarski}.
From there, it works backwards to bound the subvalue of all subgames.
Whenever a loop is hit, a guess is needed for a subvalue of the associated subgame.
Such a guess is typically made by using heuristics to conservatively rewrite it into a loop-free $\dGL$ formula that can be symbolically evaluated.
The algorithm then proceeds with the loop body, after which the validity of the guess is checked retrospectively.
\rref{sec:solving-algorithm} provides more details.
A subvalue map induces a \emph{lower} bound of a game's value function.
An \emph{upper} bound can also be computed by considering the game from the opposing player's perspective.
We can compare the dual subvalue maps and thereby symmetrically prove the optimality of the solution derived in \rref{fig:overview}.

The proofs in this paper largely follow from structural induction.
\emph{\rref{app:proofs} provides all proofs throughout this paper.}

\section{Prefixes and Suffixes}
\label{sec:stepping-through-dgl}

An \emph{Angelic subvalue map} maps every \dGL subgame to a formula denoting Angel's known winning subregion for the rest of the game starting at that subgame.
As it is not always possible to compute exact winning regions, the \emph{known} winning \emph{sub}regions in a subvalue map are conservative, safe subsets of the actual theoretical winning regions per \dGL semantics.
\rref{fig:overview} shows an example where each subgame, corresponding to nodes of the syntax tree, is mapped to its corresponding winning subregion ($\phi_1, \cdots \phi_9$).
In this example, the winning \emph{sub}regions are maximal, \emph{equal} to the winning regions.

This section develops the formalism necessary for defining subvalue maps.
A subvalue map maps each subgame to a formula that must logically imply the winning region for playing the rest of the game starting at that subgame.
To characterize the winning region starting from a subgame, we first introduce the \emph{game suffix}, or game remainder of a subgame.
The game suffix of a subgame features \emph{all} possible future behaviors of the overall game from \emph{any} point where an instance of the subgame is encountered.
Crucially, we characterize all these behaviors with a finite \dGL game.
Then, the exact winning region starting from a subgame will be defined as the winning region for the game suffix.

\subsection{Game Suffix Construction}
\label{sec:suffix-construction}

For disambiguation, each subgame is given a unique label.
Notation $\namedGame{\alpha}{a}$ refers to the \emph{named} game $\alpha$ where each subgame has a unique label, and the entire game $\alpha$ (corresponding to the root node of the AST) is labeled $a$.
Even subgames that are identical in structure but appear at different places in the game are given different labels to distinguish multiple occurrences.
$\nodes{\namedGame{\alpha}{a}}$ refers to the set of all named subgames of $\namedGame{\alpha}{a}$.
Game suffix $\fwd{b}{\namedGame{\alpha}{a}}$ is a game modeling what remains to be played after seeing some occurrence of subgame $\namedGame{\beta}{b}$ while playing $\namedGame{\alpha}{a}$ (\rref{def:execution-suffix})%
\footnote{
The notation is analogous to Python list slicing. $\fwd{b}{\namedGame{\alpha}{a}}$ represents the part of $\namedGame{\alpha}{a}$ appearing \emph{after} $b$. The complementary concept of \emph{game prefix} is written as $\prefix{b}{\namedGame{\alpha}{a}}$, and consists of the part of the game appearing before subgame $b$.}.
This is a complex idea: unboundedly many different plays of a game can lead to playing a given subgame \(b\), and even within a single play, \(b\) can occur an unbounded number of times.
The ability to succinctly characterize all suffix plays (leveraging \dGL) is crucial to the definition of subvalue maps.

\begin{example}
  Let the overall game of \rref{fig:overview} be $\namedGame{\alpha}{a}$, and let the label of each subgame be the index $i$ of the winning subregion formula $\phi_i$ shown on its corresponding node.
  Consider subgame 5 inside loop $\alpha$.
  All execution traces after any instance of subgame 5 have the remainder of the loop body ($\{x'=v\}^d$) followed by potential future iterations of the loop ($\alpha$).
  So the game suffix $\fwd{5}{\namedGame{\alpha}{a}}$ of subgame 5 is $\{x'=v\}^d\seq \alpha$.
\end{example}

\begin{definition}[Game suffix]
  \label{def:execution-suffix}
  The game suffix $\fwd{b}{\namedGame{\alpha}{a}}$ of subgame $\namedGame{\beta}{b}$ in $\nodes{\namedGame{\alpha}{a}}$ is constructed as follows.
  If $\namedGame{\alpha}{a}$ is $\namedGame{\beta}{b} \textrm{ then } \fwd{b}{\namedGame{\alpha}{a}} = \namedGame{\alpha}{a}$.
  Otherwise, if $\namedGame{\alpha}{a}$ has structure:
  \begin{equation*}
    \begin{aligned}
      \namedGame{(\namedGame{\gamma}{g})^\ast}{a} \textrm{ or } \namedGame{(\namedGame{\gamma}{g})^\times}{a} &\textrm{ then } \fwd{b}{\namedGame{\alpha}{a}} = (\fwd{b}{\namedGame{\gamma}{g}}) \seq \namedGame{\alpha}{a} \\
      \namedGame{(\namedGame{\gamma}{g} \cup \namedGame{\delta}{d})}{a} \textrm{ or } \namedGame{(\namedGame{\gamma}{g} \cap \namedGame{\delta}{d})}{a} &\textrm{ then } \begin{cases}
        \fwd{b}{\namedGame{\alpha}{a}} = \fwd{b}{\namedGame{\gamma}{g}} &b \in \nodes{\namedGame{\gamma}{g}} \\
        \fwd{b}{\namedGame{\alpha}{a}} = \fwd{b}{\namedGame{\delta}{d}} &b\in \nodes{\namedGame{\delta}{d}}
      \end{cases} \\
      \namedGame{(\namedGame{\gamma}{g} \seq \namedGame{\delta}{d})}{a} &\textrm{ then } \begin{cases}
        \fwd{b}{\namedGame{\alpha}{a}} = (\fwd{b}{\namedGame{\gamma}{g}})\seq \namedGame{\delta}{d} &b \in \nodes{\namedGame{\gamma}{g}} \\
        \fwd{b}{\namedGame{\alpha}{a}} = \fwd{b}{\namedGame{\delta}{d}} &b\in \nodes{\namedGame{\delta}{d}}
      \end{cases}
    \end{aligned}
  \end{equation*}
  In $\namedGame{(\namedGame{\gamma}{g} \seq \namedGame{\delta}{d})}{a}$ and $\namedGame{(\namedGame{\gamma}{g} \cup \namedGame{\delta}{d})}{a}$ cases, $b$ is either in $\nodes{\namedGame{\gamma}{g}}$ or $\nodes{\namedGame{\delta}{d}}$ because $b$ is a subgame of $\namedGame{\alpha}{a}$ (and not $a$ itself which is already handled initially).
  Cases where \(\namedGame{\alpha}{a}\) is atomic, i.e., \(\alpha \in \alpha \in \{x:=e, x:=*, \ptest{Q}, !Q, \{x'=f(x)\ \&\ Q\}, \{x'=f(x)\ \& \ Q\}^d\}\), \(b\) must be \(a\), so these cases are already handled initially.
\end{definition}

\begin{remark}
  The suffix characterizing behavior of a \dGL game after a subgame \emph{itself as a \dGL game} is analogous to how Brzozowski derivatives reason about the behavior seen in a regular expression after a given string as a new regular expression \cite{10.1145/321239.321249}.
\end{remark}

\subsection{Game Trees and Plays}
\label{sec:labeled-subgames}

This section discusses \dGL gameplay to precisely connect subgame suffixes to their defining property in \dGL operational semantics.
However, later in this paper we will rely only on the syntactic suffix definition that has already been introduced.
The background on \dGL game trees is also useful for understanding \emph{strategies}, which in \rref{sec:subvalue-maps} will provide a global view of the gameplay that a policy induces.
This section first recounts which plays lie within a game and makes precise which suffix plays occur after a subgame.
Then it defines a way to syntactically compare these play suffixes by the impact they have on state, \emph{ignoring minor game structure differences}, to ensure that the \emph{game suffix} has the same effect as these \emph{suffix plays}, establishing that game suffixes reconstruct all gameplay that can occur after a subgame is seen.
The definitions of game trees and plays that follow are from the existing operational semantics of \dGL \cite{DBLP:journals/tocl/Platzer15}[Appendix C], extended to account for labels.

A labeled game \(\namedGame{\alpha}{a}\), at a given initial state \(\sigma\) corresponds to a \emph{labeled game tree} \(\opSemantics{\namedGame{\alpha}{a}}{\sigma}\) where each node is associated with a subgame label and an action.
Using the conventions of descriptive set theory, the tree is represented as a prefix-closed set of the paths originating at the root (\rref{fig:tree-and-paths}).

\begin{example}
  In the game of \rref{fig:overview} let the label of each subgame be the index $i$ of the winning subregion formula $\phi_i$ shown on its corresponding node.
  Consider subgame 6, the inner loop.
  The game tree for this subgame has the paths \(\nodeLabel{6} \then \actstop\) for 0 iterations, \(\nodeLabel{6} \then \actgo \then \nodeLabel{7} \then (v:=v+a) \then \nodeLabel{6} \then \actstop\) for one iteration, and so on.
  Operator \(\then\) concatenates actions/sequences of actions.
  This tree is visually represented in \rref{fig:tree-and-paths}.
\end{example}

\begin{figure}[ht]
  \centering
  \begin{subfigure}[t]{0.48\textwidth}
    \centering
    \begin{align*}
      &\text{Paths} \AlgDef \{ \\
      &\quad (6{:}\,\actstop), (6{:}\,\actgo), \\
      &\quad (6{:}\,\actgo \then 7{:}\,v\coloneqq v{+}a), \\
      &\quad (6{:}\,\actgo \then 7{:}\,v\coloneqq v{+}a \then 6{:}\,\actstop),\\
      &\quad (6{:}\,\actgo \then 7{:}\,v\coloneqq v{+}a \then 6{:}\,\actgo \then 7{:}\,v\coloneqq v{+}a),\\
      &\quad \cdots \}
    \end{align*}
    \caption{Representation of game tree as (prefix-closed) set of paths through the tree.}
    \label{fig:paths-list}
  \end{subfigure}%
  \hspace{0.01\textwidth} 
  \begin{subfigure}[t]{0.45\textwidth}
    \centering
    \begin{tikzpicture}
      [
        baseline=(current bounding box.north),
        level distance={\astSpacing},
        every node/.style={draw,rectangle,align=center},
        sibling distance=3cm
      ]
      \node[draw=none, inner sep=0pt, minimum size=0pt] {} 
      child { node {\nodeLabel{6}:\actstop} }
      child {
        node {\nodeLabel{6}:\actgo}
        child {
            node {\nodeLabel{7}:(v:=v+a)}
            [sibling distance=1.8cm]
            child { node {\nodeLabel{6}:\actstop} }
            child {
                node {\nodeLabel{6}:\actgo}
                [sibling distance=2.80cm]
                child {
                  node {\nodeLabel{7}:(v:=v+a)}
                  child [draw=none] { edge from parent[dashed] }
                  child [draw=none] { edge from parent[dashed] }
                  }
            }
        }
      };
    \end{tikzpicture}
    \caption{Visualization of the labeled game tree}
    \label{fig:paths-tree}
  \end{subfigure}
  \caption{The game tree for subgame 6 of the game in \rref{fig:overview}, i.e., \namedGame{(\namedGame{v:=v+a}{7})^\ast}{6}}
  \label{fig:tree-and-paths}
\end{figure}

\rref{def:labeled-semantics} in \rref{app:labeled-game-trees} shows the full construction of trees%
\footnote{This is similar to the standard \dGL operational semantics \cite[Appendix C]{DBLP:journals/tocl/Platzer15}, but with the addition of labels to keep track of subgames and some notation changes.}.
A path in a tree can be represented as the sequence of labels and actions encountered while traversing the path, where the subgame label appears immediately before the action that subgame corresponds to.
Because of loops, subgame labels can repeat even within a single path.
For example, in \rref{fig:tree-and-paths} because of the Angelic loop, labels 6 and 7 repeat.

Tree nodes have actions, which correspond to transformations from old game state before the action to new state.
\rref{eq:action-semantics} shows the transformations each action corresponds to, where notation \(\runstrategy{\act}{\sigma}\) is the state reached by running action \(\act\) starting at state \(\sigma\).
Labels have no impact on state and correspond to the identity function.
For tests and ODEs, it is possible for the transition to be undefined. In this case, the player responsible for the action loses.

\begin{equation}
  \label{eq:action-semantics}
  \begin{aligned}
    &\runstrategy{x:=e}{\sigma} = \sigma(x \mapsto e)
    \qquad
    \runstrategy{\act}{\sigma} = \sigma \textrm{ for } \act \in \{\actgo, \actstop, \actleft, \actright, \actseq, \pdual{\actgo}, \pdual{\actstop}, \pdual{\actleft}, \pdual{\actright}, \textrm{label}\}
    \\
    &\runstrategy{\ptest{Q}}{\sigma} = \begin{cases}
      \sigma &\textrm{ if } \sigma\in \Sem{Q} \\
      \textrm{not defined} &\textrm{ otherwise}
    \end{cases} \\
    &\runstrategy{\{x'=f(x) \& Q @ t\}}{\sigma} = \varphi(t) \textrm{ for the unique (differentiable) }\varphi : [0, t] \rightarrow \textrm{States}, \varphi(0) = \sigma,\\
    &\ \textrm{for all }s \in [0,t]
    \left(\deriveWrt{\varphi(r)(x)}{r}(s)=f(\varphi(s)(x)) \land \varphi(s)\models Q\right).
    \textrm{ Not defined if no such \(\varphi\) exists.}
  \end{aligned}
\end{equation}

Some transformations have no impact on state.
These transformations correspond either to labels or to actions whose only role is to restrict tree structure.
This set of these actions is \(\skipActions = \{\actgo, \actstop, \actleft, \actright, \actseq, \pdual{\actgo}, \pdual{\actstop}, \pdual{\actleft}, \pdual{\actright}\}\).
A path is executed by inductively applying the actions: \(\runstrategy{a \then t}{\sigma} = \runstrategy{t}{\runstrategy{a}{\sigma}}\).

We syntactically compare paths and trees per the reachable states resulting from gameplay, ignoring structural differences, \emph{by erasing the nodes that have no impact on state}.
This is useful for characterizing subgame suffixes because they do not preserve structure, only the reachable states.
The action projection operator \(\eraseSkip\) returns the path with all labels and actions belonging to \(\skipActions\) removed (full construction in \rref{app:strategy-set-generation}, \rref{def:erase-noop}).
For paths \(p_1\) and \(p_2\), if \(\eraseSkip(p_1)=\eraseSkip(p_2)\), then \(\runstrategy{p_1}{\sigma} = \runstrategy{p_2}{\sigma} = \runstrategy{\eraseSkip(p_1)}{\sigma}\).

\(\eraseSkip\) is overloaded to apply similarly to game trees.
For game tree \(t\), \(\eraseSkip(t)\) is a game tree where all paths have only actions that can impact state, produced by applying \(\eraseSkip\) to every path in \(t\).
Two trees result in the same reachable states (even if their structure differs) if their projections are the same.
We are now ready to show the defining property of game suffixes.

\begin{lemmaE}[Game Suffixes are Operational Gameplay Suffixes]
  \label{lem:execution-suffix}
  For subgame \(\namedGame{\beta}{b}\) in overall game \(\namedGame{\alpha}{a}\), let the game suffix be \(\fwd{b}{\namedGame{\alpha}{a}}\).
  Let \(\sigma\) be some state.
  The states reachable from playing \(\namedGame{\alpha}{a}\) such that \(\sigma\) is reached at subgame \(\namedGame{\beta}{b}\) at some point during the gameplay are the same as the states reachable from playing \(\fwd{b}{\namedGame{\alpha}{a}}\) starting in \(\sigma\).
  That is,
  \[\{\eraseSkip(s) \with p \then b \then s \in \opSemantics{\namedGame{\alpha}{a}}{\sigma'},
  \runstrategy{p}{\sigma'}=\sigma
  \textrm{ and }
  \eraseSkip(s) \neq () \}
  =
  \eraseSkip(\opSemantics{\fwd{b}{\namedGame{\alpha}{a}}}{\sigma}),\]
  where \(()\) is the empty path.
\end{lemmaE}
\begin{proofE}
  We prove
  \[
  \{\eraseSkip(s) \with p \then b \then s \in \opSemantics{\namedGame{\alpha}{a}}{\sigma'},
  \runstrategy{p}{\sigma'}=\sigma
  \textrm{ and }
  \eraseSkip(s) \neq () \}
  =
  \eraseSkip(\opSemantics{\fwd{b}{\namedGame{\alpha}{a}}}{\sigma})\]
  by induction on the structure of \(\namedGame{\alpha}{a}\), and case analysis.
  \begin{itemize}
    \item When \(\namedGame{\alpha}{a}\) is atomic and not controlled by Angel,
    i.e., \(\alpha \in \{x:=e, x:=\otimes, \ptest{Q}, !Q, \{x'=f(x)\ \&\ Q\}^d\}\), then
    the only subgame \(b\) in \(\nodes{\namedGame{\alpha}{a}}\) is \(\namedGame{\alpha}{a}\) itself.
    Thus the only possibility is \(b=a\).
    \(\fwd{a}{\namedGame{\alpha}{a}}\) is the game \(\namedGame{\alpha}{a}\).
    \(\opSemantics{\fwd{a}{\namedGame{\alpha}{a}}}{\sigma}\) is then \(\opSemantics{\namedGame{\alpha}{a}}{\sigma}\).

    The left-hand side, \(\eraseSkip(\opSemantics{\namedGame{\alpha}{a}}{\sigma})\), is \(\opSemantics{\alpha}{\sigma}\), the label-free game tree of the original game (see \rref{def:label-free-game-tree}).
    Per the definition of labeled game trees for any atomic subgame \(\alpha\) (\rref{def:labeled-semantics}), \(\opSemantics{\namedGame{\alpha}{a}}{\sigma}\) is \(a \then t \with t \in \opSemantics{\alpha}{\sigma}\).
    Thus, the right-hand side is also \(\opSemantics{\alpha}{\sigma}\).

    \item Extending the previous case to compositional subgames, when \(b=a\) and \(\namedGame{\alpha}{a}\) is not a loop, then a similar argument holds.
    
    The \emph{left-hand side} is \(\eraseSkip(\opSemantics{\namedGame{\alpha}{a}}{\sigma})\).

    The \emph{right-hand side} is
    \(\{\eraseSkip(s) \with
      p \then a \then s \in \opSemantics{\namedGame{\alpha}{a}}{\sigma'},
      \eraseSkip(s) \neq ()
      \textrm{ and } \runstrategy{p}{\sigma'}=\sigma\}
    \).
    Per the definition of (non-loop) labeled game trees, the only occurrence of label \(a\) is at the beginning of all plays and thus the right-hand side is equal to \(\{\eraseSkip(s) \with a \then s \in \opSemantics{\namedGame{\alpha}{a}}{\sigma}, \eraseSkip(s) \neq ()\}\).
    This is just \(\eraseSkip(\opSemantics{\namedGame{\alpha}{a}}{\sigma})\).
 
    \item When \(\namedGame{\alpha}{a}\) is a loop, i.e., \(\namedGame{(\namedGame{\gamma}{g})^\ast}{a}\) or \(\namedGame{(\namedGame{\gamma}{g})^\times}{a}\), and \(b\neq a\), then
    \(\fwd{b}{\namedGame{\alpha}{a}} = (\fwd{b}{\namedGame{\gamma}{g}}) \seq \namedGame{\alpha}{a}\).
    
    The \emph{left-hand side} is then \(\eraseSkip(\opSemantics{\fwd{b}{\namedGame{\gamma}{g}} \seq \namedGame{\alpha}{a}}{\sigma})\).

    Per the inductive hypothesis, \(\eraseSkip(\opSemantics{\fwd{b}{\namedGame{\gamma}{g}}}{\sigma})\) is equal to \(\{ \eraseSkip(s) \with p \then b \then s \in \opSemantics{\namedGame{\gamma}{g}}{\sigma'} \textrm{ and } \runstrategy{p}{\sigma'}=\sigma\}\).
    Thus, per the definition of labeled game trees for sequential composition games,
    the left-hand side is
    \(
      \{
        \eraseSkip(s) \with p \then b \then s \in \opSemantics{\namedGame{\gamma}{g}}{\sigma'}
        \textrm{ and }
        \runstrategy{p}{\sigma'}=\sigma \}
      \cup
      \{
        \eraseSkip(s \then t) \with p \then b \then s \in \leaf(\opSemantics{\namedGame{\gamma}{g}}{\sigma'}),
        \runstrategy{p}{\sigma'}=\sigma,
        \runstrategy{s}{\sigma}=\sigma'' \textrm{ and }
        t \in \opSemantics{\namedGame{\alpha}{a}}{\sigma''}
      \}
    \), where \(\leaf\) returns the set of leaf nodes in a tree as discussed in \rref{app:labeled-game-trees}.

    The \emph{right-hand side} is \(\{\eraseSkip(s) \with p \then b \then s \in \opSemantics{\namedGame{\alpha}{a}}{\sigma'}, \eraseSkip(s)\neq () \textrm{ and } \runstrategy{p}{\sigma'}=\sigma\}\).
    Per the definition of labeled game trees for loops, we can break this down by unrolling the loop: 
    \(
      \{
        \eraseSkip(s) \with p \then b \then s \in \opSemantics{\namedGame{\gamma}{g}}{\sigma'}
        \textrm{ and }
        \runstrategy{p}{\sigma'}=\sigma \}
      \cup
      \{
        \eraseSkip(s \then t) \with p \then b \then s \in \leaf(\opSemantics{\namedGame{\gamma}{g}}{\sigma'}),
        \runstrategy{p}{\sigma'}=\sigma,
        \runstrategy{s}{\sigma}=\sigma'' \textrm{ and }
        t \in \opSemantics{\namedGame{\alpha}{a}}{\sigma''}
      \}
    \).

    \item Otherwise, when \(b=a\) and \(\namedGame{\alpha}{a}\) is a loop, i.e., \(\namedGame{(\namedGame{\gamma}{g})^\ast}{a}\) or \(\namedGame{(\namedGame{\gamma}{g})^\times}{a}\), then \(\fwd{a}{\namedGame{\alpha}{a}} = \namedGame{\alpha}{a}\).

    The \emph{left-hand side} is \(\eraseSkip(\opSemantics{\namedGame{\alpha}{a}}{\sigma})\).
    
    The \emph{right-hand side} is \(\{\eraseSkip(s) \with p \then a \then s \in \opSemantics{\namedGame{\alpha}{a}}{\sigma'}, \eraseSkip(s)\neq () \textrm{ and } \runstrategy{p}{\sigma'}=\sigma\}\).
    By the choice of \(\sigma'=\sigma\), \(p=()\) and the definition of labeled game trees, we can show that this is a superset of \(\eraseSkip(\opSemantics{\namedGame{\alpha}{a}}{\sigma})\).
    Moreover, by the definition of labeled game trees for loops, we know that any suffix \(s\) that follows label \(a\) must be generated by the loop fixed point function \(f\) from \rref{def:labeled-semantics} for initial state \(\sigma\).
    So, \(a \then s\) must be within the game tree \(\opSemantics{\namedGame{\alpha}{a}}{\sigma}\).
    Therefore, the right-hand side is also a subset of \(\eraseSkip(\opSemantics{\namedGame{\alpha}{a}}{\sigma})\).
    Thus, the left-hand side and right-hand side must be equal.
    \item When \(\namedGame{\alpha}{a}\) is a choice, i.e., \(\namedGame{(\namedGame{\gamma}{g} \cup \namedGame{\delta}{d})}{a}\) or \(\namedGame{(\namedGame{\gamma}{g} \cap \namedGame{\delta}{d})}{a}\), and \(b\neq a\), then there are two cases.
    \begin{enumerate}
      \item If \(b\in \nodes{\namedGame{\gamma}{g}}\), then \(\fwd{b}{\namedGame{\alpha}{a}} = \fwd{b}{\namedGame{\gamma}{g}}\).
      
      The \emph{left-hand side} is \(\eraseSkip(\opSemantics{\fwd{b}{\namedGame{\gamma}{g}}}{\sigma})\).
      Per the inductive hypothesis, this is \(\{ \eraseSkip(s) \with p \then b \then s \in \opSemantics{\namedGame{\gamma}{g}}{\sigma'}, \eraseSkip(s)\neq () \textrm{ and } \runstrategy{p}{\sigma'} = \sigma\}\).

      The \emph{right-hand side} is \(\{\eraseSkip(s) \with p \then b \then s \in \opSemantics{\namedGame{\alpha}{a}}{\sigma'}, \eraseSkip(s)\neq () \textrm{ and } \runstrategy{p}{\sigma'} = \sigma\}\).
      Per the definition of labeled game trees for choices, \(\opSemantics{\namedGame{\alpha}{a}}{\sigma'}\) is the union of the game trees of \(\namedGame{\gamma}{g}\) and \(\namedGame{\delta}{d}\).
      However, \(b\) does not appear in \(\namedGame{\delta}{d}\).
      Thus the right-hand side can be simplified to \(\{\eraseSkip(s) \with p \then b \then s \in \opSemantics{\namedGame{\gamma}{g}}{\sigma'}, \eraseSkip(s)\neq () \textrm{ and } \runstrategy{p}{\sigma'} = \sigma\}\).
      \item If \(b\in \nodes{\namedGame{\delta}{d}}\), then \(\fwd{b}{\namedGame{\alpha}{a}} = \fwd{b}{\namedGame{\delta}{d}}\), and a symmetric argument applies, with \(\namedGame{\gamma}{g}\) and \(\namedGame{\delta}{d}\) swapped.
    \end{enumerate}
    \item When \(\namedGame{\alpha}{a}\) is a sequential composition, i.e., \(\namedGame{(\namedGame{\gamma}{g} \seq \namedGame{\delta}{d})}{a}\), and \(b\neq a\), then there are two cases.
    \begin{enumerate}
      \item If \(b\in \nodes{\namedGame{\gamma}{g}}\), then \(\fwd{b}{\namedGame{\alpha}{a}} = (\fwd{b}{\namedGame{\gamma}{g}})\seq \namedGame{\delta}{d}\).
      The argument is similar to the loop case.
      
      The \emph{left-hand side} is \(\eraseSkip(\opSemantics{(\fwd{b}{\namedGame{\gamma}{g}})\seq \namedGame{\delta}{d}}{\sigma})\).
      As per the inductive hypothesis, \(\eraseSkip(\opSemantics{\fwd{b}{\namedGame{\gamma}{g}}}{\sigma})\) is equal to \(\{ \eraseSkip(s) \with p \then b \then s \in \opSemantics{\namedGame{\gamma}{g}}{\sigma'}, \eraseSkip(s)\neq () \textrm{ and } \runstrategy{p}{\sigma'}=\sigma\}\).
      Thus, per the definition of labeled game trees for sequential composition games,
      the left-hand side is
      \(
        \{
          \eraseSkip(s) \with 
          p \then b \then s \in \opSemantics{\namedGame{\gamma}{g}}{\sigma'},
          \runstrategy{p}{\sigma'}=\sigma,
          \eraseSkip(s)\neq ()
        \} \cup
        \{
          \eraseSkip(s \then t) \with 
          p \then b \then s \in \leaf(\opSemantics{\namedGame{\gamma}{g}}{\sigma'}),
          t \in \opSemantics{\namedGame{\delta}{d}}{\sigma''},
          \runstrategy{p}{\sigma'}=\sigma,
          \eraseSkip(s \then t)\neq () \textrm{ and }
          \runstrategy{s}{\sigma}=\sigma''
        \}
      \).

      The \emph{right-hand side} is \(\{\eraseSkip(s) \with p \then b \then s \in \opSemantics{\namedGame{\alpha}{a}}{\sigma'}, \eraseSkip(s)\neq () \textrm{ and } \runstrategy{p}{\sigma'}=\sigma\}\).
      Per the definition of labeled game trees for sequential composition, since \(b\not\in\nodes{\namedGame{\delta}{d}}\), this is:
      \(
        \{
          \eraseSkip(s) \with 
          p \then b \then s \in \opSemantics{\namedGame{\gamma}{g}}{\sigma'},
          \runstrategy{p}{\sigma'}=\sigma,
          \eraseSkip(s)\neq ()
        \} \cup
        \{
          \eraseSkip(s \then t) \with 
          p \then b \then s \in \leaf(\opSemantics{\namedGame{\gamma}{g}}{\sigma'}),
          t \in \opSemantics{\namedGame{\delta}{d}}{\sigma''},
          \runstrategy{p}{\sigma'}=\sigma,
          \eraseSkip(s \then t)\neq () \textrm{ and }
          \runstrategy{s}{\sigma}=\sigma''
        \}
      \).
      \item If \(b\in \nodes{\namedGame{\delta}{d}}\), then \(\fwd{b}{\namedGame{\alpha}{a}} = \fwd{b}{\namedGame{\delta}{d}}\).
      
      The \emph{left-hand side} is \(\eraseSkip(\opSemantics{\fwd{b}{\namedGame{\delta}{d}}}{\sigma})\).
      Per the inductive hypothesis, this is \(\{ \eraseSkip(s) \with p \then b \then s \in \opSemantics{\namedGame{\delta}{d}}{\sigma'}, \eraseSkip(s)\neq () \textrm{ and } \runstrategy{p}{\sigma'} = \sigma\}\).

      The \emph{right-hand side} is \(\{\eraseSkip(s) \with p \then b \then s \in \opSemantics{\namedGame{\alpha}{a}}{\sigma'}, \eraseSkip(s)\neq () \textrm{ and } \runstrategy{p}{\sigma'} = \sigma\}\).
      Per the definition of labeled game trees for sequential composition, using the fact that \(b\not\in\nodes{\namedGame{\gamma}{g}}\), this can be simplified to \(\{\eraseSkip(s) \with p \then b \then s \in \opSemantics{\namedGame{\delta}{d}}{\sigma'} \textrm{ and } \runstrategy{p}{\sigma'} = \sigma\}\).
    \end{enumerate}
  \end{itemize}
\end{proofE}

In \rref{lem:execution-suffix}, the quantity on the left-hand side considers all paths in the original game that contain \(b\).
These must be of the form \(p\then b\then s\) where \(p\) is a prefix of \(b\) and \(s\) is a suffix.
The suffix set is built by collecting the projection of all such suffixes (ignoring sequences that are empty after applying \(\eraseSkip\)).
The second quantity on the right-hand side is the projection of the game tree of the subgame suffix at state \(\sigma\).
Although suffix $\fwd{b}{\namedGame{\alpha}{a}}$ may change game structure, ignoring \(\skipActions\), it captures all remaining plays occurring after any occurrence of $\namedGame{\beta}{b}$ while playing $\namedGame{\alpha}{a}$.

\subsection{Game Prefix}
\label{sec:game-prefix}

We also define the complementary concept of the \emph{game prefix}, which will be useful to compute in \emph{what contexts} a subgame can be seen.
In \rref{sec:subvalue-maps}, for example, the prefix of a subgame will help show that while following an Angelic subvalue map, Angel only reaches the subgame within the region from which she can win the rest of the game.
The game prefix of subgame $b$ in the game $\namedGame{\alpha}{a}$, written $\prefix{b}{\namedGame{\alpha}{a}}$, features \emph{all} possible behaviors of game $\alpha$ that can happen \emph{before} an occurrence of subgame $b$ within $\namedGame{\alpha}{a}$.
It is constructed analogously to the game suffix (\rref{app:prefix}, \rref{def:execution-prefix}) and satisfies an analogous property (\rref{lem:execution-prefix}).

\begin{lemmaE}[Execution Prefixes are Original Gameplay Prefixes]
  \label{lem:execution-prefix}
  For subgame \(\namedGame{\beta}{b}\) in \(\namedGame{\alpha}{a}\), for any initial state \(\sigma\), the states reachable at \(\namedGame{\beta}{b}\) while playing \(\namedGame{\alpha}{a}\) are equal to those reachable by playing game prefix \(\prefix{b}{\namedGame{\alpha}{a}}\).
  That is, prefix set \(\{\eraseSkip(p) \with p \then b \then s \in \opSemantics{\namedGame{\alpha}{a}}{\sigma}\}\) is equal, under prefix closure, to prefix game tree \(\eraseSkip(\opSemantics{\prefix{b}{\namedGame{\alpha}{a}}}{\sigma})\)%
  \footnote{
    The prefix closure is required to recover tree structure. Unlike with the suffix construction, the prefix set builder collects only \emph{parent} subgames of \(\namedGame{\beta}{b}\) and \emph{not all predecessors}.
    Expanding the prefix closure results in the following condition:
    \[
      \{\eraseSkip(t) \with p \then b \then s \in \opSemantics{\namedGame{\alpha}{a}}{\sigma} \textrm{ and } p = t \then u\}
      =
      \eraseSkip(\opSemantics{\prefix{b}{\namedGame{\alpha}{a}}}{\sigma})
    \]
  }.
\end{lemmaE}
\begin{proofE}
  We show that
    \[
      \eraseSkip(\opSemantics{\prefix{b}{\namedGame{\alpha}{a}}}{\sigma})
      =
      \{\eraseSkip(t) \with p \then b \then s \in \opSemantics{\namedGame{\alpha}{a}}{\sigma} \textrm{ and } p = t \then u\}
    \]
  Similar to \rref{lem:execution-suffix}, we perform induction on the structure of \(\namedGame{\alpha}{a}\) and case analysis.
  \begin{itemize}
    \item When \(b=a\) and \(\namedGame{\alpha}{a}\) is not a loop, then \(\prefix{a}{\namedGame{\alpha}{a}}\) is the empty game \(\skp\).

    The \emph{left-hand side} \(\eraseSkip(\opSemantics{\prefix{a}{\namedGame{\alpha}{a}}}{\sigma})\) is then \(\emptyset\).

    Indeed, on the \emph{right-hand side}, all plays in \(\opSemantics{\namedGame{\alpha}{a}}{\sigma} \) are of the form \(a \then p\) where \(p\) does not contain \(a\), so the condition of the theorem holds.
    \item When \(\namedGame{\alpha}{a}\) is atomic and not controlled by Angel,
    i.e., \(\alpha \in \{x:=e, x:=\otimes, \ptest{Q}, !Q, \{x'=f(x)\ \&\ Q\}^d\}\), then
    the only subgame \(b\) in \(\nodes{\namedGame{\alpha}{a}}\) is \(\namedGame{\alpha}{a}\) itself.
    Thus the only possibility is \(b=a\), handled already by the first case.
    \item When \(\namedGame{\alpha}{a}\) is a choice, i.e., \(\namedGame{(\namedGame{\gamma}{g} \cup \namedGame{\delta}{d})}{a}\) or \(\namedGame{(\namedGame{\gamma}{g} \cap \namedGame{\delta}{d})}{a}\), then we analyze cases depending on the value of \(b \neq a\).
    \begin{enumerate}
      \item If \(b\) is in \(\namedGame{\gamma}{g}\), then \(\prefix{b}{\namedGame{\alpha}{a}} = \prefix{b}{\namedGame{\gamma}{g}}\).
      
      The \emph{left-hand side} is \(\eraseSkip(\opSemantics{\prefix{b}{\namedGame{\gamma}{g}}}{\sigma})\).
      Per the inductive hypothesis, this consists of 
      \[ \{ \eraseSkip(t) \with p \then b \then s \in \opSemantics{\namedGame{\gamma}{g}}{\sigma}, p=t \then u, \eraseSkip(t)\neq () \}.\]
      
      The \emph{right-hand side} is
      \( \{ \eraseSkip(t) \with p\then b \then s \in \opSemantics{\namedGame{\alpha}{a}}{\sigma}, p=t \then u, \eraseSkip(t)\neq () \} \).
      Per the definition of game trees for choices,
      since \(\opSemantics{\namedGame{\alpha}{a}}{\sigma} \) is the union of the game trees for \(\namedGame{\gamma}{g}\) and \(\namedGame{\delta}{d}\), and \(b\) does not appear in \(\namedGame{\delta}{d}\), this simplifies to
      \[ \{ \eraseSkip(t) \with p \then b \then s \in \opSemantics{\namedGame{\gamma}{g}}{\sigma}, p = t \then u, \eraseSkip(t) \neq () \}.\]
      \item If \(b\) is in \(\namedGame{\delta}{d}\), then the argument is symmetric to the previous case, but \(\namedGame{\gamma}{g}\) swapped with \(\namedGame{\delta}{d}\).
    \end{enumerate}
    \item When \(\namedGame{\alpha}{a}\) is a sequential composition, i.e., \(\namedGame{(\namedGame{\gamma}{g} \seq \namedGame{\delta}{d})}{a}\), then we analyze cases depending on the value of \(b \neq a\).
    \begin{enumerate}
      \item If \(b\) is in \(\namedGame{\gamma}{g}\), then \(\prefix{b}{\namedGame{\alpha}{a}} = \prefix{b}{\namedGame{\gamma}{g}}\).
      
      The \emph{left-hand side} is \(\eraseSkip(\opSemantics{\prefix{b}{\namedGame{\gamma}{g}}}{\sigma})\).
      Per the inductive hypothesis, this is
      \[ \{ \eraseSkip(t) \with p \then b \then s \in \opSemantics{\namedGame{\gamma}{g}}{\sigma}, p = t \then u, \eraseSkip(t)\neq () \}.\]

      The \emph{right-hand side} is \(\{\eraseSkip(t) \with p\then b \then s \in \opSemantics{\namedGame{\alpha}{a}}{\sigma}, p = t \then u, \eraseSkip(t)\neq ()\}\).
      Since \(b\not\in\nodes{\namedGame{\delta}{d}}\), using the definition of labeled game trees for sequential composition, this can be simplified to
      \[ \{\eraseSkip(t) \with p\then b \then s \in \opSemantics{\namedGame{\gamma}{g}}{\sigma}, p = t \then u, \eraseSkip(t)\neq ()\}.\]
      \item If \(b\) is in \(\namedGame{\delta}{d}\), then \(\prefix{b}{\namedGame{\alpha}{a}} = \namedGame{\gamma}{g} \seq \prefix{b}{\namedGame{\delta}{d}}\).
      
      The \emph{left-hand side} is \(\eraseSkip(\opSemantics{\namedGame{\gamma}{g} \seq \prefix{b}{\namedGame{\delta}{d}}}{\sigma})\).
      As per the inductive hypothesis, \(\eraseSkip(\opSemantics{\prefix{b}{\namedGame{\delta}{d}}}{\sigma})\) is equal to \(\{ \eraseSkip(t) \with p \then b \then s \in \opSemantics{\namedGame{\delta}{d}}{\sigma}, p = t \then u, \eraseSkip(t)\neq ()\}\).
      Thus, per the definition of labeled game trees for sequential composition games,
      the left-hand side is
      \[\eraseSkip(\opSemantics{\namedGame{\gamma}{g}}{\sigma}) \cup \{\eraseSkip(q \then t) \with q \in \leaf(\opSemantics{\namedGame{\gamma}{g}}{\sigma}),
      \runstrategy{q}{\sigma}=\sigma',
      p \then b \then s \in \opSemantics{\namedGame{\delta}{d}}{\sigma'},
      p= t \then u \}\]
      where \(\leaf\) indicates a leaf node a game tree, as indicated in \rref{app:labeled-game-trees}.

      The \emph{right-hand side} is \(\{\eraseSkip(t) \with p\then b \then s \in \opSemantics{\namedGame{\alpha}{a}}{\sigma}\}, p = t \then u, \eraseSkip(t)\neq ()\).
      Per the definition of labeled game trees for sequential composition, since \(b\not\in\nodes{\namedGame{\gamma}{g}}\), this can be written as
      \[\eraseSkip(\opSemantics{\namedGame{\gamma}{g}}{\sigma}) \cup \{\eraseSkip(q \then t) \with q \in \leaf(\opSemantics{\namedGame{\gamma}{g}}{\sigma}),
      \runstrategy{q}{\sigma}=\sigma',
      p \then b \then s \in \opSemantics{\namedGame{\delta}{d}}{\sigma'},
      p = t \then u\}.\]
    \end{enumerate}
    \item When \(\namedGame{\alpha}{a}\) is a loop, i.e., \(\namedGame{(\namedGame{\gamma}{g})^\ast}{a}\) or \(\namedGame{(\namedGame{\gamma}{g})^\times}{a}\), then we analyze cases depending on the value of \(b\).
    \begin{enumerate}
      \item If \(b = a\), then \(\prefix{b}{\namedGame{\alpha}{a}} = \namedGame{\alpha}{a}\).
      
      The \emph{left-hand side} is then \(\eraseSkip(\opSemantics{\namedGame{\alpha}{a}}{\sigma})\).

      The \emph{right-hand side} is \(\{\eraseSkip(t) \with p \then a \then s \in \opSemantics{\namedGame{\alpha}{a}}{\sigma}, p = t \then u, \eraseSkip(t)\neq ()\}\).
      By the definition of labeled game trees (\rref{def:labeled-semantics}), all plays in \(\opSemantics{\namedGame{\alpha}{a}}{\sigma}\) end with \(a \then \pdual{\actstop}\) or \(a \then \actstop\).
      Thus, from the elements where play \(s\) is \(\actstop\) or \(\pdual{\actstop}\), \(\{\eraseSkip(t) \with p \then a \then s \in \opSemantics{\namedGame{\alpha}{a}}{\sigma}, p = t \then u, \eraseSkip(t)\neq ()\}\) is a superset of \(\eraseSkip(\opSemantics{\namedGame{\alpha}{a}}{\sigma})\).
      Moreover, as labeled game trees are prefix-closed, \(\{\eraseSkip(t) \with p \then a \then s \in \opSemantics{\namedGame{\alpha}{a}}{\sigma}, p = t \then u, \eraseSkip(t)\neq ()\}\) is also a subset of \(\eraseSkip(\opSemantics{\namedGame{\alpha}{a}}{\sigma})\).
      Thus, the left-hand side and right-hand side must be equal.
      \item If \(b\neq a\), then \(\prefix{b}{\namedGame{\alpha}{a}} = (\namedGame{\alpha}{a} \seq \prefix{b}{\namedGame{\gamma}{g}})\).
      
      The \emph{left-hand side} is \(\eraseSkip(\opSemantics{\namedGame{\alpha}{a} \seq (\prefix{b}{\namedGame{\gamma}{g}})}{\sigma})\).
      As per the inductive hypothesis, \(\eraseSkip(\opSemantics{\prefix{b}{\namedGame{\gamma}{g}}}{\sigma})\) is equal to \(\{ \eraseSkip(t) \with p \then b \then s \in \opSemantics{\namedGame{\gamma}{g}}{\sigma}, p = t \then u, \eraseSkip(t)\neq ()\}\).
      Thus, per the definition of labeled game trees for sequential composition games,
      the left-hand side is 
      \[ \eraseSkip(\opSemantics{\namedGame{\alpha}{a}}{\sigma}) \cup \{\eraseSkip(q \then t) \with q \in \leaf(\opSemantics{\namedGame{\alpha}{a}}{\sigma}),
      \runstrategy{q}{\sigma}=\sigma',
      p \then b \then s \in \opSemantics{\namedGame{\gamma}{g}}{\sigma'},
      p = t \then u\}.\]

      The \emph{right-hand side} is \(\{\eraseSkip(t) \with p\then b \then s \in \opSemantics{\namedGame{\alpha}{a}}{\sigma}, p = t \then u, \eraseSkip(t)\neq ()\}\).
      This is a subset of \(\eraseSkip(\opSemantics{\namedGame{\alpha}{a}}{\sigma})\), and therefore the left-hand side, because game trees are prefix-closed.
      Next we show it is also a superset.
      Consider any play in the left-hand side set.
      For Angelic loops, it must be a prefix of some play of the form \(q \then p\) where \(q \in \leaf(f^n(a \caret \actstop, a \caret \actgo))\) for some \(n\), \(\runstrategy{q}{\sigma}=\sigma'\), and \(p \then b \then s \in \opSemantics{\namedGame{\gamma}{g}}{\sigma'}\) where \(f\) and \(\caret\) are from \rref{def:labeled-semantics}.
      Such a play and all its prefixes feature in the right-hand side set because they are in \(f^{n+1}(a \caret \actstop, a \caret \actgo)\) which is in \(\opSemantics{\namedGame{\alpha}{a}}{\sigma}\).
      For Demonic loops, the same argument applies but for \(f^n(a \caret \pdual{\actstop}, a \caret \pdual{\actgo})\) as defined in \rref{def:labeled-semantics}.
    \end{enumerate}
  \end{itemize}
\end{proofE}

\section{Subvalue Maps}
\label{sec:subvalue-maps}

This section defines subvalue maps, shows their interpretation as policies, and discusses the properties of these policies.

\subsection{Subvalue Map Definition}
\label{sec:subvalue-map-definition}

\rref{def:subvalue-map} characterizes a subvalue map.
Each subgame maps to a formula that must logically imply the winning region for playing the rest of the game starting at that subgame.
The exact winning region starting from a subgame can be defined as the winning region for the game suffix.
A special \(\finalNode\) subgame additionally stores the terminal winning subregion of the overall game that the policy induced by the subvalue map tries to reach.
Mathematically, when Angel is playing the overall game $\namedGame{\alpha}{a}$ with winning condition $S(\finalNode)$, her winning region at the point when she is about to play subgame $\namedGame{\beta}{b}$ to achieve $S(\finalNode)$ overall is $\ddiamond{\fwd{b}{\namedGame{\alpha}{a}}}{S(\finalNode)}$.

\begin{definition}[Subvalue maps]
  \label{def:subvalue-map}
  A map $S$ from the subgames of game $\namedGame{\alpha}{a}$ and special subgame \finalNode to formulas is an \emph{Angelic subvalue map} of $\namedGame{\alpha}{a}$ when $\models S(b) \limply \langle \fwd{b}{\namedGame{\alpha}{a}} \rangle S(\finalNode)$ for every subgame $\namedGame{\beta}{b}$ in $\nodes{\namedGame{\alpha}{a}}$.
  Dually, $S$ is a \emph{Demonic subvalue map} for game $\namedGame{\alpha}{a}$ when $\models S(\beta) \limply [\fwd{b}{\namedGame{\alpha}{a}}] S(\finalNode)$ for every subgame $\namedGame{\beta}{b}$ in $\nodes{\namedGame{\alpha}{a}}$.
\end{definition}

\paragraph{Interpretation as policy.}
Angelic subvalue maps effectively provide Angel with a (nondeterministic) policy that describes how to play the game.
For example, the map in \rref{fig:overview} says that Angel at her choice $\cup$ should not transition to $(v:=v+a)^\ast$ when $\phi_6$ is false, and not transition to $v:=v-1 \seq a:=\ast$ when $\phi_5$ is false.
The intuition is, at any given subgame where Angel has a choice (e.g. \rref{fig:overview} at $\cup$), the \emph{next subgame} that she chooses to run (e.g., left branch $(v:=v+a)^\ast$) should be such that the current state $\sigma$ satisfies the winning subregion for that subsequent subgame (for the example, $\sigma \models \phi_6$).
This next subgame to run after the current one ends is called the \emph{successor}, identified as \rref{def:successor} shows
\footnote{The successor is not just the next label to occur in the game suffix. This is because the successor should not be a child of the current subgame. It should only start when the current subgame ends.}.

\begin{definition}[Successor]
  \label{def:successor}
  The successor of subgame $\namedGame{\beta}{b}$ in overall game $\namedGame{\alpha}{a}$, written $\successor{b}{\namedGame{\alpha}{a}}$ is defined as follows.
  If $\namedGame{\alpha}{a}$ is $\namedGame{\beta}{b}$ then $\successor{b}{\namedGame{\alpha}{a}}$ is the special subgame \(\finalNode\). Otherwise, if $\namedGame{\alpha}{a}$ has structure:
  \begin{equation*}
    \begin{aligned}
      \namedGame{(\namedGame{\gamma}{g} \seq \namedGame{\delta}{d})}{a} \textrm{ then }&
      \begin{cases}
        \successor{b}{\namedGame{\alpha}{a}} = \namedGame{\delta}{d}
          & b \in \nodes{\namedGame{\gamma}{g}} \textrm{ and }\successor{b}{\namedGame{\gamma}{g}}=\finalNode  \\
        \successor{b}{\namedGame{\alpha}{a}} = \successor{b}{\namedGame{\gamma}{g}}
          & \textrm{otherwise if } b \in \nodes{\namedGame{\gamma}{g}} \\
        \successor{b}{\namedGame{\alpha}{a}} = \successor{b}{\namedGame{\delta}{d}}
          &\textrm{otherwise}
      \end{cases}
      \\
      \namedGame{(\namedGame{\gamma}{g})^\ast}{a} \textrm{ or } \namedGame{(\namedGame{\gamma}{g})^\times}{a} &\textrm{ then } \\
      & \begin{cases}
        \successor{b}{\namedGame{\alpha}{a}} = a & b \in \nodes{\namedGame{\gamma}{g}} \textrm{ and } \successor{b}{\namedGame{\gamma}{g}} = \finalNode \\
        \successor{b}{\namedGame{\alpha}{a}} = \successor{b}{\namedGame{\gamma}{g}} & \textrm{otherwise if } b \in \nodes{\namedGame{\gamma}{g}} \\
      \end{cases} \\
      \begin{aligned}
          \namedGame{(\namedGame{\gamma}{g} \cup \namedGame{\delta}{d})}{a} \textrm{ or }& \\
          \namedGame{(\namedGame{\gamma}{g} \cap \namedGame{\delta}{d})}{a}\ \ &
      \end{aligned}
      \textrm{ then }&
      \begin{cases}
        \successor{b}{\namedGame{\alpha}{a}} = \successor{b}{\namedGame{\gamma}{g}} &b \in \nodes{\namedGame{\gamma}{g}} \\
        \successor{b}{\namedGame{\alpha}{a}} = \successor{b}{\namedGame{\delta}{d}} &\textrm{otherwise}
      \end{cases}
    \end{aligned}
  \end{equation*}
\end{definition}

Using the successor definition, \rref{def:policy} shows how in general, Angel can interpret the subvalue map as a (nondeterministic) policy indicating what control decisions are safe to take to eventually win based on the current state.

\begin{definition}[Policy Interpretation]
  \label{def:policy}
  \emph{Policy interpretation function} \(\apolicy{\alpha}{a}(S)\) transforms an Angelic subvalue map $S$ for game $\namedGame{\alpha}{a}$ to a nondeterministic policy function that maps current state \(\sigma\) and subgame \(\namedGame{\beta}{b}\) to the set of acceptable control actions Angel can take.
  \(\apolicy{\alpha}{a}(S)(b, \sigma)\) is defined as below. When the current subgame $\namedGame{\beta}{b}$ has structure:
  \begin{equation*}
  \begin{aligned}
    \namedGame{(\namedGame{\delta}{d} \cup \namedGame{\gamma}{g})}{b} \text{ then }&
    \begin{cases}
      \{\mathfrak{l}, \mathfrak{r}\} & \sigma \models S(d) \land S(g) \\
      \{\mathfrak{l}\} & \text{else if } \sigma \models S(d) \\
      \{\mathfrak{r}\} & \text{else if } \sigma \models S(g) \\
      \emptyset & \text{otherwise}
    \end{cases} \\
    \namedGame{(\namedGame{\gamma}{g})^*}{b}\text{ then }&
    \begin{cases}
      \{\actstop, \actgo\} & \sigma \models S(g) \land S(\successor{b}{\namedGame{\alpha}{a}}) \\
      \{\actgo\} & \text{else if } \sigma \models S(g) \\
      \{\actstop\} & \text{else if } \sigma \models S(\successor{b}{\namedGame{\alpha}{a}}) \\
      \emptyset & \text{otherwise}
    \end{cases} \\
    \namedGame{x := *}{b} \text{ then }& \{ (x := e) \mid \sigma(x \mapsto e) \models S(\successor{b}{\namedGame{\alpha}{a}}) \} \\
    \namedGame{\{x' = f(x) \, \&\, Q\}}{b} &\text{ then } \{
      (x' = f(x) \, \&\, Q \,@\, t) \with t\geq 0, \forall s \in [0, t], \varphi(s) \models S(b) \text{ and } \\
      &\varphi(t) \models S(\successor{b}{\namedGame{\alpha}{a}})
      \text{ where } \varphi: [0, t] \to \mathcal{S} \text{ is differentiable, }\\ 
      &\varphi(0) = \sigma, \text{ and } 
      \forall s \in [0, t],\, \varphi(s) \models x' = f(x) \land Q
    \}
  \end{aligned}
  \end{equation*}
\end{definition}

\paragraph{Relationship with strategies.}
We have focused on \emph{policies} which provide local answers for what action to play next.
\emph{Strategies}, on the other hand, are global gameplay solutions.
The Angelic strategy at a given initial state corresponds to a game tree at that state that has been pruned so that at every subgame where Angel has to take a decision, she takes exactly one decision.
Thus, a strategy plans out all the future moves of its player for all potential actions of the opponent for all parts of the game at once.
Viewing policies from the lens of strategies is useful for understanding the global behaviors that a policy induces, and will be relevant in \rref{sec:subvalue-projection} and \rref{sec:universal-projection}. 
The set of strategies that Angel can play for the game \(\namedGame{\alpha}{a}\) while following the policy of a subvalue map \(S\) starting in a state \(\sigma\) is written as \(\astrategize{\namedGameSpace{\alpha}{a}}{S}(\sigma)\).
The construction is shown in \rref{app:strategy-set-generation}, \rref{def:strategy-set-generation}.
The intuition is to generate the game tree while restricting Angel to play per the policy.
When the policy has multiple options at a given Angelic decision point, \(\astrategize{\namedGameSpace{\alpha}{a}}{S}(\sigma)\) has different strategies for each way to proceed.
When the policy is deterministic, the set contains a single strategy.

\paragraph{Winning games.} Subvalue maps induce policies that are suitable to play games with \emph{compatible} winning conditions, i.e., winning conditions that contain the terminal subregion of the policy that \(\finalNode\) maps to (\rref{def:compatible-winning-conditions}).

\begin{definition}[Compatible winning conditions]
  \label{def:compatible-winning-conditions}
  An Angelic subvalue map $S$ for game $\namedGame{\alpha}{a}$ is said to be \emph{compatible with Angel winning condition} \(\phi\) when $\models S(\finalNode) \limply \phi$.
  Dually, a Demonic subvalue map $S$ for game $\namedGame{\alpha}{a}$ is compatible with Demon winning condition \(\phi\) when $\models S(\finalNode) \limply \phi$.
\end{definition}

If Angel plays a game for a compatible winning condition per a policy induced by an Angelic subvalue map, she always remains in a state where there \emph{exists} a way for her to win the remaining game (\rref{thm:subvalue-map-winning-region} at the end of this section).
However, the policy she needs to play to win \emph{might not lie within the subvalue map}.

To see what this can look like, for the example of \rref{fig:overview} at Angel choice $\cup$, consider the subvalue map in the figure except that \(\phi_6\) is \(\bot\) instead of \(x>0 \land (v\geq 0 \lor a>0)\).
This is still an Angelic subvalue map.
If we arrive at \(\phi_8\) in state \(\sigma \equiv \{x=0, v=0, a=0\}\), then Angel can win by choosing the left branch guarded by \(\phi_6\), but will lose if she chooses the right branch.
The policy induced by the map at this point does not let Angel into the losing play of choosing the right branch.
However it does not let her choose the left branch either since \(\sigma \not\models \bot\).
Its returned action set is \(\emptyset\). Angel is stuck.
This motivates the definition of an \emph{inductive} subvalue map (\rref{sec:solution-representation}), where not only does the policy keep Angel in a state where she has a way to win, but this victory can be achieved by continuing to follow the subvalue map policy.

We next develop the formalism to say \emph{for all ways to play the policy of the subvalue map} (\rref{sec:universal-projection}) up to any subgame, \emph{there exists a way to continue to play within the subvalue map} (\rref{sec:subvalue-projection}).
We formulate this quantification over the ways to play a subvalue map within \dGL.

\subsection{Existential Subvalue Map Projection}
\label{sec:subvalue-projection}

The \emph{existential projection} of game $\namedGame{\alpha}{a}$ onto Angelic subvalue map $S$, written \(\aproj{\namedGame{\alpha}{a}}{S}{}\), is the game that can be won exactly when there \emph{exists} a way for Angel to win $\namedGame{\alpha}{a}$ \emph{while restricted to stay within the (nondeterministic) policy of} $S$.
Projection changes the game so that Angel is put on rails, limited to the policy induced by $S$ while taking every decision, and is forced to manifest what she must do to win while playing within the policy.

\begin{lemmaE}[Existential projection correspondence][restate, text proof={}]
  \label{lem:exists-projection}
  For any Angelic subvalue map \(S\) for game \(\namedGame{\alpha}{a}\) and compatible winning condition \(\phi\), in initial state \(\sigma \models \ddiamond{\alpha}{\phi}\), Angel has a winning strategy for the game \(\aproj{\namedGame{\alpha}{a}}{S}{}\) if and only if Angel can win \(\ddiamond{\alpha}{\phi}\) while following the policy of \(S\).
  That is, \(\sigma \models \ddiamond{\aproj{\namedGame{\alpha}{a}}{S}{}}{\phi}\) iff \(\astrategize{\namedGameSpace{\alpha}{a}}{S}(\sigma)\neq \emptyset\).
\end{lemmaE}
\begin{proofE}
  We use structural induction on the game $\namedGame{\alpha}{a}$.
  The cases for $\namedGame{\alpha}{a}$ are as follows.
  \begin{itemize}
    \item When \(\alpha\) is atomic and not controlled by Angel, that is, \(\alpha \in \{x:=e, x:=\otimes, \ptest{Q}, !Q, \{x'=f(x)\ \&\ Q\}^d\}\), then:
    
    \(\ddiamond{\aproj{\namedGame{\alpha}{a}}{S}{}}{\phi}\) is\\
    \(\ddiamond{\alpha}{\phi}\) \(\quad\) because \(\aproj{\namedGame{\alpha}{a}}{S}{}\)= \(\alpha\),\\
    which holds because by assumption \(\sigma \models \ddiamond{\alpha}{\phi}\).

    The corresponding strategy sets are all nonempty and always contain exactly one strategy per \rref{def:strategy-set-generation} as Angel has no decisions to make and the Angelic strategy is just the full game tree.
    \item If $\namedGame{\alpha}{a}$ is a non-deterministic Angel assignment $\namedGame{x:=\ast}{a}$, then \(\astrategize{\namedGameSpace{\alpha}{a}}{S}(\sigma)\) has the strategy set \(\{\{(x:=e) \mid \sigma(x \mapsto e) \models S(\finalNode)\}\}\).
    
    \(\sigma \models \ddiamond{\aproj{\namedGame{\alpha}{a}}{S}{\phi}}{\phi}\) iff\\
    \(\sigma \models \ddiamond{x:=\ast \seq \ptest{S(\finalNode)}}{\phi}\) \(\quad\) because \(\aproj{\namedGame{\alpha}{a}}{S}{}\) is ${x:=\ast \seq \ptest{S(\finalNode)}}$\\
    iff \(\sigma \models \ddiamond{x:=\ast}{(S(\finalNode) \land \phi)}\) \(\quad\) applying axiom \irref{testd}\\
    iff \(\sigma \models \ddiamond{x:=\ast}{(S(\finalNode))}\) \(\quad\) \(S(\finalNode) \land \phi \leftrightarrow S(\finalNode)\) as \(\phi\) is a compatible condition\\
    iff \(\exists e\, \sigma(x \mapsto e) \models S(\finalNode)\) \(\quad\) by semantics of \dGL\\
    which is exactly the condition for which the strategy set is nonempty.
    \item If $\namedGame{\alpha}{a}$ is an Angelic ODE $\namedGame{\{x'=f(x)\ \&\ Q\}}{a}$, then
    \(\astrategize{\namedGameSpace{\alpha}{a}}{S}(\sigma)\) has the strategy set
    \(\{
      \{(x' = f(x) \, \&\, Q \,@\, t)\} \with
      t\geq 0, \forall s \in [0, t], \varphi(s) \models S(b) \text{ and } \\
      \varphi(t) \models S(\successor{b}{\namedGame{\alpha}{a}})
      \text{ where } \varphi: [0, t] \to \mathcal{S} \text{ is differentiable, }\\ 
      \varphi(0) = \sigma, \text{ and } 
      \forall s \in [0, t],\, \varphi(s) \models x' = f(x) \land Q
    \}\).

    \(\sigma \models \ddiamond{\aproj{\namedGame{\alpha}{a}}{S}{\phi}}{\phi}\) iff\\
    \(\sigma \models \ddiamond{\{x'=f(x)\ \&\ Q\} \seq \ptest{S(\finalNode)}}{\phi}\) \(\quad\) because \(\aproj{\namedGame{\alpha}{a}}{S}{}\) is \(\{x'=f(x)\ \&\ Q\} \seq \ptest{S(\finalNode)}\),
    iff \(\sigma \models \ddiamond{\{x'=f(x)\ \&\ Q\}}{(S(\finalNode) \land \phi)}\) \(\quad\) applying axiom \irref{testd}\\
    iff \(\sigma \models \ddiamond{\{x'=f(x)\ \&\ Q\}}{(S(\finalNode))}\) \(\quad\) \(S(\finalNode) \land \phi \leftrightarrow S(\finalNode)\) as \(\phi\) is a compatible winning condition\\
    Which per the semantics of \dGL is exactly when the strategy set is nonempty.
    \item If $\namedGame{\alpha}{a}$ is an Angelic choice $\namedGame{(\namedGame{\gamma}{g} \cup \namedGame{\delta}{d})}{a}$, then \(\astrategize{\namedGameSpace{\alpha}{a}}{S}(\sigma)\) there are four possibilities.
    \begin{enumerate}
      \item If \(\sigma \not\models S(g)\) and \(\sigma \not\models S(d)\), then the strategy set is empty.
      
      Accordingly, Angel cannot win the existential projection game:
      \(\sigma \models \ddiamond{\aproj{\namedGame{\alpha}{a}}{S}{\phi}}{\phi}\) iff\\
      \(\sigma \models \ddiamond{(\ptest{S(g)}\seq\namedGame{\gamma}{g}) \cup (\ptest{S(d)}\seq\namedGame{\delta}{d})}{\phi}\) \(\quad\) because \(\aproj{\namedGame{\alpha}{a}}{S}{}\) is \((\ptest{S(g)}\seq\namedGame{\gamma}{g}) \cup (\ptest{S(d)}\seq\namedGame{\delta}{d})\)\\
      iff \(\sigma \models \ddiamond{(\ptest{S(g)}\seq \aproj{\namedGame{\gamma}{g}}{S}{})}{\phi} \lor \ddiamond{(\ptest{S(d)}\seq \aproj{\namedGame{\delta}{d}}{S}{})}{\phi}\) \(\quad\) by axiom \irref{choiced}\\
      iff \(\sigma \models (S(g) \land \ddiamond{\aproj{\namedGame{\gamma}{g}}{S}{}}{\phi}) \lor (S(d) \land \ddiamond{\aproj{\namedGame{\delta}{d}}{S}{}}{\phi})\) \(\quad\) by axiom \irref{testd}, \irref{composed}\\
      Which is false by assumption \(\sigma \not\models S(g)\) and \(\sigma \not\models S(d)\).
      \item If \(\sigma \models S(g)\) and \(\sigma \not\models S(g)\), then the strategy set is \(\{\{\actleft \caret t\} \with t\in\astrategize{\namedGameSpace{\gamma}{g}}{S}( \sigma)\}\), which is nonempty when \(\astrategize{\namedGameSpace{\gamma}{g}}{S}(\sigma)\) is nonempty.
      By the inductive hypothesis, \(\astrategize{\namedGameSpace{\gamma}{g}}{S}(\sigma)\) is nonempty when \(\sigma \models \ddiamond{\aproj{\namedGame{\gamma}{g}}{S}{}}{\phi}\).

      \(\sigma \models \ddiamond{\aproj{\namedGame{\alpha}{a}}{S}{\phi}}{\phi}\) iff\\
      \(\sigma \models \ddiamond{((\ptest{S(g)}\seq\namedGame{\gamma}{g}) \cup (\ptest{S(d)}\seq\namedGame{\delta}{d}))}{\phi}\) \(\quad\) because \(\aproj{\namedGame{\alpha}{a}}{S}{}\) is \((\ptest{S(g)}\seq\namedGame{\gamma}{g}) \cup (\ptest{S(d)}\seq\namedGame{\delta}{d})\)\\
      iff \(\sigma \models \ddiamond{(\ptest{S(g)}\seq \aproj{\namedGame{\gamma}{g}}{S}{})}{\phi} \lor \ddiamond{(\ptest{S(d)}\seq \aproj{\namedGame{\delta}{d}}{S}{})}{\phi}\) \(\quad\) by axiom \irref{choiced}\\
      iff \(\sigma \models \ddiamond{\aproj{\namedGame{\gamma}{g}}{S}{}}{\phi} \lor (S(d)\land\ddiamond{\aproj{\namedGame{\delta}{d}}{S}{}}{\phi})\) \(\quad\) \(\sigma\models S(g)\), applying axioms \irref{testd}, \irref{composed}\\
      iff \(\sigma \models \ddiamond{\aproj{\namedGame{\gamma}{g}}{S}{}}{\phi}\) \(\quad\) \(\sigma \not\models S(g)\) by assumption\\
      which is exactly the condition for which the strategy set is nonempty.
      \item If \(\sigma \models S(d)\) and \(\sigma \not\models S(g)\), the argument is symmetric but with the roles of $\gamma$ and $\delta$ swapped.
      \item If \(\sigma \models S(g)\) and \(\sigma \models S(d)\), then the strategy set is \(\{\{\actleft \caret t\} \with t\in\astrategize{\namedGameSpace{\gamma}{g}}{S}( \sigma)\} \cup \{\{\actright \caret t\} \with t\in\astrategize{\namedGameSpace{\delta}{d}}{S}( \sigma)\}\).
      This set is nonempty when either \(\astrategize{\namedGameSpace{\gamma}{g}}{S}(\sigma)\) or \(\astrategize{\namedGameSpace{\delta}{d}}{S}(\sigma)\) is nonempty.
      By the inductive hypothesis, this is when \(\sigma \models \ddiamond{\aproj{\namedGame{\gamma}{g}}{S}{}}{\phi}\) or \(\sigma \models \ddiamond{\aproj{\namedGame{\delta}{d}}{S}{}}{\phi}\).

      \(\sigma \models \ddiamond{\aproj{\namedGame{\alpha}{a}}{S}{\phi}}{\phi}\) iff\\
      \(\sigma \models \ddiamond{((\ptest{S(g)}\seq\namedGame{\gamma}{g}) \cup (\ptest{S(d)}\seq \namedGame{\delta}{d}))}{\phi}\) \(\quad\) because \(\aproj{\namedGame{\alpha}{a}}{S}{}\) is \((\ptest{S(g)}\seq\namedGame{\gamma}{g}) \cup (\ptest{S(d)}\seq \namedGame{\delta}{d})\)\\
      iff \(\sigma \models \ddiamond{(\ptest{S(g)}\seq \aproj{\namedGame{\gamma}{g}}{S}{})}{\phi} \lor \ddiamond{(\ptest{S(d)}\seq \aproj{\namedGame{\delta}{d}}{S}{})}{\phi}\) \(\quad\) by axiom \irref{choiced}\\
      iff \(\sigma \models \ddiamond{\aproj{\namedGame{\gamma}{g}}{S}{}}{\phi} \lor \ddiamond{\aproj{\namedGame{\delta}{d}}{S}{}}{\phi}\) \(\quad\) \(\sigma\models S(g)\), \(\sigma\models S(d)\)
      which is exactly the condition for which the strategy set is nonempty.
    \end{enumerate}
    \item If $\namedGame{\alpha}{a}$ is an Angelic loop $\namedGame{(\namedGame{\gamma}{g})^\ast}{a}$.
    By the inductive hypothesis, \(\astrategize{\namedGameSpace{\gamma}{g}}{S}(\sigma)\) is nonempty when \(\sigma \models \ddiamond{\aproj{\namedGame{\gamma}{g}}{\modEnd{S}{S(a)}}{}}{S(a)}\).

    \(\sigma \models \ddiamond{\aproj{\namedGame{\alpha}{a}}{S}{\phi}}{\phi}\) iff\\
    \(\sigma \models \ddiamond{\aproj{\namedGame{\alpha}{a}}{S}{}}{S(\finalNode)}\) \(\quad\) \(\phi\) is a compatible winning condition\\

    We show that if there is a strategy in the subvalue map strategy set then there is a strategy to win \(\ddiamond{(\ptest{S(g)} \seq \aproj{\namedGame{\gamma}{g}}{\modEnd{S}{S(a)}}{})^\ast}{(S(\finalNode))}\) from \(\sigma\) and vice versa by induction over iterations.

    First we build up a translation function \(\projToStratI{Z}\) inductively builds up translations for sequences (elements) in strategy set \(Z\).
    It maps elements of a strategy for \(\ddiamond{(\ptest{S(g)} \seq \aproj{\namedGame{\gamma}{g}}{\modEnd{S}{S(a)}}{})^\ast}{(S(\finalNode))}\) to elements of a strategy in the subvalue map strategy set.
    For example, \(\projToStratI{\{\actgo\}}(\actgo) = \actgo\). 
    \(\projToStratI{Z}\) is overloaded to also apply to strategies by applying to every element of the strategy.

    \begin{align*}
      \projToStratI{f(Z)} &\define 
      \{\actstop \caret (\ptest{S(\finalNode)}) \} \mapsto \{\actstop\}; \qquad \{\actgo \} \mapsto \{\actgo\}; \\ 
      & \textrm{for \(t \then \actgo \in \leaf(Z)\)},
      (t \caret \actgo \caret (\ptest{S(g)}) \caret u_t) \mapsto 
      (\projToStratI{Z}(t \caret \actgo) \caret \projToStrat{\namedGameSpace{\gamma}{g}}(u_t)) \\
      & \quad (t \caret \actgo \caret (\ptest{S(g)}) \caret u_t \caret \actgo) \mapsto 
      (\projToStratI{Z}(t \caret \actgo) \caret \projToStrat{\namedGameSpace{\gamma}{g}}(u_t) \caret \actgo) \\
      & \quad (t \caret \actgo \caret (\ptest{S(g)}) \caret u_t \caret \actstop \caret (\ptest{S(\finalNode)})) \mapsto 
      (\projToStratI{Z}(t \caret \actgo) \caret \projToStrat{\namedGameSpace{\gamma}{g}}(u_t) \caret \actstop ) \\
      & \quad \textrm{where \(u_t\) is a winning strategy for game } \ddiamond{\aproj{\namedGame{\gamma}{g}}{\modEnd{S}{S(a)}}{}}{S(a)} \textrm{ and}\\
      & \quad \textrm{corresponding translation function \(\projToStrat{\namedGameSpace{\gamma}{g}}(u_t)\) for strategies of game} \\
      & \quad \textrm{\(\ddiamond{\aproj{\namedGame{\gamma}{g}}{\modEnd{S}{S(a)}}{}}{S(a)}\) to strategies induced by subvalue map} \\
      & \quad \textrm{\modEnd{S}{S(a)}.}
    \end{align*}
    Above, where for readability, the map implicitly respects prefix closure over \(caret\), i.e., if \(a\caret b \mapsto c \caret d\) then \(a \mapsto c\) and \(a \then b \mapsto c \then d\).
    The exception is for the extra \(\ptest{S(g)}\) and \(\ptest{S(\finalNode)}\) actions which have no corresponding mapped action.
    The mapping function \(\projToStrat{\namedGameSpace{\gamma}{g}}(u_t)\) can be recursively defined to map a strategy winning the existential projection of a game onto a subvalue map to a strategy from the strategy set generated by that subvalue map where both strategies result in the same final state, though we do not show the full construction here.

    Now we argue that the translated functions are amongst the subvalue map's induced strategies.
    \begin{itemize}
      \item As base case, if \(\{\actstop, \actstop \then (\ptest{S(\finalNode)})\}\) wins the projection game from initial state \(\sigma\) then \(\{\actstop\}\) is in \(\astrategize{\namedGameSpace{\alpha}{a}}{S}(\sigma)\).
      Otherwise if \(\actgo\) is the first step in the projection game's strategy, then \(\sigma \models S(g)\) of the Angelic test that follows, \(\{\actgo\}\) can start strategies in \(\astrategize{\namedGameSpace{\alpha}{a}}{S}(\sigma)\).
      \item For the recursive case, suppose that \(Z \cup (\bigcup_{ t \then \actgo \in \leaf(Z)} t \caret \actgo \caret (\ptest{S(g)}) \caret u_t  \caret \actstop \caret (\ptest{S(\finalNode)}))\) is a strategy that wins for the projection game.
      
      Then \((\projToStratI{Z}(t \caret \actgo) \caret \projToStrat{\namedGameSpace{\gamma}{g}}(u_t) \caret \actstop )\) should be in \(\astrategize{\namedGameSpace{\alpha}{a}}{S}(\sigma)\) for the original game.
      \(\projToStratI{Z}(t \caret \actgo)\) can begin a strategy in \(\astrategize{\namedGameSpace{\alpha}{a}}{S}(\sigma)\) because of the inductive hypothesis of the inner induction on iterations.
      \(\projToStrat{\namedGameSpace{\gamma}{g}}(u_t)\) is some strategy for game \(\ddiamond{\aproj{\namedGame{\gamma}{g}}{\modEnd{S}{S(a)}}{}}{S(a)}\) which must exist because of the inductive hypothesis from the outer proof.
      \(\actstop\) appears in \(\apolicy{\alpha}{a}(S)(\runstrategy{\sigma}{t \caret \actgo \caret \projToStrat{\namedGameSpace{\gamma}{g}}(u_t) \caret \actstop})\), appearing as a final action during strategy generation because in the original projection game, because in the existential projection it was in a state where the test \((\ptest{S(\finalNode)})\) is satisfied.

      Otherwise, if \((Z \cup \bigcup_{ t\then \actgo \in \leaf(Z)} t \caret \actgo \caret (\ptest{S(g)}) \caret u_t)\) or \((Z \cup \bigcup_{ t\then \actgo \in \leaf(Z)} t \caret \actgo \caret (\ptest{S(g)}) \caret u_t)\) appear in a winning strategy of the projection game and do not yet lose, then their translations can generate strategies in \(\astrategize{\namedGameSpace{\alpha}{a}}{S}(\sigma)\) as well because of a similar argument.
    \end{itemize}
    This establishes that if there is a way to win the projection game, then the subvalue map strategy set is nonempty.

    For the other direction, a similar translation function can be written by inverting the construction above. A similar argument can be made to show that the translated strategies are winning for the projection game.
    \item If $\namedGame{\alpha}{a}$ is a Demonic loop $\namedGame{(\namedGame{\gamma}{g})^\ast}{a}$, we use an approach similar to the Angelic loop case.
    By the inductive hypothesis, \(\astrategize{\namedGameSpace{\gamma}{g}}{S}(\sigma)\) is nonempty when \(\sigma \models \ddiamond{\aproj{\namedGame{\gamma}{g}}{S}{}}{S(a)}\).

    \(\sigma \models \ddiamond{\aproj{\namedGame{\alpha}{a}}{S}{\phi}}{\phi}\) iff\\
    \(\sigma \models \ddiamond{\aproj{\namedGame{\alpha}{a}}{S}{}}{S(\finalNode)}\) \(\quad\) \(\phi\) is a compatible winning condition\\

    We show that if there is a strategy in the set then there is a strategy to win \(\sigma \ddiamond{(\ptest{S(g)} \seq \aproj{\namedGame{\gamma}{g}}{\modEnd{S}{S(a)}}{})^\ast}{(S(\finalNode))}\) and vice versa by induction over iterations.

    We again build up a translation function \(\projToStratI{Z}\).It inductively builds up translations for sequences (elements) in strategy set \(Z\).
    It is similar to the previous case but with dual actions and without the added Angelic tests.
    
    \begin{align*}
      \projToStratI{f(Z)} &\define 
      \{\pdual{\actstop} \} \mapsto \{\pdual{\actstop} \}; \qquad \{ \pdual{\actgo} \} \mapsto \{ \pdual{\actgo} \}; \\ 
      & \textrm{for \(t \then \pdual{\actgo} \in \leaf(Z)\)},
      (t \caret \pdual{\actgo} \caret u_t) \mapsto 
      (\projToStratI{Z}(t \caret \pdual{\actgo}) \caret \projToStrat{\namedGameSpace{\gamma}{g}}(u_t)) \\
      & \quad (t \caret \pdual{\actgo} \caret u_t \caret \pdual{\actgo}) \mapsto 
      (\projToStratI{Z}(t \caret \pdual{\actgo}) \caret \projToStrat{\namedGameSpace{\gamma}{g}}(u_t) \caret (\ptest{S(a)}) \caret \pdual{\actgo}) \\
      & \quad (t \caret \pdual{\actgo} \caret u_t \caret \pdual{\actstop} \caret (\ptest{S(\finalNode)})) \mapsto 
      (\projToStratI{Z}(t \caret \pdual{\actgo}) \caret \projToStrat{\namedGameSpace{\gamma}{g}}(u_t) \caret \pdual{\actstop} ) \\
      & \qquad \textrm{where \(u_t\) is a winning strategy for game } \ddiamond{\aproj{\namedGame{\gamma}{g}}{\modEnd{S}{S(a)}}{}}{S(a)} \textrm{ and}\\
      & \qquad \textrm{corresponding \(\projToStrat{\namedGameSpace{\gamma}{g}}(u_t)\) for game \(\ddiamond{\gamma}{S(a)}\) must exist per inductive hypothesis.} \\
    \end{align*}
    Above, where for readability, the map implicitly respects prefix closure over \(caret\).
    A full strategy is translated by translating each sequence.

    The argument that the translated function is in the strategy set is also similar to the previous case.
    As with the previous case, an inverse translation function shows the opposite direction.
    \item If $\namedGame{\alpha}{a}$ is a Demonic choice $\namedGame{(\namedGame{\gamma}{g} \cap \namedGame{\delta}{d})}{a}$, then \(\astrategize{\namedGameSpace{\alpha}{a}}{S}(\sigma)\) is \(\{\pdual{\actleft} \caret t_l \cup \pdual{\actright} \caret t_r \with t_l\in\astrategize{\namedGameSpace{\gamma}{g}}{S}(g, \sigma) \textrm{ and } t_r\in\astrategize{\namedGameSpace{\delta}{d}}{S}(d, \sigma)\}\), which is nonempty when both \(\astrategize{\namedGameSpace{\gamma}{g}}{S}(\sigma)\) and \(\astrategize{\namedGameSpace{\delta}{d}}{S}(\sigma)\) are nonempty.
    By the inductive hypothesis, this is when \(\sigma \models \ddiamond{\aproj{\namedGame{\gamma}{g}}{S}{}}{\phi}\) and \(\sigma \models \ddiamond{\aproj{\namedGame{\delta}{d}}{S}{}}{\phi}\).

    \(\sigma \models \ddiamond{\aproj{\namedGame{\alpha}{a}}{S}{\phi}}{\phi}\) iff\\
    \(\sigma \models \ddiamond{(\aproj{\namedGame{\gamma}{g}}{S}{}) \cap (\aproj{\namedGame{\delta}{d}}{S}{})}{\phi}\) \(\quad\) by definition of \(\aproj{\namedGame{\alpha}{a}}{S}{}\)\\
    iff \(\sigma \models \ddiamond{\aproj{\namedGame{\gamma}{g}}{S}{}}{\phi} \land \ddiamond{\aproj{\namedGame{\delta}{d}}{S}{}}{\phi}\) \(\quad\) by axiom \irref{dchoiced}
    which is exactly the condition for which the strategy set is nonempty.
    \item If $\namedGame{\alpha}{a}$ is a sequence $\namedGame{(\namedGame{\gamma}{g};\namedGame{\delta}{d})}{a}$, then \(\astrategize{\namedGameSpace{\alpha}{a}}{S}(\sigma)\) is \(\{ t \cup \bigcup_{v \in \leaf(t)} (v\caret u) \with t \in \astrategize{\namedGameSpace{\gamma}{g}}{\modEnd{S}{S(d)}}(g, \sigma) \textrm{ and } u \in \astrategize{\namedGameSpace{\delta}{d}}{S}(d)(\runstrategy{v}{\sigma}) \}\), which is nonempty when both \(\astrategize{\namedGameSpace{\gamma}{g}}{S}(\sigma)\) is nonempty and for some choice of \(v\), \(\astrategize{\namedGameSpace{\delta}{d}}{S}(\runstrategy{v}{\sigma})\) is nonempty.
    By the inductive hypothesis, this is when \(\sigma \models \ddiamond{\aproj{\namedGame{\gamma}{g}}{\modEnd{S}{S(d)}}{}}{S(d)}\) and for some \(v \in \astrategize{\namedGameSpace{\gamma}{g}}{S}(\sigma)\), \(\runstrategy{v}{\sigma} \models \ddiamond{\aproj{\namedGame{\delta}{d}}{S}{}}{\phi}\).

    \(\sigma \models \ddiamond{\aproj{\namedGame{\alpha}{a}}{S}{\phi}}{\phi}\) iff\\
    \(\sigma \models \ddiamond{(\aproj{\namedGame{\gamma}{g}}{\modEnd{S}{S(d)}}{}) \seq (\aproj{\namedGame{\delta}{d}}{S}{})}{\phi}\) \(\quad\) by definition of \(\aproj{\namedGame{\alpha}{a}}{S}{}\)\\
    \(\sigma \models \ddiamond{(\aproj{\namedGame{\gamma}{g}}{\modEnd{S}{S(d)}}{})}{\ddiamond{(\aproj{\namedGame{\delta}{d}}{S}{})}{\phi}}\) \(\quad\) by axiom \irref{composed}\\
    which is matches the condition for which the strategy set is nonempty.
  \end{itemize}
\end{proofE}

\rref{def:subvalue-projection} defines existential projection by inserting Angelic tests to guard \emph{every} Angelic decision that would make Angel lose if
her choice would result in playing into the subsequent subgame in a state outside that subgame's winning subregion.
For example, in \rref{fig:overview}, the subvalue map says that $\preci{2}$ is the winning subregion of the subgame that is played after subgame $\namedGame{a:=\ast}{3}$.
The projection of the subvalue map onto subgame $3$ is $a:=\ast \seq \AdvExAssign{?\preci{2}}$, thus \emph{locally} ensuring that Angel makes a correct assignment to $a$ that will let Angel win the rest of the game.
When Angel has no decisions to take, e.g., $\namedGame{\gamma}{g} \cap \namedGame{\delta}{d}$ where \emph{Demon} chooses between $\namedGame{\gamma}{g}$ and $\namedGame{\delta}{d}$, projection does not interfere, only recursively adding guards for any Angelic decisions within subgames $\gamma$ and $\delta$.
During recursive calls, we sometimes have to update the \(\finalNode\) subvalue to reflect the appropriate terminal condition for the subgame under consideration.
Notation \(\modEnd{S}{Q}\) refers to \(S\) with the subvalue of \(\finalNode\) updated to \(Q\).

\begin{definition}[Existential projection]
  \label{def:subvalue-projection}
  The \emph{existential projection} of \dGL game $\namedGame{\alpha}{a}$ onto Angelic subvalue map $S$, written \(\aproj{\namedGame{\alpha}{a}}{S}{}\), is generated recursively from the structure of $\namedGame{\alpha}{a}$ as follows. If $\namedGame{\alpha}{a}$ has structure:
  \begin{align*}
    \namedGame{x:=\ast}{a}
    &\textrm{ then }
    x:= \ast \seq \AdvExAssign{\ptest{S(\finalNode)}}.
    \quad
    \namedGame{\{x'=f(x)\ \&\ Q\}}{a}
    \textrm{ then }
    \{x'=f(x)\ \&\ Q\} \seq \AdvExAssign{\ptest{S(\finalNode)}} \\
    \namedGame{(\namedGame{\gamma}{g} \cup \namedGame{\delta}{d})}{a} 
    &\textrm{ then } ( \AdvExChoice{?S(g)} \seq \aproj{\namedGame{\gamma}{g}}{S}{}) {\cup} (\AdvExChoice{?S(d)} \seq \aproj{\namedGame{\delta}{d}}{S}{}) \\
    \namedGame{(\namedGame{\gamma}{g})^\ast}{a} 
    &\textrm{ then } \left(\AdvExLoop{?S(g)} \seq \left(\aproj{\namedGame{\gamma}{g}}{(\modEnd{S}{S(a)})}{}\right)\right)^{\ast}; \AdvExLoop{\ptest{S(\finalNode)}} \\
    \namedGame{(\namedGame{\gamma}{g};\namedGame{\delta}{d})}{a} 
    &\textrm{ then } \left(\aproj{\namedGame{\gamma}{g}}{(\modEnd{S}{S(d)})}{}\right) \seq \left(\aproj{\namedGame{\delta}{d}}{S}{}\right) \\
    \namedGame{(\namedGame{\gamma}{g} \cap \namedGame{\delta}{d})}{a} 
    &\textrm{ then } \aproj{\namedGame{\gamma}{g}}{S}{} \cap \aproj{\namedGame{\delta}{d}}{S}{}
    \quad
    \namedGame{(\namedGame{\gamma}{g})^\times}{a}
    \textrm{ then } \big( \aproj{\namedGame{\gamma}{g}}{(\modEnd{S}{S(a)})}{} \big)^\times\\
    \textrm{atomic }&\textrm{and not controlled by Angel, i.e., }\\
    &\alpha \in \{x:=e, x:=\otimes, \ptest{Q}, !Q, \{x'=f(x)\ \&\ Q\}^d\}, \textrm{ then } \alpha
  \end{align*}
  where game labels are preserved and fresh labels are used for newly introduced subgames.
  The existential projection of a Demonic subvalue map is symmetric (\rref{app:subvalue-projection}), with the addition of Demonic assertions after subgames controlled by Demon.
\end{definition}

\begin{remark}
  $\aproj{\namedGame{\alpha}{a}}{S}{}$ is an \emph{Angelic refinement} of $\namedGame{\alpha}{a}$, i.e., for any winning condition $\psi$, $\models \ddiamond{\aproj{\namedGame{\alpha}{a}}{S}{}}{\psi} \limply \ddiamond{\alpha}{\psi}$ (\rref{lem:subvalue-projection-refinement}).
  Intuitively, if Angel has a winning strategy within $S$, she can win with the same strategy in the unconstrained game.
\end{remark}

\subsection{Universal Subvalue Map Projection}
\label{sec:universal-projection}
The \emph{universal projection}, written \(\asubst{\namedGame{\alpha}{a}}{S}{}\), explores a subvalue map maximally flexibly by making Demon play Angel's policy in Angelic subvalue map \(S\) to look for loopholes.
It changes the game to put \emph{Demon} on rails forcing him to play within Angel's $S$, as he
looks for \emph{any} strategy in the policy that falsifies Angel's winning condition.

\begin{lemmaE}[Universal projection correspondence][restate, text proof={}]
  \label{lem:forall-projection}
  For any Angelic subvalue map \(S\) for game \(\namedGame{\alpha}{a}\), in initial state \(\sigma \models \ddiamond{\alpha}{a}\), for compatible winning condition \(\phi\), Angel has a winning strategy for the game \(\asubst{\namedGame{\alpha}{a}}{S}{}\) if and only if all ways to pursue the policy of \(S\) for game \(\namedGame{\alpha}{a}\) that complete the game end in Angel's winning region.
  That is, \(\sigma \models \ddiamond{\asubst{\namedGame{\alpha}{a}}{S}{}}{\phi}\) iff
  all elements of set \(\astrategize{\namedGameSpace{\alpha}{a}}{S}(\sigma)\) are Angel winning strategies.
\end{lemmaE}
\begin{proofE}
  We show the proof by induction over the structure of \(\namedGame{\alpha}{a}\).

  \begin{itemize}
    \item If \(\namedGame{\alpha}{a}\) is atomic, i.e., \(\alpha \in \{x:=e, x:=\otimes, \ptest{Q}, !Q, \{x'=f(x)\ \&\ Q\}^d\}\), then \(\asubst{\namedGame{\alpha}{a}}{S}{}\) is \(\alpha\).
    \(\sigma \models \ddiamond{\asubst{\namedGame{\alpha}{a}}{S}{}}{\phi}\) iff \(\sigma \models \ddiamond{\alpha}{\phi}\) by definition of \(\asubst{\namedGame{\alpha}{a}}{S}{}\).
    This is true because \(\sigma\models \ddiamond{\alpha}{\phi}\).
    
    The strategy set consists of a single strategy with no Angelic decisions, that is just the entire game tree.
    This strategy is a winning strategy because we know there is one winning strategy because by assumption \(\sigma\models \ddiamond{\alpha}{\phi}\), and this strategy is the only candidate.
    \item If $\namedGame{\alpha}{a}$ is a non-deterministic Angel assignment $\namedGame{x:=\ast}{a}$, then \(\astrategize{\namedGameSpace{\alpha}{a}}{S}(\sigma)\) has the strategy set \(\{\{(x:=e) \mid \sigma(x \mapsto e) \models S(\finalNode)\}\}\), where all strategies are winning strategies when \(\sigma(x \mapsto e) \models \phi \mid \sigma(x \mapsto e) \models S(\finalNode)\), which is always, since \(\phi\) is a compatible winning condition so \(S(\finalNode)\limply \phi\).
    
    \(\sigma \models \ddiamond{\asubst{\namedGame{\alpha}{a}}{S}{\phi}}{\phi}\) iff\\
    \(\sigma \models \ddiamond{x:= \AdvExAssign{\otimes} \seq \AdvExAssign{!S(\finalNode)}}{\phi}\) \(\quad\) by definition of \(\asubst{\namedGame{\alpha}{a}}{S}{}\)\\
    iff \(\sigma \models \ddiamond{x:= \AdvExAssign{\otimes}}{(S(\finalNode) \limply \phi)}\) \(\quad\) by axioms \irref{composed}, \irref{testb}\\
    iff \(\sigma \models \ddiamond{x:= \AdvExAssign{\otimes}}{\top}\) \(\quad\) \(S(\finalNode)\limply \phi\) because \(\phi\) is a compatible winning condition\\
    But this always holds by the semantics of \dGL.
    \item If $\namedGame{\alpha}{a}$ is an Angelic ODE $\namedGame{\{x'=f(x)\ \&\ Q\}}{a}$, then
    \(\astrategize{\namedGameSpace{\alpha}{a}}{S}(\sigma)\) has the strategy set
    \(\{
      \{(x' = f(x) \, \&\, Q \,@\, t)\} \with t\geq 0, \forall s \in [0, t], \varphi(s) \models S(b) \text{ and }
      \varphi(t) \models S(\successor{b}{\namedGame{\alpha}{a}})
      \text{ where } \varphi: [0, t] \to \mathcal{S} \text{ is differentiable, }
      \varphi(0) = \sigma, \text{ and } 
      \forall s \in [0, t],\, \varphi(s) \models x' = f(x) \land Q
    \}\).
    All strategies are always winning strategies by a similar argument to for the previous case, since \(S(\finalNode)\limply \phi\).

    \(\sigma \models \ddiamond{\asubst{\namedGame{\alpha}{a}}{S}{\phi}}{\phi}\) iff\\
    \(\sigma \models \ddiamond{\{x'=f(x)\ \&\ Q\}^{\AdvExAssign{d}} \seq \AdvExAssign{!S(\finalNode)}}{\phi}\) \(\quad\) by definition of \(\asubst{\namedGame{\alpha}{a}}{S}{}\)\\
    iff \(\sigma \models \ddiamond{\{x'=f(x)\ \&\ Q\}^{\AdvExAssign{d}}}{(S(\finalNode) \limply \phi)}\) \(\quad\) by axioms \irref{composed}, \irref{testb}\\
    iff \(\sigma \models \ddiamond{\{x'=f(x)\ \&\ Q\}^{\AdvExAssign{d}}}{\top}\) \(\quad\) \(S(\finalNode)\limply \phi\)\\
    But this always holds by the semantics of \dGL.
    \item If $\namedGame{\alpha}{a}$ is an Angelic choice $\namedGame{(\namedGame{\gamma}{g} \cup \namedGame{\delta}{d})}{a}$, then
    
    \(\ddiamond{\asubst{\namedGame{\alpha}{a}}{S}{\phi}}{\phi}\) iff\\
    \(\ddiamond{(\AdvExChoice{!S(g)} \seq \asubst{\namedGame{\gamma}{g}}{S}{}) \AdvExChoice{\cap} (\AdvExChoice{!S(d)} \seq \asubst{\namedGame{\delta}{d}}{S}{})}{\phi}\) \(\quad\) by definition of \(\asubst{\namedGame{\alpha}{a}}{S}{}\)\\
    iff \(\sigma \models \ddiamond{\AdvExChoice{!S(g)} \seq \asubst{\namedGame{\gamma}{g}}{S}{}}{\phi} \land \ddiamond{\AdvExChoice{!S(d)} \seq \asubst{\namedGame{\delta}{d}}{S}{}}{\phi}\) \(\quad\) by axiom \irref{dchoiced}\\
    iff \(\sigma \models (S(g) \limply \ddiamond{\asubst{\namedGame{\gamma}{g}}{S}{}}{\phi}) \land (S(d) \limply \ddiamond{\asubst{\namedGame{\delta}{d}}{S}{}}{\phi})\) \(\quad\) by axiom \irref{testb}, \irref{composed}\\

    There are four cases:
    \begin{itemize}
      \item \(\sigma \not\models S(g)\) and \(\sigma \not\models S(d)\): then the Angel wins the projection game because of the vacuous premises of the implications, and the strategy set has only winning strategies vacuously since it has no strategies.
      \item \(\sigma \models S(g)\) and \(\sigma \not\models S(d)\): then the Angel wins the projection game iff
      
      \(\sigma \models (S(g) \limply \ddiamond{\asubst{\namedGame{\gamma}{g}}{S}{}}{\phi})\) \(\quad\) second conjunct true as \(\sigma \not\models S(d)\) \\
      iff \(\sigma \models \ddiamond{\asubst{\namedGame{\gamma}{g}}{S}{}}{\phi}\) \(\quad\) as \(\sigma \models S(g)\)\\

      By the inductive hypothesis, this is exactly when the strategy set \(\astrategize{\namedGameSpace{\gamma}{g}}{S}(\sigma)\) has only winning strategies for Angel in game \(\ddiamond{\gamma}{\phi}\).
      
      The strategy set for the overall game is \(\{\{\actleft \caret t\} \with t\in\astrategize{\namedGameSpace{\gamma}{g}}{S}(g, \sigma)\}\) which consists of winning strategies exactly when \(\astrategize{\namedGameSpace{\gamma}{g}}{S}(\sigma)\) has only winning strategies for Angel in game \(\ddiamond{\gamma}{\phi}\).
      \item \(\sigma \not\models S(g)\) and \(\sigma \models S(d)\), the argument is symmetric to the previous case, with the roles of \(g\) and \(d\) swapped.
      \item \(\sigma \models S(g)\) and \(\sigma \models S(d)\): then the Angel wins the projection game iff
      
      \(\sigma \models \ddiamond{\asubst{\namedGame{\gamma}{g}}{S}{}}{\phi} \land \ddiamond{\asubst{\namedGame{\delta}{d}}{S}{}}{\phi}\) \(\quad\)
      as \(\sigma \models S(g)\) and \(\sigma \models S(d)\)\\
      By the inductive hypothesis, this is exactly when both strategy sets \(\astrategize{\namedGameSpace{\gamma}{g}}{S}(\sigma)\) and \(\astrategize{\namedGameSpace{\delta}{d}}{S}(\sigma)\) have only winning strategies for Angel in games \(\ddiamond{\gamma}{\phi}\) and \(\ddiamond{\delta}{\phi}\) respectively.  

      The strategy set for the overall game is \(\{\{\actleft \caret t\} \with t\in\astrategize{\namedGameSpace{\gamma}{g}}{S}(g, \sigma)\}\ \cup \{\{\actright \caret t\} \with t\in\astrategize{\namedGameSpace{\delta}{d, \sigma}}{S}(d)\}\).

      This consists of winning strategies exactly when both \(\astrategize{\namedGameSpace{\gamma}{g}}{S}(\sigma)\) and \(\astrategize{\namedGameSpace{\delta}{d}}{S}(\sigma)\) have only winning strategies for Angel in games \(\ddiamond{\gamma}{\phi}\) and \(\ddiamond{\delta}{\phi}\) respectively.
    \end{itemize}
    \item If $\namedGame{\alpha}{a}$ is an Angelic loop $\namedGame{(\namedGame{\gamma}{g})^\ast}{a}$:
    
    \(\sigma \models \ddiamond{\asubst{\namedGame{\alpha}{a}}{S}{\phi}}{\phi}\) iff\\
    \(\sigma \models \ddiamond{\left(\AdvExLoop{!S(g)} \seq (\asubst{\namedGame{\gamma}{g}}{(\modEnd{S}{S(a)})}{}) \right)^{\AdvExLoop{\times}} \seq \AdvExLoop{!S(\finalNode)}}{\phi}\) \(\quad\) by definition of \(\asubst{\namedGame{\alpha}{a}}{S}{}\)\\
    iff \(\sigma \models \ddiamond{\left(\AdvExLoop{!S(g)} \seq (\asubst{\namedGame{\gamma}{g}}{(\modEnd{S}{S(a)})}{}) \right)^{\AdvExLoop{\times}}}{S(\finalNode)\limply\phi}\) \(\quad\) by axioms \irref{composed}, \irref{testb}\\
    iff \(\sigma \models \ddiamond{\left(\AdvExLoop{!S(g)} \seq (\asubst{\namedGame{\gamma}{g}}{(\modEnd{S}{S(a)})}{}) \right)^{\AdvExLoop{\times}}}{\top}\) \(\quad\) \(S(\finalNode)\limply\phi\)\\
    But this always holds applying the loop rule with invariant \(\ddiamond{\alpha}{\phi}\) and using the inductive hypothesis on the subgame \(\ddiamond{\gamma}{\modEnd{S}{S(a)}}\) and \rref{thm:subvalue-map-winning-region}.

    The strategy set also consists of only winning strategies.
    Consider any leaf \(t\) in any strategy tree in the set.
    If \(t\) ends with \(\actstop\) then by construction (\rref{def:strategy-set-generation}), \(\runstrategy{t}{\sigma} \models S(\finalNode)\) and \(t\) is a winning play for Angel.
    Otherwise, if \(t\) does not end in \(\actstop\), then it must have ended while playing loop body \(\gamma\).
    Let \(t = u \then v\) where \(v\) plays the final (incomplete) iteration of the loop body.
    \(v\) ends with Demon getting stuck because it is a winning strategy for \(\gamma\) by the inductive hypothesis in combination with \rref{thm:subvalue-map-winning-region}.
    Thus, \(t\) is also a winning strategy for Angel in the larger game.

    \item If $\namedGame{\alpha}{a}$ is a Demonic loop $\namedGame{(\namedGame{\gamma}{g})^\times}{a}$, then
    
    \(\sigma \models \ddiamond{\asubst{\namedGame{\alpha}{a}}{S}{\phi}}{\phi}\) iff\\
    \( \sigma\models\ddiamond{(\asubst{\namedGame{\gamma}{g}}{\modEnd{S}{S(a)}}{})^\times}{\phi} \)

    This always holds: we can apply the loop invariant rule with invariant \(\ddiamond{\alpha}{\phi}\) and use \rref{thm:subvalue-map-winning-region} on the subgame \(\ddiamond{\gamma}{\modEnd{S}{S(a)}}\).

    The argument that the strategy set has only winning strategies is similar to the previous case.
    Consider any leaf in any strategy tree in the set.
    If if ends with \(\actstop\), it is a winning play for Angel because \(\runstrategy{t}{\sigma} \models S(\finalNode)\).
    Otherwise, if it does not end in \(\actstop\), then it must have ended while playing loop body \(\gamma\).
    It must end with Demon getting stuck because by the inductive hypothesis in combination with \rref{thm:subvalue-map-winning-region}, and so must be a winning strategy for Angel in the larger game.
    \item If $\namedGame{\alpha}{a}$ is a Demonic choice $\namedGame{(\namedGame{\gamma}{g} \cap \namedGame{\delta}{d})}{a}$, then \(\astrategize{\namedGameSpace{\alpha}{a}}{S}(\sigma)\) has the strategy set \(\{\pdual{\actleft} \caret t_l \cup \pdual{\actright} \caret t_r \with t_l\in\astrategize{\namedGameSpace{\gamma}{g}}{S}(g, \sigma) \textrm{ and } t_r\in\astrategize{\namedGameSpace{\delta}{d}}{S}(d, \sigma)\}\), which contains only winning strategies exactly when \(\astrategize{\namedGameSpace{\gamma}{g}}{S}(\sigma)\) and \(\astrategize{\namedGameSpace{\delta}{d}}{S}(\sigma)\) have only winning strategies for Angel in games \(\ddiamond{\gamma}{\phi}\) and \(\ddiamond{\delta}{\phi}\) respectively.
    
    \(\sigma \models \ddiamond{\asubst{\namedGame{\alpha}{a}}{S}{\phi}}{\phi}\) iff\\
    \(\sigma \models \ddiamond{(\asubst{\namedGame{\gamma}{g}}{S}{}) \cap (\asubst{\namedGame{\delta}{d}}{S}{})}{\phi}\) \(\quad\) by definition of \(\asubst{\namedGame{\alpha}{a}}{S}{}\)\\
    iff \(\sigma \models \ddiamond{\asubst{\namedGame{\gamma}{g}}{S}{}}{\phi} \land \ddiamond{\asubst{\namedGame{\delta}{d}}{S}{}}{\phi}\) \(\quad\) by axiom \irref{dchoiced}\\
    which by the inductive hypothesis applied for \(\namedGame{\gamma}{g}\) and \(\namedGame{\delta}{d}\) is exactly when the overall strategy set has only winning Angel strategies.
    \item If $\namedGame{\alpha}{a}$ is sequential composition $\namedGame{(\namedGame{\gamma}{g};\namedGame{\delta}{d})}{a}$, then \(\astrategize{\namedGameSpace{\alpha}{a}}{S}(\sigma)\) has the strategy set \(\{ t \cup \bigcup_{v \in \leaf(t)} (v\caret u) \with t \in \astrategize{\namedGameSpace{\gamma}{g}}{\modEnd{S}{S(d)}}(g, \sigma) \textrm{ and } u \in \astrategize{\namedGameSpace{\delta}{d}}{S}(d)(\runstrategy{v}{\sigma}) \}\).
    These are always winning strategies: \(\astrategize{\namedGameSpace{\gamma}{g}}{\modEnd{S}{S(d)}}(g, \sigma)\) and \(\astrategize{\namedGameSpace{\delta}{d}}{S}(d)(\runstrategy{v}{\sigma})\) have only winning strategies using the inductive hypothesis along with \rref{thm:subvalue-map-winning-region}.
    All plays that get stuck win for Angel, and all plays that run till the end result in a state \(\sigma'\in S(\finalNode)\), and so win for Angel because \(S(\finalNode)\limply \phi\).

    \(\sigma \models \ddiamond{\asubst{\namedGame{\alpha}{a}}{S}{\phi}}{\phi}\) always holds by \rref{thm:subvalue-map-winning-region}.
  \end{itemize}
\end{proofE}

\rref{def:subvalue-ext} defines universal projection
by replacing all Angelic games ($\cup, x:=\ast, \alpha^*,
\{x'=f(x) \& Q\}$) by their dual Demonic games and inserting Demonic tests to ensure Demon plays these decisions within Angel's control envelope.
Like in projection, control choices that were already Demon's remain unguarded.
\rref{eq:overview-projection} shows universal projection for the example subvalue map and game in \rref{fig:overview}.
\begin{definition}[Universal projection]
  \label{def:subvalue-ext}
  The \emph{universal projection} of Angelic subvalue map $S$ onto \dGL game $\namedGame{\alpha}{a}$, written $\asubst{\namedGame{\alpha}{a}}{S}{}$, is generated recursively per the structure of $\namedGame{\alpha}{a}$ as follows. If $\namedGame{\alpha}{a}$ has structure:
  \begin{align*}
    \namedGame{x:=\ast}{a},
    \textrm{ then }&
    x:= \AdvExAssign{\otimes} \seq \AdvExAssign{!S(\finalNode)}.
    \quad
    \namedGame{(\namedGame{\gamma}{g};\namedGame{\delta}{d})}{a}, 
    \textrm{ then } \left(\asubst{\namedGame{\gamma}{g}}{(\modEnd{S}{S(d)})}{}\right) \seq \left(\asubst{\namedGame{\delta}{d}}{S}{}\right).\\
    \namedGame{\{x'=f(x)\ \&\ Q\}}{a},
    &\textrm{ then }
    \{x'=f(x)\ \&\ Q\}^{\AdvExAssign{d}} \seq \AdvExAssign{!S(\finalNode)}.\\
    \namedGame{(\namedGame{\gamma}{g} \cup \namedGame{\delta}{d})}{a}, 
    \textrm{ then }&
    ( \AdvExChoice{!S(g)} \seq \asubst{\namedGame{\gamma}{g}}{S}{}) \AdvExChoice{\cap} (\AdvExChoice{!S(d)} \seq \asubst{\namedGame{\delta}{d}}{S}{}).\\
    \namedGame{(\namedGame{\gamma}{g})^\ast}{a}, 
    \textrm{ then }& \left(\AdvExLoop{!S(g)} \seq (\asubst{\namedGame{\gamma}{g}}{(\modEnd{S}{S(a)})}{}) \right)^{\AdvExLoop{\times}} \seq \AdvExLoop{!S(\finalNode)}.\\
    \namedGame{(\namedGame{\gamma}{g} \cap \namedGame{\delta}{d})}{a}, 
    \textrm{ then }& (\asubst{\namedGame{\gamma}{g}}{S}{}) \cap (\asubst{\namedGame{\delta}{d}}{S}{}). 
    \quad
    \namedGame{(\namedGame{\gamma}{g})^\times}{a}, 
    \textrm{ then } (\asubst{\namedGame{\gamma}{g}}{\modEnd{S}{S(a)}}{})^\times.\\
    \textrm{atomic and not}&\textrm{ controlled by Angel, i.e., }\\
    \alpha \in \{x:=e,& x:=\otimes, \ptest{Q}, !Q, \{x'=f(x)\ \&\ Q\}^d\}, \textrm{ then } \alpha.
  \end{align*}
  where game labels are preserved (including for Angel-controlled subgames transformed to their Demon-controlled equivalent) and fresh labels are used for newly introduced subgames.
  The universal projection of a Demonic subvalue map is symmetric (see \rref{app:subvalue-externalization-demon}).
\end{definition}

We can now show the defining property of subvalue maps.

\begin{theoremE}[Subvalue Map stays in Winning Region][restate, text proof={}]
\label{thm:subvalue-map-winning-region}
  Suppose \(S\) is an Angelic subvalue map for game \(\namedGame{\alpha}{a}\) compatible with winning condition \(\phi\), i.e., for every subgame \(b \in \nodes{\namedGame{\alpha}{a}}\), \(\models S(b) \limply \ddiamond{\fwd{b}{\namedGame{\alpha}{a}}}{S(\finalNode)}\) and \(\models S(\finalNode)\limply \phi\).
  Upon starting in any state \(\sigma\in \Sem{S(a)}\),
  and reaching subgame \(b\) by following any strategy induced by the subvalue map \(S\), there exists a winning strategy for Angel to win the remainder of the game.
  That is, \(S(a) \models \ddiamond{\prefix{b}{(\asubst{\namedGame{\alpha}{a}}{S}{})}}{\ddiamond{\fwd{b}{\namedGame{\alpha}{a}}}{\phi}}\).
\end{theoremE}
\begin{proofE}
  We perform induction on the structure of \(\namedGame{\alpha}{a}\).
  \begin{itemize}
    \item When \(\namedGame{\alpha}{a}\) is atomic and not controlled by Angel,
    i.e., \(\alpha \in \{x:=e, x:=\otimes, \ptest{Q}, !Q, \{x'=f(x)\ \&\ Q\}^d\}\), then
    The only subgame \(b\) in \(\nodes{\namedGame{\alpha}{a}}\) is \(\namedGame{\alpha}{a}\) itself.
    Thus the only possibility is \(b=a\).
    \(\prefix{a}{\namedGame{\alpha}{a}}\) is the empty game \skp, and \(\fwd{a}{\namedGame{\alpha}{a}}\) is the game \(\namedGame{\alpha}{a}\), and \(\asubst{\namedGame{\alpha}{a}}{S}{\phi}\) is just \(\namedGame{\alpha}{a}\).
    Thus, \(\ddiamond{\asubst{\prefix{a}{\namedGame{\alpha}{a}}}{S}{\phi}}{\ddiamond{\fwd{a}{\namedGame{\alpha}{a}}}{\phi}}\) is equivalent to \(\ddiamond{\skp}{\ddiamond{{\alpha}}{\phi}}\).
    We need to show that \(S(a) \models {\ddiamond{\namedGame{\alpha}{a}}{\phi}}\) which holds because by definition of an Angelic subvalue map, \(\models S(a) \limply \ddiamond{\alpha}{\phi}\).
    \item When \(\namedGame{\alpha}{a}\) is an Angelic nondeterministic assignment, i.e., \(\namedGame{x:=\ast}{a}\), then again the only subgame \(b\) in \(\nodes{\namedGame{\alpha}{a}}\) is \(\namedGame{\alpha}{a}\) itself.
    Thus the only possibility is \(b=a\).
    \(\prefix{a}{\namedGame{\alpha}{a}}\) is the empty game \skp, and \(\asubst{\namedGame{\alpha}{a}}{S}{\phi}\) is just \(\namedGame{\alpha}{a} \seq \ptest{S(\finalNode)}\), and \(\fwd{a}{\namedGame{\alpha}{a} \seq \ptest{S(\finalNode)}}\) is the game \(\namedGame{\alpha}{a} \seq \ptest{S(\finalNode)}\).
    Thus, \(\ddiamond{\asubst{\prefix{a}{\namedGame{\alpha}{a}}}{S}{\phi}}{\ddiamond{\fwd{a}{\namedGame{\alpha}{a}}}{\phi}}\) is equivalent to \(\ddiamond{\skp}{\ddiamond{{\alpha \seq \ptest{S(\finalNode)}}}{\phi}}\).
    We need to show that \(S(a) \models \ddiamond{\alpha}{(\ptest{S(\finalNode)} \land \phi)}\).
    Since \(\phi\) is a compatible winning condition, this is \(S(a) \models {\ddiamond{\alpha}{\ptest{S(\finalNode)}}}\) which holds by the definition of an inductive Angelic subvalue map.
    \item When \(\namedGame{\alpha}{a}\) is an Angelic ODE, i.e., \(\namedGame{\{x'=f(x)\ \&\ Q\}}{a}\), then the argument is similar to the previous case.
    As before the only possibility is \(b=a\), and we need to show that \(S(a) \models {\ddiamond{\alpha}{\ptest{S(\finalNode)} \land \phi}}\).
    Since \(\phi\) is a compatible winning condition, this is \(S(a) \models {\ddiamond{{\namedGame{\alpha}{a}}}{\ptest{S(\finalNode)}}}\)
    which holds by the definition of an inductive Angelic subvalue map.
    \item When \(\namedGame{\alpha}{a}\) is an Angelic choice, \(\namedGame{(\namedGame{\gamma}{g} \cup \namedGame{\delta}{d})}{a}\), then we analyze cases depending on the value of \(b\).
    \begin{enumerate}
      \item If \(b=a\), then \(\prefix{b}{\asubst{\prefix{a}{\namedGame{\alpha}{a}}}{S}{\phi}} = \skp\), and \(\fwd{b}{\namedGame{\alpha}{a}} = \namedGame{\alpha}{a}\).
            Thus, \(\ddiamond{\asubst{\prefix{b}{\namedGame{\alpha}{a}}}{S}{\phi}}{\ddiamond{\fwd{b}{\namedGame{\alpha}{a}}}{\phi}}\) is equivalent to \(\ddiamond{\namedGame{\alpha}{a}}{\phi}\).
            \(S(a) \models \ddiamond{\namedGame{\alpha}{a}}{\phi}\)
            This can be proved by applying the axioms \irref{implyr} along with \(S(a) \limply \ddiamond{\namedGame{\alpha}{a}}{\phi}\) from the definition of an Angelic subvalue map.
      \item Otherwise, if \(b \in \nodes{\namedGame{\gamma}{g}}\), then \(\prefix{b}{\asubst{\prefix{a}{\namedGame{\alpha}{a}}}{S}{\phi}} = {!S(g)} \seq \prefix{b}{\asubst{\namedGame{\gamma}{g}}{S}{}}\) and \(\fwd{b}{\namedGame{\alpha}{a}} = \fwd{b}{\namedGame{\gamma}{g}}\).
      We can show that \(S(a) \models \ddiamond{{!S(g)} \seq \prefix{b}{\asubst{\namedGame{\gamma}{g}}{S}{}}}{\ddiamond{\fwd{b}{\namedGame{\gamma}{g}}}{\phi}}\) by first applying axioms \irref{composed}, \irref{testb}, \irref{box}, \irref{duald}, \irref{implyr} to get the goal \(S(a), S(g) \models \ddiamond{\prefix{b}{\asubst{\namedGame{\gamma}{g}}{S}{}}}{\ddiamond{\fwd{b}{\namedGame{\gamma}{g}}}{\phi}}\)
      Notice that \(S\) is also an inductive Angelic subvalue map for \(\namedGame{\gamma}{g}\) and winning condition \(\phi\).
      Thus by the inductive hypothesis, \(S(g) \models \ddiamond{\asubst{\prefix{b}{\namedGame{\gamma}{g}}}{S}{\phi}}{\ddiamond{\fwd{b}{\namedGame{\gamma}{g}}}{\phi}}\) completing the proof.
      \item Otherwise, if \(b \in \nodes{\namedGame{\delta}{d}}\), then the proof proceeds similarly to the previous case, but with \(\namedGame{\delta}{d}\) replacing \(\namedGame{\gamma}{g}\).
    \end{enumerate}
    \item When \(\namedGame{\alpha}{a}\) is an Angelic loop, i.e., \(\namedGame{(\namedGame{\gamma}{g})^\ast}{a}\), then we analyze cases depending on the value of \(b\).
    \begin{enumerate}
      \item If \(b=a\), then \(\fwd{b}{\namedGame{\alpha}{a}} = \namedGame{\alpha}{a}\) and \(\prefix{b}{\asubst{\namedGame{\alpha}{a}}{S}{\phi}} = {(!{S(g)} \seq \asubst{\namedGame{\gamma}{g}}{\modEnd{S}{S(a)}}{S(a)})^\times} \seq !S(\finalNode)\).
            Thus, 
            \[\ddiamond{\prefix{b}{\asubst{{\namedGame{\alpha}{a}}}{S}{\phi}}}{\ddiamond{\fwd{b}{\namedGame{\alpha}{a}}}{\phi}}\] is by definition 
            \[\ddiamond{{({!S(g)} \seq \asubst{\namedGame{\gamma}{g}}{\modEnd{S}{S(a)}}{S(a)})^\times} \seq !S(\finalNode)}{\ddiamond{\namedGame{\alpha}{a}}{\phi}}.\]
            By the inductive hypothesis, \(S(g) \models \ddiamond{\asubst{\namedGame{\gamma}{g}}{\modEnd{S}{S(a)}}{\phi}}{S(a)}\).
            Then applying \irref{loop} with invariant \(S(a)\) we have
            \[S(a) \models \ddiamond{{({!S(g)} \seq \asubst{\namedGame{\gamma}{g}}{\modEnd{S}{S(a)}}{S(a)})^\times}}{S(a)}.\]

            By the definition of an Angelic subvalue map, \(\models S(a) \limply \ddiamond{\namedGame{\alpha}{a}}{\phi}\).
            Thus by \irref{M}, \irref{implyr}, and \irref{testd}, we prove the desired goal
             \[S(a) \models \ddiamond{{({!S(g)} \seq \asubst{\namedGame{\gamma}{g}}{\modEnd{S}{S(a)}}{S(a)})^\times} \seq !S(\finalNode)}{\ddiamond{\namedGame{\alpha}{a}}{\phi}}.\]
      \item Otherwise, if \(b \in \nodes{\namedGame{\gamma}{g}}\), then
      
      \(\prefix{b}{\asubst{\namedGame{\alpha}{a}}{S}{\phi}} = ({!S(g)} \seq \asubst{\namedGame{\gamma}{g}}{\modEnd{S}{S(a)}}{})^\times \seq {!S(g)} \seq \prefix{b}{(\asubst{\namedGame{\gamma}{g}}{\modEnd{S}{S(a)}}{})}\),
      
      \(\fwd{b}{{\namedGame{\alpha}{a}}} = \fwd{b}{\namedGame{\gamma}{g}} \seq {\namedGame{\alpha}{a}}\).

      Want to show:
      \begin{multline*}
      S(a) \models \langle (!{S(g)} \seq \asubst{\namedGame{\gamma}{g}}{\modEnd{S}{S(a)}}{})^\times \seq {!{S(g)} \seq \prefix{b}{(\asubst{\namedGame{\gamma}{g}}{\modEnd{S}{S(a)}}{})}} \rangle
      {\ddiamond{\fwd{b}{({\namedGame{\gamma}{g}})} \seq \namedGame{\alpha}{a}}{\phi}}.
      \end{multline*}

      First, by \rref{lem:subvalue-universal}, we have \(S(a) \models \ddiamond{(!{S(g)} \seq {\asubst{\namedGame{\gamma}{g}}{\modEnd{S}{S(a)}}{}})^\times \seq !S(\finalNode)}{\top}\), so by axioms \irref{testb} and \irref{composed},
      \(S(a) \models \ddiamond{(!{S(g)} \seq {\asubst{\namedGame{\gamma}{g}}{\modEnd{S}{S(a)}}{}})^\times}{\top}\).

      Next, by the inductive hypothesis and applications of axioms \irref{composed}, \irref{testd} and \irref{implyr}, we have
      \[ \top \models \ddiamond{!{S(g)} \seq \prefix{b}{(\asubst{\namedGame{\gamma}{g}}{\modEnd{S}{S(a)}}{})}}{\ddiamond{\fwd{b}{{\namedGame{\gamma}{g}}}}{S(a)}} \]

      Finally, since \(S\) is an Angelic subvalue map, we have \(S(a) \models \ddiamond{{\namedGame{\alpha}{a}}}{\phi}\).
      Thus by \irref{M} and \irref{composed} we can conclude the desired result.
    \end{enumerate}
    \item When \(\namedGame{\alpha}{a}\) is a Demon loop, i.e., \(\namedGame{(\namedGame{\gamma}{g})^\times}{a}\), then we analyze cases depending on the value of \(b\).
    \begin{enumerate}
      \item If \(b=a\), then 
      \(\prefix{b}{(\asubst{\namedGame{\alpha}{a}}{S}{\phi})} = \asubst{\namedGame{\alpha}{a}}{S}{\phi}\) and \(\fwd{b}{(\namedGame{\alpha}{a})} = \namedGame{\alpha}{a}\).
      
      Thus we must show that \(S(a) \models \ddiamond{\asubst{\namedGame{\alpha}{a}}{S}{\phi}}{\ddiamond{\namedGame{\alpha}{a}}{\phi}}\).
      
      \(S(a) \models \ddiamond{\asubst{\namedGame{\alpha}{a}}{S}{}}{S(a)}\) because \(S(a)\) is a loop invariant of \((\asubst{\namedGame{\gamma}{g}}{\modEnd{S}{S(a)}}{})^\times\) by \rref{lem:subvalue-universal}.

      \(S(a) \models \ddiamond{\namedGame{\alpha}{a}}{\phi}\) by definition of a subvalue map.
      Thus by \irref{M}, we can conclude the desired result.
      \item Otherwise, if \(b \in \nodes{\namedGame{\gamma}{g}}\), then:
      
      \(\prefix{b}{(\asubst{\namedGame{\alpha}{a}}{S}{\phi})} = \asubst{\namedGame{\alpha}{a}}{S}{\phi} \seq \prefix{b}{(\asubst{\namedGame{\gamma}{g}}{\modEnd{S}{S(a)}}{})}\),
      
      \(\fwd{b}{(\namedGame{\alpha}{a})} = \fwd{b}{(\namedGame{\gamma}{g})} \seq \namedGame{\alpha}{a}\).

      Thus we must show that
      \[
        S(a) \models \langle \asubst{\namedGame{\alpha}{a}}{S}{\phi} \seq \prefix{b}{(\asubst{\namedGame{\gamma}{g}}{\modEnd{S}{S(a)}}{})} \rangle
        {\ddiamond{\fwd{b}{(\namedGame{\gamma}{g})} \seq \namedGame{\alpha}{a}}{\phi}}
      \]
      First, we have \(S(a) \models \ddiamond{\asubst{\namedGame{\alpha}{a}}{S}{}}{S(a)}\) because \(S(a)\) is a loop invariant of \((\asubst{\namedGame{\gamma}{g}}{\modEnd{S}{S(a)}}{})^\times\) by \rref{lem:subvalue-universal}.

      Next, by the inductive hypothesis, we have 
      \[S(a) \models \ddiamond{\prefix{b}{(\asubst{\namedGame{\gamma}{g}}{\modEnd{S}{S(a)}}{})} \seq \fwd{b}{(\namedGame{\gamma}{g})}}{S(a)}.\]

      By definition of a subvalue map, we have \(S(a) \models \ddiamond{\namedGame{\alpha}{a}}{\phi}\).

      Thus by \irref{M} and \irref{composed}, we can conclude the desired result.
    \end{enumerate}
    \item When \(\namedGame{\alpha}{a}\) is sequential composition, i.e., \(\namedGame{(\namedGame{\gamma}{g} \seq \namedGame{\delta}{d})}{a}\), then we analyze cases depending on the value of \(b\).
    \begin{enumerate}
      \item If \(b=a\), then \(\prefix{b}{(\asubst{\namedGame{\alpha}{a}}{S}{\phi})} = \skp\) and \(\fwd{b}{\namedGame{\alpha}{a}} = \namedGame{\alpha}{a}\).
            Thus we must show that \(S(a) \models \ddiamond{\skp}{\ddiamond{\namedGame{\alpha}{a}}{\phi}}\), i.e., \(S(a) \models \ddiamond{\namedGame{\alpha}{a}}{\phi}\), which hold by definition of a subvalue map.
      \item Otherwise, if \(b \in \nodes{\namedGame{\gamma}{g}}\), then \(\prefix{b}{(\asubst{\namedGame{\alpha}{a}}{S}{\phi})} = \prefix{b}{(\asubst{\namedGame{\gamma}{g}}{\modEnd{S}{S(d)}}{})}\)
      and \(\fwd{b}{\namedGame{\alpha}{a}} = \fwd{b}{(\namedGame{\gamma}{g})} \seq \namedGame{\delta}{d}\).
      
      Thus we must show that \(S(a) \models \ddiamond{\prefix{b}{\asubst{\namedGame{\gamma}{g}}{\modEnd{S}{S(d)}}{}}}{\ddiamond{\fwd{b}{(\namedGame{\gamma}{g})} \seq \namedGame{\delta}{d}}{\phi}}\).
      
      This is \(S(a) \models \ddiamond{\prefix{b}{(\asubst{\namedGame{\gamma}{g}}{\modEnd{S}{S(d)}}{})}}{\ddiamond{\fwd{b}{(\namedGame{\gamma}{g})}}{\ddiamond{\namedGame{\delta}{d}}{\phi}}}\) applying \irref{composed}.

      This is \(S(a) \models \ddiamond{\prefix{b}{(\asubst{\namedGame{\gamma}{g}}{\modEnd{S}{S(d)}}{})} \seq \fwd{b}{(\namedGame{\gamma}{g})}}{\ddiamond{\namedGame{\delta}{d}}{\phi}}\) again applying \irref{composed}.

      By the inductive hypothesis, \(S(a) \models \ddiamond{\prefix{b}{(\asubst{\namedGame{\gamma}{g}}{\modEnd{S}{S(d)}}{})}}{S(d)}\) since \(S(a) \limply S(g)\) in an inductive subvalue map.

      By definition of an Angelic subvalue map, \(S(d) \models \ddiamond{\namedGame{\delta}{d}}{\phi}\).

      By rule \irref{M}, we can conclude the desired result.
      \item Otherwise, if \(b \in \nodes{\namedGame{\delta}{d}}\), then a similar argument applies.
      \(\prefix{b}{\asubst{\namedGame{\alpha}{a}}{S}{\phi}} = \{\asubst{\namedGame{\gamma}{g}}{\modEnd{S}{S(d)}}{} \seq \prefix{b}{(\asubst{\namedGame{\delta}{d}}{S}{\phi})}\}\), and \(\fwd{b}{\namedGame{\alpha}{a}} = \fwd{b}{\namedGame{\delta}{d}}\).

      Thus we must show that \(S(a) \models \ddiamond{\asubst{\namedGame{\gamma}{g}}{\modEnd{S}{S(d)}}{} \seq \prefix{b}{(\asubst{\namedGame{\delta}{d}}{S}{\phi})}}{\ddiamond{\fwd{b}{\namedGame{\delta}{d}}}{\phi}}\).

      By the \irref{composed} axiom, this is
      \(S(a) \models \ddiamond{\asubst{\namedGame{\gamma}{g}}{\modEnd{S}{S(d)}}{}}{\ddiamond{\prefix{b}{(\asubst{\namedGame{\delta}{d}}{S}{})} \seq \fwd{b}{(\namedGame{\delta}{d})}}{\phi}}\).

      By \rref{lem:subvalue-universal}, and since in an inductive subvalue map, \(\models S(a)\limply S(g)\), \(S(a) \models \ddiamond{\asubst{\namedGame{\gamma}{g}}{\modEnd{S}{S(d)}}{}}{S(d)}\).

      By the inductive hypothesis, \(S(d) \models \ddiamond{\prefix{b}{(\asubst{\namedGame{\delta}{d}}{S}{})} \seq \fwd{b}{(\namedGame{\delta}{d})}}{\phi}\).

      By rule \irref{M}, we can conclude the desired result.
    \end{enumerate}
  \end{itemize}
\end{proofE}

As discussed in \rref{sec:subvalue-map-definition}, the property of \rref{thm:subvalue-map-winning-region} is critical but not alone strong enough, since it does not guarantee that following the policy will ensure Angel wins.
The property it maintains is only that Angel always continues to have some winning strategy (\(\ddiamond{\fwd{b}{\namedGame{\alpha}{a}}}{\phi}\)), not that such a strategy is within the policy (\(\ddiamond{\fwd{b}{\aproj{\namedGame{\alpha}{a}}{S}{}}}{\phi}\)).
Angel might get stuck without a valid choice permitted by the map \(S\).
The \emph{inductive} subvalue map introduced next prevents this situation, disqualifying subvalue maps where Angel gets stuck.

\section{Inductive Subvalue Maps}
\label{sec:solution-representation}

To correct for the shortcoming of subvalue maps by preventing players from getting stuck, we introduce \emph{inductive} subvalue maps.
Notation \(\avalid{\alpha}{a}{S}{}\) indicates \(S\) is an \emph{inductive} subvalue map for game \(\namedGame{\alpha}{a}\).
\rref{thm:inductive-subvalue-map-winning-actions} shows the characteristic property of an inductive subvalue map: its policy always keeps the player in a region where the player can win the rest of the game \emph{by following the subvalue map policy}.

\begin{theoremE}[Inductive subvalue map ensures winning actions][restate, text proof={}]
  \label{thm:inductive-subvalue-map-winning-actions}
  Suppose \(S\) is an inductive Angelic subvalue map for game \(\namedGame{\alpha}{a}\) and compatible with winning condition \(\phi\), i.e., \(\avalid{\alpha}{a}{S}{\phi}\) and \(S(\finalNode) \models \phi\).
  Upon starting in any state \(\sigma \models S(a)\), and reaching subgame \(b\) by following any strategy induced by the subvalue map \(S\), there is a way for Angel to win by continuing to follow \(S\).
  That is, \(S(a) \models \ddiamond{\prefix{b}{(\asubst{\namedGame{\alpha}{a}}{S}{\phi})}}{\ddiamond{\fwd{b}{(\aproj{\namedGame{\alpha}{a}}{S}{\phi})}}{\phi}}\).
\end{theoremE}
\begin{proofE}
  We perform induction on the structure of \(\namedGame{\alpha}{a}\).
  \begin{itemize}
    \item When \(\namedGame{\alpha}{a}\) is atomic and not controlled by Angel,
    i.e., \(\alpha \in \{x:=e, x:=\otimes, \ptest{Q}, !Q, \{x'=f(x)\ \&\ Q\}^d\}\), then
    The only subgame \(b\) in \(\nodes{\namedGame{\alpha}{a}}\) is \(\namedGame{\alpha}{a}\) itself.
    Thus the only possibility is \(b=a\).
    \(\prefix{a}{\namedGame{\alpha}{a}}\) is the empty game \skp, and \(\fwd{a}{\namedGame{\alpha}{a}}\) is the game \(\namedGame{\alpha}{a}\), and \(\asubst{\namedGame{\alpha}{a}}{S}{\phi}\) is just \(\namedGame{\alpha}{a}\).
    Thus, \(\ddiamond{\prefix{a}{(\asubst{{\namedGame{\alpha}{a}}}{S}{\phi})}}{\ddiamond{\fwd{a}{(\aproj{{\namedGame{\alpha}{a}}}{S}{\phi})}}{\phi}}\) is equivalent to \(\ddiamond{\skp}{\ddiamond{{\alpha}}{\phi}}\).
    We need to show that \(S(a) \models {\ddiamond{\aproj{{\namedGame{\alpha}{a}}}{S}{} }{\phi}}\) which holds because by definition of an inductive Angelic subvalue map along with the fact that \(S(\finalNode)\limply \phi\), \(S(a) \limply \ddiamond{{\namedGame{\alpha}{a}}}{\phi}\).
    \item When \(\namedGame{\alpha}{a}\) is an Angelic nondeterministic assignment, i.e., \(\namedGame{x:=\ast}{a}\), then again the only subgame \(b\) in \(\nodes{\namedGame{\alpha}{a}}\) is \(\namedGame{\alpha}{a}\) itself.
    Thus the only possibility is \(b=a\).
    \(\prefix{a}{\namedGame{\alpha}{a}}\) is the empty game \skp, and \(\asubst{\namedGame{\alpha}{a}}{S}{\phi}\) is just \(\namedGame{\alpha}{a} \seq \ptest{S(\finalNode)}\), and \(\fwd{a}{\namedGame{\alpha}{a} \seq \ptest{S(\finalNode)}}\) is the game \(\namedGame{\alpha}{a} \seq \ptest{S(\finalNode)}\).
    Thus, \(\ddiamond{\asubst{\prefix{a}{\namedGame{\alpha}{a}}}{S}{\phi}}{\ddiamond{\aproj{\fwd{a}{\namedGame{\alpha}{a}}}{S}{\phi}}{\phi}}\) is equivalent to \(\ddiamond{\skp}{\ddiamond{{\alpha \seq \ptest{S(\finalNode)}}}{\phi}}\).
    We need to show that \(S(a) \models \ddiamond{\alpha}{(\ptest{S(\finalNode)} \land \phi)}\).
    Since \(\phi\) is a compatible winning condition, this is \(S(a) \models {\ddiamond{\alpha}{\ptest{S(\finalNode)}}}\) which holds by the definition of an inductive Angelic subvalue map.
    \item When \(\namedGame{\alpha}{a}\) is an Angelic ODE, i.e., \(\namedGame{\{x'=f(x)\ \&\ Q\}}{a}\), then the argument is similar to the previous case.
    As before the only possibility is \(b=a\), and we need to show that \(S(a) \models {\ddiamond{\alpha}{\ptest{S(\finalNode)} \land \phi}}\).
    Since \(\phi\) is a compatible winning condition, this is \(S(a) \models {\ddiamond{\aproj{{\namedGame{\alpha}{a}}}{S}{} }{\ptest{S(\finalNode)}}}\)
    which holds by the definition of an inductive Angelic subvalue map.
    \item When \(\namedGame{\alpha}{a}\) is an Angelic choice, i.e., \(\namedGame{(\namedGame{\gamma}{g} \cup \namedGame{\delta}{d})}{a}\), then we analyze cases depending on the value of \(b\).
    \begin{enumerate}
      \item If \(b=a\), then \(\prefix{b}{(\asubst{\namedGame{\alpha}{a}}{S}{\phi})} = \skp\) and \(\fwd{b}{\aproj{\namedGame{\alpha}{a}}{S}{\phi}} = \aproj{\namedGame{\alpha}{a}}{S}{}\).
      Thus we must show that \(S(a) \models \ddiamond{\skp}{\ddiamond{\aproj{\namedGame{\alpha}{a}}{S}{}}{\phi}}\), i.e., \(S(a) \models \ddiamond{\aproj{\namedGame{\alpha}{a}}{S}{}}{\phi}\), which hold by \rref{lem:inductive-subvalue-existence}.
      \item Otherwise, if \(b \in \nodes{\namedGame{\gamma}{g}}\), then \(\prefix{b}{(\asubst{\namedGame{\alpha}{a}}{S}{\phi})} = {!S(g)} \seq \prefix{b}{\asubst{\namedGame{\gamma}{g}}{S}{}}\) and \(\fwd{b}{\aproj{\namedGame{\alpha}{a}}{S}{\phi}} = \fwd{b}{\namedGame{\gamma}{g}}\).
      Thus we must show that \(S(a) \models \ddiamond{!S(g) \seq \prefix{b}{\asubst{\namedGame{\gamma}{g}}{S}{}}}{\ddiamond{\fwd{b}{\namedGame{\gamma}{g}}}{\phi}}\).

      This is \(S(a) \land S(g) \models \ddiamond{\prefix{b}{\asubst{\namedGame{\gamma}{g}}{S}{}}}{\ddiamond{\fwd{b}{\namedGame{\gamma}{g}}}{\phi}}\) using axioms \irref{composed}, \irref{testb}.
      This holds by the inductive hypothesis.
      \item Otherwise, if \(b \in \nodes{\namedGame{\delta}{d}}\), then the proof proceeds similarly to the previous case, but with \(\namedGame{\delta}{d}\) replacing \(\namedGame{\gamma}{g}\).
      \end{enumerate}
    \item When \(\namedGame{\alpha}{a}\) is a Demonic choice, i.e., \(\namedGame{(\namedGame{\gamma}{g} \cap \namedGame{\delta}{d})}{a}\), then the argument is similar to Angelic choice, except that we get assumptions \(S(g)\) or \(S(d)\) not from the execution prefix but from \(S(a)\) because in an inductive subvalue map, \(\models S(a) \limply S(g) \land S(d)\).
    We analyze cases depending on the value of \(b\).
    \begin{enumerate}
      \item If \(b=a\), then \(\prefix{b}{(\asubst{\namedGame{\alpha}{a}}{S}{\phi})} = \skp\) and \(\fwd{b}{\aproj{\namedGame{\alpha}{a}}{S}{\phi}} = \aproj{\namedGame{\alpha}{a}}{S}{}\).
            Thus we must show that \(S(a) \models \ddiamond{\skp}{\ddiamond{\aproj{\namedGame{\alpha}{a}}{S}{}}{\phi}}\), i.e., \(S(a) \models \ddiamond{\aproj{\namedGame{\alpha}{a}}{S}{}}{\phi}\), which hold by \rref{lem:inductive-subvalue-existence}.
      \item Otherwise, if \(b \in \nodes{\namedGame{\gamma}{g}}\), then \(\prefix{b}{(\asubst{\namedGame{\alpha}{a}}{S}{\phi})} = \prefix{b}{\asubst{\namedGame{\gamma}{g}}{S}{}}\) and \(\fwd{b}{\aproj{\namedGame{\alpha}{a}}{S}{\phi}} = \fwd{b}{\namedGame{\gamma}{g}}\).
      Thus we must show that \(S(a) \models \ddiamond{\prefix{b}{\asubst{\namedGame{\gamma}{g}}{S}{}}}{\ddiamond{\fwd{b}{\namedGame{\gamma}{g}}}{\phi}}\).
      This holds by the inductive hypothesis since as discussed, \(S(a)\limply S(g)\).
      \item Otherwise, if \(b \in \nodes{\namedGame{\delta}{d}}\), then the proof proceeds similarly to the previous case, but with \(\namedGame{\delta}{d}\) replacing \(\namedGame{\gamma}{g}\).
    \end{enumerate}
    \item When \(\namedGame{\alpha}{a}\) is sequential composition, i.e., \(\namedGame{(\namedGame{\gamma}{g} \seq \namedGame{\delta}{d})}{a}\), then we analyze cases depending on the value of \(b\).
    \begin{enumerate}
      \item If \(b=a\), then \(\prefix{b}{(\asubst{\namedGame{\alpha}{a}}{S}{\phi})} = \skp\) and \(\fwd{b}{\aproj{\namedGame{\alpha}{a}}{S}{\phi}} = \aproj{\namedGame{\alpha}{a}}{S}{}\).
            Thus we must show that \(S(a) \models \ddiamond{\skp}{\ddiamond{\aproj{\namedGame{\alpha}{a}}{S}{}}{\phi}}\), i.e., \(S(a) \models \ddiamond{\aproj{\namedGame{\alpha}{a}}{S}{}}{\phi}\), which hold by \rref{lem:inductive-subvalue-existence}.
      \item Otherwise, if \(b \in \nodes{\namedGame{\gamma}{g}}\), then \(\prefix{b}{(\asubst{\namedGame{\alpha}{a}}{S}{\phi})} = \prefix{b}{(\asubst{\namedGame{\gamma}{g}}{\modEnd{S}{S(d)}}{})}\)
      and \(\fwd{b}{\aproj{\namedGame{\alpha}{a}}{S}{\phi}} = \fwd{b}{(\aproj{\namedGame{\gamma}{g}}{\modEnd{S}{S(d)}}{})} \seq \aproj{\namedGame{\delta}{d}}{S}{}\).
      Thus we must show that \(S(a) \models \ddiamond{\prefix{b}{\asubst{\namedGame{\gamma}{g}}{\modEnd{S}{S(d)}}{}}}{\ddiamond{\fwd{b}{(\aproj{\namedGame{\gamma}{g}}{\modEnd{S}{S(d)}}{})} \seq \aproj{\namedGame{\delta}{d}}{S}{}}{\phi}}\).
      
      This is \(S(a) \models \ddiamond{\prefix{b}{(\asubst{\namedGame{\gamma}{g}}{\modEnd{S}{S(d)}}{})}}{\ddiamond{\fwd{b}{(\aproj{\namedGame{\gamma}{g}}{\modEnd{S}{S(d)}}{})}}{\ddiamond{\aproj{\namedGame{\delta}{d}}{S}{}}{\phi}}}\) applying \irref{composed}.

      This is \(S(a) \models \ddiamond{\prefix{b}{(\asubst{\namedGame{\gamma}{g}}{\modEnd{S}{S(d)}}{})} \seq \fwd{b}{(\aproj{\namedGame{\gamma}{g}}{\modEnd{S}{S(d)}}{})}}{\ddiamond{\aproj{\namedGame{\delta}{d}}{S}{}}{\phi}}\) again applying \irref{composed}.

      By the inductive hypothesis, \(S(a) \models \ddiamond{\prefix{b}{(\asubst{\namedGame{\gamma}{g}}{\modEnd{S}{S(d)}}{})}}{S(d)}\) since \(S(a) \limply S(g)\) in an inductive subvalue map.

      By \rref{lem:inductive-subvalue-existence}, \(S(d) \models \ddiamond{\aproj{\namedGame{\delta}{d}}{S}{}}{\phi}\).

      By rule \irref{M}, we can conclude the desired result.
      \item Otherwise, if \(b \in \nodes{\namedGame{\delta}{d}}\), then a similar argument applies.
      \(\prefix{b}{\asubst{\namedGame{\alpha}{a}}{S}{\phi}} = \{\asubst{\namedGame{\gamma}{g}}{\modEnd{S}{S(d)}}{} \seq \prefix{b}{(\asubst{\namedGame{\delta}{d}}{S}{\phi})}\}\), and \(\fwd{b}{\aproj{\namedGame{\alpha}{a}}{S}{\phi}} = \fwd{b}{\aproj{\namedGame{\delta}{d}}{S}{}}\).

      Thus we must show that \(S(a) \models \ddiamond{\asubst{\namedGame{\gamma}{g}}{\modEnd{S}{S(d)}}{} \seq \prefix{b}{(\asubst{\namedGame{\delta}{d}}{S}{\phi})}}{\ddiamond{\fwd{b}{\aproj{\namedGame{\delta}{d}}{S}{}}}{\phi}}\).

      By the \irref{composed} axiom, this is
      \(S(a) \models \ddiamond{\asubst{\namedGame{\gamma}{g}}{\modEnd{S}{S(d)}}{}}{\ddiamond{\prefix{b}{(\asubst{\namedGame{\delta}{d}}{S}{})} \seq \fwd{b}{(\aproj{\namedGame{\delta}{d}}{S}{})}}{\phi}}\).

      By \rref{lem:inductive-subvalue-universal}, and since in an inductive subvalue map, \(\models S(a)\limply S(g)\), \(S(a) \models \ddiamond{\asubst{\namedGame{\gamma}{g}}{\modEnd{S}{S(d)}}{}}{S(d)}\).

      By the inductive hypothesis, \(S(d) \models \ddiamond{\prefix{b}{(\asubst{\namedGame{\delta}{d}}{S}{})} \seq \fwd{b}{(\aproj{\namedGame{\delta}{d}}{S}{})}}{\phi}\).

      By rule \irref{M}, we can conclude the desired result.
    \end{enumerate}
    \item When \(\namedGame{\alpha}{a}\) is an Angelic loop, i.e., \(\namedGame{(\namedGame{\gamma}{g})^*}{a}\), then we analyze cases depending on the value of \(b\).
    \begin{enumerate}
      \item If \(b=a\) then \(\prefix{b}{(\asubst{\namedGame{\alpha}{a}}{S}{\phi})} = (!{S(g)} \seq \asubst{\namedGame{\gamma}{g}}{\modEnd{S}{S(a)}}{})^\times\) (with the terminal test missing compared to \(\asubst{\namedGame{\alpha}{a}}{S}{\phi}\)) and \(\fwd{b}{\aproj{\namedGame{\alpha}{a}}{S}{\phi}} = \aproj{\namedGame{\alpha}{a}}{S}{}\).
      
      We need to show that \(S(a) \models \ddiamond{{(!{S(g)} \seq \asubst{\namedGame{\gamma}{g}}{\modEnd{S}{S(a)}}{})^\times}}{\ddiamond{\aproj{\namedGame{\alpha}{a}}{S}{}}{\phi}}\).

      We first show that \(S(a) \models \ddiamond{{(!{S(g)} \seq \asubst{\namedGame{\gamma}{g}}{\modEnd{S}{S(a)}}{})^\times}}{S(a)}\).
      This follows from applying the \irref{loop} rule with invariant \(S(a)\) and then using the fact hat per \rref{lem:inductive-subvalue-universal}, \(S(g)\models \ddiamond{\asubst{\namedGame{\gamma}{g}}{\modEnd{S}{S(a)}}{}}{S(a)}\).
      
      Then by \rref{lem:inductive-subvalue-existence},
      \(S(a) \models \ddiamond{\aproj{\namedGame{\alpha}{a}}{S}{}}{\phi}\).
      Thus by \irref{M}, we can conclude the desired result.
      \item Otherwise, if \(b\neq a\) then it is in \(\nodes{\namedGame{\gamma}{g}}\).
      
      \(\prefix{b}{(\asubst{\namedGame{\alpha}{a}}{S}{\phi})} = (!{S(g)} \seq \prefix{b}{\asubst{\namedGame{\gamma}{g}}{\modEnd{S}{S(a)}}{}})^\times \seq {!{S(g)} \seq \prefix{b}{(\asubst{\namedGame{\gamma}{g}}{\modEnd{S}{S(a)}}{})}}\).
      
      \(\fwd{b}{\aproj{\namedGame{\alpha}{a}}{S}{\phi}} = \fwd{b}{(\aproj{\namedGame{\gamma}{g}}{\modEnd{S}{S(a)}{}})} \seq \aproj{\namedGame{\alpha}{a}}{S}{}\).

      Want to show:
      \begin{multline*}
      S(a) \models \langle (!{S(g)} \seq \asubst{\namedGame{\gamma}{g}}{\modEnd{S}{S(a)}}{})^\times \seq {!{S(g)} \seq \prefix{b}{(\asubst{\namedGame{\gamma}{g}}{\modEnd{S}{S(a)}}{})}} \rangle \\
      {\ddiamond{\fwd{b}{(\aproj{\namedGame{\gamma}{g}}{\modEnd{S}{S(a)}}{})} \seq \aproj{\namedGame{\alpha}{a}}{S}{}}{\phi}}.
      \end{multline*}

      First, by \rref{lem:inductive-subvalue-universal}, we have \(S(a) \models \ddiamond{(!{S(g)} \seq {\asubst{\namedGame{\gamma}{g}}{\modEnd{S}{S(a)}}{}})^\times \seq !S(\finalNode)}{\top}\), so by axioms \irref{testb} and \irref{composed},
      \(S(a) \models \ddiamond{(!{S(g)} \seq {\asubst{\namedGame{\gamma}{g}}{\modEnd{S}{S(a)}}{}})^\times}{\top}\).

      Next, by the inductive hypothesis and applications of axioms \irref{composed}, \irref{testd} and \irref{implyr}, we have
      \[ \top \models \ddiamond{!{S(g)} \seq \prefix{b}{(\asubst{\namedGame{\gamma}{g}}{\modEnd{S}{S(a)}}{})}}{\ddiamond{\fwd{b}{\aproj{\namedGame{\gamma}{g}}{\modEnd{S}{S(a)}}{}}}{S(a)}} \]

      Finally, by \rref{lem:inductive-subvalue-existence}, we have \(S(a) \models \ddiamond{\aproj{\namedGame{\alpha}{a}}{S}{}}{\phi}\).
      Thus by \irref{M} and \irref{composed} we can conclude the desired result.
    \end{enumerate}
    \item When \(\namedGame{\alpha}{a}\) is a Demonic loop, i.e., \(\namedGame{(\namedGame{\gamma}{g})^\times}{a}\), then we analyze cases depending on the value of \(b\).
    \begin{enumerate}
      \item If \(b=a\), then 
      \(\prefix{b}{(\asubst{\namedGame{\alpha}{a}}{S}{\phi})} = \asubst{\namedGame{\alpha}{a}}{S}{\phi}\) and \(\fwd{b}{(\aproj{\namedGame{\alpha}{a}}{S}{\phi})} = \aproj{\namedGame{\alpha}{a}}{S}{}\).
      
      Thus we must show that \(S(a) \models \ddiamond{\asubst{\namedGame{\alpha}{a}}{S}{\phi}}{\ddiamond{\aproj{\namedGame{\alpha}{a}}{S}{}}{\phi}}\).
      
      \(S(a) \models \ddiamond{\asubst{\namedGame{\alpha}{a}}{S}{}}{S(a)}\) because \(S(a)\) is a loop invariant of \((\asubst{\namedGame{\gamma}{g}}{\modEnd{S}{S(a)}}{})^\times\) by \rref{lem:inductive-subvalue-universal}.

      \(S(a) \models \ddiamond{\aproj{\namedGame{\alpha}{a}}{S}{}}{\phi}\) by \rref{lem:inductive-subvalue-existence}.
      Thus by \irref{M}, we can conclude the desired result.
      \item Otherwise, if \(b \in \nodes{\namedGame{\gamma}{g}}\), then:
      
      \(\prefix{b}{(\asubst{\namedGame{\alpha}{a}}{S}{\phi})} = \asubst{\namedGame{\alpha}{a}}{S}{\phi} \seq \prefix{b}{(\asubst{\namedGame{\gamma}{g}}{\modEnd{S}{S(a)}}{})}\),
      
      \(\fwd{b}{(\aproj{\namedGame{\alpha}{a}}{S}{\phi})} = \fwd{b}{(\aproj{\namedGame{\gamma}{g}}{\modEnd{S}{S(a)}}{})} \seq \aproj{\namedGame{\alpha}{a}}{S}{}\).

      Thus we must show that
      \[
        S(a) \models \langle \asubst{\namedGame{\alpha}{a}}{S}{\phi} \seq \prefix{b}{(\asubst{\namedGame{\gamma}{g}}{\modEnd{S}{S(a)}}{})} \rangle
        {\ddiamond{\fwd{b}{(\aproj{\namedGame{\gamma}{g}}{\modEnd{S}{S(a)}}{})} \seq \aproj{\namedGame{\alpha}{a}}{S}{}}{\phi}}
      \]
      First, we have \(S(a) \models \ddiamond{\asubst{\namedGame{\alpha}{a}}{S}{}}{S(a)}\) because \(S(a)\) is a loop invariant of \((\asubst{\namedGame{\gamma}{g}}{\modEnd{S}{S(a)}}{})^\times\) by \rref{lem:inductive-subvalue-universal}.

      Next, by the inductive hypothesis, we have \(S(a) \models \ddiamond{\prefix{b}{(\asubst{\namedGame{\gamma}{g}}{\modEnd{S}{S(a)}}{})}}{S(a)}\).

      By \rref{lem:inductive-subvalue-existence}, we have \(S(a) \models \ddiamond{\aproj{\namedGame{\alpha}{a}}{S}{}}{\phi}\).

      Thus by \irref{M} and \irref{composed}, we can conclude the desired result.
    \end{enumerate}
  \end{itemize}
\end{proofE}

It is possible to characterize an inductive subvalue map using recursive conditions expressed via \dGL formulas (\rref{def:local-envelope-conditions}).

\begin{definition}[Inductive subvalue maps]
  \label{def:local-envelope-conditions}
  Let $S$ be a map from the subgames of $\namedGame{\alpha}{a}$ to winning subregions.
  $S$ is an \emph{inductive Angelic subvalue map} for game $\namedGame{\alpha}{a}$, written $\avalid{\alpha}{a}{S}{}$, when the following holds.
  If $\namedGame{\alpha}{a}$ has structure: 
  \begin{align*}
    \textrm{atomic, i.e., }\alpha \in &\{x:=e, x:=*, x:=\otimes, \ptest{Q}, !Q, \{x'=f(x)\ \&\ Q\}, \\
    & \phantom{\{} \{x'=f(x)\ \& \ Q\}^d\} \textrm{ then } \models S(a) \limply \ddiamond{\alpha}{S(\finalNode)} . \\
    \namedGame{(\namedGame{\gamma}{g} \cup \namedGame{\delta}{d})}{a} \textrm{ then }& \models S(a) \limply S(g) \lor S(d) \textrm{ and }
    \avalid{\gamma}{g}{S}{} \textrm{ and } \avalid{\delta}{d}{S}{} . \\
    \namedGame{(\namedGame{\gamma}{g} \cap \namedGame{\delta}{d})}{a} \textrm{ then }& \models S(a) \limply S(g) \land S(d) \textrm{ and }
    \avalid{\gamma}{g}{S}{} \textrm{ and } \avalid{\delta}{d}{S}{} . \\
    \namedGame{(\namedGame{\gamma}{g}\seq \namedGame{\delta}{d})}{a} \textrm{ then }& \models S(a) \limply S(g) \textrm{ and }
    \avalid{\gamma}{g}{\modEnd{S}{S(d)}}{} \textrm{ and } \avalid{\delta}{d}{S}{} .\\
    \namedGame{(\namedGame{\gamma}{g})^*}{a} \textrm{ then }& \models S(a) \limply \ddiamond{\aproj{\namedGame{\alpha}{a}}{S}{}}{S(\finalNode)} \textrm{ and } \avalid{\gamma}{g}{\modEnd{S}{S(a)}}{} . \\
    \namedGame{(\namedGame{\gamma}{g})^\times}{a} \textrm{ then }& \models S(a) \limply S(g) \land S(\finalNode)
    \textrm{ and } \avalid{\gamma}{g}{\modEnd{S}{S(a)}}{} .
  \end{align*}
  Symmetric conditions characterize when $S$ is an \emph{inductive Demonic subvalue map} for game $\namedGame{\alpha}{a}$ (\rref{app:inductive-subvalue-map}).
\end{definition}
\begin{example}
Consider again the Angelic choice in \rref{fig:overview}.
To establish $S \models 8\mspace{-6mu}:\mspace{-5mu} (\namedGame{(v:=a)^*}{6} \cup \namedGame{(v:=v-1 \seq a:=*)}{5})$,
\rref{def:local-envelope-conditions} first checks that if $S(8)$ (i.e., $\phi_8$) holds then either Angel can win by choosing to run subgame 6 (when $\phi_6$ holds) or subgame 5 (when $\phi_5$ holds).
It then also recursively ensures that after going to subgame 5 or 6, $S$ continues to provide a valid Angelic strategy ($\avalid{(v:=v+a)^\ast}{6}{S}{}$ and $\avalid{(v:=v-1\seq a:=\ast)}{5}{S}{}$).
\end{example}

We discuss the subtle loop cases of \rref{def:local-envelope-conditions}.
For Demonic loop validity $\avalid{(\namedGame{\gamma}{g})^\times}{a}{S}{}$, the following conditions should hold.
\begin{enumerate}
  \item $\models S(a) \limply S(\finalNode)$, and $\models S(a) \limply$ $\langle \gamma \rangle S(a)$, i.e. S(a) is an invariant that holds inductively and implies the postcondition.
  Inductiveness condition $\models S(a) \limply \langle \gamma \rangle S(a)$ does not appear in \rref{def:local-envelope-conditions} because it is already implied by later conditions $\models S(a) \limply S(g)$ and $\avalid{\gamma}{g}{\modEnd{S}{S(a)}}{}$.
  \item $\avalid{\gamma}{g}{\modEnd{S}{S(a)}}{}$, i.e. the subvalue map for the loop body ensures finishing within the safe invariant.
  Using instead the more liberal condition $\avalid{\gamma}{g}{S}{}$ to finish the loop body within the postcondition is unsound: Demon can choose to repeat the loop indefinitely, so Angel must play to remain within the inductive region.
  Alternative condition $\avalid{\gamma}{g}{\modEnd{S}{S(g)}}{}$ is also unsound since then $S(g)$ is merely inductive and does not have to imply reaching desired postcondition $S(\finalNode)$.
  \item $\models S(a) \limply S(g)$.
  In case Demon chooses to run the loop, Angel is guaranteed to not get stuck while playing the loop body in winning subregion $S(g)$, so overall loop winning subregion $S(a)$ should not exceed $S(g)$.
  On the other hand, $S(g)$ can be weaker than $S(a)$ as it can be one loop iteration away from implying $S(\finalNode)$, while $S(a)$ must already imply $S(\finalNode)$.
\end{enumerate}

For Angelic loop validity $\avalid{(\namedGame{\gamma}{g})^*}{a}{S}{}$, the following conditions must hold.
\begin{enumerate}
\item $\models S(a) \limply \ddiamond{\aproj{\namedGame{\alpha}{a}}{S}{}}{S(\finalNode)}$, i.e., Angelic loop invariant
$S(a)$ ensures that there is a \emph{well-founded} strategy to exit while staying within the control envelope.
Angel should not get stuck in a state where her only option is playing the loop forever.
Using instead the more liberal condition $\models S(a) \limply \ddiamond{\mathbf{\alpha}}{S(\finalNode)}$ is not enough, since Angel can get stuck in a state where there is a way to exit the loop but only when going outside the subvalue map (see \rref{app:counterexample-loop-proj}).
\item $\avalid{\gamma}{g}{\modEnd{S}{S(a)}}{}$, i.e., the subvalue map for the loop body ensures that after playing the loop body, there is still a well-founded exit strategy within the subvalue map $(S(a))$.
Using condition $\avalid{\gamma}{g}{\modEnd{S}{S(g)\lor S(\finalNode)}}{}$ instead is unsound as
then $S(g)$ is merely inductive and does not have to imply an exit strategy to reach postcondition $S(\finalNode)$ (see \rref{app:counterexample-loop-inner}).
\end{enumerate}

\rref{thm:consistancy} below proves that \emph{all} inductive subvalue maps are subvalue maps, indicating that the validity conditions compose to produce correct Angel and Demon winning subregions.
Consequently, for suitable initial states, while following an inductive subvalue map, the player will never reach a state from which it cannot win (\rref{thm:subvalue-map-winning-region}).
Additionally, \rref{thm:inductive-subvalue-map-winning-actions} at the start of this section shows that following the policy induced by an inductive subvalue map ensures that the agent will never reach a state where it is stuck.
In combination, these mean that starting in a suitable initial state, while following an inductive subvalue map, a player cannot lose.

\begin{theoremE}[Inductive subvalue maps are subvalue maps][restate, text proof={}]
  \label{thm:consistancy}
  For \dGL game $\namedGame{\alpha}{a}$, if $S$ is an inductive Angelic subvalue map ($\avalid{\alpha}{a}{S}{}$), then it is also an Angelic subvalue map for $\namedGame{\alpha}{a}$, i.e., for every subgame $\namedGame{\beta}{b}$ in $\nodes{\namedGame{\alpha}{a}}$, $\models S(b) \limply \langle \fwd{b}{\namedGame{\alpha}{a}} \rangle S(\finalNode)$.
  Dually, if \(S\) is an inductive Demonic subvalue map for \(\namedGame{\alpha}{a}\), then \(S\) is a Demonic subvalue map for $\namedGame{\alpha}{a}$.
\end{theoremE}
\begin{proofE}
    The proof uses structural induction.
    When $\namedGame{\alpha}{a}$ is atomic, i.e. $\alpha \in \{x:=e, x:=*, \ptest{Q}, !Q, \{x'=f(x)\ \&\ Q\}, \{x'=f(x)\ \& \ Q\}^d\}$, according to \rref{def:local-envelope-conditions}, $\avalid{\alpha}{a}{S}{}$ exactly when $\models S(a) \limply \langle \alpha \rangle S(\finalNode)$.
    Since $\fwd{a}{\namedGame{\alpha}{a}}=\namedGame{\alpha}{a}$ and $a$ is the only subgame in $\nodes{\namedGame{\alpha}{a}}$, we can conclude that $S$ is a subvalue map.
    For the recursive cases, if $\namedGame{\alpha}{a}$ has the structure:
    \begin{enumerate}
        \item $\namedGame{\gamma}{g}\cup\namedGame{\delta}{d}$,
        for any subgame $b$ in $\nodes{\namedGame{\gamma}{g}}$, $\fwd{b}{\namedGame{\alpha}{a}} = \fwd{b}{\namedGame{\gamma}{g}}$.
        Since the per inductive hypothesis, $\models S(b) \limply \langle \fwd{b}{\namedGame{\gamma}{g}} \rangle S(\finalNode)$, we can conclude that $\models S(b) \limply \langle \fwd{b}{\namedGame{\alpha}{a}} \rangle S(\finalNode)$.
        The same argument applies to all subgames in $\nodes{\namedGame{\delta}{d}}$.
        It remains to show that for the overall game $a$, $\models S(a) \limply \langle \fwd{a}{\namedGame{\alpha}{a}} \rangle S(\finalNode)$.
        By the inductive hypothesis, because $\avalid{\gamma}{g}{S}{}$, so $\models S(g) \limply \langle \gamma \rangle S(\finalNode)$ and because $\avalid{\delta}{d}{S}{}$, so $\models S(d) \limply \langle \delta \rangle S(\finalNode)$.
        Using the disjunction of implications, $\models S(g)\lor S(d) \limply \langle \gamma \rangle S(\finalNode) \lor \langle \delta \rangle S(\finalNode)$.
        But according to \dGL axiom \irref{choiced}, $\models \langle \gamma \rangle S(\finalNode) \lor \langle \delta \rangle S(\finalNode) \leftrightarrow \langle \gamma \cup \delta\rangle S(\finalNode)$,
        so $\models S(g)\lor S(d) \limply \langle \gamma \cup \delta\rangle S(\finalNode)$.
        Because $S$ is valid, $\models S(a) \limply S(g) \lor S(d)$.
        By transitivity of implication, $\models S(a) \limply \langle \gamma \cup \delta\rangle S(\finalNode)$, proving the desired result.
        \item $\namedGame{\gamma}{g}\cap\namedGame{\delta}{d}$,
        for any subgame $b$ in $\nodes{\namedGame{\gamma}{g}}$, $\fwd{b}{\namedGame{\alpha}{a}} = \fwd{b}{\namedGame{\gamma}{g}}$.
        Since per the inductive hypothesis, $\models S(b) \limply \langle \fwd{b}{\namedGame{\gamma}{g}} \rangle S(\finalNode)$, we can conclude that $\models S(b) \limply \langle \fwd{b}{\namedGame{\alpha}{a}} \rangle S(\finalNode)$.
        The same argument applies to all subgames in $\nodes{\namedGame{\delta}{d}}$.
        It remains to show that for the overall game $a$, $\models S(a) \limply \langle \fwd{a}{\namedGame{\alpha}{a}} \rangle S(\finalNode)$.
        By the inductive hypothesis, because $\avalid{\gamma}{g}{S}{}$, so $\models S(g) \limply \langle \gamma \rangle S(\finalNode)$ and because $\avalid{\delta}{d}{S}{}$, so $\models S(d) \limply \langle \delta \rangle S(\finalNode)$.
        Using the conjunction of implications, $\models S(g)\land S(d) \limply \langle \gamma \rangle S(\finalNode) \land \langle \delta \rangle S(\finalNode)$.
        But according to \dGL axiom $\irref{dchoiced}$, $\models \langle \gamma \rangle S(\finalNode) \land \langle \delta \rangle S(\finalNode) \leftrightarrow \langle \gamma \cap \delta\rangle S(\finalNode)$,
        so $\models S(g)\land S(d) \limply \langle \gamma \cap \delta\rangle S(\finalNode)$.
        Because $S$ is valid, $\models S(a) \limply S(g) \land S(d)$.
        By transitivity of implication, $\models S(a) \limply \langle \gamma \cap \delta\rangle S(\finalNode)$, proving the desired result.
        \item $\namedGame{\gamma}{g};\namedGame{\delta}{d}$, 
        for any subgame $b$ in $\nodes{\namedGame{\delta}{d}}$, $\fwd{b}{\namedGame{\delta}{d}} = \fwd{b}{\namedGame{\alpha}{a}}$.
        Since per the inductive hypothesis, $\models S(b) \limply \langle \fwd{b}{\namedGame{\delta}{d}} \rangle S(\finalNode)$, we can conclude that $\models S(b) \limply \langle \fwd{b}{\namedGame{\alpha}{a}} \rangle S(\finalNode)$.

        Consider now any subgame $b$ in $\nodes{\namedGame{\gamma}{g}}$.
        Because $\avalid{\gamma}{g}{S}{S(d)}$, $\models S(b) \limply \langle \fwd{b}{\namedGame{\gamma}{g}} \rangle S(d)$.
        Now, because $\avalid{\delta}{d}{S}{}$, $\models S(d) \limply \langle \delta \rangle S(\finalNode)$.
        According to the \dGL monotonicity rule \irref{M}, this means that $\models S(b) \limply \langle \fwd{b}{\namedGame{\gamma}{g}} \rangle \langle \delta \rangle S(\finalNode)$.
        According to the \dGL axiom \irref{composed}, $\models S(b) \limply \langle \fwd{b}{\namedGame{\gamma}{g}};\delta \rangle S(\finalNode)$.
        Since $\fwd{b}{\namedGame{\alpha}{a}} = \fwd{b}{\namedGame{\gamma}{g}};\namedGame{\delta}{d}$, we can conclude that $\models S(b) \limply \langle \fwd{b}{\namedGame{\alpha}{a}} \rangle S(\finalNode)$.

        It remains to show that for the overall game $a$, $\models S(a) \limply \langle \fwd{a}{\namedGame{\alpha}{a}} \rangle S(\finalNode)$.
        By the inductive hypothesis, because $\avalid{\gamma}{g}{S}{S(\delta)}$, so $\models S(g) \limply \langle \gamma \rangle S(\delta)$ and because $\avalid{\delta}{d}{S}{}$, so $\models S(d) \limply \langle \delta \rangle S(\finalNode)$.
        According to the \dGL monotonicity rule \irref{M}, this means that $\models S(g) \limply \langle \gamma \rangle \langle \delta\rangle S(\finalNode)$.
        According to the \dGL axiom \irref{composed}, $\models \langle \gamma \rangle \langle \delta \rangle S(\finalNode) \leftrightarrow \langle \gamma \seq \delta\rangle S(\finalNode)$.
        Thus, $\models S(g) \limply \langle \gamma \seq \delta\rangle S(\finalNode)$.
        Because $S$ is valid, $\models S(a) \limply S(g)$.
        By transitivity of implication, $\models S(a) \limply \langle \gamma \seq \delta\rangle S(\finalNode)$, proving the desired result.
        \item $\namedGame{\gamma}{g}^*$, then because $\avalid{(\namedGame{\gamma}{g})^\ast}{a}{S}{}$, then immediately as the first validity condition $\models S(a) \limply \langle \aproj{\namedGame{\alpha}{a}}{S}{} \rangle S(\finalNode)$.
        Per \rref{lem:subvalue-projection-refinement}, $\models \langle \aproj{\namedGame{\alpha}{a}}{S}{} \rangle S(\finalNode) \limply \langle \alpha \rangle S(\finalNode)$.
        By the transitivity of implication, $\models S(a) \limply \langle \alpha \rangle S(\finalNode)$.
        \item $\namedGame{\gamma}{g}^\times$, then because $\avalid{(\namedGame{\gamma}{g}^\times)}{a}{S}{}$, $\models S(a) \limply S(g)$ and $\models S(a) \limply S(\finalNode)$.
        Additionally, by the inductive hypothesis, because $\avalid{\gamma}{g}{S}{S(a)}, \models S(g) \limply \langle \gamma \rangle S(a)$.
        By the transitivity of implication, $\models S(a) \limply \langle \gamma \rangle S(a)$.
        Applying the \irref{invind} rule of \dGL, this permits us to infer $\models S(a) \limply \langle \gamma^\times \rangle S(a)$.
        Recall that because $\avalid{(\namedGame{\gamma}{g}^\times)}{a}{S}{}$, $\models S(a) \limply S(\finalNode)$.
        By the monotonicity rule \irref{M}, $\models S(a) \limply \langle \gamma^\times \rangle S(\finalNode)$, proving that $\models S(a) \limply \fwd{a}{\namedGame{\alpha}{a}}$.

        It remains to show that for any subgame $b$ in $\nodes{\namedGame{\gamma}{g}}$, $\models S(b) \limply \langle \fwd{b}{\namedGame{\alpha}{a}} \rangle S(\finalNode)$.
        Because $\avalid{\gamma}{g}{S}{S(a)}$, by the inductive hypothesis $\models S(b) \limply \langle \fwd{b}{\namedGame{\gamma}{g}} \rangle S(a)$.
        Since $\models S(a) \limply \langle \gamma^\times \rangle S(\finalNode)$, by the monotonicity rule \irref{M}, $\models S(b) \limply \langle \fwd{b}{\namedGame{\gamma}{g}} \rangle \langle \gamma^\times \rangle S(\finalNode)$.
        By \dGL axiom \irref{composed}, $\models S(b) \limply \langle \fwd{b}{\namedGame{\gamma}{g}};\gamma^\times \rangle S(\finalNode)$.
        But, $\fwd{b}{\namedGame{\alpha}{a}} = \fwd{b}{\namedGame{\gamma}{g}} \seq (\gamma^\times)$.
        Thus, $\models S(b) \limply \langle \fwd{b}{\namedGame{\alpha}{a}} \rangle S(\finalNode)$.
    \end{enumerate}
\end{proofE}
\begin{corollaryE}[Inductive subvalue map correctness][all end,text proof={}]
  \label{cor:soundness}
  For \dGL game $\namedGame{\alpha}{a}$ and inductive Angelic subvalue map $S$, if $\avalid{\alpha}{a}{S}{}$, then $\models S(a) \limply \ddiamond{\alpha}{S(\finalNode)}$.
  Dually, if $\dvalid{\alpha}{a}{S}{}$, then $\models S(a) \limply [\alpha] S(\finalNode)$.
\end{corollaryE}
\begin{proofE}
  Corollary of \rref{thm:consistancy}, which shows that inductive subvalue maps are subvalue maps.
  \rref{def:subvalue-map} says that the winning subregion for every subgame in the subvalue map implies the winning region for starting at that subgame and running the rest of the game.
  For overall game $a$, this means $\models S(a) \limply \langle \alpha \rangle S(\finalNode)$.
\end{proofE}

\rref{def:mpc-sol} shows how the dominant strategy to win a \dGL game can be captured as an inductive subvalue map, indicating that \rref{def:local-envelope-conditions} is sufficiently liberal.
Angel's \emph{dominant} strategy for a game is one that wins in the \emph{largest} possible set of states.
\rref{def:mpc-sol} constructs the inductive subvalue map inducing a dominant strategy indicating subvalues via modal \dGL formulas.
The winning subregion for the overall game is maximal.

\begin{definition}[Model predictive subvalue map]
\label{def:mpc-sol}
For the game $\namedGame{\alpha}{a}$ and win condition $\phi$, the \emph{model predictive Angelic subvalue map} $\{\finalNode \mapsto \phi\} \cup \bigcup_{b\in\nodes{\namedGame{\alpha}{a}}}\{b \mapsto \ddiamond{\fwd{b}{\namedGame{\alpha}{a}}}{\phi}\}$ maps every subgame $b$ of $\namedGame{\alpha}{a}$ to the optimal winning regions of its game suffix.
Dually, the \emph{model predictive Demonic subvalue map} is \(\{\finalNode \mapsto \phi\} \cup \bigcup_{b\in\nodes{\namedGame{\alpha}{a}}}\{b \mapsto [\fwd{b}{\namedGame{\alpha}{a}}] \phi\}\).
\end{definition}
\begin{example}
  For the game of \rref{fig:overview}, the model predictive Angelic subvalue map maps subgame $5$ to $\ddiamond{\{x'=v\}^d\seq \alpha}{x>0}$, which per the semantics of \dGL, precisely means the condition under which Angel can win by reaching her assigned winning conditions after playing the remainder of the game $\{x'=v\}^d\seq \alpha$.
\end{example}
\begin{remark}
  Model Predictive Control (MPC) \cite{GARCIA1989335,richalet1978model,Cutler1979DynamicMC} is a well-known controller design principle that
   uses a mathematical model to predict controller behavior
   and back-computes the control conditions that respect the desired system constraints.
  \rref{def:mpc-sol} extends this construction to hybrid games logic.
  game suffix $\langle \fwd{b}{\namedGame{\alpha}{a}}\rangle\phi$ serves as an exact model of future game behavior after subgame $b$\footnote{Though unlike standard MPC, as we are computing maximal control sets, there is no optimization step to identify the best control solution within the set.}.
\end{remark}

\begin{remark}
A runtime monitor (constructed similar to existing work on monitoring \dL \cite{DBLP:journals/fmsd/MitschP16} and constructive \dGL \cite{cdgl}) raises an alarm or intervenes to enforce a correct fallback when gameplay deviates from the control envelope.
It is possible to model the use of a subvalue map as a \emph{runtime monitor} in a single \dGL formula.
For an Angelic subvalue map, this \emph{monitored game} is constructed by 
\begin{enumerate*}
  \item giving Demon control over Angelic decisions while restricting him to play per the subvalue map (like in universal projection) to mimic an unverified controller that plays unreliably while being forced to stay within the control envelope by a runtime monitor and
  \item ensuring before each such monitored subgame that the subvalue map allows \emph{some} control action using Angelic tests.
\end{enumerate*}
\rref{eq:overview-projection} shows the monitored game for the running example.
The \emph{monitored game} for \dGL game $\namedGame{\alpha}{a}$ and Angelic subvalue map $S$, written $\aruntimem{\namedGame{\alpha}{a}}{S}$, is generated recursively per the structure of $\namedGame{\alpha}{a}$ as follows. If $\namedGame{\alpha}{a}$ has structure:
  \begin{align*}
    \namedGame{x:=\ast}{a},
    \textrm{ then }
    \AdvExAssign{\ptest{(\exists x \, S(\finalNode))}}& 
    \seq x:= \AdvExAssign{\otimes} \seq \AdvExAssign{!S(\finalNode)}.
    \quad
    \namedGame{(\namedGame{\gamma}{g};\namedGame{\delta}{d})}{a}, 
    \textrm{ then } \aruntimem{\namedGame{\gamma}{g}}{\modEnd{S}{S(d)}} \seq \aruntimem{\namedGame{\delta}{d}}{S}.\\
    \namedGame{\{x'=f(x)\ \&\ Q\}}{a},
    \textrm{ then }&
    \AdvExAssign{\ptest{\ddiamond{\alpha}{S(\finalNode)}}} \seq \{x'=f(x)\ \&\ Q\}^{\AdvExAssign{d}} \seq \AdvExAssign{!S(\finalNode)}.\\
    \namedGame{(\namedGame{\gamma}{g} \cup \namedGame{\delta}{d})}{a}, 
    \textrm{ then }&
    \AdvExChoice{\ptest{(S(g)\lor S(d))}} \seq \left(( \AdvExChoice{!S(g)} \seq \aruntimem{\namedGame{\gamma}{g}}{S}) \AdvExChoice{\cap} (\AdvExChoice{!S(d)} \seq \aruntimem{\namedGame{\delta}{d}}{S})\right).\\
    \namedGame{(\namedGame{\gamma}{g})^\ast}{a}, 
    \textrm{ then }& \AdvExLoop{\ptest{S(a)}} \seq \left(\AdvExLoop{!S(g)} \seq \aruntimem{\namedGame{\gamma}{g}}{\modEnd{S}{S(a)}} \seq \AdvExLoop{\ptest{S(a)}} \right)^{\AdvExLoop{\times}} \seq \AdvExLoop{!S(\finalNode)}.\\
    \namedGame{(\namedGame{\gamma}{g} \cap \namedGame{\delta}{d})}{a}, 
    \textrm{ then }& (\aruntimem{\namedGame{\gamma}{g}}{S}) \cap (\aruntimem{\namedGame{\delta}{d}}{S}). 
    \quad
    \namedGame{(\namedGame{\gamma}{g})^\times}{a}, 
    \textrm{ then } (\aruntimem{\namedGame{\gamma}{g}}{\modEnd{S}{S(a)}})^\times.\\
    \textrm{atomic and not}&\textrm{ controlled by Demon, i.e., }\\
    \alpha \in \{x:=e,& x:=\otimes, \ptest{Q}, !Q, \{x'=f(x)\ \&\ Q\}^d\}, \textrm{ then } \alpha.
  \end{align*}
  Monitoring a Demonic subvalue map is symmetric (see \rref{app:subvalue-externalization-demon}).
Every inductive subvalue map \(S\) is a sound runtime monitor, i.e., \(\models S(a) \limply \ddiamond{\aruntimem{\namedGame{\alpha}{a}}{S}}{S(\finalNode)}\) (\rref{thm:runtime-monitor}).

\begin{theoremE}[Subvalue maps are sound monitors][all end, text proof={}]
  \label{thm:runtime-monitor}
  For \dGL game $\namedGame{\alpha}{a}$ and subvalue map $S$, if $\avalid{\alpha}{a}{S}{\phi}$, then $\models S(a) \limply \langle \aruntimem{\namedGame{\alpha}{a}}{S}{\phi} \rangle \phi$.
  Dually, if $\dvalid{\alpha}{a}{S}{\phi}$ then $\models S(a) \limply [\druntimem{\namedGame{\alpha}{a}}{S}{\phi}] \phi$.
\end{theoremE}
\begin{proofE}
  The proof proceeds by induction on the structure of the game $\namedGame{\alpha}{a}$.
  We show this for the $\aruntimem{\namedGame{\alpha}{a}}{S}$ case; the dual case is analogous.
  If $\namedGame{\alpha}{a}$ has structure:
  \begin{itemize}
    \item atomic and Angel does not take any decision, i.e., $\alpha \in \{x:=e, x:=\otimes, \ptest{Q}, !Q, \{x'=f(x)\ \& \ Q\}^d\}$, then $\langle \aruntimem{\namedGame{\alpha}{a}}{S} \rangle \phi$ is $\langle \alpha \rangle \phi$.
    Because $\avalid{\alpha}{a}{S}{\phi}$, we know that $\models S(a) \limply \langle \alpha \rangle \phi$.
    \item $x:=*$, then $\langle \aruntimem{\namedGame{\alpha}{a}}{S} \rangle \phi$ is $\langle ?\langle\alpha\rangle\phi \seq x:=\otimes \seq !\phi \rangle \phi$.
    By the \irref{composed}, \irref{testb} and \irref{testd} axioms, this is equivalent to $\langle \alpha \rangle \phi \land \langle x:=\otimes \rangle(\phi \limply \phi)$.
    Because $\avalid{\alpha}{a}{S}{\phi}$, we know that $\models S(a) \limply \langle \alpha \rangle \phi$.
    By \irref{assignd}, $\langle x:=\otimes \rangle \top$ is $\forall x \top$ which is $\top$.
    \item $\{x'=f(x)\ \&\ Q\}$, then $\langle \aruntimem{\namedGame{\alpha}{a}}{S} \rangle \phi$ is $\langle  ?\langle\alpha\rangle\phi \seq \{x'=f(x)\ \&\ Q\}^d \seq !\phi\rangle \phi$.
    By the \irref{assignd}, \irref{testd} and \irref{testb} axioms, this is equivalent to $\langle \alpha \rangle \phi \land \langle \{x'=f(x)\ \&\ Q\}^d \rangle (\phi \limply \phi)$.
    Because $\avalid{\alpha}{a}{S}{\phi}$, we know that $\models S(a) \limply \langle \alpha \rangle \phi$.
    By \irref{evolveb} and \irref{duald}, $\langle \{x'=f(x)\ \&\ Q\}^d \rangle \top$ is $\forall t\geq 0 [x:=y(t)] \top$ where $y'(t)=f(y)$. By \irref{G}, $[x:=y(t)] \top$ is $\top$, so $\forall t\geq 0 \top$ is $\top$.
    \item $\namedGame{(\namedGame{\gamma}{g} \cup \namedGame{\delta}{d})}{a}$ then $\langle \aruntimem{\namedGame{\alpha}{a}}{S} \rangle \phi$ is $\langle ?S(g)\lor S(d) \seq \big((\aruntimem{\namedGame{\gamma}{g}}{S}) \cap (\aruntimem{\namedGame{\delta}{d}}{S})\big) \rangle \phi$.
    By the \irref{testd}, \irref{choiced}, \irref{testb} and \irref{composed} axioms, this is equivalent to $(S(g) \lor S(d)) \land (S(g) \limply \langle \aruntimem{\namedGame{\gamma}{g}}{S}\rangle \phi) \land (S(d) \limply \langle \aruntimem{\namedGame{\delta}{d}}{S}\rangle \phi)$.
    Because $\avalid{\alpha}{a}{S}{\phi}$, we know that $\models S(a) \limply S(g) \lor S(d)$, and further, $\avalid{\gamma}{g}{S}{\phi}$ and $\avalid{\delta}{d}{S}{\phi}$.
    By the inductive hypothesis, $\models S(g) \limply \langle \aruntimem{\namedGame{\gamma}{g}}{S} \rangle \phi$ and $\models S(d) \limply \langle \aruntimem{\namedGame{\delta}{d}}{S} \rangle \phi$.
    Thus, $\models S(a) \limply \langle \aruntimem{\namedGame{\alpha}{a}}{S} \rangle \phi$.
    \item $\namedGame{(\namedGame{\gamma}{g} \cap \namedGame{\delta}{d})}{a}$ then $\langle \aruntimem{\namedGame{\alpha}{a}}{S} \rangle \phi$ is $\langle \big((\aruntimem{\namedGame{\gamma}{g}}{S}) \cap (\aruntimem{\namedGame{\delta}{d}}{S})\big) \rangle \phi$.
    We want to show $S(a) \limply \langle (\aruntimem{\namedGame{\gamma}{g}}{S}) \cap (\aruntimem{\namedGame{\delta}{d}}{S}) \rangle \phi$.
    By the inductive hypothesis, $\models S(g) \limply \langle \aruntimem{\namedGame{\gamma}{g}}{S} \rangle \phi$ and $\models S(d) \limply \langle \aruntimem{\namedGame{\delta}{d}}{S} \rangle \phi$.
    By \irref{choiced} and \irref{M}, then, $\models (S(g) \land S(d)) \limply \langle (\aruntimem{\namedGame{\gamma}{g}}{S}) \cap (\aruntimem{\namedGame{\delta}{d}}{S}) \rangle \phi$.
    By the definition of a valid solution, $S(a) \limply S(g) \land S(d)$.
    By transitivity of implication, $\models S(a) \limply \langle (\aruntimem{\namedGame{\gamma}{g}}{S}) \cap (\aruntimem{\namedGame{\delta}{d}}{S}) \rangle \top$.
    \item $\namedGame{(\namedGame{\gamma}{g}\seq \namedGame{\delta}{d})}{a}$ then $\langle \aruntimem{\namedGame{\alpha}{a}}{S} \rangle \phi$ is $\langle (\aruntimem{\namedGame{\gamma}{g}}{\modEnd{S}{S(d)}}{}) \seq (\aruntimem{\namedGame{\delta}{d}}{S}{}) \rangle \phi$.
    After applying the \irref{composed} axiom, we want to show $S(a) \limply \langle \aruntimem{\namedGame{\gamma}{g}}{\modEnd{S}{S(d)}}{} \rangle \langle \aruntimem{\namedGame{\delta}{d}}{S}{} \rangle \phi$.
    By the inductive hypothesis, $\models S(g) \limply \langle \aruntimem{\namedGame{\gamma}{g}}{\modEnd{S}{S(d)}}{} \rangle S(d)$ and $\models S(d) \limply \langle \aruntimem{\namedGame{\delta}{d}}{S}{} \rangle \phi$.
    Since $\models S(d) \limply \langle \aruntimem{\namedGame{\delta}{d}}{S}{} \rangle \phi$, by \irref{M}, it remains to show that $\models S(a) \limply \langle \aruntimem{\namedGame{\gamma}{g}}{\modEnd{S}{S(d)}}{} \rangle S(d)$.
    Since $\avalid{\alpha}{a}{S}{\phi}$, we know that $\models S(a) \limply S(g)$.
    Thus, by transitivity of implication, $\models S(a) \limply \langle \aruntimem{\namedGame{\gamma}{g}}{\modEnd{S}{S(d)}}{} \rangle S(d)$, completing the proof.
    \item $\namedGame{(\namedGame{\gamma}{g})^*}{a}$ then $\langle \aruntimem{\namedGame{\alpha}{a}}{S}{} \rangle \phi$ is
    $\langle ?(S(a)) \seq (!S(g) \seq \aruntimem{\namedGame{\gamma}{g}}{\modEnd{S}{S(a)}}{} \seq ?S(a))^\times \seq !S(\finalNode) \rangle \phi$.
    By the \irref{testd}, \irref{testb} and \irref{composed} axioms, this is equivalent to

    \((S(a)) \land \langle (!S(g) \seq \aruntimem{\namedGame{\gamma}{g}}{\modEnd{S}{S(a)}}{} \seq ?S(a))^\times \rangle (\phi \limply \phi)\).

    Thus, we want to show $\models S(a) \limply (S(a)) \land \langle (!S(g) \seq \aruntimem{\namedGame{\gamma}{g}}{\modEnd{S}{S(a)}}{} \seq ?S(a))^\times \rangle \top$.

    We want to show the that $S(a) \limply \langle (!S(g) \seq \aruntimem{\namedGame{\gamma}{g}}{\modEnd{S}{S(a)}}{} \seq$ $?S(a))^\times \rangle \top$ holds.
    We prove this using the \irref{loop} rule.
    $\top$ serves as an invariant of this loop game since it holds initially ($\models \top$), implies the postcondition ($\models \top \limply \top$), and is inductive ($\langle !S(g) \seq \aruntimem{\namedGame{\gamma}{g}}{\modEnd{S}{S(a)}} \seq ?S(a) \rangle \top$) as argued next.
    Since $\avalid{\alpha}{a}{S}{\phi}$, so $\avalid{\gamma}{g}{\modEnd{S}{S(a)}}{}$,
    By the inductive hypothesis, $\models S(g) \limply \langle \aruntimem{\namedGame{\gamma}{g}}{\modEnd{S}{S(a)}}{} \rangle (S(a))$.
    Applying the axioms \irref{testb}, \irref{composed} and \irref{testd}, this is equivalent to
    
    $\models \langle !S(g) \seq \aruntimem{\namedGame{\gamma}{g}}{\modEnd{S}{S(a)}} \seq ?S(a) \rangle \top$
    
    showing that $\top$ is an inductive invariant.
    \item $\namedGame{(\namedGame{\gamma}{g})^\times}{a}$ then $\langle \aruntimem{\namedGame{\alpha}{a}}{S}{} \rangle \phi$ is $\langle (\aruntimem{\namedGame{\gamma}{g}}{S}{})^\times \rangle \phi$.
    
    We want to show
    $\models S(a) \limply $ $\langle (\aruntimem{\namedGame{\gamma}{g}}{\modEnd{S}{S(a)}}{})^\times \rangle \phi$.

    Since $\avalid{\alpha}{a}{S}{\phi}$, it must be the case that $\models S(a) \limply (S(g) \land \phi)$ and $\avalid{\gamma}{g}{S}{S(a)}$.
    By the inductive hypothesis, $\models S(g) \limply \langle \aruntimem{\namedGame{\gamma}{g}}{\modEnd{S}{S(a)}}{} \rangle S(a)$.
    We apply the \irref{loop} rule to prove the conclusion
    $\models S(a) \limply$ $\langle (\aruntimem{\namedGame{\gamma}{g}}{S}{})^\times \rangle \phi$.
    $S(a)$ serves as an invariant of this loop.
    It holds initially.
    It is inductive per the inductive hypothesis (using weakening and $\models S(a) \limply S(g)$), and implies the postcondition ($\models S(a) \limply \phi$).
  \end{itemize}
\end{proofE}
\end{remark}

In subsequent sections, 
\rref{def:ordering} provides a partial ordering on inductive subvalue maps where more permissive maps are better, and \rref{thm:mpc-maximal} shows that the dominant \dGL strategy corresponds to the maximally permissive inductive subvalue map.
Finally, \rref{sec:solving-algorithm} uses \rref{def:local-envelope-conditions} to define a natural, symbolic execution based algorithm that synthesizes subvalue maps that generate policy decisions in polynomial time.

\section{Maximal Solution}
\label{sec:ordering-solutions}
For a given game and winning condition, there are many possible correct inductive subvalue maps.
Most useful are the \emph{more permissive} subvalue maps that allow as many control actions as possible.
More permissive subvalue maps leave more control options open, permitting more aggressive, efficient control strategies.
We formalize subvalue map permissiveness with a partial ordering where more permissive subvalue maps are greater than less permissive subvalue maps.
A natural candidate for ordering the subvalue maps of game $\namedGame{\alpha}{a}$ would be to say that $S$ is at least as permissive as $S'$ ($S\sqsupseteq S'$) when for each subgame $b$ in $\nodes{\namedGame{\alpha}{a}}$, \emph{every state in subregion $S'(b)$ is also within subregion $S(b)$}, i.e., $\models S'(b) \limply S(b)$, so that $S(b)$ locally \emph{permits} the agent to play into subgame $b$ in all the states that $S'(b)$ does.
However, this ordering fails to order cases like the following.
Let $S'$ set the winning region for the overall game to be empty ($S'(a)=\bot$), meaning that Angel is never allowed to play at all.
For \emph{any} other inductive subvalue map $S$, a good ordering should indicate that $S\sqsupseteq S'$ since $S$ cannot possibly be less permissive,
but the proposed na\"ive ordering does not.
The winning subregion for some later subgame $b \neq a$ can be such that the check $\models S'(b) \limply S(b)$ fails (see \rref{app:counterexample} for details).

An improved ordering makes the following modified check for every subgame $b$.
$S\sqsupseteq S'$ when, \emph{amongst the states that are reachable at subgame $b$ while playing per subvalue map $S'$}, every state that is in subregion $S'(b)$ should also be within subregion $S(b)$.
To express this logically, we first characterize the states that are reachable at subgame $b$ while playing per subvalue map $S'$.

The game prefix (\rref{def:execution-prefix}) of subgame $b$ in the game $\namedGame{\alpha}{a}$, written $\prefix{b}{\namedGame{\alpha}{a}}$, features \emph{all} possible behaviors of game $\alpha$ that can happen \emph{before} an occurrence of subgame $b$ within $\namedGame{\alpha}{a}$.
To characterize all the states reachable at subgame $b$ while following the policy induced by $S'$, we use the universal projection of $S'$ onto the game prefix $\asubst{(\prefix{b}{\namedGame{\alpha}{a}})}{S'}{S'(b)}$.
\rref{def:ordering} then compares the winning subregions of $S$ and $S'$ after running the projected game to filter out unreachable states.

\begin{definition}[Permissiveness ordering]
  \label{def:ordering}
  For two inductive Angelic subvalue maps $S$ and $S'$ for the game $\namedGame{\alpha}{a}$, $S$ is at least as good as $S'$, written $S \sqsupseteq S'$, iff for each subgame $b$ of $\namedGame{\alpha}{a}$,
  $\models \langle\asubst{\prefix{b}{\namedGame{\alpha}{a}}}{S'}{S'(b)} \rangle(S'(b) \limply S(b))$ and \(\models \ddiamond{\asubst{\namedGame{\alpha}{a}}{S'}{S'(b)}}{(S'(\finalNode) \limply S(\finalNode))}\).
  Dually, for inductive Demonic subvalue maps, $S \sqsupseteq S'$ iff for subgame $b$ of $\alpha$, $\models [\dsubst{(\prefix{b}{\namedGame{\alpha}{a}})}{S'}{S'(b)}](S'(b) \limply S(b))$ and \(\models [\dsubst{\namedGame{\alpha}{a}}{S'}{S'(b)}](S'(\finalNode) \limply S(\finalNode))\).
\end{definition}

\begin{remark}
For every game and winning condition combination, the model predictive subvalue map (\rref{def:mpc-sol}) is a maximally permissive solution (\rref{thm:mpc-maximal}).
As \rref{def:mpc-sol} always constructs an optimal subvalue map, it optimally solves the \dGL control envelope synthesis problem, achieving completeness for symbolic hybrid games synthesis.
\end{remark}

\begin{theoremE}[Maximal inductive subvalue map][restate, text proof={}]
  \label{thm:mpc-maximal}
  Amongst the inductive Angelic subvalue maps for game $\namedGame{\alpha}{a}$ compatible with winning condition $\phi$, the model predictive Angelic map given by \rref{def:mpc-sol} is maximal under the ordering of \rref{def:ordering}.
  That is, for all $S$ such that $\avalid{\alpha}{a}{S}{\phi}$, model predictive Angelic subvalue map $S'$ satisfies $S' \sqsupseteq S$.
  Dually, the model predictive Demonic subvalue map (\rref{def:mpc-sol}) is maximal amongst inductive Demonic subvalue maps per the ordering of \rref{def:ordering}.
\end{theoremE}
\begin{proofE}
  \rref{lem:mpc-valid} shows that the MPC solution is valid.
  Next we show that it is maximal.
  We first show that $S' \succsim S$ for all $S$ such that $\avalid{\alpha}{a}{S}{\phi}$ under the weaker ordering of \rref{def:weak-ordering}.
  A symmetric proof applies when $\avalid{\alpha}{a}{S}{\phi}$ is replaced by $\dvalid{\alpha}{a}{S}{\phi}$.
  The proof uses structural induction.

  For \(\finalNode\), if $\avalid{\alpha}{a}{S}{\phi}$ then $\models S(\finalNode) \limply \phi$.
  But for the MPC solution, $S'(\finalNode)$ is $\phi$,
  so $\models S(a) \limply S'(a)$.

  When $\namedGame{\alpha}{a}$ is atomic, i.e. $\alpha \in \{x:=e, x:=*, \ptest{f}, !f, \{x'=f(x)\ \&\ Q\}, \{x'=f(x)\ \& \ Q\}^d\}$, the result is immediate: according to \rref{def:local-envelope-conditions}, if $\avalid{\alpha}{a}{S}{\phi}$ then $\models S(a) \limply \langle \alpha \rangle \phi$.
  But per the definition of the MPC solution, $S'(a)$ is $\langle \alpha \rangle \phi$.
  Thus, $\models S(a) \limply S'(a)$.

  For the recursive cases, if $\namedGame{\alpha}{a}$ has the structure:
  \begin{enumerate}
    \item $\namedGame{\gamma}{g}\cup\namedGame{\delta}{d}$, then observe that restrictions $\restrict{S'}{\nodes{\gamma}}$ and $\restrict{S'}{\nodes{\delta}}$ of $S'$ are the MPC solutions for Angel win condition $\phi$ for games $\namedGame{\gamma}{g}$ and $\namedGame{\delta}{d}$ respectively.
    By the inductive hypothesis, for all subgames $b \in \nodes{\namedGame{\gamma}{g}}$, $S(b) \limply S'(b)$.
    Similarly, for all subgames $b \in \nodes{\namedGame{\delta}{d}}$, $S(b) \limply S'(b)$.
    The only remaining subgame in $\nodes{\alpha}$ is $a$.
    Per the inductive hypothesis, $\models S(g)\limply S'(g)$ and $\models S(d) \limply S'(g)$.
    Taking the disjunction of these implications, we get $\models S(g)\lor S(d) \limply S'(g)\lor S'(d)$.
    Because $S$ is valid, $\models S(a) \limply S(g) \lor S(d)$.
    By transitivity of implication, $\models S(a) \limply S'(g)\lor S'(d)$.
    By definition, $S'(a)$ is $S'(g)\lor S'(d)$.
    Thus we can conclude that $\models S(a) \limply S'(a)$.
    \item $\namedGame{\gamma}{g}\cap\namedGame{\delta}{d}$, then observe that restrictions $\restrict{S'}{\nodes{\gamma}}$ and $\restrict{S'}{\nodes{\delta}}$ of $S'$ are the MPC solutions for Angel win condition $\phi$ for games $\namedGame{\gamma}{g}$ and $\namedGame{\delta}{d}$ respectively.
    By the inductive hypothesis, for all nodes $b \in \nodes{\namedGame{\gamma}{g}}$, $S(b) \limply S'(b)$.
    Similarly, for all subgames $b \in \nodes{\namedGame{\delta}{d}}$, $S(b) \limply S'(b)$.
    The only remaining subgame in $\nodes{\alpha}$ is $a$.
    Per the inductive hypothesis, $\models S(g)\limply S'(g)$ and $\models S(d) \limply S'(g)$.
    Taking the conjunction of these implications, we get $\models S(g)\land S(d) \limply S'(g)\land S'(d)$.
    Because $S$ is valid, $\models S(a) \limply S(g) \land S(d)$.
    By transitivity of implication, $\models S(a) \limply S'(g)\land S'(d)$.
    By definition, $S'(a)$ is $S'(g)\land S'(d)$.
    Thus we can conclude that $\models S(a) \limply S'(a)$.
    \item $\namedGame{\gamma}{g};\namedGame{\delta}{d}$, then observe that $\restrict{S'}{\nodes{\delta}}$ is the MPC solutions for Angel win condition $\phi$ for game $\namedGame{\delta}{d}$.
    By the inductive hypothesis, for all subgames $b \in \nodes{\namedGame{\delta}{d}}$, $\models S(b) \limply S'(b)$.
    Now, let $S''$ be the MPC solution for Angel win condition $S(d)$ for game $\namedGame{\gamma}{g}$.
    By the inductive hypothesis, for all subgames $b \in \nodes{\namedGame{\gamma}{g}}$, $\models S(b) \limply S''(b)$.
    By definition, for any subgame $b \in \nodes{\namedGame{\gamma}{g}}$, $S''(b)$ is $\langle \fwd{b}{\namedGame{\gamma}{g}} \rangle S(d)$ while $S'(b)$ is $\langle \fwd{b}{ \namedGame{\gamma}{g}} \rangle \phi$.
    As argued already, by the inductive hypothesis applied to game $\delta$, $\models S(d) \limply S'(d)$.
    By the \dGL monotonicity rule \irref{M}, $\models \langle \fwd{b}{\namedGame{\gamma}{g}}\rangle S(d) \limply \langle \fwd{b}{\namedGame{\gamma}{g}} \rangle S'(d)$.
    Thus, $\models S''(b) \limply S'(b)$.
    By the transitivity of implication, $\models S(b) \limply S'(b)$.
    We have shown the correct implication for all the nodes in $\nodes{\namedGame{\gamma}{g}}$ and $\nodes{\namedGame{\delta}{d}}$.
    The only subgame remaining is $a$.
    Because $\avalid{\alpha}{a}{S}{}$, $\models S(a) \limply S(g)$.
    We have shown that $\models S(g) \limply S'(g)$.
    By the definition of the MPC solution, $S'(a)$ is $S'(g)$.
    Thus, $\models S(a) \limply S'(a)$.
    \item $\namedGame{\gamma}{g}^*$, then because $\avalid{(\gamma^*)}{a}{S}{}$, $\models S(a) \limply \langle \aproj{\namedGame{\alpha}{a}}{S}{} \rangle \phi$.
    Further, by \rref{lem:subvalue-projection-refinement}, $\models S(a) \limply \langle \gamma^* \rangle \phi$.
    But by the definition of the MPC solution, $S'(a)$ is $\langle \gamma^* \rangle \phi$.
    By transitivity of implication, $\models S(a) \limply S'(a)$.
    The subgames that remain besides $a$ all belong to $\gamma$.
    Consider $S''$, the MPC solution for Angel win condition $S(a)$ for game $\namedGame{\gamma}{g}$.
    By the inductive hypothesis, for every subgame $b$ in $\gamma$, $\models S(b) \limply S''(b)$.
    Now, we argue that $\models S''(b) \limply S'(b)$, so that by transitivity of implication, $\models S(b) \limply S'(b)$ completing the proof.
    Per the definition of the MPC solution, $S''(b)$ is $\langle \fwd{b}{\namedGame{\gamma}{g}}\rangle S(a)$ while $S'(b)$ is $\langle \fwd{b}{\namedGame{\gamma}{g}}\rangle \phi$.
    By the monotonicity rule \irref{M}, since $\models S(a) \limply \phi$, we have $\models \langle \fwd{b}{\namedGame{\gamma}{g}}\rangle S(a) \limply \langle \fwd{b}{\namedGame{\gamma}{g}}\rangle \phi$.
    Thus, $\models S''(b) \limply S'(b)$.
    \item $\namedGame{\gamma}{g}^\times$, then because $\avalid{(\gamma^\times)}{a}{S}{}$, per \rref{thm:consistancy}, $\models S(a) \limply \langle \gamma^\times\rangle \phi$.
    By the definition of the MPC solution, $S'(a)$ is $\langle \gamma^\times \rangle \phi$.
    So, $\models S(a) \limply S'(a)$.
    The nodes that remain besides $a$ all belong to $\nodes{\namedGame{\gamma}{g}}$.
    Consider $S''$, the MPC solution for Angel win condition $S(a)$ for game $\namedGame{\gamma}{g}$.
    By the inductive hypothesis, for every subgame $b$ in $\gamma$, $\models S(b) \limply S''(b)$.
    Now, we argue that $\models S''(b) \limply S'(b)$, so that by transitivity of implication, $\models S(b) \limply S'(b)$, completing the proof.
    Per the definition of the MPC solution, $S''(b)$ is $\langle \fwd{b}{\namedGame{\gamma}{g}}\rangle S(a)$ while $S'(b)$ is $\langle \fwd{b}{\namedGame{\gamma}{g}}\rangle S'(a)$.
    Since $\models S(a) \limply S'(a)$, by the \dGL monotonicity rule \irref{M}, $\models \langle \fwd{b}{\namedGame{\gamma}{g}}\rangle S(a) \limply \langle \fwd{b}{\namedGame{\gamma}{g}}\rangle S'(a)$.
    Thus, $\models S''(b) \limply S'(b)$.
  \end{enumerate}
  We have shown that $S'\succsim S$ for all $S$ such that $\avalid{\alpha}{a}{S}{}$ under the weaker ordering of \rref{def:weak-ordering}.
  From \rref{lem:subsumption}, it also follows that $S'\sqsupseteq S$ under the stronger ordering of \rref{def:ordering}.
  Thus, $S'$ is maximal.
\end{proofE}

\section{Synthesis Framework for Inductive Subvalue Map with Efficient Checks}
\label{sec:solving-algorithm}

To check whether a given control decision lies within the policy of a subvalue map, \rref{def:policy} checks whether the resulting state is within a subvalue (e.g., for \(\namedGame{(\namedGame{\delta}{d} \cup \namedGame{\gamma}{g})}{b}\), the policy checks \(\sigma \models S(d)\) and \(\sigma \models S(g)\)).
This section synthesizes inductive subvalue maps where state checks are efficient, for quick responses in real systems.
In the maximal solution from the previous section (\rref{def:mpc-sol}), such checking is not efficient because of subvalues that are \dGL formulas, for which checking is undecidable \cite{DBLP:journals/tocl/Platzer15}.
In this section we restrict subvalues to formulas consisting of equations/inequalities of real polynomials along with propositional logic connectives $(\land, \lor, \neg, etc.)$ but no quantifiers or modalities, henceforth, \emph{propositional real arithmetic} formulas (\(\props\)).
For these, checking the truth value at a state takes polynomial time.
\emph{Given a \dGL game and winning condition in \(\props\), this section synthesizes inductive subvalue maps where subvalues are in \(\props\)}.

Synthesizing such a map is challenging because the subvalues must remain \emph{mutually compatible}, indicating only the winning regions achievable while following the other subvalues.
That is, they must follow the inductive subvalue map conditions \rref{def:local-envelope-conditions}, where there are circular dependencies between subvalues.
Consider, for example, the case of Demonic loop \(\namedGame{(\namedGame{\gamma}{g})^\times}{a}\).
Subvalue \(S(a)\) must imply \(S(g)\), but \(S(g)\) depends on \(S(a)\) via condition \(\avalid{\gamma}{g}{\modEnd{S}{S(a)}}{}\).
In particular, starting from the optimal subvalue map construction (\rref{def:mpc-sol}) and changing individual \dGL formulas to under-approximations in \(\props\) would not alone work because cyclically interdependent soundness conditions would need to be reestablished.
\begin{enumerate*}
  \item We address the challenge of interdependent conditions by isolating their effect with a hybrid games analog of loop invariants.
  \item To soundly lower formulas to \(\props\) while maintaining permissiveness, we use a \emph{predicate transformer} based approach that given a subgame and winning condition in \(\props\) computes subvalues in \(\props\).
\end{enumerate*}
This lets us present a backwards symbolic execution \cite{10.1145/360933.360975,AUMANN19956} based \emph{algorithmic framework} to compute inductive subvalue maps with subvalues in \(\props\).

\begin{remark}
  Weakest precondition calculi \cite{10.1145/360933.360975} generate weak \emph{preconditions}.
  To reason about loops, they generally \emph{check} loop invariants, and rely on external loop invariant generation.
  Analogously, our framework generates permissive \emph{subvalues}.
  To reason about loops, it \emph{provides checks for invariants of Angel and Demon loop subvalue maps}.
  It is \emph{parametric} in the generation of such invariants.
\end{remark}

We define predicate transformers over subgames such that \emph{given a postcondition in }\(\props\), they \emph{output a subvalue in }\(\props\) to recursively produce an entire subvalue map with subvalues in \(\props\).
The transformers for the loop-free fragment of \dGL are derived from existing work \cite{DBLP:conf/tacas/KabraLMP24}, which we discuss later in this section.
For loops, our framework uses the approach of heuristically generating \emph{invariant candidates} and checking their correctness, which has worked well for hybrid systems \cite{DBLP:journals/fmsd/SogokonMTCP22,DBLP:journals/fmsd/PlatzerC09}.
We characterize the necessary and sufficient conditions for a sound invariant (for Angel loops, \(\models \invexpr\limply \langle \aproj{\namedGame{\alpha}{a}}{S}{\phi} \rangle \phi\), for Demon loops, \(\models \invexpr \limply S(g) \land \phi\)).
The framework generates invariant candidate $\invexpr$ (details in (\rref{sec:oracle-implementation})), computes the subvalue map of the loop body, and then checks whether the invariant and loop body map are compatible\footnote{%
  Checking Angel loop invariants ($\models \invexpr \limply \langle \aproj{\namedGame{\alpha}{a}}{S}{\phi} \rangle \phi$) can be difficult to automate using traditional variant generation techniques.
  We use refinement based proofs in practice (\rref{app:refinements}).
}.
Proposed invariants must be in $\props$.

\rref{alg:solving} puts together the predicate transformers to define a recursive solving function $\asolve(\namedGame{\alpha}{a}, \phi)$ that computes an inductive Angelic subvalue map with winning regions in $\props$ for game $\namedGame{\alpha}{a}$ that is compatible with winning condition $\phi \in \props$.
Operator $\uplus$ computes the disjoint union of subvalue maps.
\(S \setminus \finalNode\) indicates the subvalue map $S$ with the mapping for \(\finalNode\) removed.
Such deletion is sometimes necessary to ensure that two maps to which we apply the disjoint union operator are genuinely disjoint.
The algorithm is parametrized on \emph{invariant candidate} generation function $\invgen$.
$\dsolve$ similarly computes an inductive Demonic subvalue map (\rref{app:solving-demon}).

\begin{algorithm}[h]
  \caption{Inductive subvalue map synthesis framework}\label{alg:solving}
  \begin{algorithmic}[1]
    \Function{$\asolve$}{$\namedGame{\alpha}{a}, \phi$} \Comment{Parametric Input: \invgen}
    \State \(S_e \gets \{ \finalNode \mapsto \phi \}\);
    \If{$\alpha \in \{x:=e, x:=*, x:=\otimes, \ptest{Q}, !Q, \{x'=f(x) \& Q\}, \{x'=f(x) \& Q\}^d\}$}
    \State \Return $\{ a \mapsto \exec(\langle \alpha \rangle \phi) \} \uplus S_e $
    \ElsIf{$\alpha = \namedGame{(\namedGame{\gamma}{g} \cup \namedGame{\delta}{d})}{a}$}
    \State $S_1 \gets \asolve(\namedGame{\gamma}{g}, \phi)$;
      $S_2 \gets \asolve(\namedGame{\delta}{d}, \phi)$;
      \Return $S_1 \uplus (S_2 \setminus \finalNode) \uplus \{ a \mapsto S_1(g)\lor S_2(d)\}$
    \ElsIf{$\alpha = \namedGame{(\namedGame{\gamma}{g} \cap \namedGame{\delta}{d})}{a}$}
    \State $S_1 \gets \asolve(\namedGame{\gamma}{g}, \phi)$;
      $S_2 \gets \asolve(\namedGame{\delta}{d}, \phi)$;
      \Return $S_1 \uplus (S_2 \setminus \finalNode) \uplus \{ a \mapsto S_1(g)\land S_2(d)\}$
    \ElsIf{$\alpha = \namedGame{(\namedGame{\gamma}{g};\namedGame{\delta}{d})}{a}$}
    \State $S_1 \gets \asolve(\namedGame{\delta}{d}, \phi)$;
      $S_2 \gets \asolve(\namedGame{\gamma}{g}, S_1(d))$;
      \Return $S_1 \uplus (S_2 \setminus \finalNode) \uplus \{a \mapsto S_2(g)\}$
    \ElsIf{$\alpha = \namedGame{(\namedGame{\gamma}{g})^*}{a}$}
      \While {true}
      \State \invexpr $\gets \invgen(\langle(\namedGame{\gamma}{g})^*\rangle\phi)$;
        $S \gets \asolve(\namedGame{\gamma}{g}, \invexpr \lor \phi)$
      \If{$\models \invexpr\limply \langle \aproj{\namedGame{\alpha}{a}}{S}{\phi} \rangle \phi$}\label{line:alg-angel-loop}
      \Return $(S \setminus \finalNode) \uplus \{a \mapsto \invexpr \} \uplus S_e $
      \EndIf
      \EndWhile
    \Else{ $\alpha = \namedGame{(\namedGame{\gamma}{g})^\times}{a}$}
      \While {true}
      \State \invexpr $\gets \invgen(\langle(\namedGame{\gamma}{g})^\times\rangle\phi)$;
        $S \gets \asolve(\namedGame{\gamma}{g}, \invexpr)$
      \If{$\models \invexpr \limply S(g) \land \phi$}\label{line:alg-demon-loop}
      \Return $(S \setminus \finalNode) \uplus \{a \mapsto \invexpr \} \uplus S_e $
      \EndIf
      \EndWhile
    \EndIf
    \EndFunction
  \end{algorithmic}
\end{algorithm}

$\asolve$ recursively computes the subvalue for each subgame.
The terminal winning subregion \(S(\finalNode)\) is always set to the target winning condition $\phi$.
In the atomic cases, there is only one subgame, and thus only one additional subvalue to compute.
Here, the predicate transformer sets the subvalue per \dGL winning region semantics (e.g., for $\namedGame{x:=e}{a}$, the winning subregion is $\langle x:=e \rangle \phi$).
But $\langle \alpha \rangle \phi$ is a modal \dGL formula which needs to be further simplified to a \(\props\) formula (in this case, to $\phi(x \mapsto e)$, i.e., the formula $\phi \in \props$ with every occurrence of variable $x$ substituted by $e$).
This simplification is written as $\exec(\langle \alpha \rangle \phi)$.
Existing work \cite{DBLP:conf/tacas/KabraLMP24} shows how to implement $\exec$\footnote{%
 The function $\reduce(\ddiamond{\alpha}{\phi}, A)$ \cite{DBLP:conf/tacas/KabraLMP24} produces a $\props$ formula that is equivalent to $\ddiamond{\alpha}{\phi}$ under assumption formula $A$. $\exec(\ddiamond{\alpha}{\phi})$ is $\reduce(\ddiamond{\alpha}{\phi}, \top)$.}.
While $\exec(\langle \alpha \rangle \phi)$ usually computes an exact expression in $\props$, this is sometimes not possible for subgames with complicated ODEs.
In such situations, a conservative, sound approximation is made such that $\models\exec(\langle \alpha \rangle \phi) \limply \langle \alpha \rangle \phi$, i.e. the simplified expression is true only in the states where the original expression was true.
\rref{thm:solve-valid} shows that the solving framework is robust to such conservative approximations, producing inductive subvalue maps despite them.
In the compositional cases, e.g., $\namedGame{(\namedGame{\gamma}{g}\cup\namedGame{\delta}{d})}{a}$, $\asolve$ first computes the subvalues of the composing subgames (in this case $S_1$ for $\namedGame{\gamma}{g}$ and $S_2$ for $\namedGame{\delta}{d}$), and then the subvalue for the outer subgame (in the case, $S(g) \lor S(d)$), following the rules of \(\exec\) \cite{DBLP:conf/tacas/KabraLMP24}.
For the loop cases, $\asolve$ generates invariants until it finds one that passes the check.

\rref{fig:overview} shows an example of computing a subvalue map using $\asolve$.
As before, label $i$ denotes the subgame with winning subregion $\phi_i$.
At the top level, subgame 1 is a Demon loop.
For this structure, after setting \(\preci{\finalNode}\), in second step $\AlgGuess \preci{1}$, $\asolve$ guesses invariant $\invexpr$ using a \emph{refinement} heuristic (discussed in \rref{sec:oracle-implementation}), computing answer $x>0 \land (v\geq 0 \lor a>0)$.
This invariant is used to recursively compute an inductive Angelic subvalue map for the loop body consisting of subgame 9.
A final $\AlgCheck$ step ensures that $\models \psi \limply \phi_9 \land x>0$, i.e., the invariant is compatible with the inner subvalue map so Angel won't get stuck regardless of whether Demon runs the loop or exits.

\begin{theoremE}[Sound solving][restate, text proof={}]
\label{thm:solve-valid}
$S := \asolve(\namedGame{\alpha}{a}, \phi)$ is an inductive Angelic subvalue map for game $\namedGame{\alpha}{a}$ ($\avalid{\alpha}{a}{S}{\phi}$) with all subvalues in \(\props\), compatible with Angel winning condition $\phi \in \props$.
Dually, $S := \dsolve(\namedGame{\alpha}{a}, \phi)$ is an inductive Demonic subvalue map with all subvalues in \(\props\), compatible with Demon winning condition \(\phi\).
\end{theoremE}
\begin{proofE}
  \label{proof:solve-valid}
  The proof is by structural induction on $\namedGame{\alpha}{a}$.
  We show the proof for $\asolve$, the proof for $\dsolve$ is symmetric.
  \(S(\finalNode)\) is compatible with \(\phi\) because $S(\finalNode)$ is set to $\phi$.
  In the case where $\namedGame{\alpha}{a}$ is atomic, i.e., when $\alpha\in \{x:=e, x:=*, \ptest{f}, !f, \{x'=f(x) \& Q\}, \{x'=f(x) \& Q\}^d\}$, the result is immediate.
  According to \rref{def:local-envelope-conditions}, $\avalid{\alpha}{a}{S}{\phi}$ exactly when $\models S(a) \limply \langle \alpha \rangle \phi$.
  But per the definition of $\asolve$, $S(a)$ is $\exec(\langle \alpha \rangle \phi)$ which by definition implies $\langle \alpha \rangle \phi$.
  
  In the compositional cases, we use structural induction.
  If $\namedGame{\alpha}{a}$ has the structure:
  \begin{itemize}
    \item $\namedGame{(\namedGame{\gamma}{g}\cup\namedGame{\delta}{d})}{a}$, then let $S_1:=\asolve(\namedGame{\gamma}{g}, \phi)$ and $S_2:=\asolve(\namedGame{\delta}{d}, \phi)$.
    $S$ is the disjoint union of $S_1$ and $S_2$ extended by mapping $a$ to $S_1(g)\lor S_2(d)$.
    We must show that $\models S(a) \limply S(g) \lor S(d) \textrm{ and }
    \avalid{\gamma}{g}{S}{\phi} \textrm{ and } \avalid{\delta}{d}{S}{\phi}$.
    The first implication is immediate as $S(g)=S_1(g)$ and $S(d)=S_2(d)$ by the construction of $S$.
    The second and third formulas hold by the inductive hypothesis.
    \item $\namedGame{(\namedGame{\gamma}{g} \cap \namedGame{\delta}{d})}{a}$, then let $S_1:=\asolve(\namedGame{\gamma}{g}, \phi)$ and $S_2:=\asolve(\namedGame{\delta}{d}, \phi)$.
    $S$ is the disjoint union of $S_1$ and $S_2$ extended by mapping $a$ to $S_1(g)\land S_2(d)$.
    We must show that $\models S(a) \limply S(g) \land S(d) \textrm{ and } \avalid{\gamma}{g}{S}{\phi} \textrm{ and } \avalid{\delta}{d}{S}{\phi}$.
    The first implication is immediate as $S(g)=S_1(g)$ and $S(d)=S_2(d)$ by the construction of $S$.
    The second and third formulas hold by the inductive hypothesis.
    \item $\namedGame{(\namedGame{\gamma}{g} \seq \namedGame{\delta}{d})}{a}$, then let $S_1:=\asolve(\namedGame{\delta}{d}, \phi)$ and $S_2:=\asolve(\namedGame{\gamma}{g}, S_1(d))$.
    $S$ is the disjoint union of $S_1$ and $S_2$ extended by mapping $a$ to $S_2(g)$.
    We must show that $\models S(a) \limply S(g) \textrm{ and } \avalid{\gamma}{g}{S}{\phi} \textrm{ and } \avalid{\delta}{d}{S}{\phi}$.
    The first implication is immediate as $S(g)=S_2(g)$ by the construction of $S$.
    The second and third formulas hold by the inductive hypothesis.
    \item $\namedGame{(\namedGame{\gamma}{g})^*}{a}$, 
    the let $S':=\asolve(\namedGame{\gamma}{g}, \invexpr \lor \phi)$.
    $S$ is constructed by extending $S'$ by mapping $a$ to $\invexpr$.
    Additionally, per the side condition $\models \invexpr \limply \langle \aproj{\namedGame{\alpha}{a}}{{S'}}{\phi} \rangle \phi$.
    We must show $S(a) \limply \langle \aproj{\namedGame{\alpha}{a}}{S}{\phi} \rangle \phi$ and $\avalid{\gamma}{g}{S}{S(a)\lor\phi}$.
    The first implication follows from the side condition along with the fact that $S(g)=S'(g)$ by the construction of $S$.
    The second formula holds by the inductive hypothesis as it applies to $S'$.
    \item $\namedGame{(\namedGame{\gamma}{g})^\times}{a}$,
    then let $S':=\asolve(\namedGame{\gamma}{g}, \invexpr)$.
    $S$ is constructed by extending $S'$ by mapping $a$ to $\invexpr$.
    Additionally, per the side condition $\models \invexpr \limply S(g) \land \phi$.
    We must show that $ S(a) \limply \phi \land S(g)$, and $\avalid{\gamma}{g}{S}{S(a)}$.
    The first implication follows from the side condition along with the fact that $S(g)=S'(g)$ by the construction of $S$.
    The second formula holds by the inductive hypothesis as it applies to $S'$.
  \end{itemize}
\end{proofE}

\subsection{Invariant Generation}
\label{sec:oracle-implementation}

The subvalue map synthesis framework of \rref{alg:solving} provides the conditions to check if a candidate formula is a sound invariant.
Any combination of invariant generation heuristics \cite{DBLP:journals/fmsd/PlatzerC09,DBLP:journals/fmsd/SogokonMTCP22,DBLP:conf/tacas/KabraLMP24} can be used to generate candidate invariants.
To solve complicated games where no techniques apply, we propose a \emph{game rewriting} framework.
This framework uses a library of case-based syntactic transformations to rewrite the loop game $\alpha^*$ or $\alpha^\times$ as a new simpler game $\beta$ that emulates aspects of its behavior.
Then, if $\beta$ is loop-free, $\exec(\langle \beta \rangle \phi)$ becomes an invariant guess, otherwise other invariant generation heuristics are tried on $\beta$.
Checks in \rref{alg:solving}, \rref{line:alg-angel-loop} and \rref{line:alg-demon-loop} recover soundness regardless.
For example, the step $\AlgGuess \preci{1}$ in \rref{fig:overview} shows a rewrite when computing an invariant of the outermost loop ($\preci{1}$).
The rewritten game (on the second line) emulates inner loop $(v:=v+a)^*$ by letting Angel decide in advance how many loop iterations to run and assign her decision freely to new variable $n$.
Then, setting $v := v + an$ conservatively estimates the effect of running the loop $\left\lceil n \right\rceil$ times (when $n$ is fractional, $v$ ends up smaller, and thus conservatively less favorable to Angel than after $\left\lceil n \right\rceil$ loop iterations).
The rewrite also eliminates the outer loop using a generalized form of one-shot refinement \cite{DBLP:conf/tacas/KabraLMP24} where running the ODE emulates the worst case behavior of multiple loop iterations.
In general, different rewrite rules are tried heuristically based on the shape of a problem until an answer is found.
New rewrite rules can be added to the library as needed.
\rref{app:rewrite-heuristics} shows the rewrite rules that solve the evaluation problems (\rref{sec:evaluation}).

\subsection{Recovering Solution Optimality}
\label{sec:solution-optimality}
\rref{alg:solving} independently constructs strategies for Angel and Demon.
For the same game $\namedGame{\alpha}{a}$, let $S_{\langle\!\rangle}$ be an Angelic subvalue map for winning condition $\phi$ and $S_{[]}$ be a Demonic subvalue map for the opposite objective $\neg\phi$.
Cross-checking these subvalue maps can recover whether they are optimal, capturing all possible ways for their players to win.
If for each subgame $b$, $S_{[]}(b) \lor S_{\langle\!\rangle}(b)$ is valid, then intuitively, $S_{\langle\!\rangle}(b)$ and $S_{[]}(b)$ together cover every possible state, identifying it either as one where Angel has a winning strategy or where Demon does.
Further weakening $S_{\langle\!\rangle}(b)$ would violate consistency (i.e., it is never possible for both Angel and Demon to win)\footnote{This is analogous to how in $\alpha$-$\beta$ pruning \cite{KNUTH1975293}, if MIN's action results in a position where MAX wins, then that action cannot be part of MIN's strategy.}.
So $S_{\langle\!\rangle}(b)$ must be a maximally permissive, optimal Angel winning region, and $S_{\langle\!\rangle}$, an optimal Angelic subvalue map.

\section{Evaluation}
\label{sec:evaluation}

To test the generality and flexibility of our approach, we demonstrate it on examples from classic control theory problem classes (event-triggered control, reach-avoid problems, nonlinear control, infinite horizon switching).
The existing tool addressing a problem space closest to ours is CESAR \cite{DBLP:conf/tacas/KabraLMP24}, which synthesizes symbolic control envelopes as well, but for hybrid \emph{systems}.
NYCS \cite{10.1145/3047500} is a tool that can synthesize control envelopes for hybrid games (expressed via hybrid automata), but does not solve parametrically for symbolic constraints%
\footnote{Our problems have symbolic parameters that can represent any real number, e.g., in event-triggered ETCS (\rref{model:event-triggered-etcs}), A, B and T. We generate an envelope that is parametric in these symbols, equivalent to making uncountably infinitely many calls to NYCS corresponding to every possible parameter value assignment. Such parametricity is crucial in some applications (e.g., when parameters are unknown statically and only identified at runtime \cite{10.1007/978-3-030-17462-0_28}).}.
We compare performance.

CESAR and NYCS cannot solve the new examples, each of which displays different problem features, demonstrating the greater generality of our approach, which solves them within a few minutes (\rref{fig:new-examples}).
We also evaluate our approach on CESAR's control envelope synthesis benchmark suite \cite{DBLP:conf/tacas/KabraLMP24}.
Our implementation's performance is similar to CESAR despite the significantly greater generality (\rref{fig:cesar-benchmarks}).
To compare to NYCS, we procedurally generate 25 instantiations of a reach-avoid problem. 
The implementation finds control envelopes for 17 of them, timing out for the rest, while NYCS solves all of them, suggesting potential for future optimizations for this problem class.

The experiments were run on a 32GB RAM M2 MacBook Pro.
The implementation uses Pegasus \cite{DBLP:journals/fmsd/SogokonMTCP22}, CESAR \cite{DBLP:conf/tacas/KabraLMP24}, and additional rewriting heuristics for invariant generation.
It uses Mathematica for simplification and quantifier elimination.
Our examples favor interesting control challenges over dynamic complexity to avoid the computer algebra bottleneck, which is an avenue for future research.
\dGL formulas for all the new examples are provided in \rref{app:benchmarks-listings}.

\begin{figure}
  \centering
  \begin{subfigure}[t]{0.9\textwidth}
    \rowcolors{3}{gray!15}{white}
    \begin{tabularx}{\textwidth}{>{\raggedright\arraybackslash}X|>{\raggedleft}p{1cm}|>{\raggedleft}p{1.4cm}|>{\raggedleft}p{1.4cm}|>{\raggedright\arraybackslash}X}
    \toprule
      \textbf{Benchmark} & \textbf{Ours} & \textbf{CESAR} & \textbf{NYCS} & \textbf{Problem Type} \\
    \midrule
      Event ETCS &  23s   & $\infty$ & $\infty$   & Event-triggered control \\
      Infinite Track        &  233s  & $\infty$ & $\infty$   & Infinite horizon switching \\
      Surgical Robot~\cite{DBLP:conf/hybrid/KouskoulasRPK13}      &  168s  & $\infty$ & $\infty$   & Nonlinear control  \\
      Highway Driving~\cite{DBLP:conf/fm/LoosPN11}     &  91s   & $\infty$ & $\infty$   & Time-triggered control with adversarial agent \\
      Reach-avoid Robot    &  21s   & $\infty$ & $\infty$   & Reach-avoid problem \\
      \bottomrule
    \end{tabularx}
    \caption{Summary of new examples. \rref{app:benchmarks} discusses them.}
    \label{fig:new-examples}
  \end{subfigure}\\
  \begin{subfigure}{0.55\textwidth}
    \centering
    \begin{tabularx}{\textwidth}{X|>{\raggedleft\arraybackslash}p{0.85cm}|>{\raggedleft\arraybackslash}p{1.4cm} |>{\raggedleft\arraybackslash}p{1.4cm}}
        \toprule
        \textbf{Benchmark} & \textbf{Ours} & \textbf{CESAR} & \textbf{NYCS}\\
    \midrule
        ETCS Train    &   12s   & 15s  & \(\infty\) \\
        Sled          &   64s   & 64s  & \(\infty\) \\
        Intersection  &   104s  & 104s & \(\infty\) \\
        Curvebot      &   111s  & 112s & \(\infty\) \\
        Parachute     &   113s  & 123s & \(\infty\) \\
        Corridor      &   40s   & 40s  & \(\infty\) \\
        Power Station &   56s   & 56s  & \(\infty\) \\
        Coolant       &   311s  & 312s & \(\infty\) \\
    \bottomrule
  \end{tabularx}
    \caption{Running time comparison on the CESAR benchmark suite \cite{DBLP:conf/tacas/KabraLMP24}}
    \label{fig:cesar-benchmarks}
  \end{subfigure} \hfill
  \begin{subfigure}{0.4\textwidth}
    \vtop{
    \begin{tabularx}{\textwidth}{>{\raggedright\arraybackslash}X|>{\raggedright\arraybackslash}X|>{\raggedright\arraybackslash}X}
      \toprule
      \textbf{Tool} & \textbf{Success} & \textbf{Failure}\\
      \midrule
      Ours & 17 & 8 \\
      NYCS & 25 & 0 \\
      CESAR & 0 & 25 \\
      \bottomrule
    \end{tabularx}
    \\
    \includegraphics[width=\linewidth]{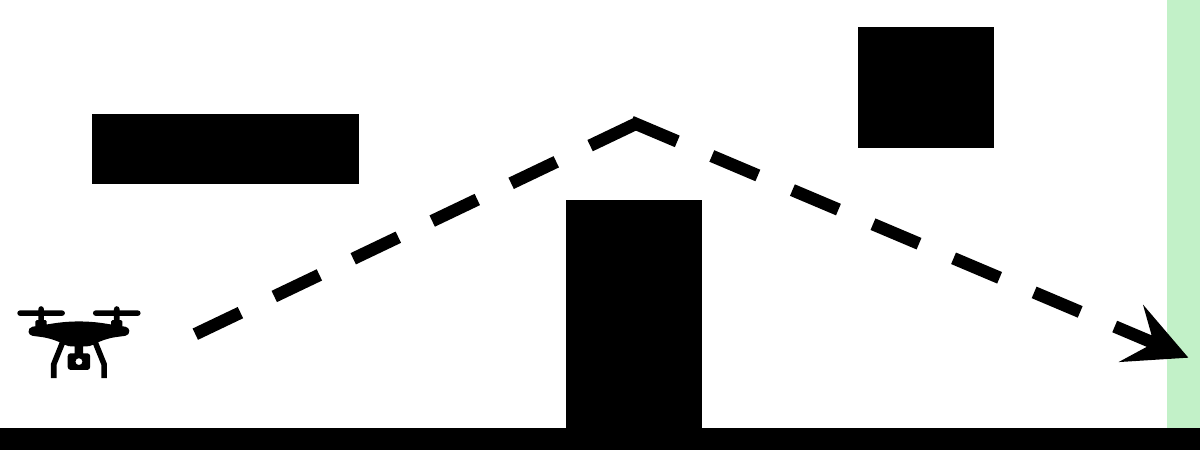}
    }
    \caption{Setup and outcome of the quadcopter reach-avoid problem with procedurally generated obstacle placement.}
    \label{fig:procedural-benchmarks}
  \end{subfigure}
\label{fig:benchmarks}
\end{figure}

\emph{Event-triggered ETCS} (\rref{model:event-triggered-etcs}) is an example of an \emph{event-triggered} control problem, in contrast with the time-triggered control handled by existing tools \cite{DBLP:conf/tacas/KabraLMP24}.
In time-triggered control, the controller repeatedly polls with some maximum time latency.
In event-triggered control, the controller is instead triggered at \emph{events}, when specific conditions are met.
Identifying what these events are that should trigger controller intervention is a part of the synthesis challenge.
Our example problem (\rref{model:event-triggered-etcs}) translates the ETCS Train benchmark of the CESAR benchmark suite from the time-triggered to the event-triggered paradigm.
It describes a train controller that must stop the train by point $e$, and has the choice to either accelerate or to brake.
\rref{line:etcs-event-plant} shows the plant which Angel (emulating the controller) chooses how long to run.
The synthesized subvalue map indicates when Angel must stop continuous evolution to enforce braking, characterizing the event at which to trigger the controller.
Modeling an event-triggered system as a game is subtle: the controller has the ability to choose how long to run the plant, and should not be permitted to win by interrupting infinitely often and inducing Zeno behavior.
\dGL's flexibility allows a correct encoding using nested loops and time progress assertions.

\begin{model}
  \setcounter{modelline}{0}
  \caption{Event-triggered ETCS}
  \label{model:event-triggered-etcs}
  \begin{align*}
    \text{\assumptions} &\,\big| \mline{line:etcs-event-init} A > 0 \land B > 0 \land T > 0 \limply \big\langle \\
    \text{\kwd{choice}} &\,\big| \mline{line:etcs-event-choices} \Big( t := 0 \seq \big( \left( a := A \cup a := -B \right) \seq \\
    \text{\kwd{plant}} &\,\big| \mline{line:etcs-event-plant} \left\{ p' = v, v' = a, t' = 1\ \&\ e - p > 0 \right\} \big)^* \seq \ptest{t \geq 1} \Big)^\times \rangle \kwd{True}
  \end{align*}
\end{model}

\emph{Surgical Robot} shows an example of nonlinear control (\rref{model:surgical-robot}).
It expresses the core control challenge in a case study from the literature \cite{DBLP:conf/hybrid/KouskoulasRPK13} modeling force feedback in surgical robots.
The objective is to dampen the force applied by the surgeon to ensure that a surgical tool stays within bounding planes called \emph{virtual fixtures}.
The damping factor $g$ (\rref{line:surgical-damping}) that we synthesize for can be \emph{any} nonnegative real number, implying infinitely many control possibilities.

\begin{model}[h]
  \setcounter{modelline}{0}
  \caption{Surgical Robot example based on \cite{DBLP:conf/hybrid/KouskoulasRPK13}}
  \label{model:surgical-robot}
  \begin{align*}
    \text{\assumptions} &\,\big| \mline{line:surgical-init} T > 0 \land K > 0 \land n_x^2 + n_y^2 = 1 \limply \big\langle \\
    \text{\kwd{input}} &\,\big| \mline{line:surgical-force-plan} \big( f_{xp} := \otimes \seq f_{yp} := \otimes \seq \\
    \text{\kwd{damping}} &\,\big| \mline{line:surgical-damping} g := \ast; ?g \geq 0 \seq \\
    \text{\kwd{plant}} &\,\big| \mline{line:surgical-motion} t := 0; \left\{ q_x' = K f_x g, q_y' = K f_y g, f_x' = f_{xp}, f_y' = f_{yp}, t' = 1\ \&\ t \leq T \right\}^d \\
    \text{\safe} &\,\big| \mline{line:surgical-safety} \big)^\times \big\rangle \left( (q_x - p_x) n_x + (q_y - p_y) n_y \geq 0 \right)
  \end{align*}
\end{model}

\emph{Highway Driving} (\rref{model:highway-driving}) expresses the core control challenge of a case study \cite{DBLP:conf/fm/LoosPN11} where on a highway, a controlled car must follow the car ahead at a safe distance.
The car ahead can behave adversarially by braking suddenly.
\emph{Reach-avoid Robot} (\rref{model:reach-avoid}) demonstrates an example of envelope synthesis for reach-avoid problems, where an agent must \emph{reach} an objective while \emph{avoiding} unsafe situations.
\emph{Infinite Track} (\rref{model:looping-track}) is an example of infinite horizon switching.
To remain safe for arbitrarily long time, the controlled vehicle must keep switching directions at the right moments, and make \emph{infinitely} many switches.
\rref{app:benchmarks-discussion} discusses these examples and their synthesized control envelopes further, and \rref{app:benchmarks-listings} lists the models.

To compare the algorithm to NYCS, we procedurally generate instantiations of \emph{quadcopter}, replacing symbolic parameters with concrete numbers, to create a reach-avoid challenge.
Each instantiation describes a scenario like in \rref{fig:procedural-benchmarks}, where a quadcopter must reach a target safe area while avoiding obstacles placed randomly in the environment.
The quadcopter constantly moves forward and can choose to either move upwards or downwards, but cannot revise its decision, once made, for one second.
The objective is to find a control envelope showing from which points the quadcopter has a way to reach the target safe zone while avoiding the obstacles.
The implementation finds control envelopes for $17$ problems.
The other $8$ problems time out after 20 minutes.
CESAR cannot solve any of these problems because they do not fit in its template.
NYCS, with its polytope representation and optimizations for this problem class, solves all of the problems.
\rref{app:quadcopter} provides further details.

\section{Related Work}
\label{sec:related-work}

Subvalue maps represent nondeterministic policies for hybrid games in a way that is compositional at the level of imperative program constructs such as loops and branches.
They can be verified/synthesized leveraging program analysis techniques like loop invariants, making it possible to answer hybrid game control envelope synthesis questions that no previous work solves.
\begin{itemize}
  \item \emph{Program logics} like game logic and differential game logic express games as imperative programs with Angelic and Demonic nondeterminism \cite{parikhGL,10.1093/logcom/12.1.149,DBLP:journals/tocl/Platzer15}.
  Their semantics compositionally define when \emph{there exists} a way to win a game, but not when a given policy wins.
  \item Refinement \cite{10.5555/184737,10.1145/1706299.1706339}, Hoare logic \cite{10.1007/978-3-662-46678-0_2}, and constructive proof \cite{cdgl} based approaches soundly construct deterministic policies (i.e., programs with Angelic nondeterminism fully resolved) but do not study \emph{maximal sets of solutions}.
  They do not have a direct representation of nondeterministic policies from which to derive \emph{sets} of permissible control actions, or a formal comparison of permissiveness of such policies, both of which are crucial parts of the permissive policy synthesis problem that we solve.
  \item A line of research solves games specified by \emph{hybrid automata} with objectives expressed in logics like LTL \cite{10.1007/3-540-48320-9_23,10.1007/978-3-662-46078-8_19,10.1145/3047500,10.1007/978-981-96-0617-7_11,ijcai2022p386}.
  Our approach targets specifications at a higher level of abstraction, where the game has imperative programming constructs like loops and branches.
  Compositionality is also at the level of these constructs, and the theoretical development required is different (e.g., the subtle Angel loop case in \rref{def:local-envelope-conditions}).
  This higher level of abstraction permits synthesis to use program analysis techniques like loop invariants and their continuous analog, differential invariants.
  The latter has the practical advantage of being complete for reasoning about the very general class of dynamics whose solutions are Noetherian functions \cite{platzer2020differential}, thus making it possible to solve problems that these previous techniques do not.
  \item Traditional \emph{game theory} \cite{Neumann1928,Shannon1950} and \emph{reinforcement learning} \cite{rl} do have recursive solving techniques and representations of nondeterministic policies (value function) \cite{Sutton1988LearningTP,10.1162/089976699300016070}.
  But these works also operate at a different level of abstraction than our imperative program-like hybrid games.
  Also, unlike these works, the subvalue maps of this paper are logically formalized, suitable for computerized verification and formally justified synthesis.
  \item \emph{Compositional game theory} \cite{10.1145/3209108.3209165,atkey:hal-04470659} also has a compositional representation of nondeterministic policies but also operates at a different level of abstraction, solving not hybrid games but abstract games with sequential and parallel composition.
\item There is also an important distinction between this work and \emph{controller synthesis} \cite{LIU202230,Tabuada09,Belta17}, for which approaches include the use of games \cite{DBLP:conf/cdc/NerodeY92,DBLP:journals/IEEE/TomlinLS00} and CEGIS-based \cite{DBLP:journals/sttt/Solar-Lezama13} guidance from counterexamples \cite{DBLP:journals/acta/AbateBCDKKP20,DBLP:conf/emsoft/RavanbakhshS16,Dai2020CounterexampleGS}.
We tackle control \emph{envelope} synthesis, where the goal is to characterize \emph{all} the correct control strategies, requiring an extra universal quantifier compared to controller synthesis.
Controller synthesis techniques do not solve control envelope synthesis.
For example, \dGL control envelope synthesis does not fit the CEGIS quantifier pattern of $\exists \forall$.
\item \emph{Safety shields} computed by numerical methods \cite{10.1609/aaai.v32i1.11797,10.1109/tac.2018.2876389,kochenderfer2012next} share the objective of characterizing the space of safe control. They can handle dynamical systems that are hard to analyze symbolically,
but do not support unbounded state spaces, require a hopeless increase in dimensionality to computing shields parameterized on the model's parameters, and can lose rigorous formal safety guarantees when they discretize continuous systems.
\item Compared to safe set and \emph{barrier function} approaches \cite{10.1007/978-3-540-24743-2_32,10.1007/978-3-540-31954-2_31,DBLP:conf/eucc/AmesCENST19,cohen20,ARTSTEIN19831163,doi:10.1137/0321028,1047016}, we again solve for the higher level of abstraction of \dGL, with program structures such as loops, as well as the adversarial dynamics of games.
\item \emph{Differential Floyd-Hoare logic} \cite{9815834} identifies the relationship between preconditions and runtime monitoring, but deals with hybrid \emph{systems}, not games, which lack the complexity of adversarial dynamics.
Additionally it uses preconditions to synthesize \emph{deterministic} policies (proper responses), different from our nondeterministic policy question.
\item The \emph{CESAR algorithm} \cite{DBLP:conf/tacas/KabraLMP24} also synthesizes symbolic control envelopes for hybrid \emph{systems}, for a \emph{fixed, time-triggered template}.
However, it does not provide a general, compositional representation for nondeterministic policies, and does not deal with the complexities brought by the adversarial dynamics of full \dGL.
\end{itemize}

\section{Conclusion and Future Work}
\label{sec:conclusion}
We introduce a generalized symbolic control envelope synthesis approach for hybrid games.
Such envelopes identify what control action is safe to take when.
We represent the envelopes using inductive subvalue maps and provide an algorithmic framework to generate them.
Our approach allows the expression and synthesis of a variety of control problems.
The approach should generalize for non-\dGL Markovian games, where the policy depends only on state and not the history of previous moves, which is an avenue for future research.
Another direction for future work is to integrate other tools, e.g., NYCS \cite{10.1145/2185632.2185645} as invariant generators to improve performance on the problem classes that they solve.

\section{Acknowledgments}
\label{sec:acknowledgments}
This work was partially funded by the National Science Foundation (NSF) under Award Number 2220311, and by an Alexander von Humboldt Professorship.
We thank Ruben Martins and Stephanie Balzer for helpful feedback on drafts of this paper. 


\bibliographystyle{ACM-Reference-Format}
\bibliography{refs}

\appendix
\clearpage


\section{Na\"ive Ordering Counterexample}
\label{app:counterexample}

For game $\namedGame{\alpha}{a}$, let $S'$ set the winning region for the overall game to be empty ($S'(a)=\bot$), meaning that Angel is never allowed to play at all.
Now the winning subregions for later subgames are irrelevant because Angel cannot reach them,
but $S'$ nevertheless sets them to the theoretical optimal (for every subgame $b\neq a$, $S(b)\mapsto\langle \fwd{b}{\namedGame{\alpha}{a}} \rangle \phi$).
For \emph{any} other inductive subvalue map $S$, a good ordering should indicate that $S\sqsupseteq S'$ since $S$ cannot possibly be less permissive,
but the proposed na\"ive ordering does not.
When $S$ sets the winning subregion for any of the later subgames $b \neq a$ to anything less than the optimal solution, the check $\models S'(b) \limply S(b)$ fails.
\begin{figure}
  \centering
  \begin{tikzpicture}
    [level distance=1.2cm, sibling distance=1.5cm,
    every node/.style={draw,rectangle,align=center}]
    \node {
      $\cup$ \\ \blue{$x=1$}
    }
      child {node {$\alpha$ \\ \blue{$y=1$}} }
      child {node {$\beta$ \\ \blue{$\phi_\beta$}} };
  \end{tikzpicture}%
  \quad\quad%
  \begin{tikzpicture}
    [level distance=1.2cm, sibling distance=1.5cm,
    every node/.style={draw,rectangle,align=center}]
    \node {$\cup$ \\ \textcolor{red}{$\bot$}}
      child {node {$\alpha$ \\ \textcolor{red}{$\top$}} }
      child {node {$\beta$ \\ \textcolor{red}{$\phi_\beta$}} };
  \end{tikzpicture}
  \caption{Inductive subvalue maps \blue{$S$} and \red{$S'$}. Locally at subgame $\alpha$, \red{$S'$} seems more permissive, but global analysis reveals \blue{$S$} is more permissive.}
  \label{fig:counterexample}
\end{figure}
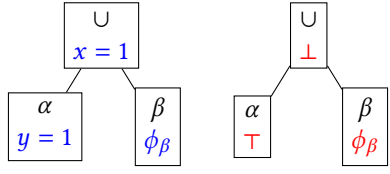

For a concrete visualization, consider the two inductive subvalue maps shown in \rref{fig:counterexample}: \blue{$S$} on the left and \red{$S'$} on the right.
Because $S'$ immediately blocks off \emph{all} strategies at the root, a good ordering should indicate $S\sqsupseteq S'$, but the proposed na\"ive ordering does not because at the subgame $\alpha$, $S$ has a strictly stronger formula ($y=1 \limply \top$ is false).
The problem in the na\"ive, point wise definition (\rref{def:weak-ordering}) is it does not recognize that the $\top$ at subgame $\alpha$ is vacuous: $S'$ permits no traces to reach subgame $\alpha$ in the first place, so the permissiveness of $\top$ is unimpressive.

\section{Inductive Subvalue Map Loop Condition Discussion}
\label{app:counterexample-loop}

We discuss the details of the definition of an inductive subvalue maps (\rref{def:local-envelope-conditions}) for the case $\avalid{(\namedGame{\gamma}{g})^*}{a}{S}{\phi}$, and correspondingly the synthesis framework $\asolve$ (\rref{alg:solving}) for the case of $\asolve(\namedGame{(\namedGame{\gamma}{g})^*}{a}, \phi)$.

\subsection{Counterexample for Projection Condition Weakening}
\label{app:counterexample-loop-proj}

\rref{def:local-envelope-conditions} and \rref{alg:solving} for the case $(\namedGame{\gamma}{g})^*$ use the check $\models S(a) \limply \langle \aproj{\namedGame{\alpha}{a}}{S}{\phi} \rangle \phi$, i.e., winning subregion $S(a)$ should ensure that there is a \emph{finite} strategy to exit while staying within the control envelope.
The subvalue map should ensure that Angel does not get stuck in a state where her only option is playing the loop forever.
It might seem that the more liberal, simpler condition $\models S(a) \limply \langle \alpha \rangle \phi$ also ensures Angel has a finite exit strategy, but this condition is subtly incorrect.
Angel can get stuck in a state where there is theoretically a way to exit the loop but this requires going outside the control envelope.
\rref{eq:counterexample-loop} shows a counterexample.
\begin{equation}
  \label{eq:counterexample-loop}
  \begin{aligned}
    &\langle \namedGame{(\namedGame{(\namedGame{x:=x+1}{c} \cup \namedGame{x:=x-1}{d})}{b})*}{a} \rangle x\geq 0 \\
    &\quad S:=\{ a \mapsto \top , b \mapsto \top , c \mapsto \bot, d\mapsto \top\}
  \end{aligned}
\end{equation}
For this game, if \rref{def:local-envelope-conditions} were modified such that $\langle \aproj{\namedGame{\alpha}{a}}{S}{\phi} \rangle \phi$ is replaced by $\langle \alpha \rangle \phi$, the subvalue map $S$ would be inductive.
However, it would also let Angel get stuck iterating the loop forever.
Consider starting state $\{x\mapsto -1\}$.
This state belongs to the winning subregion $S(a)$ of the overall game, so Angel should be able to win.
And indeed she can win the game by running one iteration of the loop where she increments $x$.
The new state would be $\{x\mapsto 0\}$, so Angel would exit the loop and win since winning condition $x\geq 0$ holds.
However, if Angel is forced to actually respect the subvalue map (modeled by projection of $S$ onto the game as shown in \rref{eq:projection-counterexample}), she cannot win.
\begin{equation}
  \label{eq:projection-counterexample}
    \langle ?\top \seq \left(?\top \seq \left( (?\bot \seq x:=x+1) \cup (?\top \seq x:=x-1) \right) \right)^* \seq ? x\geq 0 \rangle x\geq 0
\end{equation}
Starting at $\{x\mapsto -1\}$, Angel has to repeat the loop since she has not reached the winning region $x\geq 0$.
But inside the loop, the subvalue map says she can never choose to increment $x$ (subgame $c$).
So she is forced to decrement $x$, reaching new state $\{x\mapsto -2\}$.
In principle, if she were allowed to ignore the subvalue map, she could still exit with finite iterations by incrementing $x$ twice.
But with subgame $c$ blocked off, she is only ever allowed to decrement $x$ and is stuck in the loop forever.
To prevent such subvalue maps from counting as inductive, the check $\models S(a) \limply \langle \aproj{\namedGame{\alpha}{a}}{S}{\phi} \rangle \phi$ is necessary.

\subsection{Counterexample for Recursive Condition Weakening}
\label{app:counterexample-loop-inner}

\rref{def:local-envelope-conditions} and \rref{alg:solving} for the case $(\namedGame{\gamma}{g})^*$ also use the check $\avalid{\gamma}{g}{S}{(S(a)\lor\phi)}$, i.e., the subvalue map for the loop body ensures that after playing the loop body, either there is still a finite exit strategy or Angel can already exit the loop.
\rref{eq:counterexample-loop-inner} shows a counterexample where using $\avalid{\gamma}{g}{S}{(\mathbf{S(g)}\lor\phi)}$ instead is not sound, since the inner subvalue map permit reaching a state without a finite exit strategy.
\begin{equation}
  \label{eq:counterexample-loop-inner}
  \begin{aligned}
    &\langle \namedGame{(\namedGame{(\namedGame{x:=x+1}{c} \cup \namedGame{x:=x-1}{d})}{b})*}{a} \rangle x\geq 0 \\
    &\quad S:=\{a\mapsto x \geq -1, b\mapsto x\geq -1, c\mapsto x \geq -1, d\mapsto \top\}
  \end{aligned}
\end{equation}
Starting at $\{x\mapsto -1\}$, it is possible for Angel to choose subgame $d$ and decrement $x$.
The new state is $\{x\mapsto -2\}$, from which there is now no finite exit strategy because to increment $x$, per the guard of subgame $c$, $x$ must be at least $-1$.
Angel is now stuck decrementing $x$ forever.

\section{Example of Subvalue Map Shortfall for Runtime Monitoring}
\label{app:counterexample-subvalue-projection}

We show how for a (non-inductive) subvalue map, the property $\models S(a) \limply \langle \aruntimem{\namedGame{\alpha}{a}}{S}{\phi} \rangle \phi$ does not always hold.
Consider the subvalue map of \rref{fig:overview}, but with $\phi_6$ set to the empty subregion $\bot$ instead of $x>0 \land (v\geq 0 \lor a > 0)$.
As before, let the overall game be $\namedGame{\alpha}{a}$, and subgame labels be the numbers indexing the winning subregions.
The new map is still a subvalue map since $\models \bot \limply \langle \fwd{5}{\namedGame{\alpha}{a}} \rangle x>0$.
The untrusted controller monitoring correctness property is \rref{eq:overview-projection} with $\phi_6$ replaced by $\bot$.
The result has shape
\begin{equation*}
  \begin{aligned}
      x>0\land (v\geq0\lor a>0) \limply \big\langle \big( \big(
      \AdvExChoice{?(\bot \!\vee\! (x>0 \land v\geq 1))} \seq 
       \cdots 
      \; \AdvExChoice{\cap} 
       \cdots 
      \big) \seq \cdots \big)^{\times} \big\rangle \, x>0.
  \end{aligned}
\end{equation*}
This formula is not valid. Consider the starting state $\{x\mapsto1, v\mapsto0\}$.
Demon first chooses to run the loop, then Angel must pass the test $x>0 \land v\geq 1$, which fails.
The \emph{inductive} subvalue map prevents this situation by requiring that $\phi_8$ implies $\phi_6 \lor \phi_5$, thus disqualifying this counterexample where Angel gets stuck.

\section{Axiomatization of \dGL}
\label{app:dl-semantics}
\rref{fig:dGL-calculus} summarizes the axioms and proof rules of \dGL.
A full explanation is in \cite{Platzer18}.
Note that the dual operator $\alpha^d$ switches the role of players in $\alpha$. For example, $(\alpha \cup \beta)^d$ is the same as $\alpha^d \cap \beta^d$.
The semantics \cite{Platzer18} of \dGL is as follows.

{\renewcommand{\D}[2][]{\ifthenelse{\equal{#1}{}}{#2'}{\frac{d#2}{d{#1}}}}%

\begin{definition}[\dGL semantics] \label{def:dGL-semantics}
The \emph{semantics of a \dGL formula} $\phi$ is the subset \m{\imodel{\I}{\phi}\subseteq\linterpretations{\Sigma}{V}} of states in which $\phi$ is true.
It is defined inductively as follows
\begin{enumerate}
\item \(\imodel{\I}{p(\theta_1,\dots,\theta_k)} = \{\iportray{\I} \in \linterpretations{\Sigma}{V} \with (\ivaluation{\I}{\theta_1},\dots,\ivaluation{\I}{\theta_k})\in\iget[const]{\I}(p)\}\)
\item \(\imodel{\I}{\theta_1\sim\theta_2} = \{\iportray{\I} \in \linterpretations{\Sigma}{V} \with \ivaluation{\I}{\theta_1}\sim\ivaluation{\I}{\theta_2}\} \text{ where }{\sim}\in\{<,\leq,=,\geq,>\}\)
\item \(\imodel{\I}{\lnot\phi} = \scomplement{(\imodel{\I}{\phi})}\)
\item \(\imodel{\I}{\phi\land\psi} = \imodel{\I}{\phi} \cap \imodel{\I}{\psi}\)
\item
{\def\Im{\imodif[state]{\I}{x}{r}}%
\(\imodel{\I}{\lexists{x}{\phi}} =  \{\iportray{\I} \in \linterpretations{\Sigma}{V} \with \iget[state]{\Im} \in \imodel{\I}{\phi} ~\text{for some}~r\in\reals\}\)
}
\item \(\imodel{\I}{\ddiamond{\alpha}{\phi}} = \strategyfor[\alpha]{\imodel{\I}{\phi}}\)
\item \(\imodel{\I}{\dbox{\alpha}{\phi}} = \dstrategyfor[\alpha]{\imodel{\I}{\phi}}\)
\end{enumerate}
A \dGL formula $\phi$ is \emph{valid}, written \m{\entails{\phi}}, iff it is true in all states, i.e.\ \m{\imodel{\I}{\phi}=\linterpretations{\Sigma}{V}}.
\end{definition}

\begin{definition}[Semantics of hybrid games] \label{def:HG-semantics}
The \emph{semantics of a hybrid game} $\alpha$ is a function \m{\strategyfor[\alpha]{\cdot}} that, for each set of Angel's winning states \m{X\subseteq\linterpretations{\Sigma}{V}}, gives the \emph{winning region}, i.e.\ the set of states \m{\strategyfor[\alpha]{X}} from which Angel has a winning strategy to achieve $X$ in $\alpha$ (whatever strategy Demon chooses). It is defined inductively as follows
\begin{enumerate}
\item \(\strategyfor[\pupdate{\pumod{x}{\theta}}]{X} = \{\iportray{\I} \in \linterpretations{\Sigma}{V} \with \modif{\iget[state]{\I}}{x}{\ivaluation{\I}{\theta}} \in X\}\)
\item \(\strategyfor[\pevolvein{\D{x}=\genDE{x}}{\ivr}]{X} = \{\varphi(0) \in \linterpretations{\Sigma}{V} \with
      \varphi(r)\in X\)
      for some $r\in\reals_{\geq0}$ and (differentiable)
      \m{\varphi:[0,r]\to\linterpretations{\Sigma}{V}}
      such that
      \(\varphi(\zeta)\in\imodel{\I}{\ivr}\)
      and
      \m{\D[t]{\,\varphi(t)(x)} (\zeta) =
      \ivaluation{\iconcat[state=\varphi(\zeta)]{\I}}{f(x)}}
      for all \(0\leq\zeta\leq r\}\)
\item \(\strategyfor[\ptest{\ivr}]{X} = \imodel{\I}{\ivr}\cap X\)
\item \(\strategyfor[\pchoice{\alpha}{\beta}]{X} = \strategyfor[\alpha]{X}\cup\strategyfor[\beta]{X}\)
\item \(\strategyfor[\alpha;\beta]{X} = \strategyfor[\alpha]{\strategyfor[\beta]{X}}\)
\item \(\strategyfor[\prepeat{\alpha}]{X} = \capfold\{Z\subseteq\linterpretations{\Sigma}{V} \with X\cup\strategyfor[\alpha]{Z}\subseteq Z\}\)

\item \(\strategyfor[\pdual{\alpha}]{X} = \scomplement{(\strategyfor[\alpha]{\scomplement{X}})}\)
\end{enumerate}
The \emph{winning region} of Demon, i.e.\ the set of states \m{\dstrategyfor[\alpha]{X}} from which Demon has a winning strategy to achieve $X$ in $\alpha$ (whatever strategy Angel chooses) is defined inductively as follows
\begin{enumerate}
\item \(\dstrategyfor[\pupdate{\pumod{x}{\theta}}]{X} = \{\iportray{\I} \in \linterpretations{\Sigma}{V} \with \modif{\iget[state]{\I}}{x}{\ivaluation{\I}{\theta}} \in X\}\)
\item \(\dstrategyfor[\pevolvein{\D{x}=\genDE{x}}{\ivr}]{X} = \{\varphi(0) \in \linterpretations{\Sigma}{V} \with
      \varphi(r)\in X\)
      for all $r\in\reals_{\geq0}$ and (differentiable)
      \m{\varphi:[0,r]\to\linterpretations{\Sigma}{V}}
      such that
      \(\varphi(\zeta)\in\imodel{\I}{\ivr}\)
      and
      \m{\D[t]{\,\varphi(t)(x)} (\zeta) =
      \ivaluation{\iconcat[state=\varphi(\zeta)]{\I}}{\theta}}
      for all $0\leq\zeta\leq r\}$
\item \(\dstrategyfor[\ptest{\ivr}]{X} = \scomplement{(\imodel{\I}{\ivr})}\cup X\)
\item \(\dstrategyfor[\pchoice{\alpha}{\beta}]{X} = \dstrategyfor[\alpha]{X}\cap\dstrategyfor[\beta]{X}\)
\item \(\dstrategyfor[\alpha;\beta]{X} = \dstrategyfor[\alpha]{\dstrategyfor[\beta]{X}}\)
\item \(\dstrategyfor[\prepeat{\alpha}]{X} = \cupfold\{Z\subseteq\linterpretations{\Sigma}{V} \with Z\subseteq X\cap\dstrategyfor[\alpha]{Z}\}\)
\item \(\dstrategyfor[\pdual{\alpha}]{X} = \scomplement{(\dstrategyfor[\alpha]{\scomplement{X}})}\)
\end{enumerate}
\end{definition}
}

\providecommand{\axkey}[1]{\textcolor{vblue}{#1}}%
\cinferenceRuleStore[diamond|$\didia{\cdot}$]{diamond axiom}
{\linferenceRule[equiv]
  {\lnot\dbox{\ausprg}{\lnot \ausfml}}
  {\axkey{\ddiamond{\ausprg}{\ausfml}}}
}
{}
\cinferenceRuleStore[diamondax|$\didia{\cdot}$]{diamond axiom}
{\linferenceRule[equiv]
  {\lnot\dbox{\ausprgax}{\lnot \ausfmlax}}
  {\axkey{\ddiamond{\ausprgax}{\ausfmlax}}}
}
{}
\cinferenceRuleStore[assignb|$\dibox{:=}$]{assignment / substitution axiom}
{\linferenceRule[equiv]
  {p(\genDJ{x})}
  {\axkey{\dbox{\pupdate{\umod{x}{\genDJ{x}}}}{p(x)}}}
}
{}%
\cinferenceRuleStore[assignbax|$\dibox{:=}$]{assignment / substitution axiom}
{\linferenceRule[equiv]
  {p(\aconst)}
  {\axkey{\dbox{\pupdate{\umod{x}{\aconst}}}{p(x)}}}
}
{}%
\cinferenceRuleStore[Dassignb|$\dibox{:=}$]{differential assignment}
{\linferenceRule[equiv]
{p(\astrm)}
{\axkey{\dbox{\Dupdate{\Dumod{\D{x}}{\astrm}}}{p(\D{x})}}}
}
{}%
\cinferenceRuleStore[testb|$\dibox{?}$]{test}
{\linferenceRule[equiv]
  {(\ivr \limply \ausfml)}
  {\axkey{\dbox{\ptest{\ivr}}{\ausfml}}}
}{}%
\cinferenceRuleStore[testbax|$\dibox{?}$]{test}
{\linferenceRule[equiv]
  {(q \limply p)}
  {\axkey{\dbox{\ptest{q}}{p}}}
}{}%
\cinferenceRuleStore[evolveb|$\dibox{'}$]{evolve}
{\linferenceRule[equiv]
  {\lforall{t{\geq}0}{\dbox{\pupdate{\pumod{x}{\solf(t)}}}{p(x)}}}
  {\axkey{\dbox{\pevolve{\D{x}=\genDE{x}}}{p(x)}}}
}{\m{\D{\solf}(t)=\genDE{\solf}}}%
\cinferenceRuleStore[choiceb|$\dibox{\cup}$]{axiom of nondeterministic choice}
{\linferenceRule[equiv]
  {\dbox{\ausprg}{\ausfml} \land \dbox{\busprg}{\ausfml}}
  {\axkey{\dbox{\pchoice{\ausprg}{\busprg}}{\ausfml}}}
}{}%
\cinferenceRuleStore[choicebax|$\dibox{\cup}$]{axiom of nondeterministic choice}
{\linferenceRule[equiv]
  {\dbox{\ausprgax}{\ausfmlax} \land \dbox{\busprgax}{\ausfmlax}}
  {\axkey{\dbox{\pchoice{\ausprgax}{\busprgax}}{\ausfmlax}}}
}{}%
\cinferenceRuleStore[evolveinb|$\dibox{'}$]{evolve}
{\linferenceRule[equiv]
  {
        \lforall{t{\geq}0}{\big(
          (\lforall{0{\leq}s{\leq}t}{q(\solf(s))})
          \limply
          \dbox{\pupdate{\pumod{x}{\solf(t)}}}{p(x)}
        \big)}
      }
  {
        \dbox{\pevolvein{\D{x}=\genDE{x}}{q(x)}}{p(x)}
  }
}{}%
\cinferenceRuleStore[composeb|$\dibox{{;}}$]{composition} %
{\linferenceRule[equiv]
  {\dbox{\ausprg}{\dbox{\busprg}{\ausfml}}}
  {\axkey{\dbox{\ausprg;\busprg}{\ausfml}}}
}{}%
\cinferenceRuleStore[composebax|$\dibox{{;}}$]{composition} %
{\linferenceRule[equiv]
  {\dbox{\ausprgax}{\dbox{\busprgax}{\ausfmlax}}}
  {\axkey{\dbox{\ausprgax;\busprgax}{\ausfmlax}}}
}{}%
\cinferenceRuleStore[iterateb|$\dibox{{}^*}$]{iteration/repeat unwind} %
{\linferenceRule[equiv]
  {\ausfml \land \dbox{\ausprg}{\dbox{\prepeat{\ausprg}}{\ausfml}}}
  {\axkey{\dbox{\prepeat{\ausprg}}{\ausfml}}}
}{}%
\cinferenceRuleStore[iteratebax|$\dibox{{}^*}$]{iteration/repeat unwind} %
{\linferenceRule[equiv]
  {\ausfmlax \land \dbox{\ausprgax}{\dbox{\prepeat{\ausprgax}}{\ausfmlax}}}
  {\axkey{\dbox{\prepeat{\ausprgax}}{\ausfmlax}}}
}{}%
\cinferenceRuleStore[K|K]{K axiom / modal modus ponens}
{\linferenceRule[impl]
  {\dbox{\ausprg}{(\ausfml\limply\busfml)}}
  {(\dbox{\ausprg}{\ausfml}\limply\axkey{\dbox{\ausprg}{\busfml}})}
}{}%
\cinferenceRuleStore[Kax|K]{K axiom / modal modus ponens}
{\linferenceRule[impl]
  {\dbox{\ausprgax}{(\ausfmlax\limply\busfmlax)}}
  {(\dbox{\ausprgax}{\ausfmlax}\limply\axkey{\dbox{\ausprgax}{\busfmlax}})}
}{}%
\cinferenceRuleStore[I|II]{loop induction}
{\linferenceRule[impl]
  {\dbox{\prepeat{\ausprg}}{(\ausfml\limply\dbox{\ausprg}{\ausfml})}}
  {(\ausfml\limply\axkey{\dbox{\prepeat{\ausprg}}{\ausfml}})}
}{}%
\cinferenceRuleStore[Ieq|I]{loop induction}
{\linferenceRule[equiv]
  {\ausfml \land \dbox{\prepeat{\ausprg}}{(\ausfml\limply\dbox{\ausprg}{\ausfml})}}
  {\axkey{\dbox{\prepeat{\ausprg}}{\ausfml}}}
}{}%
\cinferenceRuleStore[Ieqax|I]{loop induction}
{\linferenceRule[equiv]
  {\ausfmlax \land \dbox{\prepeat{\ausprgax}}{(\ausfmlax\limply\dbox{\ausprgax}{\ausfmlax})}}
  {\axkey{\dbox{\prepeat{\ausprgax}}{\ausfmlax}}}
}{}%
\dinferenceRuleStore[backiterateb|\usebox{\backiterateb}]{backwards iteration/repeat unwind}
{\linferenceRule[equiv]
  {\ausfml \land \dbox{\prepeat{\ausprg}}{\dbox{\ausprg}{\ausfml}}}
  {\axkey{\dbox{\prepeat{\ausprg}}{\ausfml}}}
}{}%
\dinferenceRuleStore[iterateiterateb|$\dibox{{}^*{}^*}$]{double iteration}
{\linferenceRule[equiv]
  {\dbox{\prepeat{\ausprg}}{\ausfml}}
  {\axkey{\dbox{\prepeat{\ausprg};\prepeat{\ausprg}}{\ausfml}}}
}{}%
\dinferenceRuleStore[iterateiterated|$\didia{{}^*{}^*}$]{double iteration}
{\linferenceRule[equiv]
  {\ddiamond{\prepeat{\ausprg}}{\ausfml}}
  {\axkey{\ddiamond{\prepeat{\ausprg};\prepeat{\ausprg}}{\ausfml}}}
}{}%
\cinferenceRuleStore[B|B]{Barcan and converse}
{\linferenceRule[equiv]
        {\ddiamond{\ausprg}{\lexists{x}{\ausfml}}}
        {\lexists{x}{\ddiamond{\ausprg}{\ausfml}}}
}{\m{x{\not\in}\ausprg}}
\cinferenceRuleStore[V|V]{vacuous $\dbox{}{}$}
{\linferenceRule[impl]
  {p}
  {\axkey{\dbox{\ausprg}{p}}}
}{\m{FV(p)\cap BV(\ausprg)=\emptyset}}%
\cinferenceRuleStore[Vax|V]{vacuous $\dbox{}{}$}
{\linferenceRule[impl]
  {p}
  {\axkey{\dbox{a}{p}}}
}{}%
\cinferenceRuleStore[G|G]{$\dbox{}{}$ generalization} %
{\linferenceRule[formula]
  {\ausfml}
  {\dbox{\ausprg}{\ausfml}}
}{}%
\cinferenceRuleStore[Gax|G]{$\dbox{}{}$ generalization} %
{\linferenceRule[formula]
  {\ausfmlax}
  {\dbox{\ausprgax}{\ausfmlax}}
}{}%
\cinferenceRuleStore[genaax|$\forall{}$]{$\forall{}$ generalisation}
{\linferenceRule[formula]
  {p(x)}
  {\lforall{x}{p(x)}}
}{}%
\cinferenceRuleStore[MPax|MP]{modus ponens}
{\linferenceRule[formula]
  {p\limply q \quad p}
  {q}
}{}%
\cinferenceRuleStore[Mb|M${\dibox{\cdot}}$]{$\dbox{}{}$ monotone}
{\linferenceRule[formula]
  {\ausfml\limply \busfml}
  {\dbox{\ausprg}{\ausfml}\limply\dbox{\ausprg}{\busfml}}
}{}%
\cinferenceRuleStore[M|M]{$\ddiamond{}{}$ monotone / $\ddiamond{}{}$-generalization}
{\linferenceRule[formula]
  {\ausfml\limply\busfml}
  {\ddiamond{\ausprg}{\ausfml}\limply\ddiamond{\ausprg}{\busfml}}
}{}%

\dinferenceRuleStore[Mbr|M\rightrule]%
{$\ddiamond{}{}/\dbox{}{}$ generalization=M=G+K} 
{\linferenceRule[sequent]
  {\lsequent[L]{} {\dbox{\ausprg}{\busfml}} 
  &\lsequent[g]{\busfml} {\ausfml}}
  {\lsequent[L]{} {\dbox{\ausprg}{\ausfml}}}
}{}%

\dinferenceRuleStore[loop|loop]{inductive invariant}
{\linferenceRule[sequent]
  {\lsequent[L]{} {\inv}
  &\lsequent[g]{\inv} {\dbox{\ausprg}{\inv}}
  &\lsequent[g]{\inv} {\ausfml}}
  {\lsequent[L]{} {\dbox{\prepeat{\ausprg}}{\ausfml}}}
}{}%
\dinferenceRuleStore[invind|ind]{inductive invariant}
{\linferenceRule[sequent]
  {\lsequent[\globalrule]{\ausfml}{\dbox{\ausprg}{\ausfml}}}
  {\lsequent{\ausfml}{\dbox{\prepeat{\ausprg}}{\ausfml}}}
}{}%
\cinferenceRuleStore[con|con]{loop convergence right} 
{\linferenceRule[formula]
  {\lsequent[G]{\mapply{\var}{v}\land v>0}{\ddiamond{\ausprg}{\mapply{\var}{v-1}}}}
  {\lsequent[L]{\lexists{v}{\mapply{\var}{v}}}
      {\axkey{\ddiamond{\prepeat{\ausprg}}{\lexists{v{\leq}0}{\mapply{\var}{v}}}}}}
}{v\not\in\ausprg}%
\dinferenceRuleStore[congen|con]{loop convergence}
{\linferenceRule[sequent]
  {\lsequent[L]{}{\lexists{v}{\mapply{\var}{v}}}
  &\lsequent[G]{}{\lforall{v{>}0}{({\mapply{\var}{v}}\limply{\ddiamond{\ausprg}{\mapply{\var}{v-1}})}}}
  &\lsequent[G]{\lexists{v{\leq}0}{\mapply{\var}{v}}}{\busfml}
  }
  {\lsequent[L]{}{\ddiamond{\prepeat{\ausprg}}{\busfml}}}
}{v\not\in\ausprg}

\dinferenceRuleStore[band|${[]\land}$]{$\dbox{\cdot}{\land}$}
{\linferenceRule[equiv]
  {\dbox{\ausprg}{\ausfml} \land \dbox{\ausprg}{\busfml}}
  {\axkey{\dbox{\ausprg}{(\ausfml\land\busfml)}}}
}{}%

\dinferenceRuleStore[Hoarecompose|H${;}$]{Hoare $;$}
{\linferenceRule
  {A\limply\dbox{\ausprg}{E} & E\limply\dbox{\busprg}{B}}
  {A \limply \dbox{\ausprg;\busprg}{B}}
}{}%
\dinferenceRuleStore[composebrexplicit|$\dibox{{;}}$\rightrule]{$;$}
{\linferenceRule
  {A\limply\dbox{\ausprg}{\dbox{\busprg}{B}}}
  {A \limply \dbox{\ausprg;\busprg}{B}}
}{}
\cinferenceRuleStore[notr|$\lnot$\rightrule]{$\lnot$ right}
{\linferenceRule[sequent]
  {\lsequent[L]{\asfml}{}}
  {\lsequent[L]{}{\lnot \asfml}}
}{}%
\cinferenceRuleStore[notl|$\lnot$\leftrule]{$\lnot$ left}
{\linferenceRule[sequent]
  {\lsequent[L]{}{\asfml}}
  {\lsequent[L]{\lnot \asfml}{}}
}{}%
\cinferenceRuleStore[andr|$\land$\rightrule]{$\land$ right}
{\linferenceRule[sequent]
  {\lsequent[L]{}{\asfml}
    & \lsequent[L]{}{\bsfml}}
  {\lsequent[L]{}{\asfml \land \bsfml}}
}{}%
\cinferenceRuleStore[andl|$\land$\leftrule]{$\land$ left}
{\linferenceRule[sequent]
  {\lsequent[L]{\asfml , \bsfml}{}}
  {\lsequent[L]{\asfml \land \bsfml}{}}
}{}%
\cinferenceRuleStore[orr|$\lor$\rightrule]{$\lor$ right}
{\linferenceRule[sequent]
  {\lsequent[L]{}{\asfml, \bsfml}}
  {\lsequent[L]{}{\asfml \lor \bsfml}}
}{}%
\cinferenceRuleStore[orl|$\lor$\leftrule]{$\lor$ left}
{\linferenceRule[sequent]
  {\lsequent[L]{\asfml}{}
    & \lsequent[L]{\bsfml}{}}
  {\lsequent[L]{\asfml \lor \bsfml}{}}
}{}%
\cinferenceRuleStore[implyr|$\limply$\rightrule]{$\limply$ right}
{\linferenceRule[sequent]
  {\lsequent[L]{\asfml}{\bsfml}}
  {\lsequent[L]{}{\asfml \limply \bsfml}}
}{}%
\cinferenceRuleStore[implyl|$\limply$\leftrule]{$\limply$ left}
{\linferenceRule[sequent]
  {\lsequent[L]{}{\asfml}
    & \lsequent[L]{\bsfml}{}}
  {\lsequent[L]{\asfml \limply \bsfml}{}}
}{}%
\cinferenceRuleStore[equivr|$\lbisubjunct$\rightrule]{$\lbisubjunct$ right}
{\linferenceRule[sequent]
  {\lsequent[L]{\asfml}{\bsfml}
   & \lsequent[L]{\bsfml}{\asfml}}
  {\lsequent[L]{}{\asfml \lbisubjunct \bsfml}}
}{}%
\cinferenceRuleStore[equivl|$\lbisubjunct$\leftrule]{$\lbisubjunct$ left}
{\linferenceRule[sequent]
  {\lsequent[L]{\asfml\limply\bsfml, \bsfml\limply\asfml}{}}
  {\lsequent[L]{\asfml \lbisubjunct \bsfml}{}}
}{}%
\cinferenceRuleStore[id|id]{identity}
{\linferenceRule[sequent]
  {}
  {\lsequent[L]{\asfml}{\asfml}}
}{}%
\cinferenceRuleStore[cut|cut]{cut}
{\linferenceRule[sequent]
  {\lsequent[L]{}{\cusfml}
  &\lsequent[L]{\cusfml}{}}
  {\lsequent[L]{}{}}
}{}%
\cinferenceRuleStore[weakenr|W\rightrule]{weakening right}
{\linferenceRule[sequent]
  {\lsequent[L]{}{}}
  {\lsequent[L]{}{\asfml}}
}{}%
\cinferenceRuleStore[weakenl|W\leftrule]{weakening left}
{\linferenceRule[sequent]
  {\lsequent[L]{}{}}
  {\lsequent[L]{\asfml}{}}
}{}%
\cinferenceRuleStore[exchanger|P\rightrule]{exchange right}
{\linferenceRule[sequent]
  {\lsequent[L]{}{\bsfml,\asfml}}
  {\lsequent[L]{}{\asfml,\bsfml}}
}{}%
\cinferenceRuleStore[exchangel|P\leftrule]{exchange left}
{\linferenceRule[sequent]
  {\lsequent[L]{\bsfml,\asfml}{}}
  {\lsequent[L]{\asfml,\bsfml}{}}
}{}%
\cinferenceRuleStore[contractr|c\rightrule]{contract right}
{\linferenceRule[sequent]
  {\lsequent[L]{}{\asfml,\asfml}}
  {\lsequent[L]{}{\asfml}}
}{}%
\cinferenceRuleStore[contractl|c\leftrule]{contract left}
{\linferenceRule[sequent]
  {\lsequent[L]{\asfml,\asfml}{}}
  {\lsequent[L]{\asfml}{}}
}{}
\cinferenceRuleStore[closeTrue|$\top$\rightrule]{close by true}
{\linferenceRule[sequent]
  {}
  {\lsequent[L]{}{\ltrue}}
}{}%
\cinferenceRuleStore[closeFalse|$\bot$\leftrule]{close by false}
{\linferenceRule[sequent]
  {}
  {\lsequent[L]{\lfalse}{}}
}{}%

\cinferenceRuleStore[CE|CE]{congequiv congruence of equivalences on formulas}
{\linferenceRule[formula]
  {\ausfml \lbisubjunct \busfml}
  {\contextapp{C}{\ausfml} \lbisubjunct \contextapp{C}{\busfml}}
}{}%
\dinferenceRuleStore[CEr|CE\rightrule]{congequiv congruence of equivalences on formulas}
{\linferenceRule[formula]
  {\lsequent[L]{} {\contextapp{C}{\busfml}}
  &\lsequent[g]{} {\ausfml \lbisubjunct \busfml}}
  {\lsequent[L]{} {\contextapp{C}{\ausfml}}}
}{}%
\dinferenceRuleStore[CEl|CE\leftrule]{congequiv congruence of equivalences on formulas}
{\linferenceRule[formula]
  {\lsequent[L]{\contextapp{C}{\busfml}} {}
  &\lsequent[g]{} {\ausfml \lbisubjunct \busfml}}
  {\lsequent[L]{\contextapp{C}{\ausfml}} {}}
}{}%

\cinferenceRuleStore[allr|$\forall$\rightrule]{$\lforall{}{}$ right}
{\linferenceRule[sequent]
  {\lsequent[L]{}{p(y)}}
  {\lsequent[L]{}{\lforall{x}{p(x)}}}
}{\m{y\not\in\Gamma{,}\Delta{,}\lforall{x}{p(x)}}}%
\cinferenceRuleStore[alll|$\forall$\leftrule]{$\lforall{}{}$ left instantiation}
{\linferenceRule[sequent]
  {\lsequent[L]{p(\astrm)}{}}
  {\lsequent[L]{\lforall{x}{p(x)}}{}}
}{arbitrary term $\astrm$}%
\cinferenceRuleStore[existsr|$\exists$\rightrule]{$\lexists{}{}$ right}
{\linferenceRule[sequent]
  {\lsequent[L]{}{p(\astrm)}}
  {\lsequent[L]{}{\lexists{x}{p(x)}}}
}{arbitrary term $\astrm$}%
\cinferenceRuleStore[existsl|$\exists$\leftrule]{$\lexists{}{}$ left}
{\linferenceRule[sequent]
  {\lsequent[L]{p(y)} {}}
  {\lsequent[L]{\lexists{x}{p(x)}} {}}
}{\m{y\not\in\Gamma{,}\Delta{,}\lexists{x}{p(x)}}}%

\cinferenceRuleStore[qear|\usebox{\Rval}]{quantifier elimination real arithmetic}
{\linferenceRule[sequent]
  {}
  {\lsequent[g]{\Gamma}{\Delta}}
}{\text{if}~\landfold_{\ausfml\in\Gamma} \ausfml \limply \lorfold_{\busfml\in\Delta} \busfml ~\text{is valid in \LOS[\reals]}}%

\dinferenceRuleStore[allGi|i$\forall$]{inverse universal generalization / universal instantiation}
{\linferenceRule[sequent]
  {\lsequent[L]{} {\lforall{x}{\ausfml}}}
  {\lsequent[L]{} {\ausfml}}
}{}%

\cinferenceRuleStore[applyeqr|=\rightrule]{apply equation}
{\linferenceRule[sequent]
  {\lsequent[L]{x=\astrm}{p(\astrm)}}
  {\lsequent[L]{x=\astrm}{p(x)}}
}{}%
\cinferenceRuleStore[applyeql|=\leftrule]{apply equation}
{\linferenceRule[sequent]
  {\lsequent[L]{x=\astrm,p(\astrm)}{}}
  {\lsequent[L]{x=\astrm,p(x)}{}}
}{}%

\dinferenceRuleStore[alldupl|$\forall\forall$\leftrule]{$\lforall{}{}$ left instantiation retaining duplicates}
{\linferenceRule[sequent]
  {\lsequent[L]{\lforall{x}{p(x)},p(\astrm)}{}}
  {\lsequent[L]{\lforall{x}{p(x)}}{}}
}{}%

\dinferenceRuleStore[choicebrinsist|$\dibox{\cup}\rightrule$]{}
{\linferenceRule
  {\lsequent[L]{}{\dbox{\asprg}{\ausfml}\land\dbox{\bsprg}{\ausfml}}}
  {\lsequent[L]{}{\dbox{\pchoice{\asprg}{\bsprg}}{\ausfml}}}
}{}
\dinferenceRuleStore[choiceblinsist|$\dibox{\cup}\leftrule$]{}
{\linferenceRule
  {\lsequent[L]{\dbox{\asprg}{\ausfml}\land\dbox{\bsprg}{\ausfml}}{}}
  {\lsequent[L]{\dbox{\pchoice{\asprg}{\bsprg}}{\ausfml}}{}}
}{}
\dinferenceRuleStore[choicebrinsist2|$\dibox{\cup}\rightrule2$]{}
{\linferenceRule
  {\lsequent[L]{}{\dbox{\asprg}{\ausfml}}
  &\lsequent[L]{}{\dbox{\bsprg}{\ausfml}}}
  {\lsequent[L]{}{\dbox{\pchoice{\asprg}{\bsprg}}{\ausfml}}}
}{}
\dinferenceRuleStore[choiceblinsist2|$\dibox{\cup}\leftrule2$]{}
{\linferenceRule
  {\lsequent[L]{\dbox{\asprg}{\ausfml},\dbox{\bsprg}{\ausfml}}{}}
  {\lsequent[L]{\dbox{\pchoice{\asprg}{\bsprg}}{\ausfml}}{}}
}{}
\dinferenceRuleStore[cutr|cut\rightrule]{cut right}
{\linferenceRule[sequent]
  {\lsequent[L]{}{\bsfml}
  &\lsequent[L]{}{\bsfml\limply\asfml}}
  {\lsequent[L]{}{\asfml}}
}{}
\dinferenceRuleStore[cutl|cut\leftrule]{cut left}
{\linferenceRule[sequent]
  {\lsequent[L]{\bsfml} {}
  &\lsequent[L]{}{\asfml\limply\bsfml}}
  {\lsequent[L]{\asfml} {}}
}{}

\cinferenceRuleStore[Dplus|$+'$]{derive sum}
{\linferenceRule[eq]
  {\der{\asdtrm}+\der{\bsdtrm}}
  {\axkey{\der{\asdtrm+\bsdtrm}}}
}
{}
\cinferenceRuleStore[Dplusax|$+'$]{derive sum}
{\linferenceRule[eq]
  {\der{\asdtrmax}+\der{\bsdtrmax}}
  {\axkey{\der{\asdtrmax+\bsdtrmax}}}
}
{}
\cinferenceRuleStore[Dminus|$-'$]{derive minus}
{\linferenceRule[eq]
  {\der{\asdtrm}-\der{\bsdtrm}}
  {\axkey{\der{\asdtrm-\bsdtrm}}}
}
{}
\cinferenceRuleStore[Dminusax|$-'$]{derive minus}
{\linferenceRule[eq]
  {\der{\asdtrmax}-\der{\bsdtrmax}}
  {\axkey{\der{\asdtrmax-\bsdtrmax}}}
}
{}
\cinferenceRuleStore[Dtimes|$\cdot'$]{derive product}
{\linferenceRule[eq]
  {\der{\asdtrm}\cdot \bsdtrm+\asdtrm\cdot\der{\bsdtrm}}
  {\axkey{\der{\asdtrm\cdot \bsdtrm}}}
}
{}
\cinferenceRuleStore[Dtimesax|$\cdot'$]{derive product}
{\linferenceRule[eq]
  {\der{\asdtrmax}\cdot \bsdtrmax+\asdtrmax\cdot\der{\bsdtrmax}}
  {\axkey{\der{\asdtrmax\cdot \bsdtrmax}}}
}
{}
\cinferenceRuleStore[Dquotient|$/'$]{derive quotient}
{\linferenceRule[eq]
  {\big(\der{\asdtrm}\cdot \bsdtrm-\asdtrm\cdot\der{\bsdtrm}\big) / \bsdtrm^2}
  {\axkey{\der{\asdtrm/\bsdtrm}}}
}
{}
\cinferenceRuleStore[Dquotientax|$/'$]{derive quotient}
{\linferenceRule[eq]
  {\big(\der{\asdtrmax}\cdot \bsdtrmax-\asdtrmax\cdot\der{\bsdtrmax}\big) / \bsdtrmax^2}
  {\axkey{\der{\asdtrmax/\bsdtrmax}}}
}
{}
\cinferenceRuleStore[Dconst|$c'$]{derive constant}
{\linferenceRule[eq]
  {0}
  {\axkey{\der{\aconst}}}
  \hspace{3cm}
}
{\text{for numbers or constants~$\aconst$}}%
\cinferenceRuleStore[Dvar|$x'$]{derive variable}
{\linferenceRule[eq]
  {\D{x}}
  {\axkey{\der{x}}}
}
{\text{for variable~$x\in\allvars$}}%

\cinferenceRuleStore[DE|DE]{differential effect} %
{\linferenceRule[viuqe]
  {\axkey{\dbox{\pevolvein{\D{x}=\genDE{x}}{\ivr}}{\ausfml}}}
  {\dbox{\pevolvein{\D{x}=\genDE{x}}{\ivr}}{\dbox{\axeffect{\Dupdate{\Dumod{\D{x}}{\genDE{x}}}}}{\ausfml}}}
}
{}%
\cinferenceRuleStore[DEax|DE]{differential effect} %
{\linferenceRule[viuqe]
  {\axkey{\dbox{\pevolvein{\D{x}=\genDE{x}}{q(x)}}{\ausfmlax}}}
  {\dbox{\pevolvein{\D{x}=\genDE{x}}{q(x)}}{\dbox{\axeffect{\Dupdate{\Dumod{\D{x}}{\genDE{x}}}}}{\ausfmlax}}}
}
{}%

\cinferenceRuleStore[Dand|${\land}'$]{derive and}
{\linferenceRule[equiv]
  {\der{\asfml}\land\der{\bsfml}}
  {\axkey{\der{\asfml\land\bsfml}}}
}
{}
\cinferenceRuleStore[Dor|${\lor}'$]{derive or}
{\linferenceRule[equiv]
  {\der{\asfml}\land\der{\bsfml}}
  {\axkey{\der{\asfml\lor\bsfml}}}
}
{}
\cinferenceRuleStore[diffweaken|DW]{differential evolution domain} %
{\linferenceRule[viuqe]
  {\axkey{\dbox{\pevolvein{\D{x}=\genDE{x}}{\ivr}}{\ousfml[x]}}}
  {\dbox{\pevolvein{\D{x}=\genDE{x}}{\ivr}}{(\axeffect{\ivr}\limply \ousfml[x])}}
}
{}%
\cinferenceRuleStore[diffweakenax|DW]{differential evolution domain} 
{\linferenceRule[viuqe]
  {\axkey{\dbox{\pevolvein{\D{x}=\genDE{x}}{q(x)}}{p(x)}}}
  {\dbox{\pevolvein{\D{x}=\genDE{x}}{q(x)}}{(\axeffect{q(x)}\limply p(x))}}
}
{}%
\cinferenceRuleStore[dW|dW]{differential weakening}
{\linferenceRule[sequent]
  {\lsequent[g]{\ivr} {\ousfml[x]}}
  {\lsequent[g]{\Gamma} {\dbox{\pevolvein{\D{x}=f(x)}{\ivr}}{\ousfml[x]},\Delta}}
}
{}%
\cinferenceRuleStore[DI|DI]{differential induction}
{\linferenceRule[lpmi]
  {\big(\axkey{\dbox{\pevolvein{\D{x}=\genDE{x}}{\ivr}}{\ousfml[x]}}
  \lbisubjunct \dbox{\ptest{\ivr}}{\ousfml[x]}\big)}
  {(\ivr\limply\dbox{\pevolvein{\D{x}=\genDE{x}}{\ivr}}{\axeffect{\der{\ousfml[x]}}})}
}
{}%
\cinferenceRuleStore[DIax|DI]{differential induction}
{\linferenceRule[lpmi]
  {\big(\axkey{\dbox{\pevolvein{\D{x}=\genDE{x}}{q(x)}}{p(x)}}
  \lbisubjunct \dbox{\ptest{q(x)}}{p(x)}\big)}
  {(q(x)\limply\dbox{\pevolvein{\D{x}=\genDE{x}}{q(x)}}{\axeffect{\der{p(x)}}})}
}
{}%
\cinferenceRuleStore[DIlight|DI]{differential induction}
{\linferenceRule[lpmi]
  {\big(\axkey{\dbox{\pevolvein{\D{x}=\genDE{x}}{\ivr}}{\ousfml[x]}}
  \lbisubjunct \dbox{\ptest{\ivr}}{\ousfml[x]}\big)}
  {\dbox{\pevolvein{\D{x}=\genDE{x}}{\ivr})}{\axeffect{\der{\ousfml[x]}}}}
}
{}%

\cinferenceRuleStore[dI|dI]{differential invariant}
{\linferenceRule[sequent]
  {\lsequent[g]{\ivr}{\Dusubst{\D{x}}{\genDE{x}}{\der{F}}}}
  {\lsequent{F}{\dbox{\pevolvein{\D{x}=\genDE{x}}{\ivr}}{F}}}
}{}
\cinferenceRuleStore[DC|DC]{differential cut}
{\linferenceRule[lpmi]
  {\big(\axkey{\dbox{\pevolvein{\D{x}=\genDE{x}}{\ivr}}{\ousfml[x]}} \lbisubjunct \dbox{\pevolvein{\D{x}=\genDE{x}}{\ivr\land \axeffect{\ousfmlc[x]}}}{\ousfml[x]}\big)}
  {\dbox{\pevolvein{\D{x}=\genDE{x}}{\ivr}}{\axeffect{\ousfmlc[x]}}}
}
{}%
\cinferenceRuleStore[DCax|DC]{differential cut}
{\linferenceRule[lpmi]
  {\big(\axkey{\dbox{\pevolvein{\D{x}=\genDE{x}}{q(x)}}{p(x)}} \lbisubjunct \dbox{\pevolvein{\D{x}=\genDE{x}}{q(x)\land \axeffect{r(x)}}}{p(x)}\big)}
  {\dbox{\pevolvein{\D{x}=\genDE{x}}{q(x)}}{\axeffect{r(x)}}}
}
{}%
\cinferenceRuleStore[dC|dC]{differential cut}%
{\linferenceRule[sequent]
  {\lsequent[L]{}{\dbox{\pevolvein{\D{x}=\genDE{x}}{\ivr}}{\axeffect{\cusfml}}}
  &\lsequent[L]{}{\dbox{\pevolvein{\D{x}=\genDE{x}}{(\ivr\land \axeffect{\cusfml})}}{\ousfml[x]}}}
  {\lsequent[L]{}{\dbox{\pevolvein{\D{x}=\genDE{x}}{\ivr}}{\ousfml[x]}}}
}{}
\cinferenceRuleStore[DGanyode|DG]{differential ghost variables (unsound!)}
{\linferenceRule[viuqe]
  {\axkey{\dbox{\pevolvein{\D{x}=\genDE{x}}{\ivr}}{\ousfml[x]}}}
  {\lexists{y}{\dbox{\pevolvein{\D{x}=\genDE{x}\syssep\axeffect{\D{y}=g(x,y)}}{\ivr}}{\ousfml[x]}}}
}
{}
\cinferenceRuleStore[DG|DG]{differential ghost variables}
{\linferenceRule[viuqe]
  {\axkey{\dbox{\pevolvein{\D{x}=\genDE{x}}{\ivr}}{\ousfml[x]}}}
  {\lexists{y}{\dbox{\pevolvein{\D{x}=\genDE{x}\syssep\axeffect{\D{y}=a(x)\cdot y+b(x)}}{\ivr}}{\ousfml[x]}}}
}
{}
\cinferenceRuleStore[DGax|DG]{differential ghost variables}
{\linferenceRule[viuqe]
  {\axkey{\dbox{\pevolvein{\D{x}=\genDE{x}}{q(x)}}{p(x)}}}
  {\lexists{y}{\dbox{\pevolvein{\D{x}=\genDE{x}\syssep\axeffect{\D{y}=a(x)\cdot y+b(x)}}{q(x)}}{p(x)}}}
}
{}
\cinferenceRuleStore[dG|dG]{dG}%
{\linferenceRule[sequent]
  {\lsequent[L]{} {\lexists{y}{\dbox{\pevolvein{\D{x}=f(x)\syssep\axeffect{\D{y}=a(x)\cdot y+b(x)}}{\oivr[x]}}{\ousfml[x]}}}
  }
  {\lsequent[L]{} {\dbox{\pevolvein{\D{x}=f(x)}{\oivr[x]}}{\ousfml[x]}}}
}
{}%

\dinferenceRuleStore[assignbeqr|$\dibox{:=}_=$]{assignb}%
  {\linferenceRule[sequent]
    {\lsequent[L]{y=\austrm} {p(y)}}
    {\lsequent[L]{} {\dbox{\pupdate{\umod{x}{\austrm}}}{p(x)}}}
  }
  {\text{$y$ new}}

\cinferenceRuleStore[DSax|DS]{(constant) differential equation solution} 
{\linferenceRule[viuqe]
  {\axkey{\dbox{\pevolvein{\D{x}=\aconst}{q(x)}}{p(x)}}}
  {\lforall{t{\geq}0}{\big((\lforall{0{\leq}s{\leq}t}{q(x+\aconst\itimes s)}) \limply \dbox{\pupdate{\pumod{x}{x+\aconst\itimes t}}}{p(x)}\big)}}
}
{}

\dinferenceRuleStore[DIeq0|DI]{differential invariant axiom}
{\linferenceRule[lpmi]
  {\big(\axkey{\dbox{\pevolve{\D{x}=\genDE{x}}}{\,\astrm=0}} \lbisubjunct \astrm=0\big)}
  {\dbox{\pevolve{\D{x}=\genDE{x}}}{\,\axeffect{\der{\astrm}=0}}}
}
{}
\dinferenceRuleStore[diffindeq0|dI]{differential invariant $=0$ case}
{\linferenceRule[sequent]
  {\lsequent{~}{\Dusubst{\D{x}}{\genDE{x}}{\der{\astrm}}=0}}
  {\lsequent{\astrm=0}{\dbox{\pevolve{\D{x}=\genDE{x}}}{\astrm=0}}}
}{}
\cinferenceRuleStore[Liec|dI$_c$]{}
{\linferenceRule
  {\lsequent{\ivr}{\Dusubst{\D{x}}{\genDE{x}}{\der{\astrm}}=0}}
  {\lsequent{}{\lforall{c}{\big(\astrm=c \limply \dbox{\pevolvein{\D{x}=\genDE{x}}{\ivr}}{\astrm=c}\big)}}}
}{}

\cinferenceRuleStore[DIeq|DI$_=$]{differential induction $=$ case}
{\linferenceRule[lpmi]
  {\big(\axkey{\dbox{\pevolvein{\D{x}=\genDE{x}}{\ivr}}{\asdtrm=\bsdtrm}}
  \lbisubjunct \dbox{\ptest{\ivr}}{\asdtrm=\bsdtrm}\big)}
  {\dbox{\pevolvein{\D{x}=\genDE{x}}{\ivr})}{\axeffect{\der{\asdtrm}=\der{\bsdtrm}}}}
}
{}%

\dinferenceRuleStore[diffindgen|dI']{differential invariant}
{\linferenceRule[sequent]
  {\lsequent[L]{}{\inv}
  &\lsequent[g]{\ivr}{\Dusubst{\D{x}}{\genDE{x}}{\der{\inv}}}
  &\lsequent[g]{\inv}{\psi}
  }
  {\lsequent[L]{}{\dbox{\pevolvein{\D{x}=\genDE{x}}{\ivr}}{\psi}}}
}{}
\cinferenceRuleStore[diffindunsound|dI$_{??}$]{unsound}
{\linferenceRule[sequent]
  {\lsequent{\ivr\land\inv}{\Dusubst{\D{x}}{\genDE{x}}{\der{\inv}}}}
  {\lsequent{\inv}{\dbox{\pevolvein{\D{x}=\genDE{x}}{\ivr}}{\inv}}}
}{}

\dinferenceRuleStore[introaux|iG]{introduce discrete ghost variable}
{\linferenceRule[sequent]
  {\lsequent[L]{}{\dbox{\axeffect{\pupdate{\pumod{y}{\astrm}}}}{p}}}
  {\lsequent[L]{} {p}}
}{\text{$y$ new}}%
\dinferenceRuleStore[diffaux|dA]{differential auxiliary variables}
{\linferenceRule[sequent]
  {\lsequent[\globalrule]{}{\inv\lbisubjunct\lexists{y}{G}}
  &\lsequent{G} {\dbox{\pevolvein{\D{x}=\genDE{x}\syssep\axeffect{\D{y}=a(x)\cdot y+b(x)}}{\ivr}}{G}}}
  {\lsequent{\inv} {\dbox{\pevolvein{\D{x}=\genDE{x}}{\ivr}}{\inv}}}
}{}%
\cinferenceRuleStore[randomd|$\didia{{:}*}$]{nondeterministic assignment}
{\linferenceRule[equiv]
  {\lexists{x}{\ousfml[x]}}
  {\axkey{\ddiamond{\prandom{x}}{\ousfml[x]}}}
}{}
\cinferenceRuleStore[randomb|$\dibox{{:}*}$]{nondeterministic assignment}
{\linferenceRule[equiv]
  {\lforall{x}{\ousfml[x]}}
  {\axkey{\dbox{\prandom{x}}{\ousfml[x]}}}
}{}

\cinferenceRuleStore[box|$\dibox{\cdot}$]{box axiom}
{\linferenceRule[equiv]
  {\lnot\ddiamond{\ausprg}{\lnot\ausfml}}
  {\axkey{\dbox{\ausprg}{\ausfml}}}
}
{}
\cinferenceRuleStore[assignd|$\didia{:=}$]{assignment / substitution axiom}
{\linferenceRule[equiv]
  {p(\genDJ{x})}
  {\axkey{\ddiamond{\pupdate{\umod{x}{\genDJ{x}}}}{p(x)}}}
}
{}%
\cinferenceRuleStore[evolved|$\didia{'}$]{evolve}
{\linferenceRule[equiv]
  {\lexists{t{\geq}0}{\ddiamond{\pupdate{\pumod{x}{\solf(t)}}}{p(x)}}\hspace{1cm}}
  {\axkey{\ddiamond{\pevolve{\D{x}=\genDE{x}}}{p(x)}}}
}{\m{\D{\solf}(t)=\genDE{\solf}}}%
\cinferenceRuleStore[evolveind|$\didia{'}$]{evolve}
{\linferenceRule[equiv]
  {\lexists{t{\geq}0}{\big((\lforall{0{\leq}s{\leq}t}{q(\solf(s))}) \land 
  \ddiamond{\pupdate{\pumod{x}{\solf(t)}}}{p(x)}\big)}}
  {\axkey{\ddiamond{\pevolvein{\D{x}=\genDE{x}}{q(x)}}{p(x)}}}
}{\m{\D{\solf}(t)=\genDE{\solf}}}%
\cinferenceRuleStore[testd|$\didia{?}$]{test}
{\linferenceRule[equiv]
  {\ivr \land \ausfml}
  {\axkey{\ddiamond{\ptest{\ivr}}{\ausfml}}}
}{}
\cinferenceRuleStore[choiced|$\didia{\cup}$]{axiom of nondeterministic choice}
{\linferenceRule[equiv]
  {\ddiamond{\ausprg}{\ausfml} \lor \ddiamond{\busprg}{\ausfml}}
  {\axkey{\ddiamond{\pchoice{\ausprg}{\busprg}}{\ausfml}}}
}{}
\cinferenceRuleStore[composed|$\didia{{;}}$]{composition}
{\linferenceRule[equiv]
  {\ddiamond{\ausprg}{\ddiamond{\busprg}{\ausfml}}}
  {\axkey{\ddiamond{\ausprg;\busprg}{\ausfml}}}
}{}
\cinferenceRuleStore[iterated|$\didia{{}^*}$]{iteration/repeat unwind pre-fixpoint, even fixpoint}
{\linferenceRule[equiv]
  {\ausfml \lor \ddiamond{\ausprg}{\ddiamond{\prepeat{\ausprg}}{\ausfml}}}
  {\axkey{\ddiamond{\prepeat{\ausprg}}{\ausfml}}}
}{}
\cinferenceRuleStore[duald|$\didia{{^d}}$]{dual}
{\linferenceRule[equiv]
  {\lnot\ddiamond{\ausprg}{\lnot\ausfml}}
  {\axkey{\ddiamond{\pdual{\ausprg}}{\ausfml}}}
}{}
\cinferenceRuleStore[dualb|$\dibox{{^d}}$]{dual}
{\linferenceRule[equiv]
  {\lnot\dbox{\ausprg}{\lnot\ausfml}}
  {\axkey{\dbox{\pdual{\ausprg}}{\ausfml}}}
}{}
\cinferenceRuleStore[FP|FP]{iteration is least fixpoint / reflexive transitive closure RTC, equivalent to invind in the presence of M}
{\linferenceRule[formula]
  {\ausfml \lor \ddiamond{\ausprg}{\busfml} \limply \busfml}
  {\ddiamond{\prepeat{\ausprg}}{\ausfml} \limply \busfml}
}{}
\cinferenceRuleStore[invindg|ind]{inductive invariant for games}
{\linferenceRule[formula]
  {\ausfml\limply\dbox{\ausprg}{\ausfml}}
  {\ausfml\limply\dbox{\prepeat{\ausprg}}{\ausfml}}
}{}%

\dinferenceRuleStore[dchoiced|$\didia{{\cap}}$]{Demon's choice}
{
\axkey{\ddiamond{\dchoice{\ausprg}{\busprg}}{\ausfml}} \lbisubjunct \ddiamond{\ausprg}{\ausfml} \land \ddiamond{\busprg}{\ausfml}
}{}
\dinferenceRuleStore[dchoiceb|$\dibox{{\cap}}$]{Demon's choice}
{
\axkey{\dbox{\dchoice{\ausprg}{\busprg}}{\ausfml}} \lbisubjunct \dbox{\ausprg}{\ausfml} \lor \dbox{\busprg}{\ausfml}
}{}
\dinferenceRuleStore[diterateb|$\dibox{\drepeat{}}$]{Demon's repetition}
{\linferenceRule[equiv]
  {\ausfml \lor \dbox{\ausprg}{\dbox{\drepeat{\ausprg}}{\ausfml}}}
  {\axkey{\dbox{\drepeat{\ausprg}}{\ausfml}}}
}{}
\dinferenceRuleStore[diterated|$\didia{\drepeat{}}$]{Demon's repetition}
{\linferenceRule[equiv]
  {\ausfml \land \ddiamond{\ausprg}{\ddiamond{\drepeat{\ausprg}}}{\ausfml}}
  {\axkey{\ddiamond{\drepeat{\ausprg}}{\ausfml}}}
}{}
\dinferenceRuleStore[dinvindg|ind$\drepeat{}$]{inductive invariant for games}
{\linferenceRule[formula]
  {\ausfml\limply\ddiamond{\ausprg}{\ausfml}}
  {\ausfml\limply\ddiamond{\drepeat{\ausprg}}{\ausfml}}
}{}
\dinferenceRuleStore[dFP|FP$\drepeat{}$]{dual iteration is least fixpoint in Demon's winning strategy}
{\linferenceRule[formula]
  {\ausfml \lor \dbox{\ausprg}{\busfml} \limply \busfml}
  {\dbox{\drepeat{\ausprg}}{\ausfml} \limply \busfml}
}{}

\cinferenceRuleStore[US|US]{uniform substitution}
{\linferenceRule[formula]
  {\phi}
  {\applyusubst{\sigma}{\phi}}
}{}%

\cinferenceRuleStore[linequs|$\exists$lin]{linear equation uniform substitution}
{\linferenceRule[impl]
  {b\neq0}
  {\big(\lexists{x}{(b\cdot x+c=0 \land q(x))}
  \lbisubjunct {q(-c/b)}\big)}
}{}

\dinferenceRuleStore[FA|FA]{First arrival}
{\ddiamond{\prepeat{\ausprg}}{\ausfml} \limply \ausfml \lor \ddiamond{\prepeat{\ausprg}}{(\lnot\ausfml\land\ddiamond{\ausprg}{\ausfml})}
}{}
\dinferenceRuleStore[Mor|M]{monotonicity axiom}
{\ddiamond{\ausprg}{(\ausfml\lor\busfml)}
\lbisubjunct
\ddiamond{\ausprg}{\ausfml} \lor \ddiamond{\ausprg}{\busfml}
}{}
\dinferenceRuleStore[VK|VK]{vacuous possible $\dbox{}{}$}
{\linferenceRule[impl]
  {p}
  {(\dbox{\ausprg}{\ltrue}{\limply}\dbox{\ausprg}{p})}
  \qquad
}{\m{\freevars{p}\cap \boundvars{\ausprg}=\emptyset}}
\dinferenceRuleStore[R|R]{Regular}
{\linferenceRule[formula]
  {\ausfml_1\land\ausfml_2\limply\busfml}
  {\dbox{\ausprg}{\ausfml_1} \land \dbox{\ausprg}{\ausfml_2} \limply \dbox{\ausprg}{\busfml}}
}{}

\newcommand{\solf}{y}
\begin{figure}[tbhp]
  \centering
  \begin{calculuscollections}{\columnwidth}
    \begin{calculus}
      \cinferenceRuleQuote{box}
      \cinferenceRuleQuote{assignd}
      \cinferenceRuleQuote{evolved}
      \cinferenceRuleQuote{testd}
      \cinferenceRuleQuote{choiced}
      \cinferenceRuleQuote{composed}
      \cinferenceRuleQuote{iterated}
      \cinferenceRuleQuote{duald}
      \cinferenceRuleQuote{dchoiced}
      \cinferenceRuleQuote{diterated}
      \cinferenceRuleQuote{assignb}
      \cinferenceRuleQuote{evolveb}
      \cinferenceRuleQuote{testb}
      \cinferenceRuleQuote{choiceb}
      \cinferenceRuleQuote{composeb}
      \cinferenceRuleQuote{iterateb}
      \cinferenceRuleQuote{dualb}
      \cinferenceRuleQuote{dchoiceb}
      \cinferenceRuleQuote{diterateb}
    \end{calculus}

    \begin{calculus}
      \cinferenceRuleQuote{loop}
      \cinferenceRuleQuote{M}
      \cinferenceRuleQuote{Mb}
    \end{calculus}
    \begin{calculus}
      \dinferenceRuleQuote{invindg}
      \cinferenceRuleQuote{FP}
      \cinferenceRuleQuote{dFP}
    \end{calculus}
  \end{calculuscollections}
  \caption{\dGL axiomatization and derived axioms and rules}
  \label{fig:dGL-calculus}
\end{figure}

\section{Benchmarks}\label{app:benchmarks}

This appendix discusses the benchmarks, provides their full listings, and discusses the rewrite heuristics used to solve them.

\subsection{New Examples Discussion}\label{app:benchmarks-discussion}

We discuss the examples representing diverse control challenges introduced in this paper to demonstrate the flexibility of our \dGL control envelope synthesis approach.

\emph{Event-triggered ETCS} (\rref{model:event-triggered-etcs}) models a train from a case study modeling the European Train Control System (ETCS) \cite{DBLP:conf/icfem/PlatzerQ09}, but with a key modification: the train is modeled event triggered instead of time triggered.
The control envelope computed by our implementation finds the winning subregion for the overall loop to be $p < e \land (2Ae+v^2 < 2Ap \lor v \leq 0) \land (A \leq 0 \lor v > 0) \lor p < e \land (2(-B)e+v^2 < 2(-B)p \lor v \leq 0) \land (-B \leq 0 \lor v > 0)$.
After simplification, the key term ends up being $e-p + v^2/2B > 0$, i.e., there is still enough time for the train to come to a stop before the end of motion authority $e$ if it starts braking now.
Hence, the control envelope has a strategy to keep the train safe whenever it is possible for the train to be safe by braking.

\emph{Surgical Robot} (\rref{model:surgical-robot}) is based on a case study from the literature \cite{DBLP:conf/hybrid/KouskoulasRPK13} modeling force feedback in surgical robots.
The control envelope computed by our implementation finds the winning subregion for the overall loop to be $(q_x-p_x)n_x+(q_y-p_y)n_y>=0$, i.e., so long as the robot is not already outside the fixture, the subvalue map has a strategy for safe control.

\emph{Infinite Track} (\rref{model:looping-track}) is an example of infinite horizon switching.
A vehicle in a looping track can choose to move in any of the four coordinate directions, but must remain moving at all times.
To remain safe for arbitrarily long time, this vehicle must keep switching directions at the right moments, and make \emph{infinitely} many switches.
Our approach synthesizes a control envelope that indicates when it is safe to switch to any given direction.
CESAR does not solve this problem as it does not handle infinite control strategies.
The control envelope computed by our implementation finds the winning subregion for the overall loop to be $2R > x \land x > -2R \land 2R > y \land y > -2R \land  (x > R \lor x < -R \lor y > R \lor y < -R)$, i.e., the subvalue map has a safe strategy for the vehicle at every point in the track.

\emph{Reach-avoid Robot} (\rref{model:reach-avoid}) demonstrates an example of envelope synthesis for reach-avoid problems, where an agent must \emph{reach} an objective while \emph{avoiding} unsafe situations.
Such problems can model the simultaneous requirement of safety and liveness.
In this example, the robot is safe in the square $[-2R, 2R] \times [-2R, 2R]$ (enforced in domain constraint of \rref{line:reach-avoid-plant}) and must reach the target region $[R, 2R] \times [R, 2R]$ (on \rref{line:reach-avoid-safety}).
The robot can either travel upward ($v_x=0, v_y=V$) or leftward ($v_x=-V, v_y=0$).
The synthesized subvalue map identifies that it is safely possible to reach the target region starting in $[R, 2R] \times [-2R, 2R]$.

\emph{Highway Driving} (\rref{model:highway-driving}) is a time-triggered problem based on the core control challenge from a case study on highway driving \cite{DBLP:conf/fm/LoosPN11}.
Two cars, are driving on a highway.
The lead car, with velocity $v_l$ and acceleration $a_l$ is driven by some external agent with the physical limitation that the car's accelerates is bounded above by $A$ and its braking is bounded below by $-B$ (\rref{line:highway-lead}).
The following car, with velocity $v_f$ and acceleration $a_f$, is driven by a controller that must follow the lead car while avoiding collisions (\rref{line:highway-follow}).
The controllers revise decisions in a time triggered fashion, with maximum latency $T$.
Regardless of the lead car's behavior, the following car must always be able to stop in time to avoid collision (\rref{line:highway-safety}).
This problem does not fit CESAR's template, which does not support an adversarial agent like the lead car.
In the computed control envelope, the winning region for the overall game is $p_f < p_l \land v_f \leq v_l$, i.e. the subvalue map has a strategy to avoid the collisions whenever that velocity of the controlled car is less than that of the lead car and hasn't already collided with it.

The descriptions of the CESAR benchmarks can be found in \cite{DBLP:conf/tacas/KabraLMP24}.
The envelopes that we synthesize for these benchmarks are the same as those computed by CESAR.

\subsection{Quadcopter}\label{app:quadcopter}

The procedurally generated Quadcopter benchmark suite is based on the following template with the addition of randomly generated values for velocity and obstacles.

\begin{model}[h]
  \setcounter{modelline}{0}
  \caption{Quadcopter Suite Template}
  \label{model:quadcopter-template}
  \begin{align*}
    \text{\assumptions} &\,\big| \mline{line:quadcopter-assumptions} x > 0 \land V = \texttt{<velocity>} \limply \big\langle \\
    \text{\kwd{direction}} &\,\big| \mline{line:quadcopter-direction} \Big( \left( v_y := 1 \right) \cup \left( v_y := -1 \right) \big) \seq \\
    \text{\kwd{plant}} &\,\big| \mline{line:quadcopter-plant} \phantom{\Big(}t := 0; \left\{ x' = V, y' = v_y, t' = 1\ \&\ \texttt{<avoid obstacles>} \land y\geq 0 \right\} \seq \\
    \text{\safe} &\,\big| \mline{line:quadcopter-safety} \phantom{\Big(} \ptest{t \geq 1} \Big)^\ast \big\rangle\, x > 20
  \end{align*}
\end{model}

The Quadcopter chooses whether to go up or down and runs continuous dynamics.
The catch is that the quadcopter has a slow processor and cannot make a new choice in less than 1 second (\rref{line:quadcopter-safety}).
However, it can choose to run the dynamics for any time of its choice that is greater than 1 second (\rref{line:quadcopter-plant} has an Angel ODE).
It can repeat this process of choosing and then sticking with a choice any number of times (Angel loop in \rref{line:quadcopter-safety}).
While it flies, it must avoid obstacles at all times.
It must also not crash into the floor at $y>0$.
It must eventually reach the target region $x > 20$.
CESAR cannot solve this problem since it lies outside the template.
NYCS solves all instances.

An example of the formula that can be generated for \texttt{<avoid obstacles>} obstacles is $(x<2 \lor 2<x \lor y<0 \lor 3<y) \land (x<5 \lor 6<x \lor y<0 \lor 3<y) \land (x<5 \lor 8<x \lor y<9 \lor 9<y) \land (x<8 \lor 10<x \lor y<7 \lor 7<y)$.
This formula has four obstacles.
The first one is a line segment from $(2,0)$ to $(2,3)$.
The second is a rectangle from $(5,0)$ to $(6,3)$.
The third is a rectangle from $(5,9)$ to $(8,9)$.
The fourth is a rectangle from $(8,7)$ to $(10,7)$.
An example of a generated number that fills \texttt{<velocity>} is $1$.
The procedural generation parameters allow up to 5 obstacles with widths and heights of at most 3 distributed in the region $[0,10]\times[0,10]$.
Velocity is an integer between 1 and 5.

As expected, the implementation solves problems with fewer obstacles quickly, with the fastest solutions for one obstacle completing in a couple of seconds.
However, the time taken increases with the number of obstacles, and more importantly, when obstacles are placed in such a way that they interact to make the solution space more complex.
\rref{tab:quadcopter} summarizes the generated problems and their solving outcomes.

\begin{table}
  \caption{Quadcopter benchmark suite generated parameters and solving outcomes. In the result column, $\checkmark$ indicates that a nonempty control envelope was computed, while \ding{55} indicates that algorithm timed out after 20 minutes, failing to compute an envelope.}
  \label{tab:quadcopter}
  \rowcolors{2}{white}{gray!25}
  \begin{tabularx}{\textwidth}{c|c|X|c}
    \textbf{No.} & \textbf{Velocity} & \textbf{Obstacles} & \textbf{Result} \\
    \hline
    1 & 5 & $x<9 \lor 11<x \lor y<1 \lor 3<y$ & $\checkmark$ \\
    2 & 4 & $x<9 \lor 9<x \lor y<3 \lor 3<y$ & $\checkmark$ \\
    3 & 1 & $(x<2 \lor 2<x \lor y<0 \lor 3<y)$ $\land$ $(x<5 \lor 6<x \lor y<0 \lor 3<y)$ $\land$ $(x<5 \lor 8<x \lor y<9 \lor 9<y)$ $\land$ $(x<8 \lor 10<x \lor y<7 \lor 7<y)$ & \ding{55} \\
    4 & 1 & $x<3 \lor 6<x \lor y<9 \lor 10<y$ & $\checkmark$ \\
    5 & 2 & $x<0 \lor 1<x \lor y<2 \lor 4<y$ & $\checkmark$ \\
    6 & 2 & $(x<7 \lor 10<x \lor y<8 \lor 8<y)$ $\land$ $(x<3 \lor 5<x \lor y<8 \lor 9<y)$ & $\checkmark$ \\
    7 & 4 & $(x<7 \lor 8<x \lor y<1 \lor 3<y)$ $\land$ $(x<5 \lor 5<x \lor y<6 \lor 8<y)$ & \ding{55} \\
    8 & 1 & $(x<3 \lor 3<x \lor y<8 \lor 11<y)$ $\land$ $(x<0 \lor 0<x \lor y<1 \lor 1<y)$ & $\checkmark$ \\
    9 & 5 & $(x<1 \lor 3<x \lor y<7 \lor 9<y)$ $\land$ $(x<4 \lor 4<x \lor y<0 \lor 3<y)$ & \ding{55} \\
    10 & 1 & $\top$ & $\checkmark$ \\
    11 & 2 & $x<3 \lor 4<x \lor y<6 \lor 8<y$ & $\checkmark$ \\
    12 & 2 & $(x<10 \lor 13<x \lor y<10 \lor 11<y)$ $\land$ $(x<5 \lor 5<x \lor y<7 \lor 8<y)$ $\land$ $(x<7 \lor 10<x \lor y<10 \lor 12<y)$ $\land$ $(x<3 \lor 4<x \lor y<7 \lor 8<y)$ $\land$ $(x<2 \lor 2<x \lor y<2 \lor 4<y)$ & \ding{55} \\
    13 & 3 & $(x<2 \lor 2<x \lor y<7 \lor 7<y)$ $\land$ $(x<8 \lor 11<x \lor y<7 \lor 8<y)$ $\land$ $(x<1 \lor 2<x \lor y<9 \lor 12<y)$ & \ding{55} \\
    14 & 4 & $x<9 \lor 11<x \lor y<4 \lor 7<y$ $\land$ $x<3 \lor 3<x \lor y<8 \lor 8<y$ $\land$ $x<9 \lor 11<x \lor y<8 \lor 11<y$ $\land$ $x<8 \lor 11<x \lor y<4 \lor 7<y$ $\land$ $x<7 \lor 8<x \lor y<9 \lor 10<y$ & $\checkmark$ \\
    15 & 3 & $(x<1 \lor 4<x \lor y<9 \lor 10<y)$ $\land$ $(x<7 \lor 10<x \lor y<0 \lor 3<y)$ $\land$ $(x<8 \lor 10<x \lor y<8 \lor 11<y)$ $\land$ $(x<3 \lor 6<x \lor y<3 \lor 4<y)$ & \ding{55} \\
    16 & 4 & $(x<1 \lor 1<x \lor y<8 \lor 11<y)$ $\land$ $(x<8 \lor 10<x \lor y<6 \lor 9<y)$ $\land$ $(x<5 \lor 8<x \lor y<8 \lor 11<y)$ $\land$ $(x<6 \lor 6<x \lor y<3 \lor 4<y)$ $\land$ $(x<10 \lor 13<x \lor y<10 \lor 13<y)$ & \ding{55} \\
    17 & 4 & $\top$ & $\checkmark$ \\
    18 & 3 & $(x<5 \lor 5<x \lor y<4 \lor 4<y)$ $\land$ $(x<1 \lor 2<x \lor y<5 \lor 8<y)$ $\land$ $(x<8 \lor 11<x \lor y<9 \lor 12<y)$ $\land$ $(x<2 \lor 2<x \lor y<1 \lor 3<y)$ $\land$ $(x<6 \lor 7<x \lor y<1 \lor 2<y)$ & \ding{55} \\
    19 & 3 & $(x<4 \lor 6<x \lor y<0 \lor 0<y)$ $\land$ $(x<8 \lor 8<x \lor y<9 \lor 10<y)$ $\land$ $(x<2 \lor 2<x \lor y<2 \lor 2<y)$ $\land$ $(x<3 \lor 3<x \lor y<0 \lor 2<y)$ $\land$ $(x<6 \lor 9<x \lor y<8 \lor 11<y)$ & $\checkmark$ \\
    20 & 2 & $(x<2 \lor 3<x \lor y<9 \lor 11<y)$ $\land$ $(x<10 \lor 13<x \lor y<4 \lor 5<y)$ $\land$ $(x<9 \lor 11<x \lor y<8 \lor 11<y)$ & $\checkmark$ \\
    21 & 3 & $(x<4 \lor 4<x \lor y<5 \lor 5<y )$ $\land$ $( x<5 \lor 6<x \lor y<7 \lor 9<y )$ $\land$ $( x<10 \lor 13<x \lor y<3 \lor 4<y )$ & $\checkmark$ \\
    22 & 4 & $(x<6 \lor 8<x \lor y<3 \lor 4<y)$ & $\checkmark$ \\
    23 & 3 & $(x<5 \lor 8<x \lor y<8 \lor 9<y)$ $\land$ $(x<0 \lor 3<x \lor y<3 \lor 6<y)$ $\land$ $(x<6 \lor 8<x \lor y<6 \lor 9<y)$ & $\checkmark$ \\
    24 & 4 & $(x<3 \lor 3<x \lor y<7 \lor 10<y)$ $\land$ $(x<9 \lor 9<x \lor y<1 \lor 4<y)$ $\land$ $(x<3 \lor 5<x \lor y<10 \lor 12<y)$ $\land$ $(x<4 \lor 4<x \lor y<6 \lor 9<y)$ & $\checkmark$ \\
    25 & 1 & $(x<6 \lor 6<x \lor y<9 \lor 10<y)$ & $\checkmark$ \\
  \end{tabularx}
\end{table}

As a demonstrative exmaple, we discuss the control envelope computed for problem the 23 in the benchmark suite.
\rref{fig:quadcopter-23} shows a Mathematica plot of the winning subregion for the overall loop of this example.
The problem has three obstacles, two of which overlap.
It is naturally unsafe to start at an obstacle. But it is further unsafe to start right before an obstacle because collision with the obstacle becomes unavoidable.
Notice a triangle of unsafe space right after the lowest obstacle.
This unsafe tringle is an example how obstacles can interact to create unsafe regions. In this triangle, it is neither safe to go upwards because of collision with the upper obstacle, nor is it safe to go downwards because of collision with the floor.
This envelope is computed in 68 seconds.

\begin{figure}
  \centering
  \includegraphics[width=0.8\textwidth]{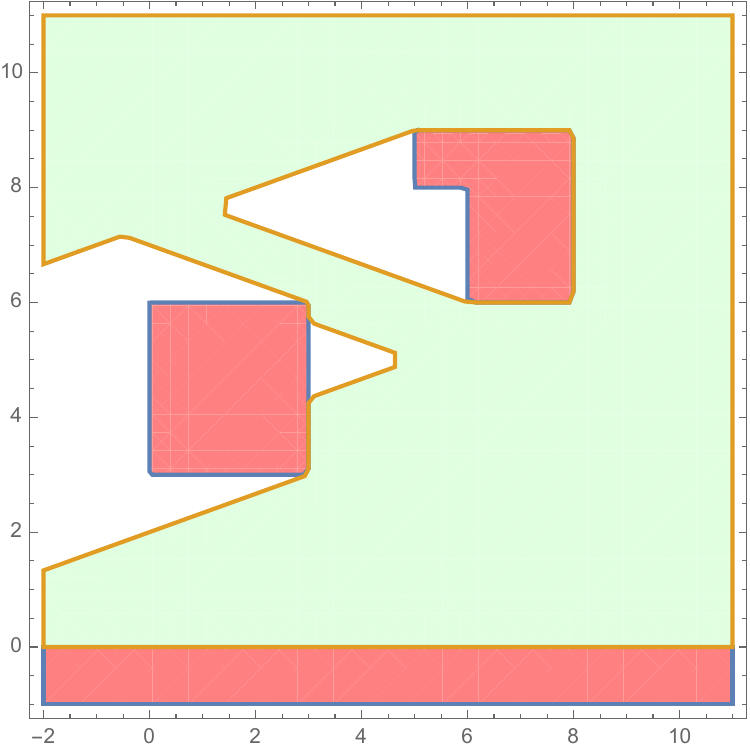}
  \caption{Computed control envelope for quadcopter problem 23. Red regions are unsafe (obstacles or floor). Green regions are safe starting points per the computed control envelope. The $x$ and $y$ coordinates represent the quadcopter's position.}
  \label{fig:quadcopter-23}
\end{figure}

\subsection{Full Benchmark Listings}\label{app:benchmarks-listings}

We list the \dGL formulas for all the new examples in \rref{fig:new-examples}.

\begin{model}[h]
  \setcounter{modelline}{0}
  \caption{Infinite Track}
  \label{model:looping-track}
  \begin{align*}
    \text{\assumptions} &\,\big| \mline{line:track-init} T > 0 \land V > 0 \land R > 0 \land 2V T < R \limply \big\langle \\
    \text{\kwd{east/west}} &\,\big| \mline{line:track-ew-dir} \Big( \big( \left( v_x := V; v_y := 0 \right) \cup \left( v_x := -V; v_y := 0 \right) \cup \\
    \text{\kwd{north/south}} &\,\big| \mline{line:track-ns-dir} \phantom{\Big(\big(} \left( v_x := 0; v_y := -V \right) \cup \left( v_x := 0; v_y := V \right) \big) \seq \\
    \text{\kwd{plant}} &\,\big| \mline{line:track-plant} \phantom{\Big(} t := 0; \left\{ x' = v_x, y' = v_y, t' = 1\ \&\ t \leq T \right\}^d \Big)^\times \\
    \text{\safe} &\,\big| \mline{line:track-safety} \big\rangle \left( 2R > x \land x > -2R \land 2R > y \land y > -2R \land  \right. \\
    &\,\big| \mline{line:track-safety-2} \left. \quad (x > R \lor x < -R \lor y > R \lor y < -R) \right)
  \end{align*}
\end{model}

\begin{model}[h]
  \setcounter{modelline}{0}
  \caption{Reach-avoid Robot}
  \label{model:reach-avoid}
  \begin{align*}
    \text{\assumptions} &\,\big| \mline{line:reach-avoid-init} V > 0 \land R > 0 \limply \big\langle \\
    \text{\kwd{direction}} &\,\big| \mline{line:reach-avoid-choices} \Big( \big( \left( v_x := -V; v_y := 0 \right) \cup \left( v_x := 0; v_y := V \right)\big) \seq \\
    \text{\kwd{plant}} &\,\big| \mline{line:reach-avoid-plant} \left\{ x' = v_x, y' = v_y \& \left( 2R \geq x \land x \geq -2R \land 2R \geq y \land y \geq -2R \right) \right\} \Big)^\ast \big\rangle \\
    \text{\safe} &\,\big| \mline{line:reach-avoid-safety} R \leq x \land x \leq 2R \land R \leq y \land y \leq 2R
  \end{align*}
\end{model}

\begin{model}[h]
  \setcounter{modelline}{0}
  \caption{Highway Driving}
  \label{model:highway-driving}
  \begin{align*}
    \text{\assumptions} &\,\big| \mline{line:highway-assumptions} A > 0 \land B > 0 \land T>0 \land v_f>0 \land v_l>0 \limply \big\langle \\
    \text{\kwd{lead car}} &\,\big| \mline{line:highway-lead} \Big(  a_l := \otimes; ! {-B < a_l \land a_l < A} \seq \\
    \text{\kwd{controlled car}} &\,\big| \mline{line:highway-follow} \phantom{\Big(}  a_f := *; \ptest{-B < a_f \land a_f < A} \seq \\
    \text{\kwd{plant}} &\,\big| \mline{line:highway-plant} \phantom{\Big(} t := 0; \left\{ p_f' = v_f, v_f' = a_f, p_l' = v_l, v_l' = a_l, t' = 1\ \&\ t \leq T \right\}^d \Big)^\times \big\rangle \\
    \text{\safe} &\,\big| \mline{line:highway-safety} p_f < p_l
  \end{align*}
\end{model}

\rref{fig:cesar-benchmarks} compares the performance of our approach to CESAR on the CESAR benchmark suite.
The performance is similar at most benchmarks with only a few seconds of overhead.
An exception is Intersection, where CESAR is significantly more expensive.
The observed differences are likely a consequence of different simplification heuristics rather than any fundamental algorithmic reasons.
Simplification is itself an expensive operation, but if done at the right juncture, can make the even more expensive quantifier eliminations that occur later in symbolic execution cheaper.
Thus simplification has a large influence on performance.
Further code tuning and heuristics optimizations can likely reduce the overheads.
LLM assistance was used to generate some of the boilerplate code of the implementation.

\subsection{Rewrite Heuristics}\label{app:rewrite-heuristics}

Our implementation uses rewriting heuristics to simplify problems into shapes where they can either be symbolically executed or solved using other invariant generation heuristics.
\rref{tab:rewrite-heuristics} lists the rewrite heuristics and which evaluation problem each is used in.

\begin{table}
  \caption{Rewrite heuristics used in the evaluation problems.}
  \label{tab:rewrite-heuristics}
  \begin{tabularx}{\textwidth}{c|X}
    \hline
    \textbf{Heuristc} & \textbf{Problems} \\
    \hline
    Extremal assignment rewrite & Surgical robot, Highway driving \\
    Adversarial one-shot rewrite & Event-triggered ETCS \\
    Loop unrolling & Reach-avoid robot, Quadcopter \\
    CESAR rewrite & Highway driving, CESAR benchmarks \\
    \hline
  \end{tabularx}
\end{table}

We discuss how the rewrite heuristics work.
\begin{itemize}
  \item \emph{Extremal assignment rewrite.} This rewrite replaces a guarded assignment (e.g., $x := \otimes; \ptest{x > 0}$) with an assignment to an extremal value identified from the guard (e.g., $x := 0$).
  The rewritten program is now much easier to reason about, since a free assignment otherwise requiring a quantifier to reason about has been replaced by a constant expression.
  \item \emph{Adversarial one-shot rewrite.} This rewrite is similar in spirit to the \emph{one-shot unrolling} of CESAR \cite{DBLP:conf/tacas/KabraLMP24}.
  It takes an Angel loop ending with an Angel ODE (e.g., the inner loop of \rref{model:event-triggered-etcs}) and replaces it with a single iteration of the loop but with the ODE changed to a Demon ODE guarded at the end by an Angel test for the domain constraint of the original ODE.
  The intuition is that the rewritten program emulates running any number of iterations of the original loop with the ODE run for any amount of time, so a precondition for the harder rewritten program is likely to also be a precondition for the original program.
  \item \emph{Loop unrolling.} This heuristic unrolls loops a fixed number of times. For example, $\langle \alpha^\ast \rangle \phi$ is rewritten as $\langle \alpha \rangle \phi$ or $\langle \alpha \rangle \langle \alpha \rangle \phi$ or $\langle \alpha \rangle \langle \alpha \rangle \langle \alpha \rangle \phi$ and so on.
  \item \emph{CESAR rewrite.} This is the one-shot refinement and multi-shot unrolling form CESAR \cite{DBLP:conf/tacas/KabraLMP24} where a time-triggered control loop is replaced by running the ODE for an unbounded amount of time.
  Because of the of our framework's retrospective checks, the requirement that the original program should have an idempotence property (``action permanence'') which was essential to the soundness of CESAR is relaxed.
\end{itemize}

\subsection{Refinements}\label{app:refinements}

While synthesizing subvalue maps for loops, \rref{alg:solving} uses checks to test invariant candidates before accepting them.
It is possible to use \emph{refinements} to perform these checks.
A game $\beta$ is an \emph{Angelic refinement} of game $\alpha$ when for any formula $\psi$, $\models \langle \beta \rangle \psi \limply \langle \alpha \rangle \psi$.
Some rewrites result in refinements, making the invariant candidate generated correct by construction.
For instance, for the Angelic loop in Reach-avoid robot (\rref{model:reach-avoid}), let $\alpha$ be the loop body (\rref{line:reach-avoid-choices} and \rref{line:reach-avoid-plant}), and $\phi$ the postcondition (\rref{line:reach-avoid-safety}).
The inductiveness check is $\invexpr \limply \langle \aproj{(\namedGameSpace{\alpha}{a}^*)}{S}{\phi} \rangle \phi$.
If $\invexpr$ is computed by rewriting loop $\langle \alpha^\ast \rangle \phi$ as $\langle \alpha \rangle \phi$ (loop unrolling rewrite) and then setting $\invexpr = \exec(\langle \alpha \rangle \phi)$ while writing all the intermediate symbolic execution steps of $\exec$ into the subvalue map $S$, then $\invexpr$ will pass the check by construction.
The reason is that $\langle \aproj{\namedGame{\alpha}{a}}{S}{\phi} \rangle \phi$ is a refinement of $\langle \aproj{(\namedGameSpace{\alpha}{a})^*}{S}{\phi} \rangle \phi$.
Since $\models \invexpr \limply \langle \aproj{\namedGame{\alpha}{a}}{S}{\phi} \rangle \phi$ by construction, it follows that $\models \invexpr \limply \langle \aproj{(\namedGameSpace{\alpha}{a})^*}{S}{\phi} \rangle \phi$.
We exploit such refinement properties in our implementation to soundly reduce the number of expensive checks performed.

\section{Proofs}
\label{app:proofs}

We prove the correctness of the theorems in the paper.
Notation $\restrict{S}{U}$ is the restriction of the subvalue map $S$ to the set of subgames $U \cup \{\finalNode\}$.

\begin{definition}[Weak ordering]
  \label{def:weak-ordering}
  We define a weaker ordering relation $\succsim$ on inductive subvalue maps that compares formulas at each subgame separately.
  For two inductive subvalue maps $S_1$ and $S_2$ for the game $\namedGame{\alpha}{a}$, $S_1$ is at least as good as $S_2$, written $S_1 \succsim S_2$, iff for each subgame $b$ in $\nodes{\namedGame{\alpha}{a}}$, $\models S_2(b) \limply S_1(b)$ and $\models S_2(\finalNode) \limply S_1(\finalNode)$.
\end{definition}

\begin{lemma}[Ordering subsumption]
  \label{lem:subsumption}
  If $S \succsim S'$, then $S \supseteq S'$.
\end{lemma}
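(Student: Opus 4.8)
The lemma claims that the weaker ordering $\succsim$ (Definition~\ref{def:weak-ordering}), which compares subvalues pointwise by requiring $\models S'(b) \limply S(b)$ at every subgame $b$ together with $\models S'(\finalNode)\limply S(\finalNode)$, implies the stronger permissiveness ordering $\sqsupseteq$ (Definition~\ref{def:ordering}), which only demands the implications $S'(b)\limply S(b)$ to hold \emph{among states reachable} at $b$ while following $S'$. Intuitively this must hold: if an implication is valid everywhere, then it is in particular valid on the restricted reachable set, so the weaker, global hypothesis should subsume the local, context-sensitive conclusion.

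**Plan of proof.** The plan is to unfold both definitions and discharge each of the two conjuncts in Definition~\ref{def:ordering} from the pointwise validity hypotheses of Definition~\ref{def:weak-ordering}. Fix an arbitrary subgame $b$ of $\namedGame{\alpha}{a}$. From $S \succsim S'$ I have $\models S'(b)\limply S(b)$ and $\models S'(\finalNode)\limply S(\finalNode)$, i.e.\ these implications are true in \emph{all} states. First I would establish the prefix condition $\models \ddiamond{\asubst{\prefix{b}{\namedGame{\alpha}{a}}}{S'}{S'(b)}}{(S'(b) \limply S(b))}$. Since $\models S'(b)\limply S(b)$ is valid, the postcondition $S'(b)\limply S(b)$ holds in every state; by the \dGL monotonicity/generalization principle (rule \irref{M}, or equivalently generalization \irref{G} lifted through the diamond), a diamond formula $\ddiamond{\beta}{\psi}$ with a valid postcondition $\psi$ reduces to the liveness requirement $\ddiamond{\beta}{\true}$, and more directly, $\models\psi$ entails $\models \ddiamond{\beta}{\psi}$ provided the game $\beta$ admits at least some completing play from each relevant state. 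The cleanest route is: because the postcondition is valid, I can replace it by $\true$ using \irref{M}, reducing the goal to showing Angel can complete the universally projected prefix game — which holds by Lemma~\ref{lem:subvalue-universal} (the universal projection of an inductive subvalue map is always winnable, since Demon is forced to comply and Angel's tests are always passable).

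**The second conjunct and the main obstacle.** The terminal condition $\models \ddiamond{\asubst{\namedGame{\alpha}{a}}{S'}{S'(b)}}{(S'(\finalNode) \limply S(\finalNode))}$ is handled identically: the postcondition $S'(\finalNode)\limply S(\finalNode)$ is valid by hypothesis, so by \irref{M} I reduce to $\ddiamond{\asubst{\namedGame{\alpha}{a}}{S'}{S'(b)}}{\true}$, which again follows from the fact that the universal projection of an inductive subvalue map can always be completed (Lemma~\ref{lem:subvalue-universal}). I would then note that the Demonic case is entirely symmetric, replacing diamonds by boxes and appealing to the box-monotonicity rule \irref{Mb} together with the Demonic analog of the universal-projection winnability lemma. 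The main subtlety — and the step I would be most careful about — is the implicit well-definedness assumption: Definition~\ref{def:ordering} presupposes that $S'$ is an \emph{inductive} subvalue map, so that $\asubst{\cdots}{S'}{\cdots}$ is guaranteed to have a completing play. If $S'$ were merely a subvalue map (not inductive), the reduction of $\ddiamond{\beta}{\psi}$ with valid $\psi$ to $\true$ could fail because Angel might get stuck, so the argument genuinely relies on inductiveness through Lemma~\ref{lem:subvalue-universal}. I would make this dependence explicit rather than treating the monotonicity step as purely formal.

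Overall this is a short argument: the whole content is "valid postcondition $+$ the projected game is always completable (by inductiveness) $\Rightarrow$ the diamond/box formula holds," applied uniformly to both conjuncts of Definition~\ref{def:ordering} and to both the Angelic and Demonic cases via \irref{M} and \irref{Mb} respectively.
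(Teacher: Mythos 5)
Your high-level skeleton is the same as the paper's: fix a subgame $b$, take the pointwise valid implications $\models S'(b)\limply S(b)$ and $\models S'(\finalNode)\limply S(\finalNode)$ given by $\succsim$, and lift them through the modalities occurring in the conditions of \rref{def:ordering}. Indeed, the paper's entire proof consists of this lifting step, performed by a direct appeal to the G\"odel generalization rule \irref{G} for both the Angelic (diamond) and the Demonic (box) conditions, with no mention of completability, inductiveness, or \rref{lem:subvalue-universal}. For the Demonic case your argument agrees with the paper's, except that your appeal to a ``Demonic analog of the universal-projection winnability lemma'' is superfluous: $\models\psi$ already yields $\models\dbox{\beta}{\psi}$ by \irref{G} alone.

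The genuine gap is in the Angelic case, exactly the step you flagged as the main subtlety. Reducing $\ddiamond{\beta}{\psi}$ (for valid $\psi$) via \irref{M} to the completability claim $\models\ddiamond{\beta}{\true}$, where $\beta = \asubst{\prefix{b}{\namedGame{\alpha}{a}}}{S'}{S'(b)}$, is a fine reduction; but your justification of that claim does not hold up. \rref{lem:subvalue-universal} is conditional---it has the form $\ddiamond{\alpha}{\phi}\models\ddiamond{\asubst{\namedGame{\alpha}{a}}{S}{}}{\phi}$, and \rref{lem:inductive-subvalue-universal} likewise requires starting in $S(a)$---whereas the ordering conditions demand truth in \emph{all} states, so neither lemma yields the unconditional statement you need. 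Worse, the unconditional statement is false in general: universal projection leaves the original game's Angelic tests untouched, so if the prefix game is, say, $\ptest{x>0}$, then $\ddiamond{\beta}{\true}$ is equivalent to $x>0$, which is not valid; ``Angel's tests are always passable'' is not something the projection guarantees. Finally, even to invoke the conditional lemma you would first have to argue that $S'$, with terminal condition $S'(b)$, is a compatible subvalue map \emph{of the prefix game} $\prefix{b}{\namedGame{\alpha}{a}}$, which you never establish. So the detour through completability collapses, and the inductiveness of $S'$---which you present as the load-bearing ingredient---plays no role in the paper's actual argument, which treats the lifting as a purely formal generalization step for both modalities.
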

\begin{proof}
  By definition of $\succsim$, for each subgame $b \in \nodes{\alpha}\cup\{\finalNode\}$, $\models S'(b) \limply S(b)$.
  Using the \dGL G\"odel generalization rule \irref{G}, we infer $\models \langle \asubst{(\prefix{b}{\namedGame{\alpha}{a}})}{S'}{\phi} \rangle (S(b) \limply S'(b))$ showing that $S \supseteq S'$ when interpreted as an Angelic strategy.
  When interpreted as a Demonic strategy, the same argument holds: using the \dGL G\"odel generalization rule \irref{G} we can infer that $\models [\dsubst{(\prefix{b}{\namedGame{\alpha}{a}})}{S'}{\phi}] (S'(b) \limply S(b))$.
\end{proof}

\begin{lemma}[Subvalue Map Projection is a Refinement]
  \label{lem:subvalue-projection-refinement}
  The projection of an Angelic subvalue map onto a game $\alpha$ is an Angelic refinement of the game, i.e., 
  \begin{equation*}
    \models\langle\aproj{\namedGame{\alpha}{a}}{S}{}\rangle \psi \limply \langle \alpha \rangle \psi.
  \end{equation*}
  Dually, the projection of a Demonic subvalue map is a Demonic refinement.
\end{lemma}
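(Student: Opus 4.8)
The plan is to prove the statement by structural induction on the named game $\namedGame{\alpha}{a}$, with inductive hypothesis that $\models \ddiamond{\aproj{\namedGame{\beta}{b}}{S'}{}}{\chi} \limply \ddiamond{\beta}{\chi}$ holds for every strict subgame $\namedGame{\beta}{b}$, \emph{every} Angelic subvalue map $S'$, and \emph{every} postcondition $\chi$. Quantifying the hypothesis over $S'$ and $\chi$ is essential, since \rref{def:subvalue-projection} recurses with modified maps ($\modEnd{S}{S(a)}$ at loops, $\modEnd{S}{S(d)}$ in sequential composition) and, in the sequential case, with $\ddiamond{\delta}{\psi}$ as the intermediate postcondition. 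The driving observation is that existential projection only ever \emph{inserts Angelic tests} and leaves Demon-controlled constructs syntactically unchanged, so every transformation restricts Angel. Concretely, a guarding test contributes $\ddiamond{\ptest{Q} \seq \gamma}{\chi} \lequiv Q \land \ddiamond{\gamma}{\chi}$ by \irref{testd} and \irref{composed}, and the extra conjunct is weakened away by $Q \land \ddiamond{\gamma}{\chi} \limply \ddiamond{\gamma}{\chi}$.

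First I would dispatch the atomic case (projection is the identity, so the implication is reflexive) and the loop-free compositional cases by direct axiom pushing. For the Angelic assignment and ODE, \irref{composed} and \irref{testd} rewrite $\ddiamond{\alpha \seq \ptest{S(\finalNode)}}{\psi}$ as $\ddiamond{\alpha}{(S(\finalNode) \land \psi)}$, and \irref{M} applied to $S(\finalNode) \land \psi \limply \psi$ concludes. For Angelic choice, \irref{choiced} splits into the guarded branches $S(g) \land \ddiamond{\aproj{\namedGame{\gamma}{g}}{S}{}}{\psi}$ and $S(d) \land \ddiamond{\aproj{\namedGame{\delta}{d}}{S}{}}{\psi}$; dropping the guards and applying the inductive hypothesis to each branch yields $\ddiamond{\gamma}{\psi} \lor \ddiamond{\delta}{\psi}$, which is $\ddiamond{\gamma \cup \delta}{\psi}$ again by \irref{choiced}. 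Demonic choice is symmetric via \irref{dchoiced} (a conjunction), and sequential composition chains the hypothesis through \irref{composed} and \irref{M}: apply the hypothesis for $\delta$ under postcondition $\psi$, then for $\gamma$ under the intermediate postcondition $\ddiamond{\delta}{\psi}$.

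The main obstacle will be lifting body-level refinement through the two repetition operators, where the plain inductive hypothesis no longer suffices and a fixpoint rule is required. For the Demonic loop $\namedGame{(\namedGame{\gamma}{g})^\times}{a}$, whose projection is $(\aproj{\namedGame{\gamma}{g}}{\modEnd{S}{S(a)}}{})^\times$, I would set $\phi \define \ddiamond{(\aproj{\namedGame{\gamma}{g}}{\modEnd{S}{S(a)}}{})^\times}{\psi}$; unfolding with \irref{diterated} gives both $\phi \limply \psi$ and $\phi \limply \ddiamond{\aproj{\namedGame{\gamma}{g}}{\modEnd{S}{S(a)}}{}}{\phi}$, and the inductive hypothesis turns the latter into $\phi \limply \ddiamond{\gamma}{\phi}$. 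The Demon-loop induction rule \irref{dinvindg} then yields $\phi \limply \ddiamond{\gamma^\times}{\phi}$, and \irref{M} with $\phi \limply \psi$ closes the case. For the Angelic loop $\namedGame{(\namedGame{\gamma}{g})^\ast}{a}$, I would first absorb the terminal test with \irref{composed} and \irref{testd} to reduce to $\ddiamond{(\ptest{S(g)} \seq \aproj{\namedGame{\gamma}{g}}{\modEnd{S}{S(a)}}{})^\ast}{\psi'} \limply \ddiamond{\gamma^\ast}{\psi'}$ for $\psi' \define S(\finalNode) \land \psi$, recovering $\psi$ afterward by \irref{M}. Writing $\beta$ for the guarded projected body, the test-weakening step together with the inductive hypothesis show $\ddiamond{\beta}{\chi} \limply \ddiamond{\gamma}{\chi}$ for all $\chi$; I would then invoke the least-fixpoint rule \irref{FP} on $\beta^\ast$ with target $\ddiamond{\gamma^\ast}{\psi'}$, discharging its premise $\psi' \lor \ddiamond{\beta}{\ddiamond{\gamma^\ast}{\psi'}} \limply \ddiamond{\gamma^\ast}{\psi'}$ by loop unfolding (\irref{iterated}) combined with the body refinement. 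The Demonic-subvalue-map dual follows by the symmetric argument with box modalities, \irref{dchoiceb}, and the corresponding box fixpoint rules.
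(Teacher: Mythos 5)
Your proposal is correct and follows essentially the same route as the paper's proof: structural induction (with the hypothesis implicitly generalized over the modified maps $\modEnd{S}{\cdot}$ and postconditions), axiom pushing with \irref{testd}, \irref{composed}, \irref{choiced}, \irref{dchoiced}, and \irref{M} for the loop-free cases, the \irref{FP} rule with the unfolding axiom \irref{iterated} for the Angelic loop, and a fixpoint/invariant argument (your \irref{diterated} + \irref{dinvindg} + \irref{M} is exactly what the paper's \irref{loop}-rule application with invariant $\ddiamond{(\aproj{\namedGame{\gamma}{g}}{\modEnd{S}{S(a)}}{})^\times}{\psi}$ amounts to) for the Demonic loop. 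The only cosmetic differences are that you make the strengthened inductive hypothesis explicit and carry $S(\finalNode)\land\psi$ through the Angelic-loop \irref{FP} step before weakening, where the paper weakens to $\psi$ first; neither changes the substance.
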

\begin{proof}
  We show this for Angelic subvalue maps, the proof for Demonic subvalue maps is symmetric.
  Intuitively this lemma is true because the projection of an Angelic subvalue map only adds Angelic tests, which make the game monotonically harder.
  The proof uses structural induction based on the shape of $\namedGame{\alpha}{a}$.
  When $\namedGame{\alpha}{a}$ is:
  \begin{itemize}
    \item Atomic and not controlled by Angel, i.e., $\alpha \in \{x:=e, x:=*, \ptest{Q}, !Q, \{x'=f(x)\ \&\ Q\}^d\}$, the result is immediate since $\aproj{\namedGame{\alpha}{a}}{S}{} = \alpha$.
    \item Atomic and controlled by Angel, i.e., $\alpha \in \{x:=\otimes, \{x'=f(x)\ \&\ Q\}\}$, then $\aproj{\namedGame{\alpha}{a}}{S}{} = \alpha \seq \ptest{S(\finalNode)}$.
    By the \dGL axioms \irref{testd} and \irref{composed}, $\models \langle \alpha \seq \ptest{S(\finalNode)} \rangle \psi \leftrightarrow \langle \alpha \rangle (\psi \land S(\finalNode))$.
    By the \dGL axiom \irref{M}, $\models \langle \alpha \rangle (\psi \land S(\finalNode)) \limply \langle \alpha \rangle \psi$.
    By transitivity of implication, $\models \langle \aproj{\namedGame{\alpha}{a}}{S}{} \rangle \psi \limply \langle \alpha \rangle \psi$.
    \item $\namedGame{\gamma}{g} \cup \namedGame{\delta}{d}$, then $\aproj{\namedGame{\alpha}{a}}{S}{} = (\AdvExChoice{?S(g)} \seq \aproj{\namedGame{\gamma}{g}}{S}{}) {\cup} (\AdvExChoice{?S(d)} \seq \aproj{\namedGame{\delta}{d}}{S}{})$.
    By the \dGL axiom \irref{choiced}, $\models \langle (\AdvExChoice{?S(g)} \seq \aproj{\namedGame{\gamma}{g}}{S}{}) {\cup} (\AdvExChoice{?S(d)} \seq \aproj{\namedGame{\delta}{d}}{S}{}) \rangle \psi \leftrightarrow \langle \AdvExChoice{?S(g)} \seq$ $\aproj{\namedGame{\gamma}{g}}{S}{} \rangle \psi \lor \langle \AdvExChoice{?S(d)} \seq \aproj{\namedGame{\delta}{d}}{S}{} \rangle \psi$.
    By the \dGL axioms \irref{composed} and \irref{choiced}, $\models \langle \AdvExChoice{?S(g)} \seq$ $\aproj{\namedGame{\gamma}{g}}{S}{} \rangle \psi \lor \langle \AdvExChoice{?S(d)} \seq \aproj{\namedGame{\delta}{d}}{S}{} \rangle \psi 
    \leftrightarrow 
    ( \AdvExChoice{S(g)} \land \langle \aproj{\namedGame{\gamma}{g}}{S}{} \rangle \psi ) \lor ( \AdvExChoice{S(d)} \land \langle \aproj{\namedGame{\delta}{d}}{S}{} \rangle \psi)$.
    By conjunction elimination, $\models ( \AdvExChoice{S(g)} \land \langle \aproj{\namedGame{\gamma}{g}}{S}{} \rangle \psi ) \lor ( \AdvExChoice{S(d)} \land \langle \aproj{\namedGame{\delta}{d}}{S}{} \rangle \psi) \limply \langle \aproj{\namedGame{\gamma}{a}}{S}{} \rangle \psi \lor \langle \aproj{\namedGame{\delta}{d}}{S}{} \rangle \psi$.
    By the inductive hypothesis, $\models \langle \aproj{\namedGame{\gamma}{g}}{S}{} \rangle \psi \limply \langle \gamma \rangle \psi$ and $\models \langle \aproj{\namedGame{\delta}{d}}{S}{} \rangle \psi \limply \langle \delta \rangle \psi$.
    By the transitivity of implication, $\models \langle \aproj{\namedGame{\gamma}{g}}{S}{} \rangle \psi \lor \langle \aproj{\namedGame{\delta}{d}}{S}{} \rangle \psi \limply \langle \gamma \rangle \psi \lor \langle \delta \rangle \psi$.
    By the \dGL axiom \irref{choiced}, $\models \langle \gamma \rangle \psi \lor \langle \delta \rangle \psi \leftrightarrow \langle \gamma \cup \delta \rangle \psi$.
    \item $\namedGame{\gamma}{g} \cap \namedGame{\delta}{d}$, then $\aproj{\namedGame{\alpha}{a}}{S}{} = \aproj{\namedGame{\gamma}{g}}{S}{} \cap \aproj{\namedGame{\delta}{d}}{S}{}$.
    By the \dGL axiom \irref{choiceb}, $\models \langle \aproj{\namedGame{\gamma}{g}}{S}{} \cap \aproj{\namedGame{\delta}{d}}{S}{} \rangle \psi \leftrightarrow \langle \aproj{\namedGame{\gamma}{g}}{S}{} \rangle \psi \land \langle \aproj{\namedGame{\delta}{d}}{S}{} \rangle \psi$.
    By the inductive hypothesis, $\models \langle \aproj{\namedGame{\gamma}{g}}{S}{} \rangle \psi \limply \langle \gamma \rangle \psi$ and $\models \langle \aproj{\namedGame{\delta}{d}}{S}{} \rangle \psi \limply \langle \delta \rangle\psi$.
    Thus, $\models \langle \aproj{\namedGame{\gamma}{g}}{S}{} \rangle \psi \land \langle \aproj{\namedGame{\delta}{d}}{S}{} \rangle \psi \limply \langle \gamma \rangle \psi \land \langle \delta \rangle \psi$.
    By the \dGL axiom \irref{choiceb}, $\models \langle \gamma \rangle \psi \land \langle \delta \rangle \psi \leftrightarrow \langle \gamma \cap \delta \rangle \psi$.
    \item $\namedGame{\gamma}{g};\namedGame{\delta}{d}$, then $\aproj{\namedGame{\alpha}{a}}{S}{} = \aproj{\namedGame{\gamma}{g}}{\modEnd{S}{S(d)}}{} \seq \aproj{\namedGame{\delta}{d}}{S}{}$.
    
    Using the \dGL axiom \irref{composed},
    
    $\models \langle \aproj{\namedGame{\gamma}{g}}{\modEnd{S}{S(d)}}{} \seq \aproj{\namedGame{\delta}{d}}{S}{} \rangle \psi \leftrightarrow \langle \aproj{\namedGame{\gamma}{g}}{\modEnd{S}{S(d)}}{} \rangle \langle \aproj{\namedGame{\delta}{d}}{S}{} \rangle \psi$.
    By the inductive hypothesis, $\models \langle \aproj{\namedGame{\delta}{d}}{S}{} \rangle \psi \limply \langle \delta \rangle \psi$.\\
    Thus,
    $\models \langle \aproj{\namedGame{\gamma}{g}}{\modEnd{S}{S(d)}}{} \rangle \langle \aproj{\namedGame{\delta}{d}}{S}{} \rangle \psi \limply \langle \aproj{\namedGame{\gamma}{g}}{\modEnd{S}{S(d)}}{} \rangle$ $\langle \delta \rangle \psi$.
    Further, by the inductive hypothesis, $\models \langle \aproj{\namedGame{\gamma}{g}}{\modEnd{S}{S(d)}}{} \rangle \langle \delta \rangle \psi \limply \langle \gamma \rangle \langle \delta \rangle \psi$.
    By the \dGL axiom \irref{composed}, $\models \langle \gamma \rangle \langle \delta \rangle \psi \leftrightarrow \langle \gamma;\delta \rangle \psi$.
    \item $\namedGame{(\namedGame{\gamma}{g})^\ast}{a}$, then $\aproj{\namedGame{\alpha}{a}}{S}{} = (\AdvExLoop{?S(g)} \seq \aproj{\namedGame{\gamma}{g}}{\modEnd{S}{S(s)}}{})^{\ast}; \AdvExLoop{?S(\finalNode)}$.
    By \dGL axioms \irref{composed}, \irref{testd}, and \irref{M}, $\models \langle \aproj{\namedGame{\alpha}{a}}{S}{} \rangle \psi \limply \ddiamond{\prepeat{(\AdvExLoop{?S(g)} \seq \aproj{\namedGame{\gamma}{g}}{\modEnd{S}{S(a)}}{})}}{\psi}$.
    By the \dGL proof rule \irref{FP},

    \begin{calculus}
    {\linferenceRule[formula]
      {\psi \lor \ddiamond{(\AdvExLoop{?S(g)} \seq \aproj{\namedGame{\gamma}{g}}{\modEnd{S}{S(a)}})}{\langle \gamma^* \rangle \psi} \limply \langle \gamma^* \rangle \psi}
      {\ddiamond{\prepeat{(\AdvExLoop{?S(g)} \seq \aproj{\namedGame{\gamma}{g}}{\modEnd{S}{S(a)}}{})}}{\psi} \limply \langle \gamma^* \rangle \psi}
    }{}
    \end{calculus}.

    Now, $\ddiamond{(\AdvExLoop{?S(g)} \seq \aproj{\namedGame{\gamma}{g}}{\modEnd{S}{S(a)}}{})}{\langle \gamma^* \rangle \psi}$, per the inductive hypothesis, implies $\langle \gamma \rangle \langle \gamma^* \rangle \psi$.
    Per \dGL axiom \irref{iterated}, $\langle \gamma \rangle \langle \gamma^* \rangle \psi \leftrightarrow \langle \gamma^* \rangle \psi$.
    This allows us to show the premise, since $\models \psi \lor \langle \gamma^* \rangle \psi \limply \langle \gamma^* \rangle \psi$.
    \item $\namedGame{(\namedGame{\gamma}{g})^\times}{a}$, then $\aproj{\namedGame{\alpha}{a}}{S}{} = \big(\aproj{\namedGame{\gamma}{g}}{S}{S(a)}\big)^\times$.
    By the \dGL proof rule \irref{loop},\\
    \begin{calculus}
      {\linferenceRule[sequent]
        {\lsequent[g]{\ddiamond{\big(\aproj{\namedGame{\gamma}{g}}{S}{S(a)}\big)^\times}{\psi}} {\inv}
        &\lsequent[g]{\inv} {\ddiamond{\gamma}{\inv}}
        &\lsequent[g]{\inv} {\psi}}
        {\lsequent[g]{\ddiamond{\big(\aproj{\namedGame{\gamma}{g}}{S}{S(a)}\big)^\times}{\psi}} {\ddiamond{\drepeat{\gamma}}{\psi}}}
      }{}%
    \end{calculus}. \\
    Setting invariant $J$ to $\ddiamond{\big(\aproj{\namedGame{\gamma}{g}}{S}{S(a)}\big)^\times}{\psi}$, the first premise is immediate.

    The second premise is now $\lsequent[g]{\ddiamond{\big(\aproj{\namedGame{\gamma}{g}}{S}{S(a)}\big)^\times}{\psi}} {\ddiamond{\gamma}{\ddiamond{\big(\aproj{\namedGame{\gamma}{g}}{S}{S(a)}\big)^\times}{\psi}}}$.
    By \irref{diterated}, 
    $\ddiamond{\big(\aproj{\namedGame{\gamma}{g}}{S}{S(a)}\big)^\times}{\psi} \limply \ddiamond{\aproj{\namedGame{\gamma}{g}}{S}{S(a)}}{\ddiamond{\big(\aproj{\namedGame{\gamma}{g}}{S}{S(a)}\big)^\times}{\psi}}$.\\
    By the inductive hypothesis, $\models \ddiamond{\aproj{\namedGame{\gamma}{g}}{S}{S(a)}\big}{\ddiamond{\big(\aproj{\namedGame{\gamma}{g}}{S}{S(a)}\big)^\times}{\psi}} \limply \langle \gamma \rangle$ $\ddiamond{\big(\aproj{\namedGame{\gamma}{g}}{S}{S(a)}\big)^\times}{\psi}$, thus proving the second premise.

    The third premise is proved as follows.
    \begin{equation*}
      {\linfer[diterated]
        {
          \linfer[weakenl]
          {
            \linfer[id]
            {\lclose}
            {\lsequent[g]{\psi}{\psi}}
          }
          {\lsequent[g]{\psi \land \ddiamond{\aproj{\namedGame{\gamma}{g}}{S}{S(a)}\big}{\ddiamond{\big(\aproj{\namedGame{\gamma}{g}}{S}{S(a)}\big)^\times}{\psi}}} {\psi}}
        }
        {\lsequent[g]{\ddiamond{\big(\aproj{\namedGame{\gamma}{g}}{S}{S(a)}\big)^\times}{\psi}} {\psi}}
      }{}
    \end{equation*}
  \end{itemize}
\end{proof}

\begin{lemma}[MPC Subvalue Map Projection Equivalence]
  \label{lem:mpc-proj-id}
  The projection game $\langle \aproj{\namedGame{\alpha}{a}}{S}{} \rangle \phi$ of the MPC Angelic subvalue map (\rref{def:mpc-sol}) is equivalent to the original game $\langle \alpha \rangle \phi$.
  Dually, the projection game $[\dproj{\namedGameSpace{\alpha}{a}}{S}{}] \phi$ of the MPC Demonic subvalue map (\rref{def:mpc-sol}) is equivalent to the original game $[\alpha] \phi$.
\end{lemma}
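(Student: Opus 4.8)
The statement is an equivalence $\models \langle \aproj{\namedGame{\alpha}{a}}{S}{} \rangle \phi \lequiv \langle \alpha \rangle \phi$, where $S$ denotes the model predictive Angelic subvalue map for $\namedGame{\alpha}{a}$ and winning condition $\phi$ (\rref{def:mpc-sol}). The plan is to prove the two implications separately, dispatching each by specializing a result already established for arbitrary subvalue maps to the MPC map rather than by recomputing the projection by hand.

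The direction $\langle \aproj{\namedGame{\alpha}{a}}{S}{} \rangle \phi \limply \langle \alpha \rangle \phi$ is the easy half, and it holds for \emph{every} Angelic subvalue map, not just the MPC one. It is exactly the refinement property of \rref{lem:subvalue-projection-refinement} instantiated at postcondition $\psi := \phi$: existential projection only inserts additional Angelic tests, so it can only shrink Angel's winning region. Hence no MPC-specific reasoning is needed for this inclusion.

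For the converse $\langle \alpha \rangle \phi \limply \langle \aproj{\namedGame{\alpha}{a}}{S}{} \rangle \phi$ I would invoke two facts particular to the MPC map. First, by \rref{def:mpc-sol} we have $S(\finalNode) = \phi$, so $\models S(\finalNode) \limply \phi$ and $S$ is compatible with $\phi$ in the sense of \rref{def:compatible-winning-conditions}; moreover $S$ is an inductive Angelic subvalue map by \rref{lem:mpc-valid}. Second, since the root label satisfies $\fwd{a}{\namedGame{\alpha}{a}} = \namedGame{\alpha}{a}$, the MPC definition gives $S(a) = \langle \fwd{a}{\namedGame{\alpha}{a}} \rangle \phi = \langle \alpha \rangle \phi$. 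Now \rref{lem:inductive-subvalue-existence}, applied to the inductive compatible map $S$, yields $\models S(a) \limply \langle \aproj{\namedGame{\alpha}{a}}{S}{} \rangle \phi$; rewriting $S(a)$ as $\langle \alpha \rangle \phi$ is precisely the desired implication. Combining the two implications establishes the equivalence, and the Demonic claim for $[\dproj{\namedGameSpace{\alpha}{a}}{S}{}]\phi$ follows by the symmetric dual arguments using the Demonic forms of \rref{lem:subvalue-projection-refinement}, \rref{lem:mpc-valid}, and \rref{lem:inductive-subvalue-existence}.

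I expect the only genuine subtlety to lie in what this shortcut sweeps into \rref{lem:inductive-subvalue-existence}. If one instead proves the converse by direct structural induction on $\namedGame{\alpha}{a}$, the branch and atomic cases are routine: there the inserted MPC guard $S(g) = \langle \fwd{g}{\namedGame{\alpha}{a}} \rangle \phi$ is exactly the winning region of the subgame it guards, so it is redundant (e.g.\ $S(g) \land \langle \gamma \rangle \phi \lequiv \langle \gamma \rangle \phi$), and the test simply drops out using axioms \irref{testd}, \irref{choiced}, and \irref{composed} together with the inductive hypothesis. The main obstacle is the Angelic loop $\namedGame{(\namedGame{\gamma}{g})^\ast}{a}$: the projected loop guards every body iteration by $?S(g)$ and guards the exit by $?\phi$, and one must show Angel still possesses a \emph{finite}, well-founded winning strategy through these guards, i.e.\ that she cannot be forced to iterate forever. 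Re-deriving this well-foundedness is exactly the content of the Angelic-loop clause $\models S(a) \limply \langle \aproj{\namedGame{\alpha}{a}}{S}{} \rangle S(\finalNode)$ of \rref{def:local-envelope-conditions}, which is why routing the argument through \rref{lem:mpc-valid} and \rref{lem:inductive-subvalue-existence} is the cleaner route and avoids a fresh convergence argument at the loop.
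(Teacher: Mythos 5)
Your first direction (the projection refines the original game, via \rref{lem:subvalue-projection-refinement}) is fine, and it is the same step the paper uses. The second direction, however, is circular as a proof within this paper. You note $S(a)=\ddiamond{\alpha}{\phi}$ and obtain $S(a)\limply\ddiamond{\aproj{\namedGame{\alpha}{a}}{S}{}}{\phi}$ from \rref{lem:mpc-valid} together with \rref{lem:inductive-subvalue-existence}. But the paper's proof of \rref{lem:mpc-valid} discharges its Angelic-loop clause, $\models S(a)\limply\ddiamond{\aproj{\namedGame{\alpha}{a}}{S}{}}{S(\finalNode)}$, precisely by citing \rref{lem:mpc-proj-id} --- the statement you are proving (this is also why \rref{lem:mpc-proj-id} is placed before \rref{lem:mpc-valid} in the appendix). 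So once your citations are unfolded, the Angelic-loop content of the lemma is assumed rather than proved. Your closing paragraph has the dependency exactly backwards: the well-foundedness claim that from any state satisfying $\ddiamond{\gamma^\ast}{\phi}$ Angel can win the guarded loop $(\ptest{S(g)}\seq\aproj{\namedGame{\gamma}{g}}{\modEnd{S}{S(a)}}{})^\ast\seq\ptest{\phi}$ is not something you can avoid by routing through \rref{lem:mpc-valid}; it is the one genuinely non-trivial content of this lemma, and \rref{lem:mpc-valid} expects this lemma to supply it, not the other way around.

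For comparison, the paper proves the statement by structural induction with per-case equivalence chains: in the atomic, choice, and sequence cases the inserted MPC guard is redundant exactly as you sketch, and the Demonic-loop case is handled self-containedly by the loop invariant rule with invariant $S(a)$, the Demonic repetition axiom, and the observation that the restriction of the MPC map to the loop body is again an MPC map (for winning condition $S(a)$), which is what makes the structural induction hypothesis applicable. Admittedly, the paper's own Angelic-loop case also appeals to $\avalid{\alpha}{a}{S}{}$, so it shares the very circularity you inherit; but your proposal makes that circularity carry the entire lemma rather than a single case. To actually close the gap one needs an independent argument at Angelic loops, e.g.\ a transfinite induction along the stages of the least-fixpoint semantics of $\ddiamond{\gamma^\ast}{\phi}$; carrying that out requires strengthening the body-level induction hypothesis from ``the projection is equivalent for the single postcondition $S(a)$'' to ``the projection preserves Angel's winning region for every postcondition implying $S(a)$,'' because inside the loop Angel must reach not merely $S(a)$ but the winning region of the projected remaining loop.
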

\begin{proof}
  We show this for Angelic subvalue maps, the proof for Demonic subvalue maps is symmetric.
  Intuitively this lemma is true because the projection of an Angelic subvalue map only adds Angelic tests, which make the game monotonically harder.
  The proof uses structural induction based on the shape of $\namedGame{\alpha}{a}$.
  Note that for the MPC solution, \(\finalNode\) is \(\phi\).
  When $\namedGame{\alpha}{a}$ is:
  \begin{itemize}
    \item Atomic and not controlled by Angel, i.e., $\alpha \in \{x:=e, x:=*, \ptest{Q}, !Q, \{x'=f(x)\ \&\ Q\}^d\}$, the result is immediate since $\aproj{\namedGame{\alpha}{a}}{S}{} = \alpha$.
    \item Atomic and controlled by Angel, i.e., $\alpha \in \{x:=\otimes, \{x'=f(x)\ \&\ Q\}\}$, then $\aproj{\namedGame{\alpha}{a}}{S}{} = \alpha \seq \ptest{\phi}$.
    By the \dGL axioms \irref{testd} and \irref{composed}, $\models \langle \alpha \seq \ptest{\phi} \rangle \phi \leftrightarrow \langle \alpha \rangle (\phi \land \phi) \leftrightarrow \langle \alpha \rangle \phi$.
    \item $\namedGame{\gamma}{g} \cup \namedGame{\delta}{d}$, then $\aproj{\namedGame{\alpha}{a}}{S}{} = (\AdvExChoice{?S(g)} \seq \aproj{\namedGame{\gamma}{g}}{S}{}) {\cup} (\AdvExChoice{?S(d)} \seq \aproj{\namedGame{\delta}{d}}{S}{})$.
    
    $\phantom{\leftrightarrow} \langle (\AdvExChoice{?S(g)} \seq \aproj{\namedGame{\gamma}{g}}{S}{}) {\cup} (\AdvExChoice{?S(d)} \seq \aproj{\namedGame{\delta}{d}}{S}{}) \rangle \phi $ \\
    $\leftrightarrow \langle \AdvExChoice{?S(g)} \seq \aproj{\namedGame{\gamma}{g}}{S}{} \rangle \phi \lor \langle \AdvExChoice{?S(d)} \seq \aproj{\namedGame{\delta}{d}}{S}{} \rangle \phi \qquad (\textrm{\dGL axiom }\irref{choiced})$ \\
    $\leftrightarrow ( \AdvExChoice{S(g)} \land \langle \aproj{\namedGame{\gamma}{g}}{S}{} \rangle \phi ) \lor ( \AdvExChoice{S(d)} \land \langle \aproj{\namedGame{\delta}{d}}{S}{} \rangle \phi) \qquad (\textrm{\dGL axioms }\irref{composed}, \irref{choiced})$ \\
    $\leftrightarrow (\AdvExChoice{S(g)} \land \langle \gamma \rangle \phi ) \lor ( \AdvExChoice{S(d)} \land \langle \delta \rangle \phi) \qquad (\textrm{inductive hypothesis})$ \\
    $\leftrightarrow (\langle \gamma \rangle \phi \land \langle \gamma \rangle \phi) \lor (\langle \delta \rangle \phi \land \langle \delta \rangle \phi) \qquad (\textrm{MPC definition})$ \\
    $\leftrightarrow \langle \gamma \rangle \phi \lor \langle \delta \rangle \phi$.
    \item $\namedGame{\gamma}{g} \cap \namedGame{\delta}{d}$, then $\aproj{\namedGame{\alpha}{a}}{S}{} = \aproj{\namedGame{\gamma}{g}}{S}{} \cap \aproj{\namedGame{\delta}{d}}{S}{}$.

    $\phantom{\leftrightarrow} \langle \aproj{\namedGame{\gamma}{g}}{S}{} \cap \aproj{\namedGame{\delta}{d}}{S}{} \rangle \phi$ \\
    $\leftrightarrow \langle \aproj{\namedGame{\gamma}{g}}{S}{} \rangle \phi \land \langle \aproj{\namedGame{\delta}{d}}{S}{} \rangle \phi \qquad (\textrm{\dGL axiom }\irref{choiceb})$ \\
    $\leftrightarrow \langle \gamma \rangle \phi \land \langle \delta \rangle \phi \qquad (\textrm{inductive hypothesis})$ \\
    $\leftrightarrow \langle \gamma \cap \delta \rangle \phi \qquad (\textrm{\dGL axiom }\irref{choiceb})$.
    \item $\namedGame{\gamma}{g};\namedGame{\delta}{d}$, then $\aproj{\namedGame{\alpha}{a}}{S}{} = \aproj{\namedGame{\gamma}{g}}{\modEnd{S}{S(d)}}{} \seq \aproj{\namedGame{\delta}{d}}{S}{}$.
    
    $\phantom{\leftrightarrow} \langle \aproj{\namedGame{\gamma}{g}}{\modEnd{S}{S(d)}}{} \seq \aproj{\namedGame{\delta}{d}}{S}{} \rangle \phi$ \\
    $\leftrightarrow \langle \aproj{\namedGame{\gamma}{g}}{\modEnd{S}{S(d)}}{} \rangle \langle \aproj{\namedGame{\delta}{d}}{S}{} \rangle \phi \qquad (\textrm{\dGL axiom }\irref{composeb})$ \\
    $\leftrightarrow \langle \aproj{\namedGame{\gamma}{g}}{\modEnd{S}{S(d)}}{} \rangle \langle \delta \rangle \phi \qquad (\textrm{inductive hypothesis})$ \\
    $\leftrightarrow \langle \gamma \rangle \langle \delta \rangle \phi \qquad (\textrm{inductive hypothesis})$ \\
    $\leftrightarrow \langle \gamma;\delta \rangle \phi \qquad (\textrm{\dGL axiom }\irref{composeb})$.
    \item $\namedGame{(\namedGame{\gamma}{g})^\ast}{a}$.
    By \rref{lem:subvalue-projection-refinement}, $\models \ddiamond{\aproj{\namedGame{\alpha}{a}}{S}{}}{\phi} \limply \ddiamond{\gamma^*}{\phi}$.

    We now show the opposite direction, i.e., $\models \ddiamond{\gamma^*}{\phi} \limply \ddiamond{\aproj{\namedGame{\alpha}{a}}{S}{}}{\phi}$.

    Since $S(a) = \ddiamond{\gamma^*}{\phi}$, we have $\models \ddiamond{\gamma^*}{\phi} \limply S(a)$.
    From $\avalid{\alpha}{a}{S}{\phi}$, we get $\models S(a) \limply \ddiamond{\aproj{\namedGame{\alpha}{a}}{S}{}}{\phi}$.

    Thus, by transitivity of implication, $\models \ddiamond{\gamma^*}{\phi} \limply \ddiamond{\aproj{\namedGame{\alpha}{a}}{S}{}}{\phi}$.
    \item $\namedGame{(\namedGame{\gamma}{g})^\times}{a}$
    
    By \rref{lem:subvalue-projection-refinement}, $\models \ddiamond{\aproj{\namedGame{\alpha}{a}}{S}{}}{\phi} \limply \ddiamond{\gamma^\times}{\phi}$.
    We now show the opposite direction, i.e., $\models \ddiamond{\gamma^\times}{\phi} \limply \ddiamond{\aproj{\namedGame{\alpha}{a}}{S}{}}{\phi}$.

    Since $S(a) = \ddiamond{\gamma^\times}{\phi}$, we have $\models \ddiamond{\gamma^\times}{\phi} \limply S(a)$.
    We now show that $\models S(a) \limply \ddiamond{\aproj{\namedGame{\alpha}{a}}{S}{}}{\phi}$, so that by transitivity of implication, $\models \ddiamond{\gamma^\times}{\phi} \limply \ddiamond{\aproj{\namedGame{\alpha}{a}}{S}{}}{\phi}$.

    By the \dGL proof rule \irref{loop},\\
    \begin{calculus}
      {\linferenceRule[sequent]
        {\lsequent[g]{S(a)} {\inv}
        &\lsequent[g]{\inv} {\ddiamond{\aproj{\namedGame{\gamma}{g}}{\modEnd{S}{S(a)}}{}}{\inv}}
        &\lsequent[g]{\inv} {\phi}}
        {\lsequent[g]{S(a)} {\ddiamond{\big(\aproj{\namedGame{\gamma}{g}}{\modEnd{S}{S(a)}}{}\big)^\times}{\phi}}}
      }{}%
    \end{calculus}. \\
    Setting invariant $J$ to $S(a)$, the first premise is immediate.
    The last premise is $\lsequent[g]{S(a)}{\phi}$ which follows from $S(a)\leftrightarrow \phi \land \ddiamond{\gamma}{\ddiamond{\gamma^\times}{\phi}}$ per \dGL axiom \irref{diterated}.
    Only the middle premise remains. It is
    $\lsequent[g]{S(a)}{\ddiamond{\aproj{\namedGame{\gamma}{g}}{\modEnd{S}{S(a)}}{}}{S(a)}}$.
    Per the inductive hypothesis, $\models \ddiamond{\aproj{\namedGame{\gamma}{g}}{\modEnd{S}{S(a)}}{}}{S(a)} \leftrightarrow \langle \gamma \rangle S(a)$.
    Thus, the middle premise can be rewritten as $\lsequent[g]{S(a)}{\langle \gamma \rangle S(a)}$.
    Recall that $S(a) = \ddiamond{\gamma^\times}{\phi}$, so that the middle premise is $\lsequent[g]{\ddiamond{\gamma^\times}{\phi}}{\langle \gamma \rangle \ddiamond{\gamma^\times}{\phi}}$.
    By the \dGL axiom \irref{diterated}, $\ddiamond{\gamma^\times}{\phi} \limply \langle \gamma \rangle \ddiamond{\gamma^\times}{\phi}$, thus proving the middle premise.
  \end{itemize}
\end{proof}

\begin{lemma}[MPC Valid]
  \label{lem:mpc-valid}
  The MPC solution (\rref{def:mpc-sol}) $S$ for a game $\namedGame{\alpha}{a}$ for Angel winning condition $\phi$ is an inductive subvalue map compatible with \(\phi\): $\avalid{\alpha}{a}{S}{}$.
  Dually, the MPC solution $S$ for a game $\namedGame{\alpha}{a}$ for Demon winning condition $\phi$ is an inductive subvalue map: $\dvalid{\alpha}{a}{S}{\phi}$.
\end{lemma}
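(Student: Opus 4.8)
The plan is to prove $\avalid{\alpha}{a}{S}{}$ for the MPC map $S$ of \rref{def:mpc-sol} by structural induction on $\namedGame{\alpha}{a}$, checking each clause of \rref{def:local-envelope-conditions}. The organizing observation is that, by the suffix construction of \rref{def:execution-suffix}, the restriction of $S$ to the subgames of a child coincides with the MPC map of that child for an appropriately shifted winning condition. Concretely, if $b$ lies in a child $\namedGame{\gamma}{g}$ of a composite game then $\fwd{b}{\namedGame{\alpha}{a}}$ factors as $\fwd{b}{\namedGame{\gamma}{g}}$ followed by the remainder, so $S(b)=\ddiamond{\fwd{b}{\namedGame{\gamma}{g}}}{Q}$ for the remainder's winning region $Q$ (obtained by pushing $\phi$ through the remainder with \irref{composed}). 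Thus each recursive premise $\avalid{\gamma}{g}{\modEnd{S}{Q}}{}$ is exactly the induction hypothesis for $\gamma$ with condition $Q$, while every non-recursive premise collapses to a \dGL validity about the suffix formulas; compatibility is immediate since $S(\finalNode)=\phi$.

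The straightforward cases fall out this way. For atomic $\alpha$ the only clause $\models S(a)\limply\ddiamond{\alpha}{\phi}$ holds with equality. For $\namedGame{\gamma}{g}\cup\namedGame{\delta}{d}$, \irref{choiced} gives $S(a)=\ddiamond{\gamma\cup\delta}{\phi}\leftrightarrow\ddiamond{\gamma}{\phi}\lor\ddiamond{\delta}{\phi}=S(g)\lor S(d)$, the restrictions to $\gamma,\delta$ are their MPC maps for $\phi$, and the induction hypothesis discharges the recursive premises; the $\cap$ case is identical through \irref{dchoiced}. For $\namedGame{\gamma}{g}\seq\namedGame{\delta}{d}$, the restriction to $\delta$ is its MPC map for $\phi$ and, via \irref{composed}, the restriction to $\gamma$ is its MPC map for $S(d)=\ddiamond{\delta}{\phi}$, i.e.\ $\modEnd{S}{S(d)}$, while $S(a)=S(g)$ makes $\models S(a)\limply S(g)$ trivial. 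For the Demonic loop $\namedGame{(\namedGame{\gamma}{g})^\times}{a}$, unrolling with \irref{diterated} yields $S(a)=\ddiamond{\gamma^\times}{\phi}\leftrightarrow\phi\land\ddiamond{\gamma}{S(a)}$, giving both $\models S(a)\limply S(\finalNode)$ and $\models S(a)\limply S(g)$ (using $S(g)=\ddiamond{\gamma}{S(a)}$ from the suffix), and the body restriction is $\modEnd{S}{S(a)}$, handled by the induction hypothesis.

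The main obstacle is the Angelic loop $\namedGame{(\namedGame{\gamma}{g})^\ast}{a}$, specifically the well-founded-exit premise $\models S(a)\limply\ddiamond{\aproj{\namedGame{\alpha}{a}}{S}{}}{\phi}$, since reaching $\phi$ inside the tightened envelope must terminate in finitely many iterations. I plan to write $\aproj{\namedGame{\alpha}{a}}{S}{}$ as $\beta^\ast\seq\ptest{\phi}$ with $\beta\equiv\ptest{S(g)}\seq\aproj{\namedGame{\gamma}{g}}{\modEnd{S}{S(a)}}{}$, abbreviate $\chi\equiv\ddiamond{\beta^\ast}{\phi}$, and prove $\models\ddiamond{\gamma^\ast}{\phi}\limply\chi$ by the least-fixpoint rule \irref{FP}, whose premise is $\models\phi\lor\ddiamond{\gamma}{\chi}\limply\chi$. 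The $\phi$-disjunct follows by unwinding $\chi$ with \irref{iterated}. For $\ddiamond{\gamma}{\chi}$, I would use \rref{lem:subvalue-projection-refinement} to get $\models\chi\limply S(a)$, so \irref{M} gives $\ddiamond{\gamma}{\chi}\limply\ddiamond{\gamma}{S(a)}=S(g)$, supplying the leading guard of $\beta$. To convert the real body move into an envelope move I would invoke the induction hypothesis \rref{lem:mpc-proj-id} on the strictly smaller game $\gamma$ with winning condition $\chi$: writing $S_\chi$ for the MPC map of $\gamma$ targeting $\chi$, this yields $\ddiamond{\gamma}{\chi}\leftrightarrow\ddiamond{\aproj{\namedGame{\gamma}{g}}{S_\chi}{}}{\chi}$. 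A short guard-monotonicity argument then finishes it: since $\chi\limply S(a)$, the $S_\chi$-guards $\ddiamond{\fwd{b}{\namedGame{\gamma}{g}}}{\chi}$ imply the $\modEnd{S}{S(a)}$-guards $\ddiamond{\fwd{b}{\namedGame{\gamma}{g}}}{S(a)}$, so the $\modEnd{S}{S(a)}$-envelope is the more permissive game and $\ddiamond{\aproj{\namedGame{\gamma}{g}}{S_\chi}{}}{\chi}\limply\ddiamond{\aproj{\namedGame{\gamma}{g}}{\modEnd{S}{S(a)}}{}}{\chi}$. Chaining these gives $\ddiamond{\gamma}{\chi}\limply S(g)\land\ddiamond{\aproj{\namedGame{\gamma}{g}}{\modEnd{S}{S(a)}}{}}{\chi}=\ddiamond{\beta}{\chi}\limply\chi$, closing the \irref{FP} premise; the remaining clause $\avalid{\gamma}{g}{\modEnd{S}{S(a)}}{}$ reduces, as in the Demonic loop, to the induction hypothesis on $\gamma$ for condition $S(a)$.

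I expect the guard-monotonicity step (weakening Angelic tests yields an Angelic refinement in Angel's favor) to be the delicate supporting fact, which I would isolate as a small auxiliary lemma proved by induction on $\gamma$ from the test axioms and \irref{M}. This is also where the interleaving of this lemma with \rref{lem:mpc-proj-id} must be made explicit: the Angelic-loop argument consumes \rref{lem:mpc-proj-id} at the strictly smaller body $\gamma$, while \rref{lem:mpc-proj-id} at $\gamma^\ast$ in turn consumes the present lemma at $\gamma^\ast$, so the two statements are established by a single well-founded induction that proves validity before projection-equivalence at each game size. The Demonic half of the statement is entirely dual: replacing $\ddiamond{\cdot}{\cdot}$ by $\dbox{\cdot}{\cdot}$, the Angelic suffix and projection machinery by their Demonic counterparts, and swapping the roles of $\cup/\cap$ and $^\ast/^\times$, so no separate argument is needed.
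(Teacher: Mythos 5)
Your proof is correct, and for every case except the Angelic loop it coincides with the paper's own argument: the same structural induction, the same observation that the restriction of the MPC map to a child is the child's MPC map for a shifted winning condition (pushed through the remainder with \irref{composed}), and the same one-line discharges of the side conditions via \irref{choiced}, \irref{dchoiced}, the suffix identities, and \irref{diterated} for the Demonic loop. The genuine difference is at $\namedGame{(\namedGame{\gamma}{g})^\ast}{a}$. The paper disposes of the premise $\models S(a)\limply\ddiamond{\aproj{\namedGame{\alpha}{a}}{S}{}}{\phi}$ in one line by citing \rref{lem:mpc-proj-id} \emph{at the loop game itself}; but the paper's proof of \rref{lem:mpc-proj-id} for that very case in turn cites $\avalid{\alpha}{a}{S}{\phi}$, i.e., the present lemma at the same game, so the two proofs as literally written are mutually circular precisely at Angelic loops. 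Your route repairs this: you prove the exit condition directly with \irref{FP}, invoking \rref{lem:mpc-proj-id} only at the strictly smaller body $\gamma$ (with target $\chi\equiv\ddiamond{\beta^\ast}{\phi}$), bridge from the $S_\chi$-envelope to the $\modEnd{S}{S(a)}$-envelope by guard monotonicity (since $\chi\limply S(a)$ by \rref{lem:subvalue-projection-refinement}, the envelope tests only get weaker, which favors Angel), and close with \irref{iterated}; you also make the well-founded ordering of the mutual induction explicit (validity before projection-equivalence at each game size). I checked the chain $\ddiamond{\gamma}{\chi}\limply S(g)\land\ddiamond{\aproj{\namedGame{\gamma}{g}}{\modEnd{S}{S(a)}}{}}{\chi}\limply\ddiamond{\beta}{\chi}\limply\chi$ and it goes through. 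What each approach buys: the paper's treatment is shorter but, taken at face value, the induction does not terminate; yours is self-contained and makes the dependency structure sound, at the cost of one auxiliary monotonicity lemma (of the same flavor as \rref{lem:subvalue-projection-refinement}, provable by induction on $\gamma$ with \irref{testd}, \irref{composed}, \irref{M}) and the bookkeeping of the interleaved induction. The compatibility clause and the Demonic half are handled identically in both.
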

\begin{proof}
  Proved by induction on the structure of the game $\namedGame{\alpha}{a}$.
  We show this for the Angel MPC solution; the proof for the Demonic MPC solution is analogous.
  Firstly, for the MPC solution, $S(\finalNode) = \phi$, so \(\phi\) is a compatible winning condition.
  The base case is when $\namedGame{\alpha}{a}$ is atomic, i.e., $\alpha \in \{x:=e, x:=*, \ptest{f}, !f, \{x'=f(x)\ \&\ Q\}, \{x'=f(x)\ \& \ Q\}^d\}$.
  In this case the MPC subvalue map consists of the mapping from $a$ to $\langle \alpha \rangle \phi$ (along with \(\finalNode \mapsto \phi\)), which matches the corresponding condition for $\avalid{\alpha}{a}{S}{}$ in \rref{def:local-envelope-conditions}.
  In the recursive case, if $\namedGame{\alpha}{a}$ has the structure:
  \begin{enumerate}
    \item $\namedGame{\gamma}{g} \cup \namedGame{\delta}{d}$: by the inductive hypothesis, as $\restrict{S}{\nodes{\namedGameSpace{\gamma}{g}}}$ is the MPC solution for game $\gamma$ and $\restrict{S}{\nodes{\namedGameSpace{\delta}{d}}}$ is the MPC solution for game $\delta$,
    $\avalid{\gamma}{g}{\restrict{S}{\nodes{\namedGameSpace{\gamma}{g}}}}{}$ and 
    $\avalid{\delta}{d}{\restrict{S}{\nodes{\namedGameSpace{\delta}{d}}}}{}$.
    So, $\avalid{\gamma}{g}{S}{}$ and $\avalid{\delta}{d}{S}{}$.
    It remains to show that $S(a) \limply S(g) \lor S(d)$.
    By the definition of the MPC solution, $S(a) = \langle \namedGame{\gamma}{g} \cup \namedGame{\delta}{d} \rangle \phi$, $S(g)=\langle \namedGame{\gamma}{g}\rangle\phi$ and $S(d)=\langle \namedGame{\delta}{d}\rangle \phi$.
    By the \dGL rule \irref{choiced}, $\langle \namedGame{\gamma}{g} \cup \namedGame{\delta}{d} \rangle \phi = \langle \namedGame{\gamma}{g} \rangle \phi \lor \langle \namedGame{\delta}{d} \rangle \phi$.
    Thus, $S(a) \limply S(g) \lor S(d)$.
    \item $\namedGame{\gamma}{g} \cap \namedGame{\delta}{d}$: by the inductive hypothesis, as $\restrict{S}{\nodes{\namedGameSpace{\gamma}{g}}}$ is the MPC solution for game $\gamma$ and $\restrict{S}{\nodes{\namedGameSpace{\delta}{d}}}$ is the MPC solution for game $\delta$,
    $\avalid{\gamma}{g}{\restrict{S}{\nodes{\namedGameSpace{\gamma}{g}}}}{}$ and $\avalid{\delta}{d}{\restrict{S}{\nodes{\namedGameSpace{\delta}{d}}}}{}$.
    So, $\avalid{\gamma}{g}{S}{}$ and $\avalid{\delta}{d}{S}{}$.
    It remains to show that $S(a) \limply S(g) \land S(d)$.
    By the definition of the MPC solution, $S(a) = \langle \namedGame{\gamma}{g} \cap \namedGame{\delta}{d} \rangle \phi$, $S(g)=\langle \namedGame{\gamma}{g}\rangle\phi$ and $S(d)=\langle \namedGame{\delta}{d}\rangle \phi$.
    By the \dGL rule \irref{dchoiced}, $\langle \namedGame{\gamma}{g} \cap \namedGame{\delta}{d} \rangle \phi = \langle \namedGame{\gamma}{g} \rangle \phi \land \langle \namedGame{\delta}{d} \rangle \phi$.
    Thus, $S(a) \limply S(g) \land S(d)$.
    \item $\namedGame{\gamma}{g} \seq \namedGame{\delta}{d}$: by the inductive hypothesis, as $\restrict{S}{\nodes{\namedGameSpace{\delta}{d}}}$ is the MPC solution for game $\delta$ and Angel winning condition $\phi$, $\avalid{\delta}{d}{\restrict{S}{\nodes{\namedGameSpace{\delta}{d}}}}{}$.
    So $\avalid{\delta}{d}{S}{}$.
    Further, as $\restrict{S}{\nodes{\namedGameSpace{\gamma}{g}}}$ is the MPC solution for game $\gamma$ and Angel winning condition $\langle \namedGame{\delta}{d} \rangle \phi = S(\delta)$, by the inductive hypothesis, $\avalid{\gamma}{g}{\restrict{\modEnd{S}{S(\delta)}}{\nodes{\namedGameSpace{\gamma}{g}}}}{}$.
    So $\avalid{\gamma}{g}{\modEnd{S}{S(\delta)}}{}$.
    It remains to show that $S(a) \limply S(g)$ which is immediate from the definition of the MPC solution since $\fwd{a}{\namedGame{\alpha}{a}} = \fwd{g}{\namedGame{\alpha}{a}}$ in this case.
    \item $\namedGame{\gamma}{g}^\ast$: observe that $\restrict{S}{\nodes{\namedGameSpace{\gamma}{g}}}$ is the MPC solution for game $\gamma$ and Angel winning condition $\langle \gamma^* \rangle \phi$.
    $S(a) = \langle \gamma^* \rangle \phi$ and $S(a) \limply \phi$, so $\langle \gamma^* \rangle \phi = S(a)\lor\phi$, so $\restrict{S}{\nodes{\namedGameSpace{\gamma}{g}}}$ is the MPC solution for game $\gamma$ and Angel winning condition $S(a)\lor\phi$.
    Then, by the inductive hypothesis, $\avalid{\gamma}{g}{\restrict{\modEnd{S}{S(a)}}{\nodes{\namedGameSpace{\gamma}{g}}}}{}$.
    So $\avalid{\gamma}{g}{\modEnd{S}{S(a)}}{}$.
    The remaining condition is $S(a) \limply \langle \aproj{\namedGame{\alpha}{a}}{S}{\phi} \rangle \phi$.
    For MPC solution $S$, $\ddiamond{\aproj{\namedGame{\alpha}{a}}{S}{\phi}}{\phi} = \langle \gamma^* \rangle \phi$ (\rref{lem:mpc-proj-id}).
    Since $S(a) = \langle \gamma^* \rangle \phi$, the condition is satisfied.
    \item $\namedGame{\gamma}{g}^{\times}$: observe that $\restrict{S}{\nodes{\namedGameSpace{\gamma}{g}}}$ is the MPC solution for game $\gamma$ and Angel winning condition $\langle \gamma^{\times} \rangle \phi$.
    By definition, $S(a) = \langle \gamma^{\times} \rangle \phi$.
    So, by the inductive hypothesis, $\avalid{\gamma}{g}{\restrict{\modEnd{S}{S(a)}}{\nodes{\namedGameSpace{\gamma}{g}}}}{}$.
    Thus $\avalid{\gamma}{g}{\modEnd{S}{S(a)}}{}$.
    Next we show that $\models S(a) \limply S(g) \land \phi$.
    Using the \dGL axiom \irref{diterated}, $\langle \gamma^{\times} \rangle \phi = \langle \gamma \rangle \langle \gamma^{\times} \rangle \phi \land \phi$.
    Substituting back the MPC definitions of $S(a)=\langle \gamma^{\times} \rangle \phi$ and $S(g)=\langle \gamma \rangle S(a)$, this means $S(a) = S(g) \land \phi$.
    So, $\models S(a) \limply S(g) \land \phi$.
  \end{enumerate}
\end{proof}

\begin{lemma}[Subvalue maps win]
  \label{lem:subvalue-universal}
  For any game $\namedGame{\alpha}{a}$, Angelic subvalue map \(S\), and compatible Angel winning condition $\phi$, \(\ddiamond{\alpha}{\phi} \models \ddiamond{\asubst{\namedGame{\alpha}{a}}{S}{}}{\phi}\).
\end{lemma}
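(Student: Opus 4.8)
The plan is to prove $\models \ddiamond{\alpha}{\phi} \limply \ddiamond{\asubst{\namedGame{\alpha}{a}}{S}{}}{\phi}$ by structural induction on $\namedGame{\alpha}{a}$, mirroring the case analysis of \rref{lem:forall-projection} and \rref{thm:subvalue-map-winning-region}. In each case I would unfold the universal projection with the \dGL axioms and exploit the defining feature that $\asubst{\namedGame{\alpha}{a}}{S}{}$ is an \emph{all-Demon} game: every originally Angelic construct has been dualized and guarded by Demonic tests $!S(\cdot)$, so Angel makes no move and instead wins whenever Demon is driven to a terminal state, loses a guard, or iterates forever. A useful subsidiary observation, provable by a short induction of its own, is that $\ddiamond{\asubst{\namedGame{\gamma}{g}}{T}{}}{\top}$ is valid for \emph{any} game and map, since in an all-Demon game Angel can always force termination.

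The easy cases ride on this observation. For atomic games not controlled by Angel $\asubst{\namedGame{\alpha}{a}}{S}{}=\alpha$ and the statement is an identity. For the Angel-controlled constructs whose projection terminates with an inserted assertion — namely $\namedGame{x:=*}{a}$, the Angel ODE, and the Angel loop $\namedGame{(\namedGame{\gamma}{g})^\ast}{a}$ — the trailing Demonic test $!S(\finalNode)$ absorbs compatibility: by \irref{composed} and \irref{testb} the postcondition becomes $S(\finalNode)\limply\phi$, which is valid (hence $\equiv\top$) since $\phi$ is compatible, so $\ddiamond{\asubst{\namedGame{\alpha}{a}}{S}{}}{\phi}$ is itself valid and the implication holds vacuously, with no appeal to the induction hypothesis. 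For the branch cases $\namedGame{(\namedGame{\gamma}{g} \cup \namedGame{\delta}{d})}{a}$ and $\namedGame{(\namedGame{\gamma}{g} \cap \namedGame{\delta}{d})}{a}$ I would split the original winning region with \irref{choiced}/\irref{dchoiced}, peel the inserted guards with \irref{testb} and \irref{duald}, and apply the induction hypothesis to $\gamma$ and $\delta$; here $S$ restricted to either branch is still a subvalue map for it because no terminal re-pointing occurs ($\fwd{b}{\namedGame{\alpha}{a}}=\fwd{b}{\namedGame{\gamma}{g}}$), and the subvalue inequality $\models S(g) \limply \ddiamond{\fwd{g}{\namedGame{\alpha}{a}}}{S(\finalNode)}$ supplies exactly the branch winnability the guard demands.

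The genuine work, and the main obstacle, lies in sequential composition $\namedGame{(\namedGame{\gamma}{g};\namedGame{\delta}{d})}{a}$ and the Demon loop $\namedGame{(\namedGame{\gamma}{g})^\times}{a}$, where $\phi$ must actually be realized at the true end of play rather than discharged by a trailing test. In both, recursing into the first component re-points the terminal subgame, feeding the body the modified map $\modEnd{S}{S(d)}$ (resp.\ $\modEnd{S}{S(a)}$). The crux is to check that this re-pointed map is still a subvalue map for the component and is compatible with the new \emph{modal} postcondition $\ddiamond{\delta}{\phi}$ (resp.\ $\ddiamond{\gamma^\times}{\phi}$); this is the step that consumes the defining inequalities, and it is delicate precisely because a subvalue map only bounds $S(a)$ from above, so it is the tightness of the loop/sequence subvalue that makes the re-pointed map a genuine subvalue map. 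With that in hand, sequential composition closes by \irref{composed} and two applications of the induction hypothesis, while the Demon loop closes by the invariant rule for Demon's repetition (\irref{dinvindg}) taking $\ddiamond{\gamma^\times}{\phi}$ as invariant, unfolded one step via \irref{diterated} so the induction hypothesis on the body advances the invariant and \irref{M} recovers $\phi$ at the exit.
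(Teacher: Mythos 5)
Your case analysis for the atomic games, $x:=*$, the Angel ODE, the two choices, sequential composition, and the Demon loop tracks the paper's proof essentially step for step: the paper likewise reduces the trailing Demonic test to the postcondition $S(\finalNode)\limply\phi$, which compatibility makes valid, discharges the choice guards with the subvalue inequality plus the induction hypothesis, uses \irref{composed} and monotonicity for sequencing, and closes the Demon loop with the invariant $\ddiamond{\gamma^\times}{\phi}$. The genuine gap is your ``subsidiary observation'' and the Angel-loop case you make depend on it. The universal projection is \emph{not} an all-Demon game: it dualizes Angel's \emph{choices}, but it leaves Angel's \emph{tests} $\ptest{Q}$ untouched, since $\ptest{Q}$ falls under ``atomic and not controlled by Angel'' in \rref{def:subvalue-ext}. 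Hence $\ddiamond{\asubst{\namedGame{\gamma}{g}}{T}{}}{\top}$ is not valid for arbitrary games and maps; already $\gamma\equiv\ptest{\bot}$ gives $\ddiamond{\asubst{\namedGame{\gamma}{g}}{T}{}}{\top}\equiv\ddiamond{\ptest{\bot}}{\top}\equiv\bot$. (Your intuition that infinite Demon iteration is harmless to Angel in \dGL is correct; failing her own tests is the loss mode your observation misses.)

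Consequently the Angel loop case does not hold ``vacuously with no appeal to the induction hypothesis.'' In $\ddiamond{(!S(g)\seq\asubst{\namedGame{\gamma}{g}}{\modEnd{S}{S(a)}}{})^\times}{\top}$ Demon may choose to iterate, and inside the body Angel must survive her own tests: with body $\gamma\equiv\ptest{x>0}\seq x:=x-1$, any state satisfying the guard but with $x\leq 0$ is lost for Angel. What excludes such states is exactly the subvalue property of $S$ (here $\models S(g)\limply\ddiamond{\fwd{g}{\namedGame{\alpha}{a}}}{S(\finalNode)}$ forces $S(g)$ to imply $x>0$), transported into the projected body by the induction hypothesis. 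This is precisely how the paper argues the case: the loop invariant rule with invariant $\ddiamond{\alpha}{\phi}$, whose inductive step is the induction hypothesis applied to $\namedGame{\gamma}{g}$ with the re-pointed map $\modEnd{S}{S(a)}$, compatible because $\models S(a)\limply\ddiamond{\alpha}{\phi}$ by the subvalue definition. A secondary caution: your stated ``crux'' for sequencing and the Demon loop, namely checking that the re-pointed map is \emph{still a subvalue map} for the component, is not actually derivable from \rref{def:subvalue-map} — the definition yields $\models S(d)\limply\ddiamond{\delta}{S(\finalNode)}$, and turning $\models S(b)\limply\ddiamond{\fwd{b}{\namedGame{\gamma}{g}}}{\ddiamond{\delta}{S(\finalNode)}}$ into $\models S(b)\limply\ddiamond{\fwd{b}{\namedGame{\gamma}{g}}}{S(d)}$ would require the converse implication under the modality. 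The paper's own proof applies the induction hypothesis at this point without verifying it, so your instinct that this step is delicate is sound, but ``tightness'' of the loop/sequence subvalue does not supply the missing property for arbitrary subvalue maps.
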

\begin{proof}
  The proof follows from structural induction along with the application of the usual \dGL axioms and proof rules.

  \begin{itemize}
    \item If \(\namedGame{\alpha}{a}\) is atomic and not controlled by Angel, i.e., \(\alpha \in \{x:=e, x:=\otimes, \ptest{Q}, !Q, \{x'=f(x)\ \&\ Q\}^d\}\), then \(\ddiamond{\alpha}{\phi} = \ddiamond{\asubst{\namedGame{\alpha}{a}}{S}{}}{\phi}\) and the result is immediate.
    \item If \(\namedGame{\alpha}{a} = x := \ast\), then \(\asubst{\namedGame{\alpha}{a}}{S}{}\) is \(x:=\otimes \seq ! S(\finalNode)\).
    Thus we must show that
    \[\ddiamond{\alpha}{\phi} \models \ddiamond{x:=\otimes \seq ! S(\finalNode)}{\phi}.\]
    By the \dGL axioms \irref{composed} and \irref{testb}, this is
    \[\ddiamond{\alpha}{\phi} \models \ddiamond{x:=\otimes}{(S(\finalNode) \limply \phi)}.\]
    Since \(\phi\) is a compatible Angel winning condition, \(S(\finalNode) \limply \phi\) holds.
    Thus, we need to show
    \[\ddiamond{\alpha}{\phi} \models \ddiamond{x:=\otimes}{\top}.\]
    By the semantics of nondeterministic assignments in \dGL, the succedent is \(\top\), completing the proof.
    \item If \(\namedGame{\alpha}{a}\) is \(\{x'=f(x)\ \&\ Q\}\), then the argument is similar to the previous case.
    \(\asubst{\namedGame{\alpha}{a}}{S}{}\) is \(\{x'=f(x)\ \&\ Q\}^d \seq ! S(\finalNode)\).
    Thus we must show that
    \[\ddiamond{\alpha}{\phi} \models \ddiamond{\{x'=f(x)\ \&\ Q\}^d \seq ! S(\finalNode)}{\phi}.\]
    Applying the \dGL axioms \irref{composed}, \irref{duald}, and \irref{testb}, this is
    \[\ddiamond{\alpha}{\phi} \models \ddiamond{\{x'=f(x)\ \&\ Q\}^d}{(S(\finalNode) \limply \phi)}.\]
    Since \(\phi\) is a compatible Angel winning condition, \(S(\finalNode) \limply \phi\) holds.
    Thus, we need to show
    \[\ddiamond{\alpha}{\phi} \models \ddiamond{\{x'=f(x)\ \&\ Q\}^d}{\top}.\]
    By the semantics of differential games in \dGL, the succedent is \(\top\), completing the proof.
    \item If \(\namedGame{\alpha}{a} = \namedGame{\gamma}{g} \cup \namedGame{\delta}{d}\), then \(\asubst{\namedGame{\alpha}{a}}{S}{}\) is \((!S(g) \seq \asubst{\namedGame{\gamma}{g}}{S}{}) \cap (!S(d) \seq \asubst{\namedGame{\delta}{d}}{S}{})\).
    Thus we must show that
    \[ \ddiamond{\alpha}{\phi} \models \ddiamond{(!S(g) \seq \asubst{\namedGame{\gamma}{g}}{S}{}) \cap (!S(d) \seq \asubst{\namedGame{\delta}{d}}{S}{})}{\phi}.\]
    By the \dGL axioms \irref{composed} and \irref{choiceb}, this is
    \[ \ddiamond{\alpha}{\phi} \models (S(g) \limply \ddiamond{\asubst{\namedGame{\gamma}{g}}{S}{}}{\phi}) \land (S(d) \limply \ddiamond{\asubst{\namedGame{\delta}{d}}{S}{}}{\phi}).\]
    But by the inductive hypothesis, we have \(\ddiamond{\gamma}{\phi} \models \ddiamond{\asubst{\namedGame{\gamma}{g}}{S}{}}{\phi}\) and \(\ddiamond{\delta}{\phi} \models \ddiamond{\asubst{\namedGame{\delta}{d}}{S}{}}{\phi}\).
    \(S(g) \models \ddiamond{\gamma}{\phi}\) and \(S(d) \models \ddiamond{\delta}{\phi}\) by the definition of a subvalue map.
    By the proof rules \irref{implyr} and \irref{weakenl}, the following two formulas hold:
    \(S(g) \limply \ddiamond{\asubst{\namedGame{\gamma}{g}}{S}{}}{\phi}\) and \(S(d) \limply \ddiamond{\asubst{\namedGame{\delta}{d}}{S}{}}{\phi}\).
    Thus, we have
    \(\ddiamond{\alpha}{\phi} \models (S(g) \limply \ddiamond{\asubst{\namedGame{\gamma}{g}}{S}{}}{\phi}) \land (S(d) \limply \ddiamond{\asubst{\namedGame{\delta}{d}}{S}{}}{\phi})\),
    completing the proof for this case.
    \item 
    If \(\namedGame{\alpha}{a} = \namedGame{\gamma}{g}^\ast\), then \(\ddiamond{\asubst{\namedGame{\alpha}{a}}{S}{}}{\phi} = \ddiamond{(!S(g) \seq \asubst{\namedGame{\gamma}{g}}{\modEnd{S}{S(a)}}{})^\times \seq !S(\finalNode)}{\phi}\).
    
    Applying axioms \irref{composed} and \irref{testb}, this is
    \[\ddiamond{(!S(g) \seq \asubst{\namedGame{\gamma}{g}}{\modEnd{S}{S(a)}}{})^\times}{(S(\finalNode) \limply \phi)}\].

    Since \(\phi\) is a compatible Angel winning condition, \(\models S(\finalNode) \limply \phi\).

    Thus, we need to show
    \[\ddiamond{\alpha}{\phi} \models \ddiamond{(!S(g) \seq \asubst{\namedGame{\gamma}{g}}{\modEnd{S}{S(a)}}{})^\times}{(\top)}\]

    This we do using the loop rule with invariant \(\ddiamond{\alpha}{\phi}\).
    The inductive stop is shown using the inductive hypothesis, i.e., \(\ddiamond{\gamma}{\phi} \models \ddiamond{\asubst{\namedGame{\gamma}{g}}{\modEnd{S}{S(a)}}{}}{\ddiamond{\alpha}{\phi}}\).
    Here, \(\ddiamond{\alpha}{\phi}\) is a compatible winning condition because \(S(a) \limply \ddiamond{\alpha}{\phi}\) holds by the definition of a subvalue map.
    \item If \(\namedGame{\alpha}{a} = \namedGame{\gamma}{g} \cap \namedGame{\delta}{d}\), then \(\asubst{\namedGame{\alpha}{a}}{S}{}\) is \((\asubst{\namedGame{\gamma}{g}}{S}{}) \cap (\asubst{\namedGame{\delta}{d}}{S}{})\).
    We must show that
    \[\ddiamond{\gamma\cap\delta}{\phi} \models \ddiamond{(\asubst{\namedGame{\gamma}{g}}{S}{}) \cap (\asubst{\namedGame{\delta}{d}}{S}{})}{\phi}.\]
    By the \dGL axiom \irref{choiced}, this is
    \[\ddiamond{\gamma}{\phi} \land \ddiamond{\delta}{\phi} \models \ddiamond{\asubst{\namedGame{\gamma}{g}}{S}{}}{\phi} \land \ddiamond{\asubst{\namedGame{\delta}{d}}{S}{}}{\phi}.\]
    Applying the proof rules \irref{andr} and \irref{weakenl}, the goals to show are
    \[\ddiamond{\gamma}{\phi} \models \ddiamond{\asubst{\namedGame{\gamma}{g}}{S}{}}{\phi} \textrm{ and } \ddiamond{\delta}{\phi} \models \ddiamond{\asubst{\namedGame{\delta}{d}}{S}{}}{\phi}.\]
    Both of these follow from the inductive hypotheses.
    \item If \(\namedGame{\alpha}{a} = \namedGame{\gamma}{g} \seq \namedGame{\delta}{d}\), then \(\asubst{\namedGame{\alpha}{a}}{S}{}\) is \((\asubst{\namedGame{\gamma}{g}}{\modEnd{S}{S(d)}}{}) \seq (\asubst{\namedGame{\delta}{d}}{S}{})\).
    We must show that
    \[\ddiamond{\gamma\seq\delta}{\phi} \models \ddiamond{(\asubst{\namedGame{\gamma}{g}}{\modEnd{S}{S(d)}}{}) \seq (\asubst{\namedGame{\delta}{d}}{S}{})}{\phi}.\]
    By the \dGL axiom \irref{composeb}, this is
    \[\ddiamond{\gamma}{\ddiamond{\delta}{\phi}} \models \ddiamond{\asubst{\namedGame{\gamma}{g}}{\modEnd{S}{S(d)}}{}}{\ddiamond{\asubst{\namedGame{\delta}{d}}{S}{}}{\phi}}.\]
    By the inductive hypothesis, we know that
    \(\ddiamond{\gamma}{\ddiamond{\delta}{\phi}} \limply \ddiamond{\asubst{\namedGame{\gamma}{g}}{\modEnd{S}{S(d)}}{}}{S(d)}\) holds because \(\ddiamond{\delta}{\phi}\) is compatible with \(\modEnd{S}{S(d)}\), since by the definition of a subvalue map, we have \(S(d) \limply \ddiamond{\delta}{\phi}\).
    Per inductive hypothesis, we also know that
    \( \ddiamond{\delta}{\phi} \models \ddiamond{\asubst{\namedGame{\delta}{d}}{S}{}}{\phi}\).
    Thus by the transitivity of implication along with the monotonicity rule, we complete the proof.
    \item If \(\namedGame{\alpha}{a} = \namedGame{\gamma}{g}^\times\), then \(\asubst{\namedGame{\alpha}{a}}{S}{}\) is \((\asubst{\namedGame{\gamma}{g}}{\modEnd{S}{S(a)}}{}^\times)\).
    We must show that
    \[\ddiamond{\gamma^\times}{\phi} \models \ddiamond{(\asubst{\namedGame{\gamma}{g}}{\modEnd{S}{S(a)}}{})^\times}{\phi}.\]
    We show this by using the \dGL loop rule \irref{loop} with invariant \(\ddiamond{\gamma^\times}{\phi}\).
    The inductive stop is shown using the inductive hypothesis, i.e., 
    \[\ddiamond{\gamma}{\ddiamond{\gamma^\times}{\phi}} \models \ddiamond{\asubst{\namedGame{\gamma}{g}}{\modEnd{S}{S(a)}}{}}{\ddiamond{\gamma^\times}{\phi}},\]
    where \(\ddiamond{\gamma^\times}{\phi}\) is a compatible winning condition because \(S(a) \limply \ddiamond{\gamma^\times}{\phi}\) holds by the definition of a subvalue map.
    The invariant holds initially by \irref{id} and implies the postcondition by the \dGL axiom \irref{diterated}.
  \end{itemize}
\end{proof}

\begin{lemma}[Inductive subvalue maps have some strategy]
  \label{lem:inductive-subvalue-existence}
  For any game $\namedGame{\alpha}{a}$, Angelic inductive subvalue map \(S\), and compatible Angel winning condition $\phi$, \(S(a) \models \ddiamond{\aproj{\namedGame{\alpha}{a}}{S}{}}{\phi}\).
\end{lemma}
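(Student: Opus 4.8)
The plan is to prove $S(a) \models \ddiamond{\aproj{\namedGame{\alpha}{a}}{S}{}}{\phi}$ by structural induction on $\namedGame{\alpha}{a}$, mirroring the case split of the existential projection \rref{def:subvalue-projection} and feeding in the recursive soundness conditions of \rref{def:local-envelope-conditions} together with compatibility ($\models S(\finalNode) \limply \phi$). Throughout, the inductive hypothesis is applied to proper subgames, possibly under a modified map $\modEnd{S}{Q}$; this is legitimate because the recursive clauses of \rref{def:local-envelope-conditions} guarantee that the relevant sub-map is itself an inductive subvalue map, and the chosen postcondition is trivially compatible with it (e.g. $(\modEnd{S}{Q})(\finalNode) = Q$ and $\models Q \limply Q$).

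For the atomic cases not controlled by Angel, $\aproj{\namedGame{\alpha}{a}}{S}{} = \alpha$ and the validity clause gives $\models S(a) \limply \ddiamond{\alpha}{S(\finalNode)}$, which upgrades to $\phi$ by compatibility and \irref{M}. For $x := *$ and the Angelic ODE, the projection appends a guard, so I rewrite $\ddiamond{\alpha \seq \ptest{S(\finalNode)}}{\phi}$ via \irref{composed} and \irref{testd} into $\ddiamond{\alpha}{(S(\finalNode) \land \phi)}$, which collapses to $\ddiamond{\alpha}{S(\finalNode)}$ since $\models S(\finalNode) \limply \phi$. The Angelic choice uses \irref{choiced}, \irref{composed} and \irref{testd} to reduce the goal to $(S(g) \land \ddiamond{\aproj{\namedGame{\gamma}{g}}{S}{}}{\phi}) \lor (S(d) \land \ddiamond{\aproj{\namedGame{\delta}{d}}{S}{}}{\phi})$, which follows from the inductive hypothesis together with $\models S(a) \limply S(g) \lor S(d)$; the Demonic choice is dual, using \irref{dchoiced} and $\models S(a) \limply S(g) \land S(d)$. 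In sequential composition I apply \irref{composed} and then the inductive hypothesis on $\namedGame{\delta}{d}$ to obtain $\models S(d) \limply \ddiamond{\aproj{\namedGame{\delta}{d}}{S}{}}{\phi}$; lifting this through \irref{M} reduces the goal to $\models S(a) \limply \ddiamond{\aproj{\namedGame{\gamma}{g}}{\modEnd{S}{S(d)}}{}}{S(d)}$, which is exactly the inductive hypothesis for $\namedGame{\gamma}{g}$ under $\modEnd{S}{S(d)}$ with postcondition $S(d)$, combined with $\models S(a) \limply S(g)$.

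The loop cases are where the design of \rref{def:local-envelope-conditions} pays off. For the Angelic loop the result is essentially immediate: its first validity clause is literally $\models S(a) \limply \ddiamond{\aproj{\namedGame{\alpha}{a}}{S}{}}{S(\finalNode)}$, so one \irref{M} step against $\models S(\finalNode) \limply \phi$ finishes. For the Demonic loop $\namedGame{(\namedGame{\gamma}{g})^\times}{a}$, where $\aproj{\namedGame{\alpha}{a}}{S}{} = (\aproj{\namedGame{\gamma}{g}}{\modEnd{S}{S(a)}}{})^\times$, I run the \dGL invariant rule \irref{invind} with invariant $S(a)$: the inductiveness premise $\models S(a) \limply \ddiamond{\aproj{\namedGame{\gamma}{g}}{\modEnd{S}{S(a)}}{}}{S(a)}$ is the inductive hypothesis on $\namedGame{\gamma}{g}$ under $\modEnd{S}{S(a)}$ with postcondition $S(a)$ (using $\models S(a) \limply S(g)$), and the postcondition premise $\models S(a) \limply \phi$ comes from $\models S(a) \limply S(\finalNode) \limply \phi$; a final \irref{M} delivers $\phi$. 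I expect the main obstacle to be the bookkeeping in the sequential and Demonic-loop cases, namely correctly instantiating the inductive hypothesis under the modified map $\modEnd{S}{Q}$ with the right compatible postcondition and, for the Demonic loop, staging the \irref{invind} and \irref{M} applications so that $S(a)$ serves simultaneously as loop invariant and as the bridge to the compatible winning condition.
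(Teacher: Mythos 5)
Your proposal is correct and follows essentially the same route as the paper's own proof: structural induction over the game, unfolding the projection and the clauses of \rref{def:local-envelope-conditions} case by case with \irref{composed}/\irref{testd}/\irref{choiced}/\irref{dchoiced}, dispatching the Angelic loop immediately from its first validity clause plus \irref{M}, and handling the Demonic loop with $S(a)$ as the loop invariant (the paper invokes \irref{loop}, you invoke \irref{invind}, but the argument is the same). Your explicit justification for instantiating the inductive hypothesis under $\modEnd{S}{Q}$ with trivially compatible postcondition is a detail the paper leaves implicit, but it does not change the structure of the argument.
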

\begin{proof}
  This follows from structural induction, unwrapping the inductive subvalue map definition, and applying the usual \dGL axioms and proof rules.
  \begin{itemize}
  \item If \(\namedGame{\alpha}{a}\) is atomic and not controlled by Angel, i.e., \(\alpha \in \{x:=e, x:=\otimes, \ptest{Q}, !Q, \{x'=f(x)\ \&\ Q\}^d\}\), then \(\aproj{\namedGame{\alpha}{a}}{S}{} = \namedGame{\alpha}{a}\).
  In these cases, by the definition of a inductive subvalue map, \(\models S(a) \limply \ddiamond{\namedGame{\alpha}{a}}{\finalNode}\).
  Further since \(\phi\) is a compatible winning condition, \(\models S(\finalNode) \limply \phi\).
  By the monotonicity rule, \(S(a) \models \ddiamond{\namedGame{\alpha}{a}}{\phi}\).
  This allows us to conclude that \(S(a) \models \ddiamond{\aproj{\namedGame{\alpha}{a}}{S}{}}{\phi}\).
  \item If \(\namedGame{\alpha}{a} = \namedGame{x := \ast}{a}\), then \(\aproj{\namedGame{\alpha}{a}}{S}{} = x:=\ast \seq \ptest{S(\finalNode)}\).
  Thus we must show that
  \[S(a) \models \ddiamond{x:=\ast \seq \ptest{S(\finalNode)}}{\phi}.\]
  By the \dGL axioms \irref{composed} and \irref{testd}, this is
  \[S(a) \models \ddiamond{x:=\ast}{S(\finalNode) \land \phi}.\]
  Since \(\phi\) is a compatible Angel winning condition, \(S(\finalNode) \limply \phi\) holds. Applying the monotonicity rule, we must show that
  \[S(a) \models \ddiamond{x:=\ast}{S(\finalNode)}.\]
  By the definition of an inductive subvalue map for Angelic free assignments along with the \irref{implyr} rule, we conclude that \(S(a) \models \ddiamond{x:=\ast}{S(\finalNode)}\).
  \item If \(\namedGame{\alpha}{a} = \namedGame{\{x'=f(x)\ \&\ Q\}}{a}\), the proof follows exactly the same steps as the previous case.
  \item If \(\namedGame{\alpha}{a} = \namedGame{\gamma}{g} \cup \namedGame{\delta}{d}\), then \(\aproj{\namedGame{\alpha}{a}}{S}{} = (\ptest{S(g)} \seq \aproj{\namedGame{\gamma}{g}}{S}{}) \cup (\ptest{S(d)} \seq \aproj{\namedGame{\delta}{d}}{S}{})\).
  Thus we must show that
  \[S(a) \models \ddiamond{(\ptest{S(g)} \seq \aproj{\namedGame{\gamma}{g}}{S}{}) \cup (\ptest{S(d)} \seq \aproj{\namedGame{\delta}{d}}{S}{})}{\phi}.\]
  By the \dGL axioms \irref{composed} and \irref{choiced}, this is
  \[S(a) \models (S(g) \limply \ddiamond{\aproj{\namedGame{\gamma}{g}}{S}{}}{\phi}) \lor (S(d) \limply \ddiamond{\aproj{\namedGame{\delta}{d}}{S}{}}{\phi}).\]
  By the definition of an inductive subvalue map, we have \(\models S(a) \limply S(g)\lor S(d)\).
  Thus by proof rules \irref{cut}, \irref{orl}, \irref{orr}, and \irref{weakenr} it suffices to show the following two goals:
  \[S(g) \models S(g) \limply \ddiamond{\aproj{\namedGame{\gamma}{g}}{S}{}}{\phi} \textrm{ and } S(d) \models S(d) \limply \ddiamond{\aproj{\namedGame{\delta}{d}}{S}{}}{\phi}.\]
  Applying the rule \irref{implyr}, the remaining goals hold per the inductive hypotheses, \(S(g) \models \ddiamond{\aproj{\namedGame{\gamma}{g}}{S}{}}{\phi}\) and \(S(d) \models \ddiamond{\aproj{\namedGame{\delta}{d}}{S}{}}{\phi}\).
  \item We show the case of Angel loop, which follows directly from the definition of an inductive subvalue map.
  If \(\namedGame{\alpha}{a} = \namedGame{\gamma}{g}^\ast\), then by the definition of an inductive subvalue map,
  \(S(a) \models \ddiamond{\aproj{\namedGame{\alpha}{a}}{S}{}}{S(\finalNode)}\).
  Since \(\phi\) a compatible winning condition, we have \(S(\finalNode) \models \phi\).
  Using \dGL rule \irref{M}, we can conclude that
  \(S(a) \models \ddiamond{\aproj{\namedGame{\alpha}{a}}{S}{}}{\phi}\).
  \item If \(\namedGame{\alpha}{a} = \namedGame{\gamma}{g} \seq \namedGame{\delta}{d}\), then \(\aproj{\namedGame{\alpha}{a}}{S}{} = \aproj{\namedGame{\gamma}{g}}{\modEnd{S}{S(d)}}{} \seq \aproj{\namedGame{\delta}{d}}{S}{}\).
  Thus we must show that
  \[S(a) \models \ddiamond{\aproj{\namedGame{\gamma}{g}}{\modEnd{S}{S(d)}}{} \seq \aproj{\namedGame{\delta}{d}}{S}{}}{\phi}.\]
  By the \dGL axioms \irref{composed}, this is
  \[S(a) \models \ddiamond{\aproj{\namedGame{\gamma}{g}}{\modEnd{S}{S(d)}}{}}{\ddiamond{\aproj{\namedGame{\delta}{d}}{S}{}}{\phi}}.\]
  By the inductive hypothesis along with the fact that \(\models S(a)\limply S(g)\) from the definition of inductive subvalue maps,
  \[S(a) \models \ddiamond{\aproj{\namedGame{\gamma}{g}}{\modEnd{S}{S(d)}}{}}{S(d)}.\]
  Additionally, by the inductive hypothesis,
  \[S(d) \models \ddiamond{\aproj{\namedGame{\delta}{d}}{S}{}}{\phi}.\]
  Applying the \irref{M} rule, we can conclude that
  \[S(a) \models \ddiamond{\aproj{\namedGame{\gamma}{g}}{\modEnd{S}{S(d)}}{}}{\ddiamond{\aproj{\namedGame{\delta}{d}}{S}{}}{\phi}},\]
  completing the proof for this case.
  \item If \(\namedGame{\alpha}{a} = \namedGame{\gamma}{g} \cap \namedGame{\delta}{d}\), then \(\aproj{\namedGame{\alpha}{a}}{S}{} = \aproj{\namedGame{\gamma}{g}}{S}{} \cap \aproj{\namedGame{\delta}{d}}{S}{}\).
  Thus we must show that
  \[S(a) \models \ddiamond{\aproj{\namedGame{\gamma}{g}}{S}{} \cap \aproj{\namedGame{\delta}{d}}{S}{}}{\phi}.\]
  By the \dGL axiom \irref{dchoiced}, this is
  \[S(a) \models \ddiamond{\aproj{\namedGame{\gamma}{g}}{S}{}}{\phi} \land \ddiamond{\aproj{\namedGame{\delta}{d}}{S}{}}{\phi}.\]
  By the definition of an inductive subvalue map, we have
  \(S(a) \models S(g) \land S(d)\).
  Applying the proof rules \irref{andr}, \irref{cut}, \irref{andl} and \irref{weakenl}, the goals to show are
  \[S(g) \models \ddiamond{\aproj{\namedGame{\gamma}{g}}{S}{}}{\phi} \textrm{ and } S(d) \models \ddiamond{\aproj{\namedGame{\delta}{d}}{S}{}}{\phi}.\]
  Both of these follow from the inductive hypotheses, completing the proof for this case.
  \item If \(\namedGame{\alpha}{a} = \namedGame{\gamma}{g}^\times\), then \(\aproj{\namedGame{\alpha}{a}}{S}{} = (\aproj{\namedGame{\gamma}{g}}{(\modEnd{S}{S(a)})}{})^\times\).
  Thus we must show that
  \[S(a) \models \ddiamond{(\aproj{\namedGame{\gamma}{g}}{(\modEnd{S}{S(a)})}{})^\times}{S(\finalNode)}.\]
  We use the \irref{loop} rule with invariant \(S(a)\).
  \(S(a)\) holds initially by assumption.
  We show that is holds inductively, i.e., that
  \[S(a) \models \ddiamond{\aproj{\namedGame{\gamma}{g}}{(\modEnd{S}{S(a)})}{}}{S(a)}.\]
  By the definition of an inductive subvalue map, \(\models S(a) \limply S(g)\). Thus, after applying \irref{cut}, we must show
  \[S(g) \models \ddiamond{\aproj{\namedGame{\gamma}{g}}{(\modEnd{S}{S(a)})}{}}{S(a)}.\]
  This holds by the inductive hypothesis.
  Finally, the invariant implies the postcondition by the definition of an inductive subvalue map, which requires that \(\models S(a) \limply S(\finalNode)\).
  \end{itemize}
\end{proof}

\begin{lemma}[Inductive subvalue maps win]
  \label{lem:inductive-subvalue-universal}
  For any game $\namedGame{\alpha}{a}$, Angelic inductive subvalue map \(S\), and compatible Angel winning condition $\phi$, \(S(a) \models \ddiamond{\asubst{\namedGame{\alpha}{a}}{S}{}}{\phi}\).
\end{lemma}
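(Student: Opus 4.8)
The plan is to derive this lemma as a short corollary of the already-established \rref{lem:subvalue-universal} (Subvalue maps win), reusing the structural induction carried out there rather than repeating it. The key observation is that for an \emph{inductive} subvalue map, membership in $S(a)$ is by itself strong enough to guarantee that Angel can win the \emph{unconstrained} game $\alpha$ for the terminal winning condition $S(\finalNode)$ — and this is precisely the hypothesis that \rref{lem:subvalue-universal} needs. So the work reduces to bridging from the inductive-map hypothesis $S(a)$ to the winnability hypothesis $\ddiamond{\alpha}{\phi}$.

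First I would invoke \rref{cor:soundness} (inductive subvalue map correctness): since $S$ is an inductive Angelic subvalue map for $\namedGame{\alpha}{a}$, it gives $\models S(a) \limply \ddiamond{\alpha}{S(\finalNode)}$. Next, because $\phi$ is a compatible winning condition, $\models S(\finalNode) \limply \phi$ by \rref{def:compatible-winning-conditions}, so the monotonicity rule \irref{M} upgrades the previous implication to $\models S(a) \limply \ddiamond{\alpha}{\phi}$, i.e. $S(a) \models \ddiamond{\alpha}{\phi}$. Then I would apply \rref{lem:subvalue-universal}, which states $\ddiamond{\alpha}{\phi} \models \ddiamond{\asubst{\namedGame{\alpha}{a}}{S}{}}{\phi}$ for any Angelic subvalue map and compatible $\phi$; its hypothesis is met since every inductive subvalue map is a subvalue map by \rref{thm:consistancy}. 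Chaining the two entailments $S(a) \models \ddiamond{\alpha}{\phi}$ and $\ddiamond{\alpha}{\phi} \models \ddiamond{\asubst{\namedGame{\alpha}{a}}{S}{}}{\phi}$ by transitivity of implication yields the goal $S(a) \models \ddiamond{\asubst{\namedGame{\alpha}{a}}{S}{}}{\phi}$.

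The main obstacle here is conceptual rather than computational: one must recognize that the genuinely hard part — the loop-invariant reasoning inside universal projection, where one shows that being in $S(a)$ survives arbitrarily many Demon repetitions while still allowing a completed winning play — has already been absorbed into \rref{lem:subvalue-universal}, so no fresh induction is required. If that lemma were unavailable, I would instead mirror its proof by structural induction on $\namedGame{\alpha}{a}$, using the \dGL axioms \irref{composed}, \irref{testb}, \irref{dchoiced}, \irref{choiceb} for the compositional cases and the \irref{loop} rule with invariant $\ddiamond{\alpha}{\phi}$ for the two loop cases. In that hypothetical redo the delicate step would be the Angelic-loop case, where the Demonic tests inserted by universal projection must be shown not to block Demon from either repeating or exiting — exactly what the inductive-map conditions relating $S(a)$ and $S(g)$ guarantee — but the present proof sidesteps this entirely by delegating to the earlier lemma.

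\afterProofE

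\begin{proofE}
  By \rref{cor:soundness}, since $S$ is an inductive Angelic subvalue map for $\namedGame{\alpha}{a}$, $\models S(a) \limply \ddiamond{\alpha}{S(\finalNode)}$.
  Since $\phi$ is a compatible Angel winning condition, $\models S(\finalNode) \limply \phi$.
  By the monotonicity rule \irref{M}, $\models S(a) \limply \ddiamond{\alpha}{\phi}$, that is, $S(a) \models \ddiamond{\alpha}{\phi}$.
  By \rref{thm:consistancy}, $S$ is also an Angelic subvalue map, so \rref{lem:subvalue-universal} applies and gives $\ddiamond{\alpha}{\phi} \models \ddiamond{\asubst{\namedGame{\alpha}{a}}{S}{}}{\phi}$.
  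Chaining these two entailments by transitivity of implication yields $S(a) \models \ddiamond{\asubst{\namedGame{\alpha}{a}}{S}{}}{\phi}$.
\end{proofE}
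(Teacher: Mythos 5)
Your proof is correct and follows essentially the same route as the paper's own (one-line) proof: both reduce the claim to \rref{lem:subvalue-universal} by noting that an inductive subvalue map is a subvalue map (\rref{thm:consistancy}), so that $S(a) \limply \ddiamond{\alpha}{\phi}$ holds via the subvalue-map definition (your \rref{cor:soundness} step) together with compatibility and monotonicity. Your write-up merely makes the bridging step $S(a) \models \ddiamond{\alpha}{\phi}$ explicit, which the paper leaves compressed.
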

\begin{proof}
  Follows from \rref{lem:subvalue-universal} because an inductive subvalue map is a subvalue map (\rref{thm:consistancy}) and \(S(a) \limply \ddiamond{\alpha}{\phi}\) holds per the definition of a subvalue map.
\end{proof}

\printProofs

\section{Additional Definitions}
\label{app:defs}
This appendix provides additional definitions and constructions including the Demonic version for definitions whose Angelic subvalue map version is already defined in the main text, and \dGL operational semantics concepts.

\subsection{Labeled Game Trees}
\label{app:labeled-game-trees}

We construct the labeled game tree, extending \cite{DBLP:journals/tocl/Platzer15}[Appendix C] with labels.
The operator \(\caret\) denotes concatenation under prefix closure. That is:
\begin{enumerate}
  \item For two actions/labels \(\act_1\) and \(\act_2\), \(\{\act_1\caret \act_2\}\) is the set \(\{\act_1, \act_1 \then \act_2\}\).
  \item For an action \(\act\) and a sequence of actions \(b\), \(\{\act \caret b\}\) is the set \(\{ \act, \act \then b\}\).
  \item For two actions/labels \(\act_1\) and \(\act_2\), and sequence \(b\), \(\{\act_1 \caret \act_2 \caret b\}\) is the set \(\{\act_1, \act_1 \then \act_2, \act_1 \then \act_2 \then b\}\).
  \item For an action \(\act\) and \emph{a set of sequences} \(\{a, b\}\), \(\{\act \caret \{a, b\}\}\) is the set \(\{\act, \act \then a, \act \then b\}\).
\end{enumerate}
Thus, \(\caret\) is overloaded to denote the concatentation of actions, sequences, and sets of sequences under prefix closure.

\begin{definition}[Labeled Semantics]
  \label{def:labeled-semantics}
  For any game \(\namedGame{\alpha}{a}\) and state \(\sigma\), the labeled semantics \(g(\namedGame{\alpha}{a})(\sigma)\) is defined as follows, where \(\caret\) denotes concatenation under prefix closure (\rref{app:labeled-game-trees}).
  \begin{align*}
    \opSemantics{x:=\theta}{\sigma} &\define \{\nodeLabel{a} \caret x:=\theta\} \\
    \opSemantics{x:=\otimes}{\sigma} &\define \{\nodeLabel{a} \caret \pdual{x:=\theta} \with \theta \in \mathbb{R}\} \\
    \opSemantics{x:=\ast}{\sigma} &\define \{\nodeLabel{a} \caret x:=\theta \with \theta \in \mathbb{R}\} \\
    \opSemantics{\{x'=f(x)\ \&\ Q\}}{\sigma} &\define \{\nodeLabel{a} \caret (x'=f(x)\&Q @ r) \with r \in \mathbb{R}, r \geq 0, \varphi(0) = \sigma \text{ for some } \\
    & \textrm{(differentiable) } \varphi : [0, r] \to \mathcal{S} \text{ such that } \frac{d\varphi(t)(x)}{dt}(\zeta) = \llbracket \theta \rrbracket_{\varphi(\zeta)}\\ 
    & \text{and } 
    \varphi(\zeta) \in \llbracket Q \rrbracket^I \text{ for all } \zeta \leq r \} \\
    \opSemantics{\ptest{Q}}{\sigma} &\define \{ \nodeLabel{a} \caret \ptest{Q} \} \\
    \opSemantics{!Q}{\sigma} &\define \{ \nodeLabel{a} \caret !Q \} \\
    \opSemantics{\namedGame{\gamma}{g} \cup \namedGame{\delta}{d}}{s} &\define \{ \nodeLabel{a} \caret \actleft \caret b \with b \in g(\namedGame{\gamma}{g})(\sigma) \} \cup \\
    & \qquad \{ \nodeLabel{a} \caret \actright \caret b \with b \in g(\namedGame{\delta}{d})(\sigma) \} \\
    \opSemantics{\namedGame{\gamma}{g} \cap \namedGame{\delta}{d}}{\sigma} &\define \{ \nodeLabel{a} \caret \pdual{\actleft} \caret b \with b \in g(\namedGame{\gamma}{g})(\sigma) \} \cup \\
    & \qquad \{ \nodeLabel{a} \caret \pdual{\actright} \caret b \with b \in g(\namedGame{\delta}{d})(\sigma) \} \\
    \opSemantics{\namedGame{\gamma}{g} \seq \namedGame{\delta}{d}}{\sigma} &\define \nodeLabel{a} \caret \actseq \caret \opSemantics{\gamma}{g} \cup \bigcup_{t \in \leaf{(\nodeLabel{a} \caret \opSemantics{\gamma}{g})}}\opSemantics{\namedGame{\delta}{d}}{\runstrategy{t}{\sigma}} \\
    \opSemantics{(\namedGame{\gamma}{g})^\ast}{\sigma} &\define \bigcup_{n < \omega} f^n(\{(\nodeLabel{a} \caret \actstop), (\nodeLabel{a} \caret \actgo)\}) \\
    & \text{where } f^n \text{ is the $n$-fold composition of the function} \\
    & f(Z) \overset{\text{def}}{=} Z \cup \bigcup_{t \caret \actgo \in \leaf(Z)} t \caret \actgo \caret \opSemantics{\alpha}{\runstrategy{t \caret \actgo}{\sigma}} \caret \{(\nodeLabel{a} \caret \actstop), (\nodeLabel{a} \caret \actgo)\} \\
    \opSemantics{(\namedGame{\gamma}{g})^\times}{\sigma} &\define \bigcup_{n < \omega} f^n(\{(\nodeLabel{a} \caret \pdual{\actstop}), (\nodeLabel{a} \caret \pdual{\actgo})\}) \\
    & \text{where } f^n \text{ is the $n$-fold composition of the function} \\
    & f(Z) \overset{\text{def}}{=} Z \cup \bigcup_{t \caret \pdual{\actgo} \in \leaf(Z)} t \caret \pdual{\actgo} \caret \opSemantics{\alpha}{\runstrategy{t \caret \pdual{\actgo}}{\sigma}}\\
  \end{align*}
\end{definition}

Leaves are the nodes of a tree with no children, i.e., any sequence \(l\) such that there is no sequence \(l \then s\) where \(s\) is a single action/label in the tree.

\subsection{Label-free Game Trees}
\label{app:label-free-game-trees}

To get back unlabeled \dGL games operational semantics, we erase labels from the labeled semantics.

\begin{definition}[Erasing Labels]
  \label{def:erase-label}
  The function \(\eraseLabel\) takes in a node or game tree and returns it with all subgame labels removed.
  First we define erasing labels for a single node. 
  Here, \(\eraseLabel\) accepts a node \(s\) (which is a sequence) and returns a label-free node (the sequence with labels dropped).
  \[ \eraseLabel(s) \define \left( a_i \with a_i \in s,\, a_i \notin \text{subgame labels} \right) \]
  Erasing labels for a game tree \(t\) proceeds by erasing labels for every node, producing a label-free tree.
  \[ \eraseLabel(t) \define \{ \eraseLabel(s) \with s \in t \textrm{ and } \eraseLabel(s)\neq () \} \]
  If erasing labels results in an empty sequence (), it is discarded.
\end{definition}

\begin{definition}[Label-free Semantics]
  \label{def:label-free-game-tree}
  The label-free game tree of a \dGL game \(\namedGame{\alpha}{a}\) results from erasing labels in the labeled sematics, i.e., \(\eraseLabel(\opSemantics{\namedGame{a}{a}}{\sigma})\).
\end{definition}

\subsection{Demonic Subvalue Projection}
\label{app:subvalue-projection}

\begin{definition}[Demonic subvalue projection]
  \label{def:subvalue-projection-demon}
  The \emph{projection} of Demonic subvalue map $S$ onto \dGL game $\namedGame{\alpha}{a}$ with Demon winning condition $\phi$, written $\dproj{\namedGameSpace{\alpha}{a}}{S}{\phi}$, is generated recursively per the structure of $\namedGame{\alpha}{a}$ as follows. If $\namedGame{\alpha}{a}$ has structure:
  \begin{align*}
    \namedGame{x:=\times}{a}
    &\textrm{ then }
    x:= \times \seq \AdvExAssign{!\phi}. \\
    \namedGame{\{x'=f(x)\ \&\ Q\}^d}{a}
    &\textrm{ then }
    \{x'=f(x)\ \&\ Q\}^d \seq \AdvExAssign{!\phi} \\
    \namedGame{(\namedGame{\gamma}{g} \cap \namedGame{\delta}{d})}{a}
    &\textrm{ then } ( \AdvExChoice{!S(g)} \seq \aproj{\namedGame{\gamma}{g}}{S}{\phi}) {\cap} (\AdvExChoice{!S(d)} \seq \aproj{\namedGame{\delta}{d}}{S}{\phi}) \\
    \namedGame{(\namedGame{\gamma}{g})^\times}{a}
    &\textrm{ then } (\AdvExLoop{!S(g)} \seq \dproj{\namedGameSpace{\gamma}{g}}{S}{S(g) \lor \phi})^{\times}; \AdvExLoop{!\phi} \\
    \namedGame{(\namedGame{\gamma}{g};\namedGame{\delta}{d})}{a}
    &\textrm{ then } \dproj{\namedGameSpace{\gamma}{g}}{S}{S(d)} \seq \dproj{\namedGameSpace{\delta}{d}}{S}{\phi} \\
    \namedGame{(\namedGame{\gamma}{g} \cup \namedGame{\delta}{d})}{a}
    &\textrm{ then } \dproj{\namedGameSpace{\gamma}{g}}{S}{\phi} \cup \dproj{\namedGameSpace{\delta}{d}}{S}{\phi}
    \quad
    \namedGame{(\namedGame{\gamma}{g})^\ast}{a}
    \textrm{ then } \dproj{\namedGameSpace{\gamma}{g}}{S}{S(a)}^\ast\\
    \textrm{atomic and}&\textrm{ not controlled by Demon, i.e., }\\
    \alpha \in &\{x:=e, x:=\otimes, \ptest{Q}, !Q, \{x'=f(x)\ \&\ Q\}^d\}, \textrm{ then } \alpha
  \end{align*}
\end{definition}

\subsection{Universal Projection of Demonic Subvalue Maps}
\label{app:subvalue-externalization-demon}

\begin{definition}[Universal Projection]
  \label{def:subvalue-ext-demon}
  The \emph{universal projection} of Demonic subvalue map $S$ onto \dGL game $\namedGame{\alpha}{a}$ with Demon winning condition $\phi$, written $\dsubst{\namedGame{\alpha}{a}}{S}{\phi}$, is generated recursively per the structure of $\namedGame{\alpha}{a}$ as follows. If $\namedGame{\alpha}{a}$ has structure:
  \begin{align*}
    & \namedGame{x:=\otimes}{a} \textrm{ then }
    \AdvExAssign{!(\exists x \, \phi)}
    \seq x:= \AdvExAssign{\ast} \seq \AdvExAssign{?\phi}.\\
    & \namedGame{\{x'=f(x)\ \&\ Q\}^d}{a},\ \textrm{ then }
    \AdvExAssign{!\langle\alpha\rangle\phi} \seq \{x'=f(x)\ \&\ Q\} \seq \AdvExAssign{?\phi}.\\
    &\namedGame{(\namedGame{\gamma}{g} \cap \namedGame{\delta}{d})}{a}, \textrm{ then } \AdvExChoice{!(S(g)\lor S(d))} \seq \left(( \AdvExChoice{?S(g)} \seq \dsubst{\namedGame{\gamma}{g}}{S}{\phi}) \AdvExChoice{\cup} (\AdvExChoice{?S(d)} \seq \dsubst{\namedGame{\delta}{d}}{S}{\phi})\right).\\
    &\namedGame{(\namedGame{\gamma}{g})^\times}{a}, \textrm{ then } \AdvExLoop{!(S(g)\lor \phi)} \seq \left(\AdvExLoop{?S(g)} \seq \dsubst{\namedGame{\gamma}}{S}{S(a) \lor \phi} \seq ?S(a) \lor \phi \right)^{\AdvExLoop{\ast}} \seq \AdvExLoop{?\phi}.\\
    &\namedGame{(\namedGame{\gamma}{g};\namedGame{\delta}{d})}{a}, \textrm{ then } \dsubst{\namedGame{\gamma}{g}}{S}{S(d)} \seq \dsubst{\namedGame{\delta}{d}}{S}{\phi}.\\
    &\namedGame{(\namedGame{\gamma}{g} \cap \namedGame{\delta}{d})}{a}, \textrm{ then } \dsubst{\namedGame{\gamma}{g}}{S}{\phi} \cap \dsubst{\namedGame{\delta}{d}}{S}{\phi}.\\
    &\namedGame{(\namedGame{\gamma}{g})^\times}{a}, \textrm{ then } \dsubst{\namedGame{\gamma}{g}}{S}{S(a)}^\times.\\
    &\textrm{atomic and not controlled by Demon, i.e., }\\
    & \qquad \alpha \in \{x:=e, x:=\ast, \ptest{Q}, !Q, \{x'=f(x)\ \&\ Q\}\}, \textrm{ then } \alpha.
  \end{align*}
\end{definition}

\subsection{Strategy Set Generation}
\label{app:strategy-set-generation}
\rref{def:policy} shows how an agent's Subvalue map lets it decide what actions to take at a given state. This section shows how to generate a set of strategies (as defined in \dGL operational semantics) given a subvalue map by tracing through the game while keeping track of which strategy leads to what state, and using the policy to predict what actions to take at decision points.

\begin{definition}[Transformation to Strategy Set]
\label{def:strategy-set-generation}
The function \(\astrategize{\namedGameSpace{\alpha}{a}}{S}(b, \sigma)\) produces the set of Angel strategies to play game \(\namedGame{\alpha}{a}\) following subvalue map \(S\) starting at subgame \(b\), where \(\sigma\) represents the current state, where \(\caret\) is prefix-closed concatenation as shown in \rref{app:labeled-game-trees}.
If the structure of \(b\) is:
\begin{align*}
  \namedGame{x:=e}{b} &\textrm{ then }
    \{\{(x:=e)\}\} \\
  \namedGame{x:=\ast}{b} &\textrm{ then } \{{p} \with p \in \apolicy{\alpha}{a}(S)(b, \sigma)\} \\
  \namedGame{x:=\otimes}{b} &\textrm{ then } \{ \{(\pdual{x:=e})  \with e \in \mathbb{R} \} \} \\
  & \mspace{-65mu} \namedGame{\{x'=f(x)\ \&\ Q\}}{b} \textrm{ then } \{\{p\} \with \apolicy{\alpha}{a}(S)(b, \sigma)\}\\
  & \mspace{-65mu} \namedGame{\{x'=f(x)\ \&\ Q\}^d}{b} \textrm{ then } \\
  & \qquad \{ \{ (\pdual{x'=f(x) \& Q} \,@\, t) \with t \in \mathbb{R} \textrm{ where} \\
  & \qquad \textrm{\(\varphi(0) = \sigma(x)\) and \(\forall s \in [0,t] \, \varphi'(s)=f(\varphi(s))\)} \} \} \\
  \namedGame{\ptest{Q}}{b} &\textrm{ then } \{\{ (\ptest{Q}) \}\} \\
  \namedGame{!Q}{b} &\textrm{ then } \{\{ (!Q) \}\} \\
  \namedGame{(\namedGame{\gamma}{g} \cup \namedGame{\delta}{d})}{b} &\textrm{ then }\\
  &\begin{cases}
    \{\} & \textrm{ if }\apolicy{\alpha}{a}(S)(b) = \{\} \\
    \{\{\actleft \caret t\} \with t\in\astrategize{\namedGameSpace{\gamma}{g}}{S}(g, \sigma)\} &\textrm{ if } \apolicy{\alpha}{a}(S)(b) = \{\actleft\} \\
    \{\{\actright \caret t\} \with t\in\astrategize{\namedGameSpace{\delta}{d}}{S}(d, \sigma)\} &\textrm{ if } \apolicy{\alpha}{a}(S)(b) = \{\actright\} \\
    \{\{\actleft \caret t\} \with t\in\astrategize{\namedGameSpace{\gamma}{g}}{S}(g, \sigma)\}\ \cup \qquad & \textrm{ otherwise if } \apolicy{\alpha}{a}(S)(b) = \{\actleft, \actright\}\\
    \phantom{\{} \{\{\actright \caret t\} \with t\in\astrategize{\namedGameSpace{\delta}{d, \sigma}}{S}(d)\} &
  \end{cases} \\
  \namedGame{(\namedGame{\gamma}{g} \cap \namedGame{\delta}{d})}{b} &\textrm{ then } 
    \{\pdual{\actleft} \caret t_l \cup \pdual{\actright} \caret t_r \with t_l\in\astrategize{\namedGameSpace{\gamma}{g}}{S}(g, \sigma) \textrm{ and } t_r\in\astrategize{\namedGameSpace{\delta}{d}}{S}(d, \sigma)\}
  \\
  \namedGame{(\namedGame{\gamma}{g} \seq \namedGame{\delta}{d})}{b} &\textrm{ then } \{ t \cup \bigcup_{v \in \leaf(t)} (v\caret u) \with t \in \astrategize{\namedGameSpace{\gamma}{g}}{\modEnd{S}{S(d)}}(g, \sigma) \textrm{ and }\\
  & \qquad u \in \astrategize{\namedGameSpace{\delta}{d}}{S}(d)(\runstrategy{v}{\sigma}) \} \\
  \namedGame{(\namedGame{\gamma}{g})^\ast}{b} &\textrm{ then } \bigcup_{n<\omega} \{t \with t\in f^n (\apolicy{\alpha}{a}(S)(b, \sigma)) \textrm{ and } \not\exists u (u \caret \actgo \in \leaf(t))\} \\
  & \qquad \textrm{ where } f^n \textrm{ is \(n\)-fold composition of the function}: \\
  & \qquad f(Z) = \bigcup_{t\in Z} \{t \cup \bigcup_{u \caret \actgo \in \leaf(t)} 
  \left\{\, 
    \begin{array}{ll}
    \{u \caret \actgo \caret v_u\} & \text{if }\runstrategy{u \caret \actgo \caret v_u}{\sigma}=\kwd{undefined}  \\[4pt]
    \{u \caret \actgo \caret v_u \caret o_{uv} \} & \text{otherwise}
    \end{array}
  \right\}
   \with \\
  & \qquad v_u \in \astrategize{\namedGameSpace{\gamma}{g}}{\modEnd{S}{S(a)}}(g)(\runstrategy{u}{\sigma}) \textrm{ and }\\
  & \qquad o_{uv} \in \apolicy{\alpha}{a}(S)(b, \runstrategy{u \caret \actgo \caret v_u}{\sigma})\} \\
  \namedGame{(\namedGame{\gamma}{g})^\times}{b} &\textrm{ then } \bigcup_{n<\omega} f^n (\{\{\pdual{\actstop}, \pdual{\actgo}\}\}) \\
  & \qquad \textrm{ where } f^n \textrm{ is \(n\)-fold composition of the function}: \\
  & \qquad f(Z) =  \bigcup_{t\in Z} \{t \cup \bigcup_{u \caret \pdual{\actgo} \in \leaf(t)} \{u \caret \pdual{\actgo} \caret v \caret \{\pdual{\actstop}, \pdual{\actgo} \} \with v \in \astrategize{\namedGameSpace{\alpha}{a}}{S}(g)(\runstrategy{u}{\sigma})\} \}
\end{align*}
with implicit closure under prefixes.
\end{definition}

To reason only about the states reachable by playing the game trees, we should introduce a version of the game trees free of actions that do nothing.
\begin{definition}[Erasing skip Actions]
  \label{def:erase-noop}
  Let the set of skip actions \skipActions be
  \[\skipActions = \{\actgo, \actstop, \actleft, \actright, \actseq, \pdual{\actgo}, \pdual{\actstop}, \pdual{\actleft}, \pdual{\actright}\}\].
  The function \(\eraseSkip\) takes in a node or game tree and returns it with all labels and skip actions removed.
  First we define \(\eraseSkip\) for a single node. 
  Here, \(\eraseSkip\) accepts a node \(s\) (which is a sequence) and returns a node (the sequence with labels dropped).
  \[ \eraseSkip(s) \define \left( a_i \with a_i \in s,\, a_i \notin \text{subgame labels} \textrm{ and } a_i \notin \skipActions \right) \]
  Erasing labels and skip actions for a game tree \(t\) proceeds by erasing these for every node, producing a new tree.
  \[ \eraseSkip(t) \define \{ \eraseSkip(s) \with s \in t \textrm{ and } \eraseSkip(s)\neq () \}. \]
\end{definition}

\subsection{Inductive Demonic Subvalue Maps}
\label{app:inductive-subvalue-map}

\begin{definition}[Inductive Demonic subvalue maps]
  \label{def:inductive-subvalue-map-demon}
  Let $S$ be a map from the subgames of $\namedGame{\alpha}{a}$ to winning subregions.
  $S$ is an \emph{inductive Demonic subvalue map} for game $\namedGame{\alpha}{a}$, written $\dvalid{\alpha}{a}{S}{}$, when the following holds.
  If $\namedGame{\alpha}{a}$ has structure:
  \begin{align*}
    \textrm{atomic, i.e., }\alpha \in &\{x:=e, x:=*, x:=\otimes, \ptest{Q}, !Q, \{x'=f(x)\ \&\ Q\}, \\
    & \phantom{\{} \{x'=f(x)\ \& \ Q\}^d\} \textrm{ then } \models S(a) \limply [ {\alpha} ] S(\finalNode) . \\
    \namedGame{(\namedGame{\gamma}{g} \cup \namedGame{\delta}{d})}{a} \textrm{ then }& \models S(a) \limply S(g) \land S(d) \textrm{ and }
    \dvalid{\gamma}{g}{S}{} \textrm{ and } \dvalid{\delta}{d}{S}{} . \\
    \namedGame{(\namedGame{\gamma}{g} \cap \namedGame{\delta}{d})}{a} \textrm{ then }& \models S(a) \limply S(g) \lor S(d) \textrm{ and }
    \dvalid{\gamma}{g}{S}{} \textrm{ and } \dvalid{\delta}{d}{S}{} . \\
    \namedGame{(\namedGame{\gamma}{g}\seq \namedGame{\delta}{d})}{a} \textrm{ then }& \models S(a) \limply S(g) \textrm{ and }
    \dvalid{\gamma}{g}{\modEnd{S}{S(d)}}{} \textrm{ and } \dvalid{\delta}{d}{S}{} .\\
    \namedGame{(\namedGame{\gamma}{g})^*}{a} \textrm{ then }&
    \models S(a) \limply S(\finalNode) \land S(g)
    \textrm{ and } \dvalid{\gamma}{g}{\modEnd{S}{S(a)}}{} . \\
    \namedGame{(\namedGame{\gamma}{g})^\times}{a} \textrm{ then }&
    \models S(a) \limply [\dproj{\namedGameSpace{\alpha}{a}}{S}{}]S(\finalNode)
    \textrm{ and } \dvalid{\gamma}{g}{\modEnd{S}{S(a)}}{} .
  \end{align*}
\end{definition}

\subsection{Game Prefix}
\label{app:prefix}

This definition uses \emph{the empty game} \(\skp\) which can be treated like the game \(\ptest{true}\) in terms of effect on plays, except with no corresponding action in the operational semantics.

\begin{definition}[Game Prefix]
  \label{def:execution-prefix}
  The game suffix $\prefix{b}{\namedGame{\alpha}{a}}$ of subgame $\namedGame{\beta}{b}$ in $\nodes{\namedGame{\alpha}{a}}$ is constructed as below.
  If $\namedGame{\alpha}{a}$ is $\namedGame{\beta}{b}$ and \(\alpha\) is not a loop, then $\prefix{b}{\namedGame{\alpha}{a}}$ is the empty game, $\skp$.
  If \(b=a\) and $\namedGame{\alpha}{a}$ is a loop, then $\prefix{b}{\namedGame{\alpha}{a}}$ is \(\namedGame{\alpha}{a}\).
  If \(b\) is \(\finalNode\) then $\prefix{b}{\namedGame{\alpha}{a}}$ is the entire game $\namedGame{\alpha}{a}$.
  Otherwise, if $\namedGame{\alpha}{a}$ has structure:
  \begin{equation*}
    \begin{aligned}
      \namedGame{(\namedGame{\gamma}{g})^\ast}{a} \textrm{ or } \namedGame{(\namedGame{\gamma}{g})^\times}{a} &\textrm{ then } \prefix{b}{\namedGame{\alpha}{a}} = \namedGame{\alpha}{a} \seq (\prefix{b}{\namedGame{\gamma}{g}}) \\
      \namedGame{(\namedGame{\gamma}{g} \cup \namedGame{\delta}{d})}{a} \textrm{ or } \namedGame{(\namedGame{\gamma}{g} \cap \namedGame{\delta}{d})}{a} &\textrm{ then } \begin{cases}
        \prefix{b}{\namedGame{\alpha}{a}} = \prefix{b}{\namedGame{\gamma}{g}} &b \in \nodes{\namedGame{\gamma}{g}} \\
        \prefix{b}{\namedGame{\alpha}{a}} = \prefix{b}{\namedGame{\delta}{d}} &\textrm{otherwise}
      \end{cases} \\
      \namedGame{(\namedGame{\gamma}{g} \seq \namedGame{\delta}{d})}{a} &\textrm{ then } \begin{cases}
        \prefix{b}{\namedGame{\alpha}{a}} = \prefix{b}{\namedGame{\gamma}{g}} &b \in \nodes{\namedGame{\gamma}{g}} \\
        \prefix{b}{\namedGame{\alpha}{a}} = \namedGame{\gamma}{g} \seq \prefix{b}{\namedGame{\delta}{d}} &\textrm{otherwise}
      \end{cases}
    \end{aligned}
  \end{equation*}
\end{definition}

\subsection{Solving Function for Inductive Demonic Subvalue Maps}
\label{app:solving-demon}

$\dsolve(\namedGame{\alpha}{a}, \phi)$, defined below, computes an inductive Demonic subvalue map for game $\namedGame{\alpha}{a}$ and Demon winning condition $\phi$.
Operator $\uplus$ denotes the disjoint union of two subvalue maps.
\begin{equation*}
      \begin{aligned}
    &\dsolve(\namedGame{\alpha}{a}, \phi) := \{ a \mapsto \exec([ \alpha ] \phi), \finalNode \mapsto \phi \} \textrm{ where }\alpha\in\{x:=e, x:=\ast, x:=\otimes,\\
      &\quad  \ptest{Q}, !Q,\{x'=f(x) \& Q\}, \{x'=f(x) \& Q\}^d \} \\
    &\dsolve(\namedGame{(\namedGame{\gamma}{g} \cap \namedGame{\delta}{d})}{a}, \phi) :=\ 
      S_1 \uplus S_2 \uplus \{ a \mapsto S_1(g)\lor S_2(d)\} \\
      &\quad \textrm{ where }S_1:=\dsolve(\namedGame{\gamma}{g}, \phi),\ S_2:=\dsolve(\namedGame{\delta}{d}, \phi) \setminus \finalNode \\
    &\dsolve(\namedGame{(\namedGame{\gamma}{g} \cup \namedGame{\delta}{d})}{a}, \phi) :=\ 
      S_1 \uplus S_2 \uplus \{ a \mapsto S_1(g)\land S_2(d)\} \\
      &\quad \textrm{ where }S_1:=\dsolve(\namedGame{\gamma}{g}, \phi),\ S_2:=\dsolve(\namedGame{\delta}{d}, \phi) \setminus \finalNode \\
    &\dsolve(\namedGame{(\namedGame{\gamma}{g};\namedGame{\delta}{d})}{a}, \phi) :=\  S_1 \uplus S_2 \uplus \{a \mapsto S_2(g)\} \\
      &\quad \textrm{ where }S_1:=\dsolve(\namedGame{\delta}{d}, \phi),\ S_2:=\dsolve(\namedGame{\gamma}{g}, S_1(d)) \setminus \finalNode \\
    &\dsolve(\namedGame{(\namedGame{\gamma}{g})^\times}{a}, \phi) :=\  S \uplus \{a \mapsto \invexpr, \finalNode \mapsto \phi \} \textrm{ if }
        \langle \aproj{\namedGame{\alpha}{a}}{S}{\phi} \rangle \phi \\
        & \quad \textrm{ where }
        S:=\dsolve(\namedGame{\gamma}{g}, \invexpr \lor \phi) \setminus \finalNode \\
    &\dsolve(\namedGame{(\namedGame{\gamma}{g})^\ast}{a}, \phi) :=\  S \uplus \{a \mapsto \invexpr, \finalNode \mapsto \phi \}
        \textrm{ if }\models \invexpr \limply S(g) \land \phi \\
        & \quad \textrm{ where } S := \dsolve(\namedGame{\gamma}{g}, \invexpr) \setminus \finalNode
    \end{aligned}
  \end{equation*}

\end{document}